%
%
%
%

\documentclass[prodmode,acmtecs]{acmsmall} 

\usepackage[ruled]{algorithm2e}
\usepackage{graphicx}
\usepackage{amssymb,amsfonts,amsmath}
\usepackage{subfigure}
\usepackage{float}
\usepackage{xspace}
\usepackage{times}
\usepackage{color}

\newcommand{\tc}[2]{\textcolor{#1}{#2}}
\newcommand{\pedro}[1]{\tc{black}{#1}}


\newcommand{\ied}{IED\xspace}
\newcommand{\ieds}{IEDs\xspace}

\newcommand{\tll}{LLG\xspace}


\newcommand{\dppzero}{SFP\xspace}
\newcommand{\dpp}{SFP\xspace}

\newcommand{\groupsfp}{\textit{MetaDist}\xspace}


\newcommand{\Parsimony}{Parsimony\xspace}

\newcommand{\fallacyiid}{i.i.d. fallacy\xspace}
\newcommand{\Fallacyiid}{I.i.d. fallacy\xspace}

\newcommand{\topconcavity}{top-concavity\xspace}



\newcommand{\beq}{\begin{equation}}
\newcommand{\eeq}{\end{equation}}
\newcommand{\bit}{\begin{itemize}}
\newcommand{\eit}{\end{itemize}}

\newcommand{\hide}[1]{}
\newtheorem{model}{Model}
\newtheorem{universal}{Universal Pattern}

\SetAlFnt{\small}
\SetAlCapFnt{\small}
\SetAlCapNameFnt{\small}
\SetAlCapHSkip{0pt}
\IncMargin{-\parindent}

\acmVolume{0}
\acmNumber{0}
\acmArticle{0}
\acmYear{2014}
\acmMonth{3}

\begin{document}


\title{Universal and Distinct Properties of Communication Dynamics: How to Generate Realistic Inter-event Times}
\author{Pedro O.S. Vaz de Melo
\affil{Universidade Federal de Minas Gerais}
Christos Faloutsos
\affil{Carnegie Mellon University}
Renato Assun\c{c}\~{a}o
\affil{Universidade Federal de Minas Gerais}
Rodrigo Alves
\affil{Universidade Federal de Minas Gerais}
Antonio A.F. Loureiro
\affil{Universidade Federal de Minas Gerais}}

\begin{abstract}
With the advancement of information systems, means of communications are becoming cheaper, faster and more available. Today, millions of people carrying smart-phones or tablets are able to communicate at practically any time and anywhere they want. Among others, they can access their e-mails, comment on weblogs, watch and post comments on videos, make phone calls or text messages almost ubiquitously. Given this scenario, in this paper we tackle a fundamental aspect of this new era of communication: how the time intervals between communication events behave for different technologies and means of communications? Are there universal patterns for the inter-event time distribution (\ied)? In which ways inter-event times behave differently among particular technologies? To answer these questions, we analyze eight different datasets from real and modern communication data and we found four well defined patterns that are seen in all the eight datasets. Moreover, we propose the use of the Self-Feeding Process (\dpp{}) to generate inter-event times between communications. The \dpp is extremely \textit{parsimonious} point process that requires at most two parameters and is able to generate inter-event times with all the universal properties we observed in the data. We show the potential application of \dpp by proposing a framework to generate a synthetic dataset containing realistic communication events of any one of the analyzed means of communications (e.g. phone calls, e-mails, comments on blogs) and an algorithm to detect anomalies.
\end{abstract}


 \category{H.2.8}{Information Systems}{database management}[Database Applications, Data mining]
 \category{G.3}{Mathematics of Computing}{Probability and Statistics}[Statistical computing]

\terms{Theory}

\keywords{communication dynamics,inter-event times,generative model}

\maketitle

\section{Introduction}

A popular saying that came with the advancement of information systems is that the distance among people is decreasing over the years. It is well known that the main reason for that is the fact that means of communications are becoming cheaper, faster and more available. Today, millions of people carrying smart-phones or tablets are able to communicate at practically any time and anywhere they want. Among others, they can access their e-mails, comment on weblogs, watch and post comments on videos, make phone calls or text messages almost ubiquitously. It is fascinating that the growing accessibility, reach and speed of these means of communications are making them more and more homogeneous and similar. For instance, consider a smart-phone user with a permanent Internet connection. What is the fastest way to reach this person? By a phone call, by a SMS message, by e-mail or by an instant messaging subscription service (e.g. WhatsApp)? Nowadays, maybe all of these are equally or similarly effective.

Given this scenario, in this paper we tackle a fundamental aspect of this new era of communication: how the time intervals between communication events behave for different technologies and means of communications? Are there universal patterns for the inter-event time distribution (\ied)? In which ways inter-event times behave differently among particular technologies? To answer these questions, we analyze eight different datasets from real and modern communication data, that can be divided into two groups. The first group contains five datasets extracted from Web applications in which several users comment on a given topic. The datasets are extracted from five popular websites: Youtube, MetaFilter, MetaTalk, Ask MetaFilter and Digg. The second group contains three datasets in which individuals perform and receive communication events. In this group we have a Short Message Service (SMS), a mobile phone-call and a public e-mail dataset. These datasets comprise a set of different types of interactions that are common and routine in most human lives. 

As the first contribution of this paper, we found four well defined patterns that are seen in all the eight datasets. First, we show that the marginal distribution of the time intervals between communications follows an odds ratio power law. Second, we show that the slope of this power law is approximately 1 for the majority of the data analyzed. Third, unlike previous studies, we analyze the temporal correlations between inter-event times, illustrating the ``\fallacyiid'' that has been routinely ignored until recently~\cite{karsai:nature:2012}.  We show that, unlike the PP that generates independent and identically distributed (i.i.d.) inter-event times, individual sequences of communications tend to show a high dependence between consecutive inter-arrival times. Finally, we show that the collection of individual \ieds of all systems is very well modeled by a Bivariate Gaussian Distribution. Moreover, besides these four universal properties, we also identified features that differentiate one system from the other and that naturally come from the idiosyncrasies of each system.

As the second contribution of this paper, we propose the use of the Self-Feeding Process (\dpp{}) to generate inter-event times between communications of an individual or in blog posts or videos.  The \dpp is extremely \textit{parsimonious} point process that requires at most two parameters.  We show that it is able to generate inter-event times with all the universal properties we observed in the data and also reconciles existing and contrasting theories in human communication dynamics\cite{barabasi:2005,malmgren:2008}. Moreover, we show how the \dpp can be easily modified to also encompass the particularities seen in each of the analyzed systems.

Finally, as the third contribution of this paper, we show two possible applications of the findings described in this paper. First, through the use of the \dpp, we propose a framework to generate a synthetic dataset containing realistic communication events of any one of the analyzed means of communications (e.g. phone calls, e-mails, comments on blogs). This framework considers all the universal properties and the particularities of each system. Second, we show how to detect anomalies in the systems we investigated through the use of this framework. Among the regular individuals, we were able to identify, for instance, a SMS automated service, blog posts that were deleted by the moderators because of their content and a polemic Youtube video populated by flaming\footnote{hostile and insulting interaction between Internet users} discussions.

The rest of the paper is organized as follows. Section~\ref{sec:related} provides a brief survey of the related work that analyzed inter-event times between communications. Section~\ref{sec:data} describes the eight datasets used in this work. Section~\ref{sec:ied} shows the \ied of individuals from these datasets and that the Odds Ratio function of their \ied{s} is well modeled by a power law. Section~\ref{sec:correlation} shows that the typical behavior of inter-event sequences shows a positive correlation between consecutive inter-event times. Section~\ref{sec:sfp} describes the \dpp{} model, which provides an intuitive and simple explanation for the observed data. Section~\ref{sec:unifying} shows that the \dpp{} model also unifies existing theories on communication dynamics. Section~\ref{sec:collectiveBehavior} describes a model to represent the collective behavior of users in the analyzed systems. Section~\ref{sec:anomaly} shows a method to spot anomalies. Finally, we show the conclusions and future research directions in Section~\ref{sec:conclusion}.

\section{Related Work}
\label{sec:related}

The study of the time interval in which events occur in human activity is not new in the literature. The most primitive model is the classic Poisson process~\cite{haight:1967}. Although the most recent approaches have among themselves significant differences, they all agree that the timing of individuals systematically deviates from this classical approach. The Poisson process predicts that the time interval $\Delta_t$ between two consecutive events by the same individual follows an exponential distribution with expected value $\beta$ and rate $\lambda = 1/\beta$, where
\begin{equation}
\begin{array}{rcl}
\Delta_t &=& -\beta \times \ln(U(0,1)),
\end{array}
\label{eq:pp}
\end{equation}
where $U(0,1)$ is a uniformly random distributed number between $[0,1]$. While in a Poisson process consecutive events follow each other at a relatively regular time, real data shows that humans have very long periods of inactivity and also bursts of intense activity~\cite{barabasi:2005}.

Moreover, recent analysis on the time interval between communication activities shows apparent conflicting ideas among them. First,
Barab\'{a}si~\cite{barabasi:2005} proposed that bursts and heavy-tails in human activities are a consequence of a decision-based queuing process, when tasks are executed according to some perceived priority. In this way, most of the tasks are rapidly executed and some of then may take a very long time. The queuing models proposed in~\cite{barabasi:2005} generates power law~\cite{faloutsos:1999} distributions $p(X=x) \approx x^{-\alpha}$ with slopes $\alpha \approx 1$ or $\alpha \approx 1.5$. In the literature, there are examples that are approximated by the universality class model in e-mail records~\cite{eckmann:2004,vazques:2006}, web surfing~\cite{dezso:2006,vazques:2006}, library visitation, letters correspondence and stock broker's activities~\cite{vazques:2006}, arrival times of requests to print in a student laboratory~\cite{harder:2006} and in short-messages~\cite{wei:2009}, most of them reporting slopes from $1$ to $1.5$ and, in the case of~\cite{wei:2009}, also slopes higher than $1.5$. Although a power law visually fits well the tail of the IED, it usually can not explain the whole distribution~\cite{malmgren:2008}.

Second, other work in literature propose that the IED is well explained by variations of the PP, such as the Interrupted Poisson~\cite{kuczura:1973} (IPP), Non-Homogeneous Poisson Process~\cite{malmgren:2008}, Kleinberg's burst model~\cite{kleinberg:2002} and others. For instance, Malmgreen et\ {al.}~\cite{malmgren:2008} proposed a non-homogeneous Poisson process to explain the inter-event times distribution. The model is based on the circadian and weekly cycles and coupled to the cascading activity and has a varying rate $\lambda(t)$ that depends on time $t$ in a periodic manner. This process generates active intervals according to $\lambda(t)$. Each active interval initiates a homogeneous Poisson process with a determined rate $\lambda_a$. In order to generate the active intervals, the model needs (i) the average number of active intervals per week, and (ii) the probabilities of starting an active interval at a particular time of day and (iii) week. Malmgreen et\ {al.} estimated these parameters empirically and they showed that the model accurately fits the real data. However, this model explains the data at the cost of requiring several parameters and careful data analysis, being impractical for synthetic data generators, for instance. Later, the authors adapted this model to a more parsimonious version~\cite{malmgren:2009b}, but it still has $9$ parameters.

\section{Data Description}
\label{sec:data}

In this work we analyze eight datasets that can be divided into two groups. The first group contains five datasets extracted from Web applications in which several users comment on a given topic. The datasets are extracted from five popular websites: Youtube, MetaFilter, MetaTalk, Ask MetaFilter and Digg. The second group contains three datasets in which individuals perform and receive communication events. In this group we have a Short Message Service (SMS), a mobile phone-call and a public e-mail dataset. For simplicity, we use the term ``individual'' to refer both to topics of the first group and users of the second group. 

In the first group, we analyze a public online news dataset, containing a set of stories and comments over each story. More specifically, the data is from the popular social media site Digg and has 1,485 stories and over 7 million comments~\cite{DeChoudhury:2009}. The Digg dataset is public for research interests and can be downloaded at \scriptsize\texttt{http://www.infochimps.com/datasets/diggcom-data-set}\normalsize. We also analyze three publicly available datasets from the \textit{Metafilter Infodump Project}\footnote{downloaded on September 22nd from http://stuff.metafilter.com/infodump/}, extracted from three discussion forums: MetaFilter\footnote{http://www.metafilter.com/} (Mefi), MetaTalk\footnote{http://metatalk.metafilter.com/} (Meta) and Ask MetaFilter~\footnote{http://ask.metafilter.com/} (Askme). After disregarding topics which received less than 30 comments, the Mefi dataset has 8,384 topics and 1,471,153 comments, the Meta dataset has 2,484 topics and 503,644 comments and the Askme dataset has 498 topics and 65,950 comments. 

Our final dataset from the first group was collected from the Youtube website using the Google's Youtube API\footnote{https://developers.google.com/youtube/}. We collected all the comments posted on the videos classified as \textit{trending} by the API\footnote{https://gdata.youtube.com/feeds/api/standardfeeds/on\_the\_web} from 22/Aug/2012 to 25/Sep/2012. We collected a total of 1,221,390 comments on 989 videos, but we use in our dataset only those videos with more than 30 comments and which the comments span for more than one week, a total of 610 videos and 1,008,511 comments. The full dataset can be downloaded at \scriptsize\texttt{www.dcc.ufmg.br/\~{}olmo/youtube.zip}\normalsize. 

In the second group, the mobile phone calls dataset contains more than 3.1 million customers of a large mobile operator of a large city, with more than 263.6 million phone call records registered during \textit{one month}.  From this same operator, we also have a SMS dataset of 300,000 users spanning six months of data, for a total of 8,784,101 records. These datasets from the mobile operator is under Non-Disclosure Agreement (NDA) and belong to the iLab Research at the Heinz College at CMU, but was already used in several papers~\cite{vazdemelo:2010,vazdemelo:2011b,akoglu:2012}. We also analyze the public Enron e-mail dataset, consisting of 200,399 messages belonging to 158 users with an average of 757 messages per user~\cite{klimt:2004}. The data is public and can be downloaded at \scriptsize\texttt{http://www.cs.cmu.edu/~enron/}\normalsize.

\section{Marginal Distribution}
\label{sec:ied}

In this work, we are first interested on the inter-event time distribution {\ied} of the random variable $\Delta_k$ representing the time $\Delta_k$ between the $k-th$ and the $(k-1)-th$ communication events on a given topic (first group) or of an user (second group). For simplicity, we use the term ``individual'' to refer both to topics (videos, blog posts, news) of the first group and users of the second group. 

\subsection{Odds Ratio Using the Cumulative Distribution Function}

In Figure~\ref{fig:superuser}, we show the distribution of the time intervals $\Delta_k$ between communication events for a typical active user of the SMS dataset, with 44785 SMS messages sent or received. The histogram is showed in Figure~\ref{fig:superuser}-a and, as we observe, this user had a significantly high number of events separated by small periods of time and also long periods of inactivity. Moreover, both the power law fitting, which in the best fit has an exponent of $~-2$, and the exponential fitting, which is generated by a PP, deviates from the real data. The method we use to fit the power law is based on the Maximum likelihood estimation (MLE) described in~\cite{Clauset:2009}.

\begin{figure*}[!htb]
\centering
\subfigure[Histogram]
  {\includegraphics[width=.23\textwidth]{./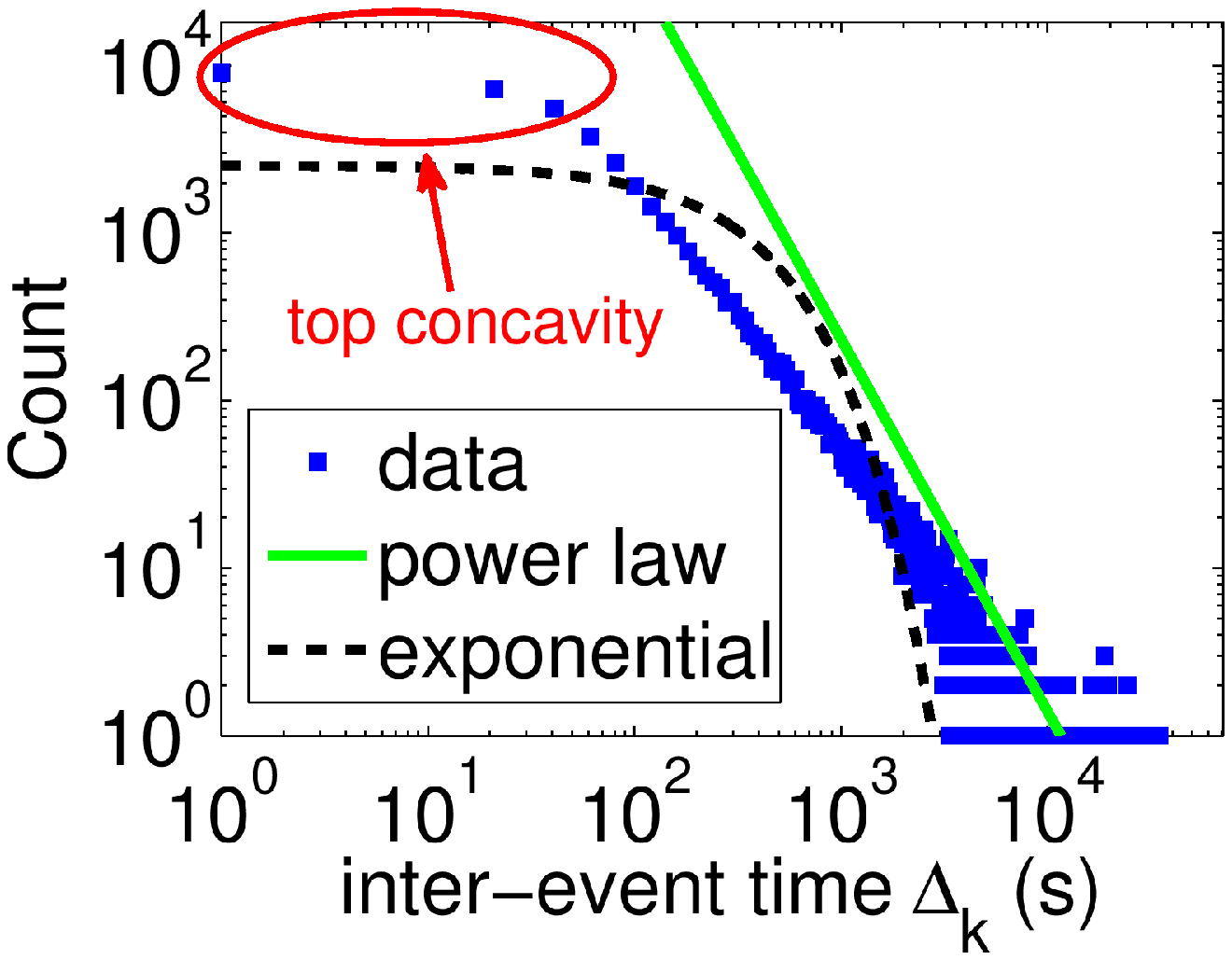}}
\subfigure[CDF]
  {\includegraphics[width=.23\textwidth]{./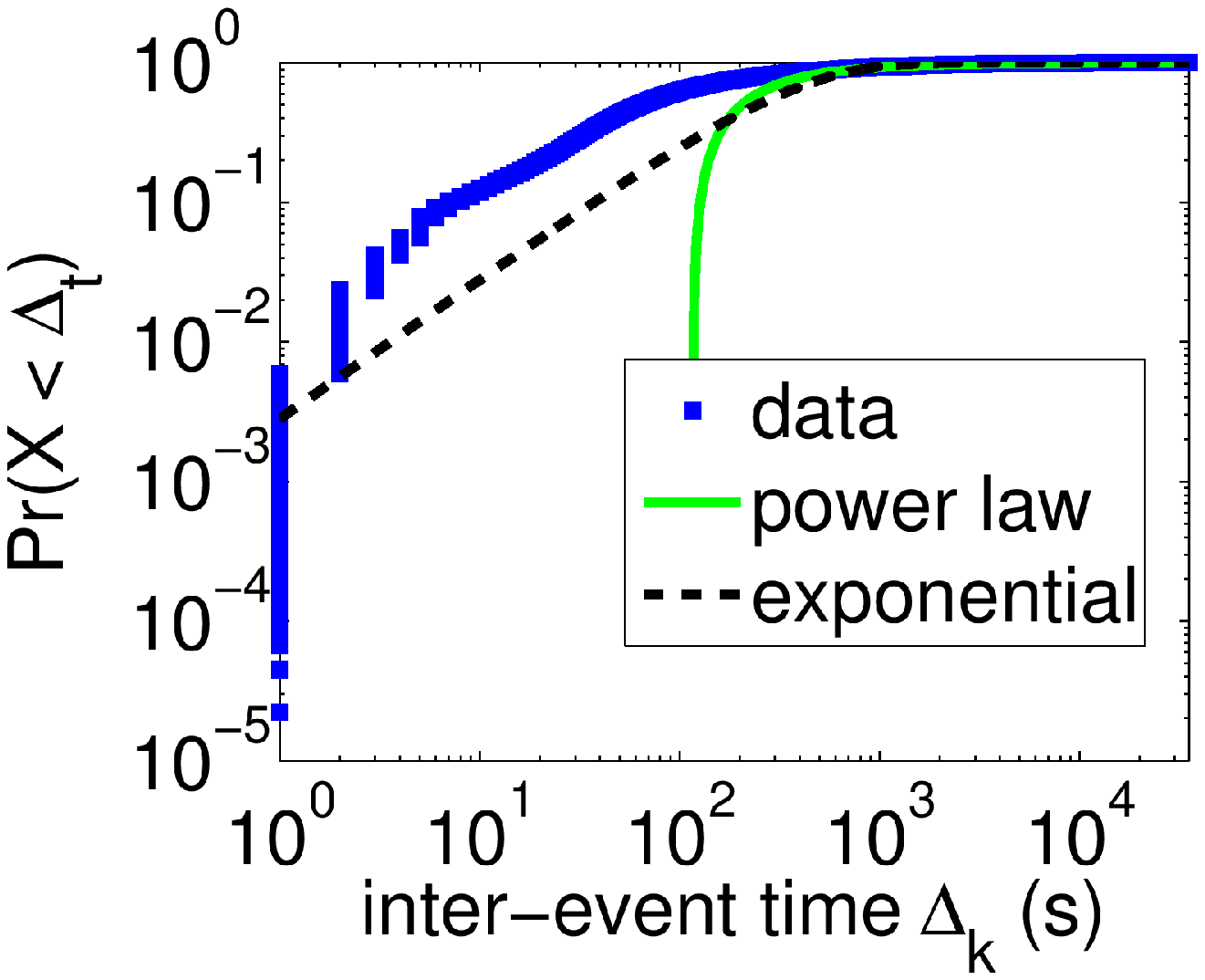}}
\subfigure[CCDF]
  {\includegraphics[width=.23\textwidth]{./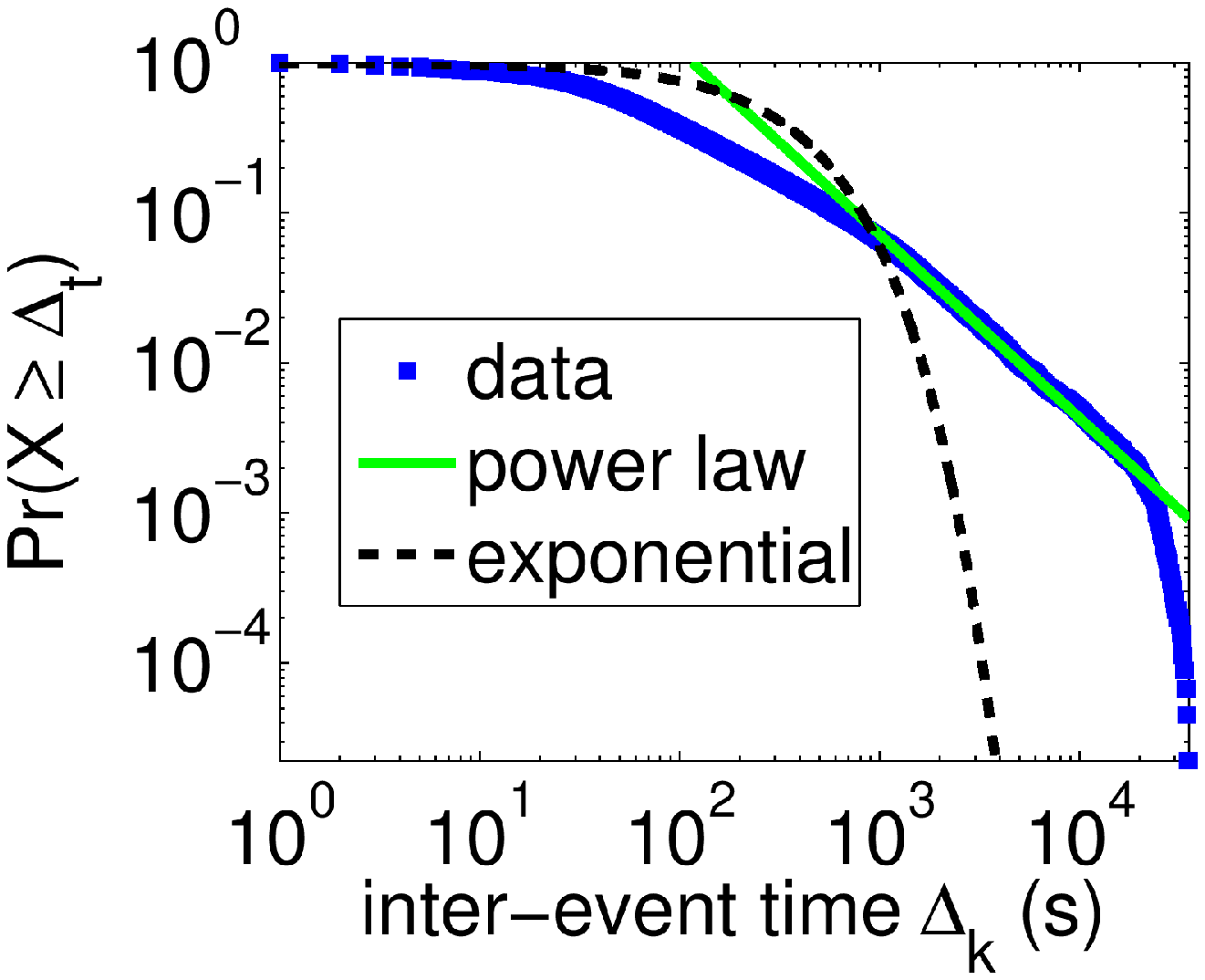}}
\subfigure[Odds Ratio]
  {\includegraphics[width=.23\textwidth]{./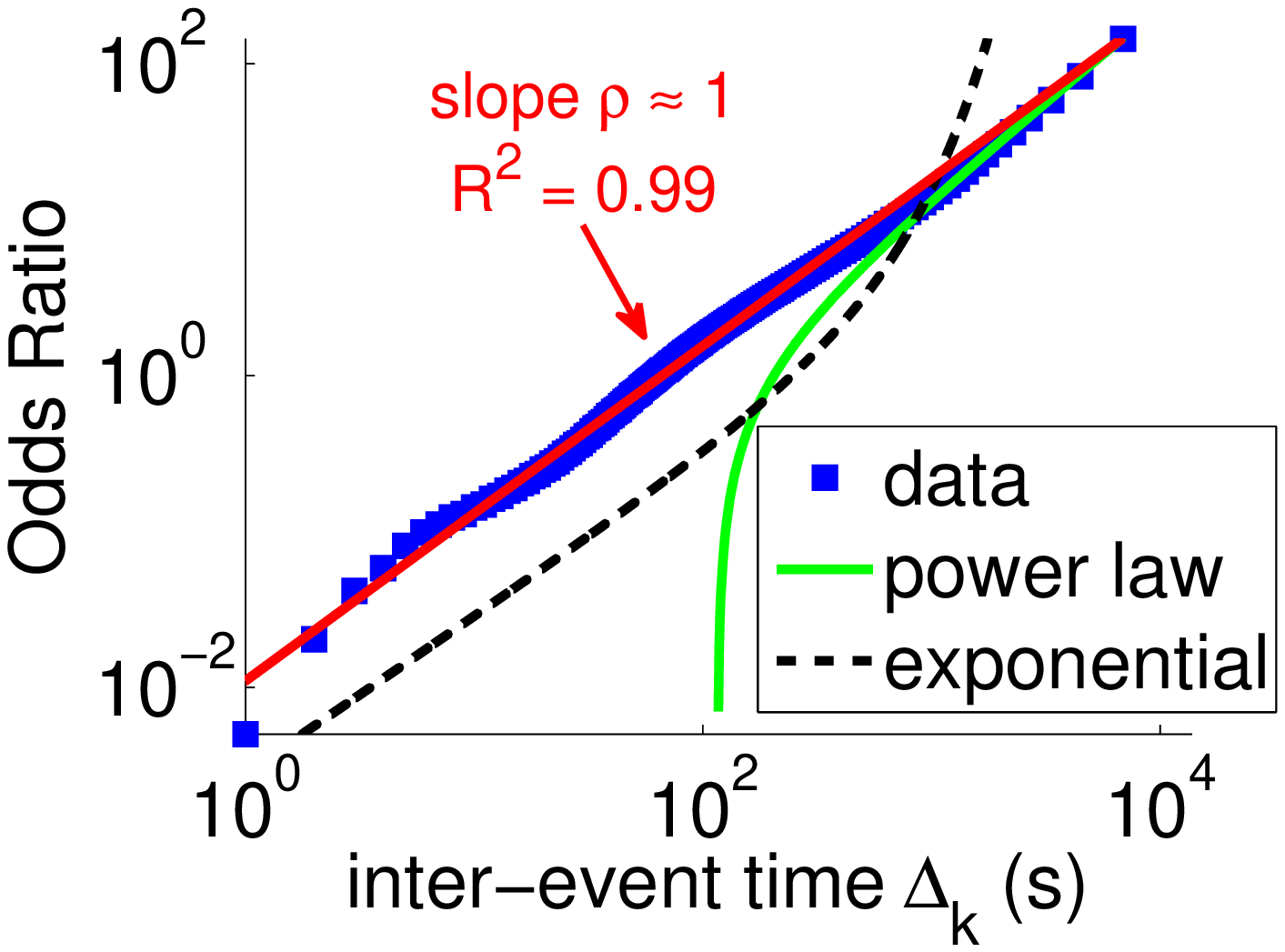}}
  \caption{The inter-event times distribution of the most talkative user of our four datasets, with 44785 SMS messages sent and received. We observe that both the power law fitting (PL fitting) with exponent $\approx 2$ and the exponential fitting, generated by a PP, deviate from the real data. We also observe that the OR is very well fitted by a straight line with slope $\approx 1$.}
  \label{fig:superuser}
\end{figure*}

In empirical data that spans for several orders of magnitude, which is the case of the \ied{}s, it is very difficult to identify statistical patterns in the histograms, since the distribution is considerably noisy at its tail~\cite{barabasi:2005,malmgren:2008}. A possible option 
is to move away from the histogram and analyze the cumulative distributions, i.e., cumulative density function (CDF) and complementary cumulative density function (CCDF), which veil the data sparsity. However, by using the CDF, as we observe in Figure~\ref{fig:superuser}-b,we lose information in the tail of the distribution and, on the other hand, by using the CCDF, as we observe in Figure~\ref{fig:superuser}-c, we lose information in the head of the distribution.

In order to escape from these drawbacks, we propose the use of the Odds Ratio (OR) function combined with the CDF as it allows for a clean visualization of the 
distribution behavior either in the head or in the tail. This $OR(k)$ function is commonly used in the survival analysis~\cite{bennet:1983,mahmood:2000} and measures the ratio between the number of individuals who have not survived by time $t$ and the ones that survived. Its formula is given by:
\begin{eqnarray}
 Odds~Ratio(t) = OR(t) &=& \frac{CDF(t)}{1-CDF(t)}. \label{eq:or}
\end{eqnarray}

In this paper, for a set of $n$ inter-event times $\{\Delta_1, \Delta_2, ..., \Delta_n\}$, we calculate the odds ratio for each percentile $P_1, P_2, ..., P_{100}$ of the data. This avoids that minor deviations in the data harms the goodness of fit test we perform, which we explain in Section~\ref{sec:gof}.

Thus, in Figure~\ref{fig:superuser}-d, we plot the OR for the selected user. From the OR plot, we can clearly see the cumulative behavior in the head and in the tail of the distribution. Also, observe again that both the exponential and the power law significantly deviate from the real data. Moreover, we can also observe that the OR of the inter-event times seems to entirely follow a linear behavior in logarithmic scales. That is, $\log(OR(t))$ is a linear function of $\log(t)$ with slope $\rho \approx 1$ and we say that we have a OR power law behavior. If the approximation is turned into an equality, this implies that the inter-event times follow a log-logistic distribution (see Appendix~\ref{sec:loglogistic}).

In Figure~\ref{fig:pdfUser}, we plot the OR of a typical individual of each dataset. As in Figure~\ref{fig:pdfUser}-d, the OR plots show a clear and almost perfect linear relationship between $\log(OR(t))$ and $\log(t)$ in all the examples considered. This implies that the \ied follows a log-logistic distribution, as we explain in Appendix~\ref{sec:loglogistic}.  Also, we can observe that the OR of the inter-event times seems to follow entirely the same linear behavior in logarithmic scales, having,
then, an OR power law behavior. This implies that the marginal distribution of the \ied{s} is approximately equal to a log-logistic distribution~\cite{fisk:1961}, since it also shows a OR power law behavior. \pedro{For a larger sample, please see the Appendix~\ref{sec:sample}.}

\begin{figure*}[!htb]
\centering
\subfigure[Youtube]
  {\includegraphics[width=.23\textwidth]{./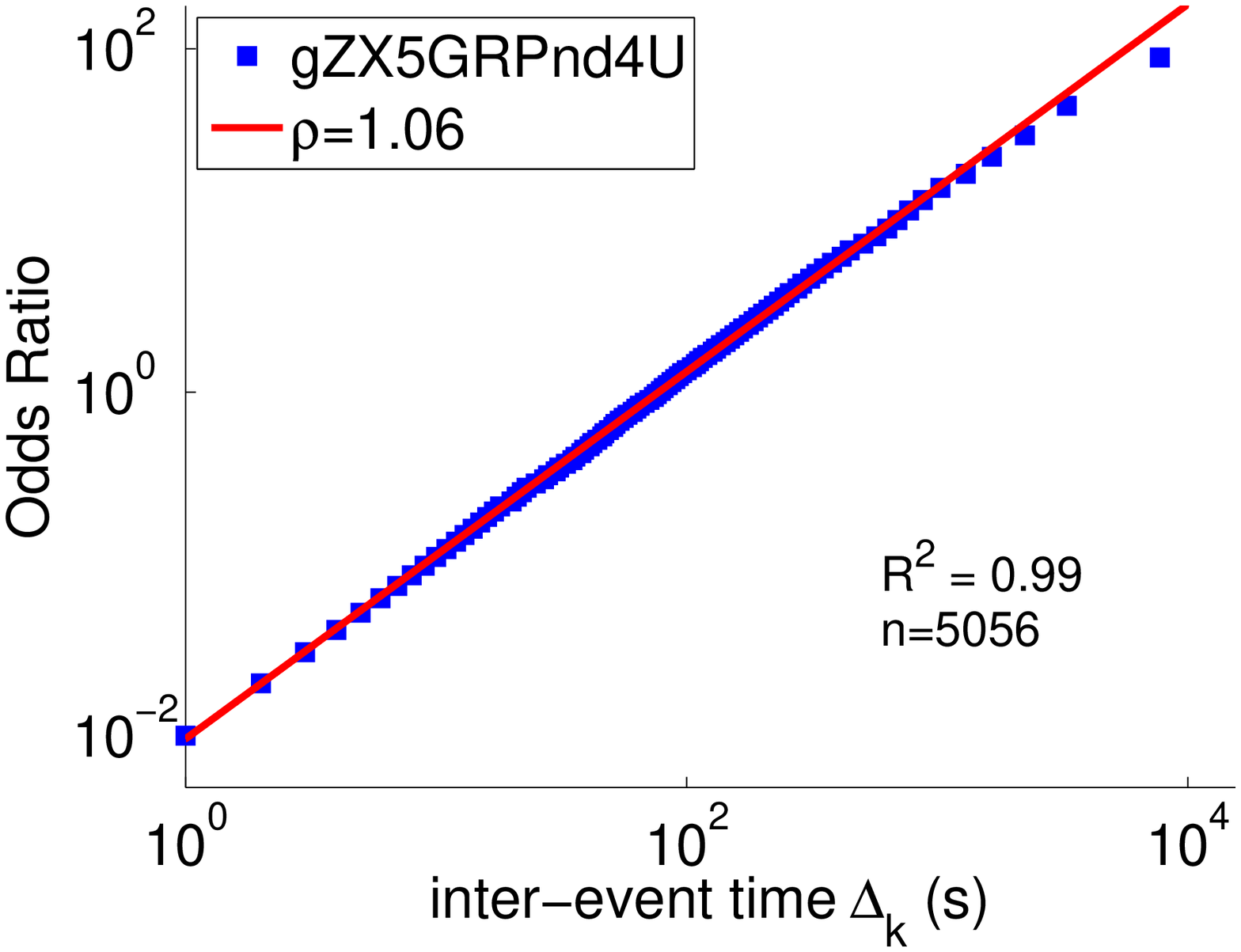}}  
\subfigure[MetaFilter]
  {\includegraphics[width=.23\textwidth]{./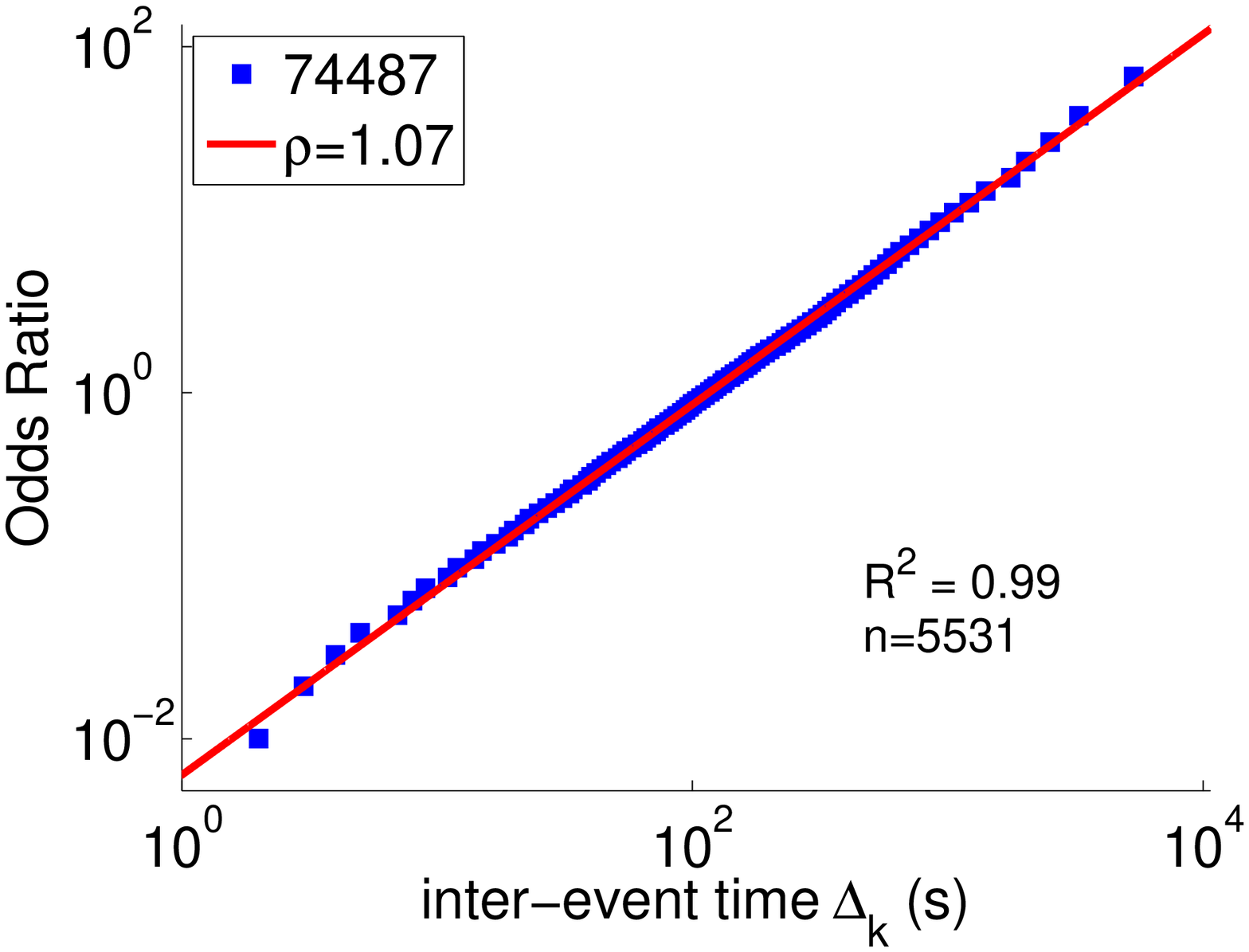}}  
\subfigure[MetaTalk]
  {\includegraphics[width=.23\textwidth]{./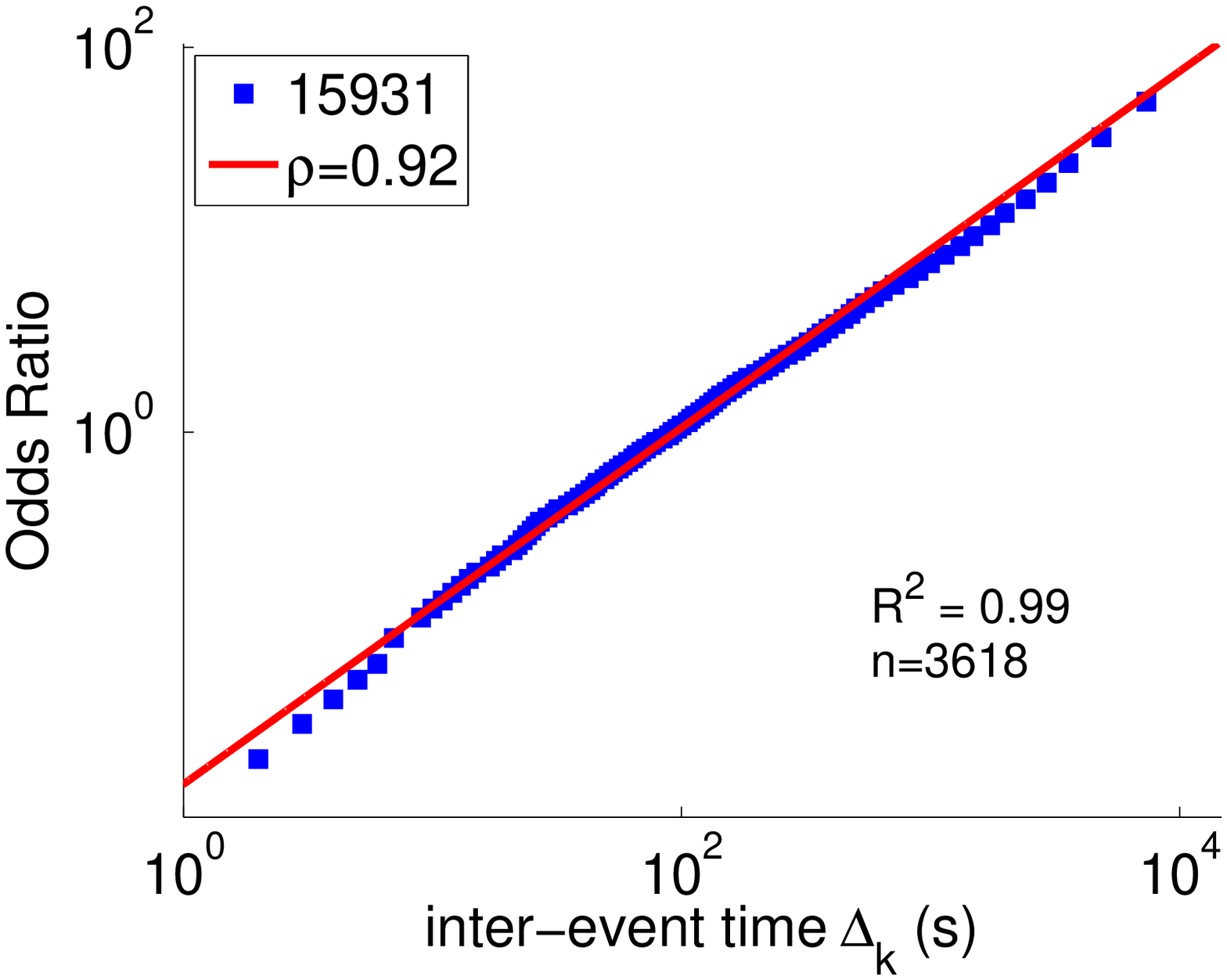}}  
\subfigure[Ask MetaFilter]
  {\includegraphics[width=.23\textwidth]{./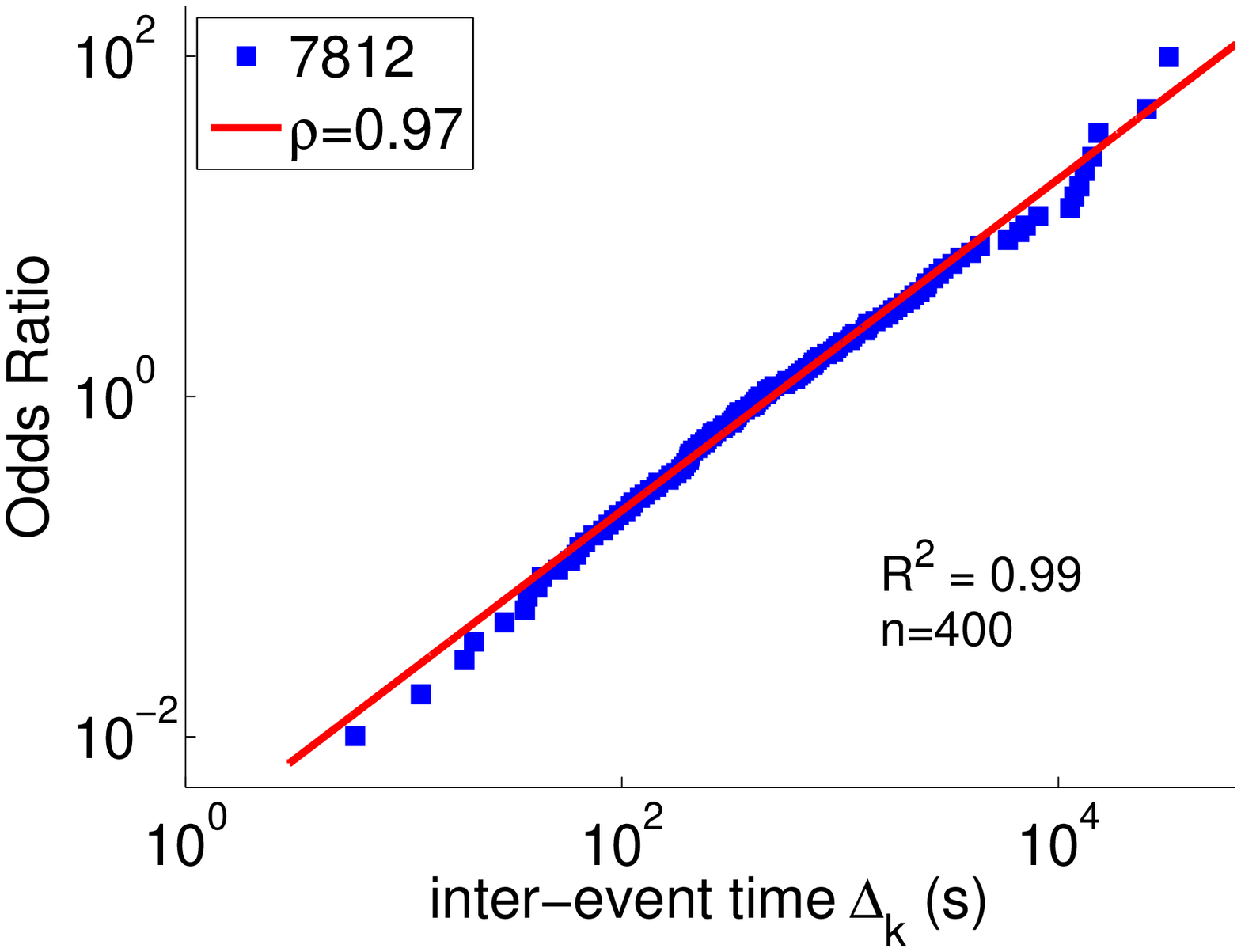}}  
\subfigure[Digg]
  {\includegraphics[width=.23\textwidth]{./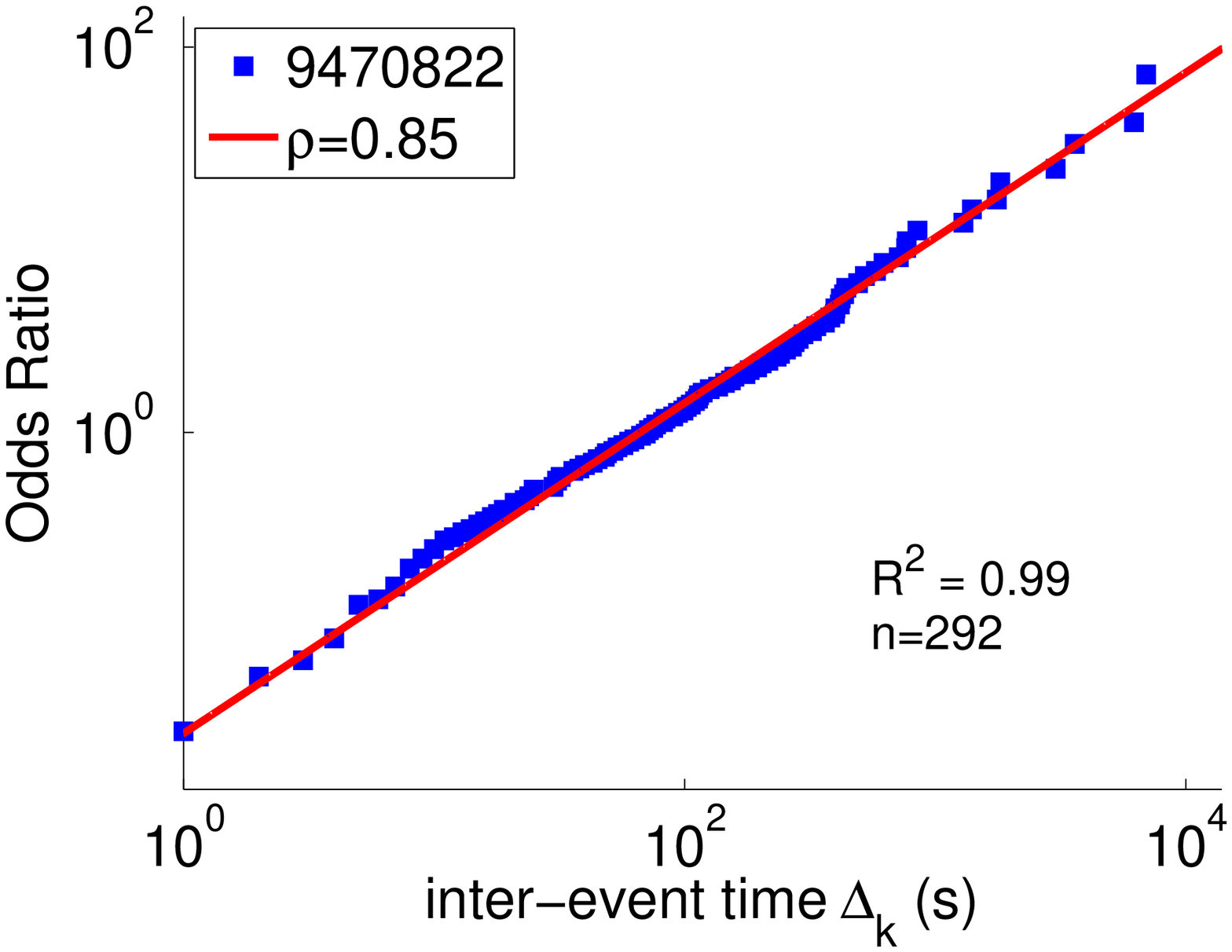}}  
\subfigure[SMS]
  {\includegraphics[width=.23\textwidth]{./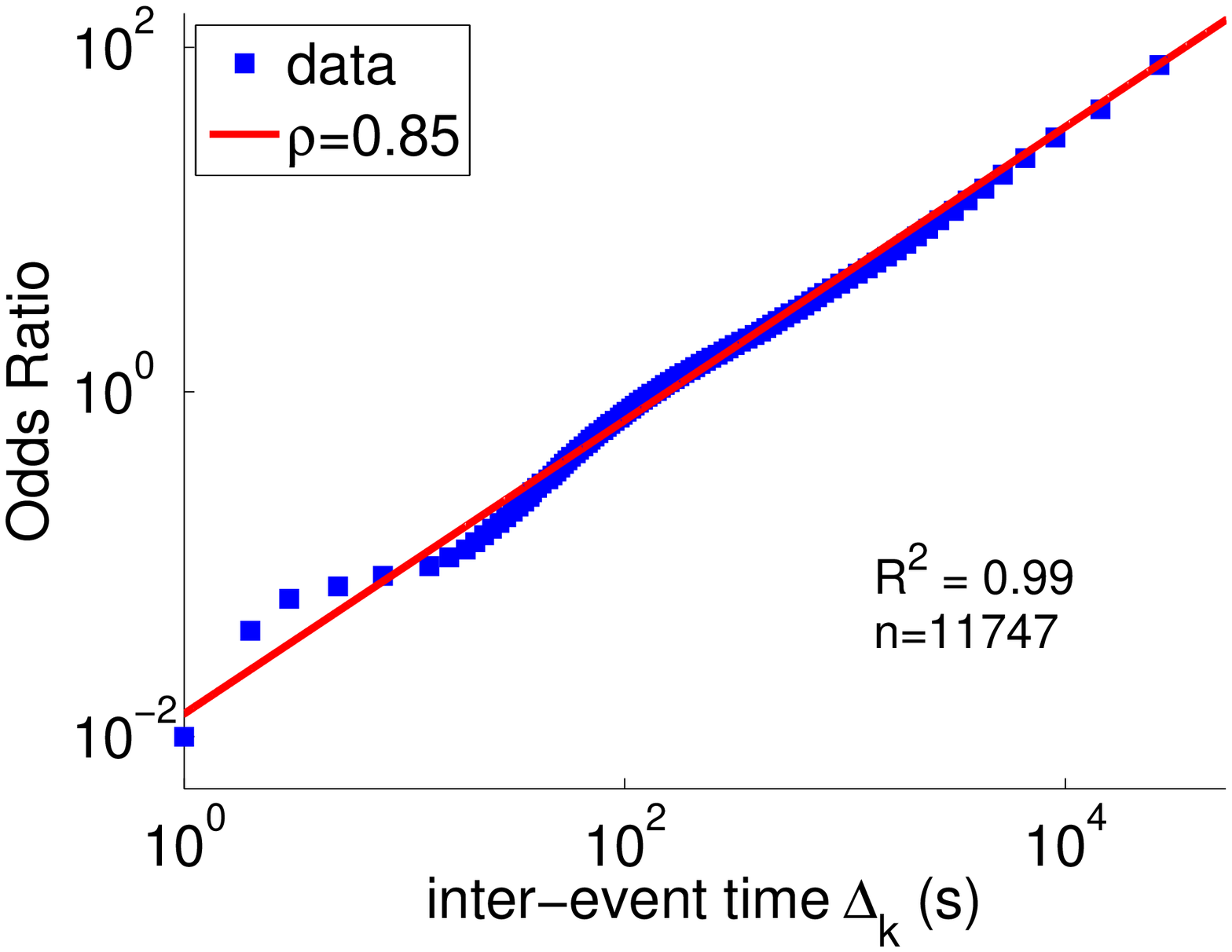}}  
\subfigure[Phone]
  {\includegraphics[width=.23\textwidth]{./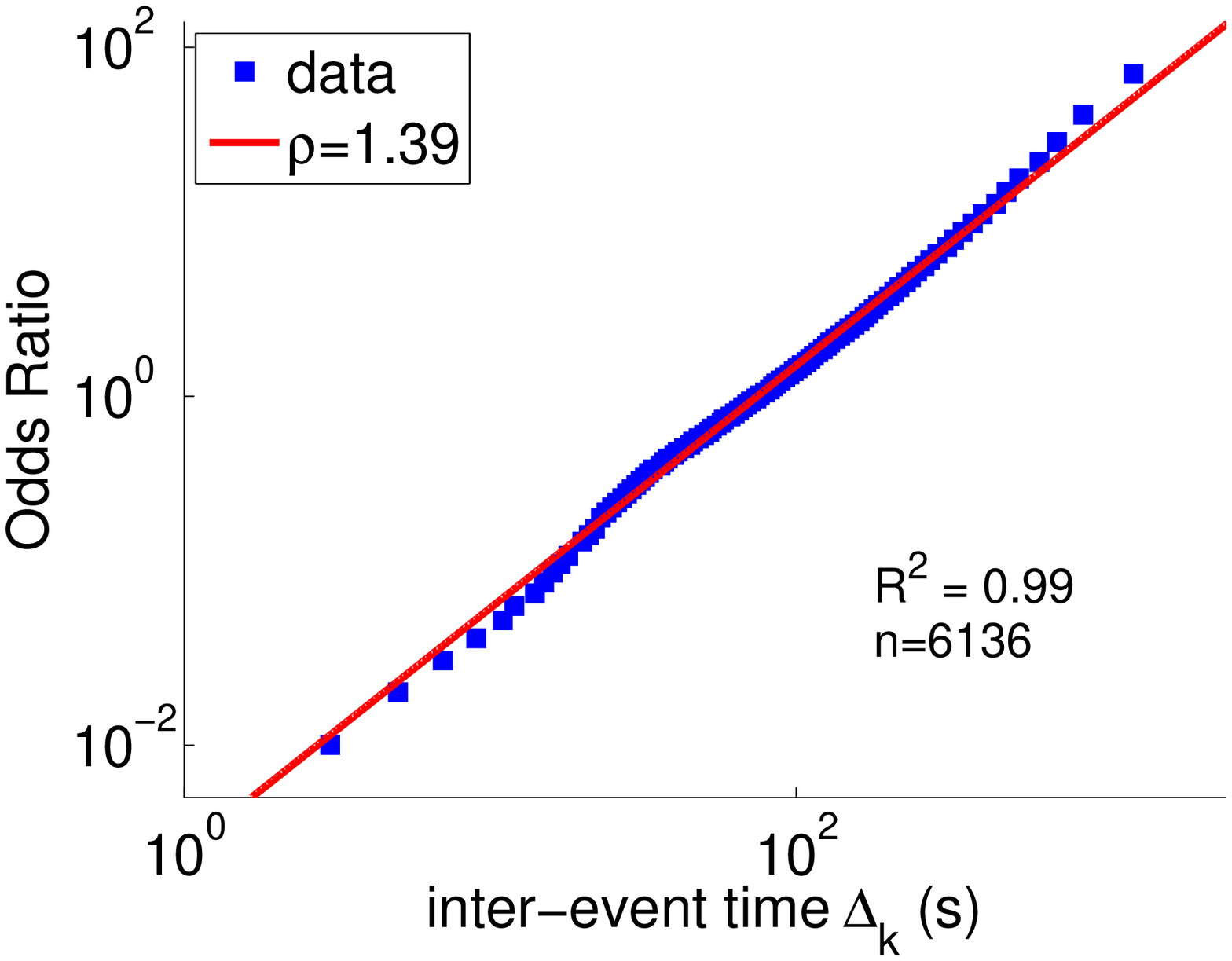}}  
\subfigure[E-mail]
  {\includegraphics[width=.23\textwidth]{./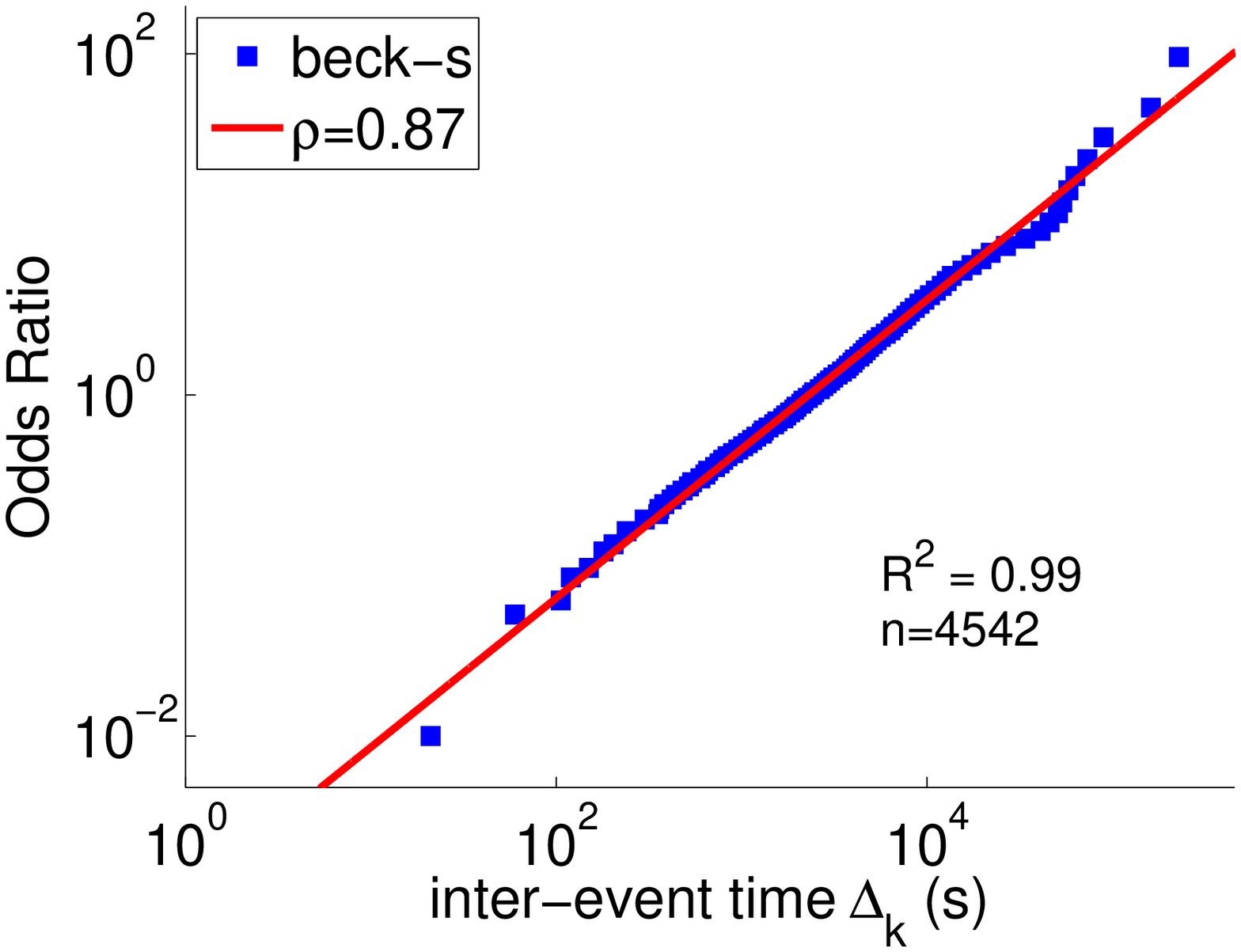}}    
  \caption{The Odds Ratio plot for one typical active individual of each dataset. Observe that an odds ratio power law, represented by a straight line with slope $\rho$ in a log-log scale, is an appropriate fit for all individuals.}
  \label{fig:pdfUser}
\end{figure*}

\subsection{Goodness of Fit}
\label{sec:gof}

In this section, we check whether the OR of the \ied{}s of all individuals of our datasets can be explained by a power law. We perform a linear regression using least squares fitting on the OR of the \ied{}s of all individuals. Since we consider every percentile and the OR is based on the CDF, a cumulative distribution, the linear regression can be used to measure the goodness of fit. We performed a Kolmogorov-Smirnov goodness of fit test, but because of digitalization errors and other deviations in the data
, this test is only approximated..

Figure~\ref{fig:gof} shows the histogram of the determination coefficient $R^2$ of the performed linear regressions. The determination coefficient $R^2$ is a statistical measure of how well the regression line approximates to the real data points. An $R^2 = 1.0$ indicates that the regression line perfectly fits the data. We observe that for the vast majority of individuals of our eight datasets, the $R^2$ is very close to $1.0$. More specifically, in the first group, the $R^2$ averages $0.99$ for the phone dataset, $0.96$ for the SMS dataset and $0.97$ for the e-mail dataset. For the second group, the $R^2$ averages $0.97$ for the Youtube, Askme and Digg datasets and $0.98$ for the Mefi and Meta datasets. This allows us to state the following universal pattern:

\begin{universal}
 The Odds Ratio of the inter-event time distribution of communication events is well fitted by a power law.
\end{universal}

 \begin{figure}[!hbtp]
  \centering
\subfigure[First group]
  {\includegraphics[width=.40\textwidth]{./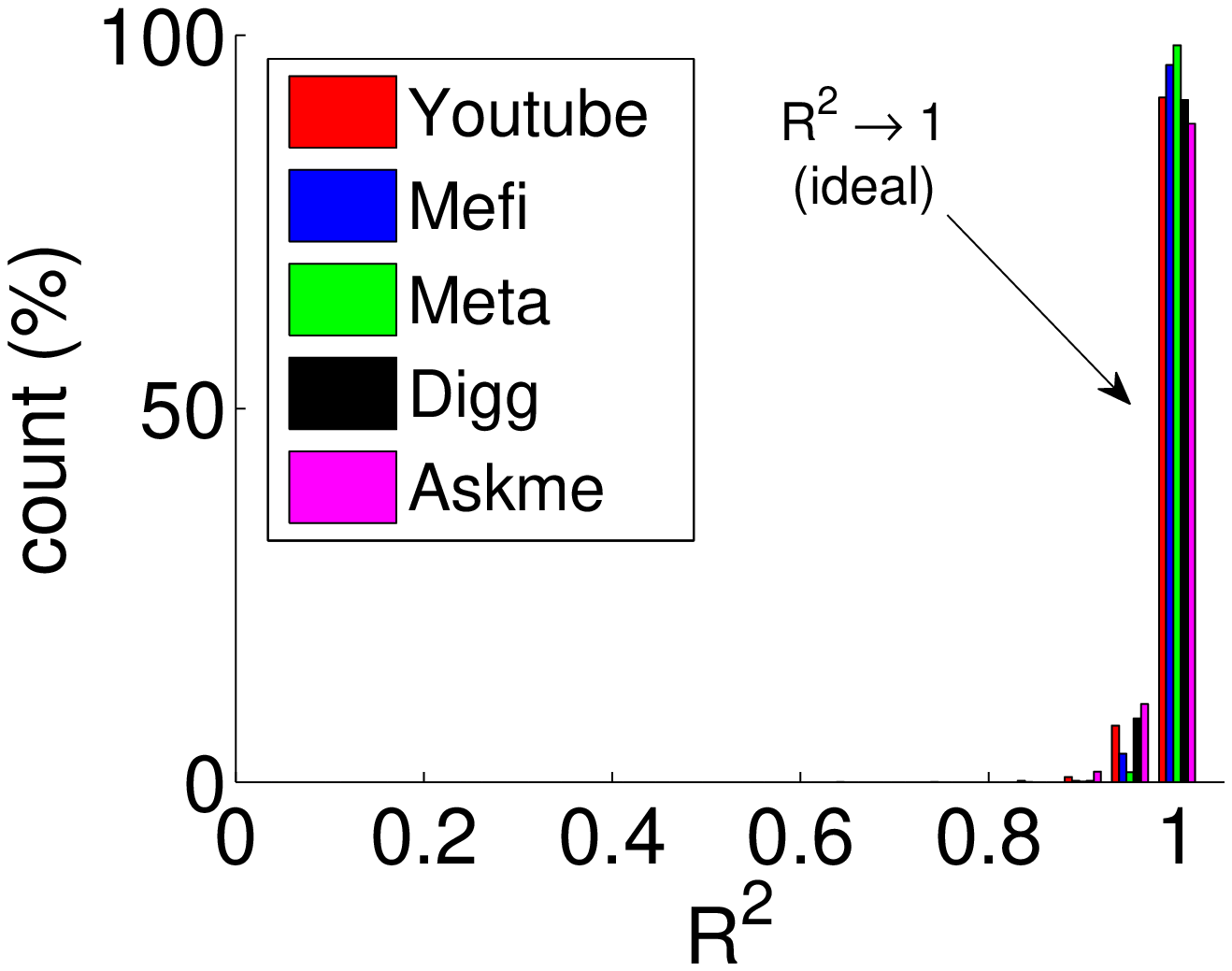}}  
\subfigure[Second group]
  {\includegraphics[width=.40\textwidth]{./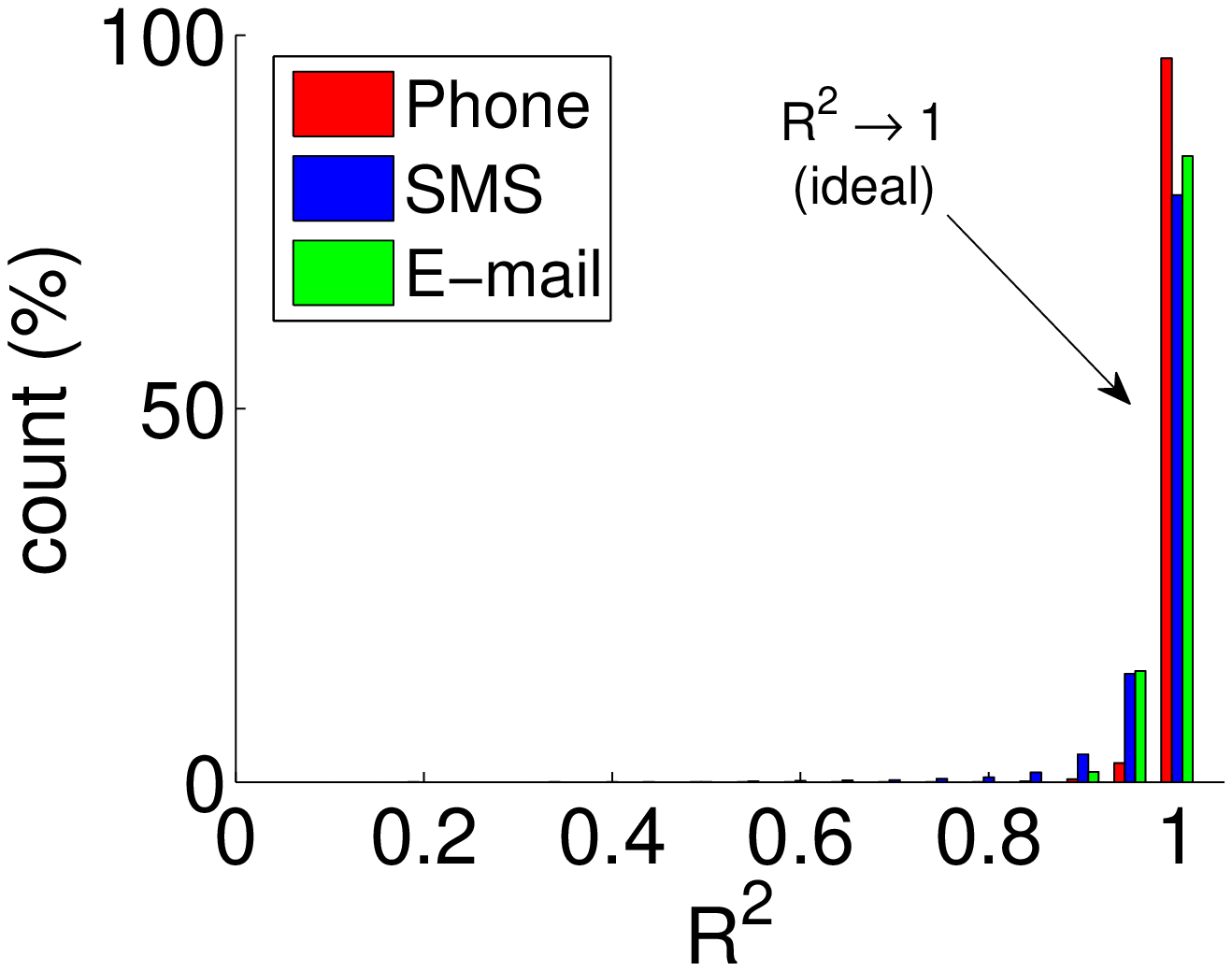}}  
  \caption{The goodness of fit of our proposed model. We show the histograms of the $R^2$s measured for every user in the eight datasets. These histograms consider bins of size $0.05$. Thus, observe that the $R^2$ value for the great majority of individuals is located in the last bin, from $0.95$ to $1$.}
  \label{fig:gof}
\end{figure}

\subsection{Odds Ratio of Well Known Distributions}

\pedro{We have seen that the \ied of the majority of the individuals of our datasets is well modeled by an odds ratio power law. 
In Figure~\ref{fig:ordists}, observe that this odds ratio power law behavior is also seen in log-logistically distributed data and cannot be seen in other well known distributions. This is the first indication that the marginal distribution of time intervals between communications of individuals is well modeled by a log-logistic distribution.}

\begin{figure*}[tpb]
\centering
\subfigure[log-logistic]
  {\includegraphics[width=.23\textwidth]{./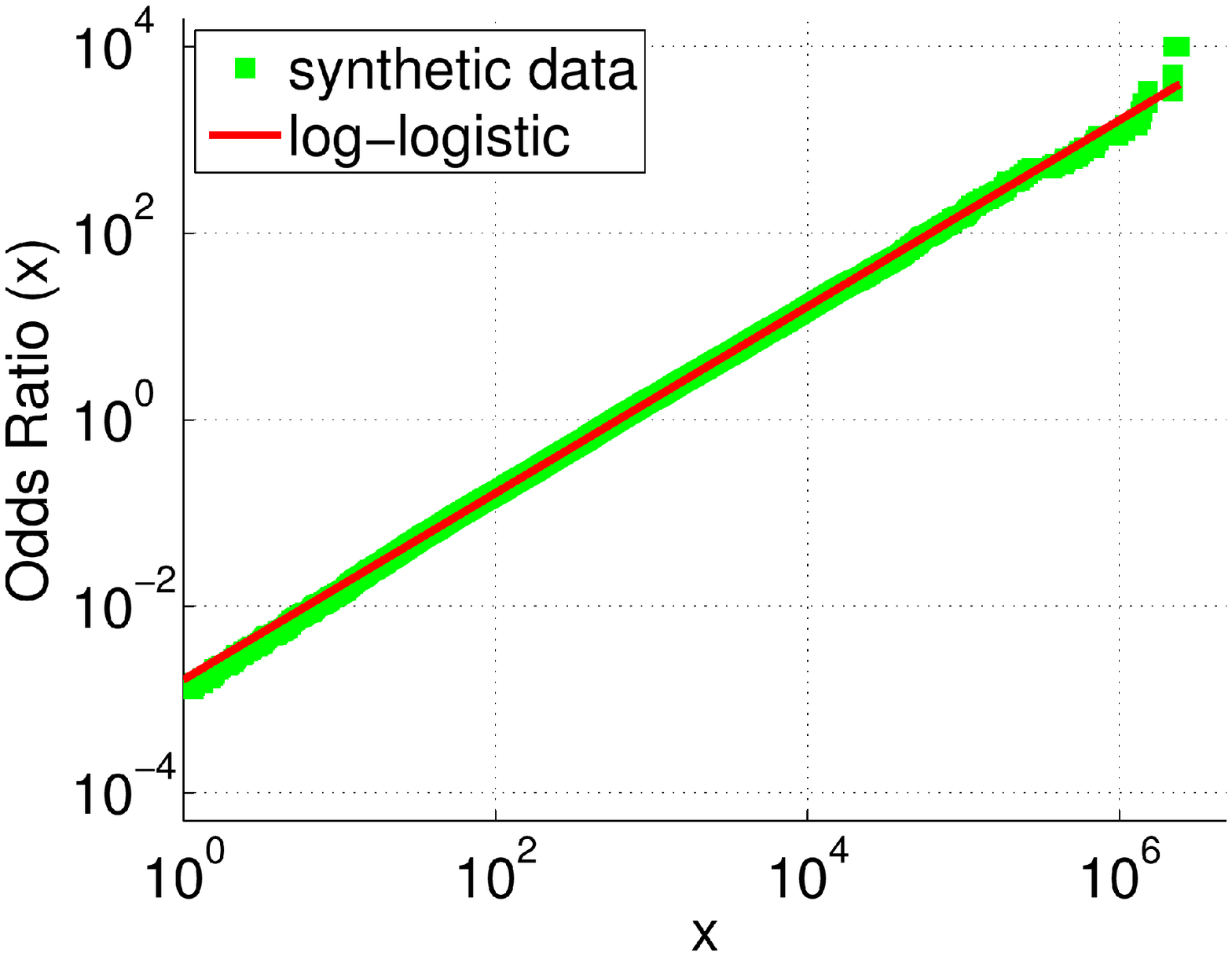}}
\subfigure[log-normal]
  {\includegraphics[width=.23\textwidth]{./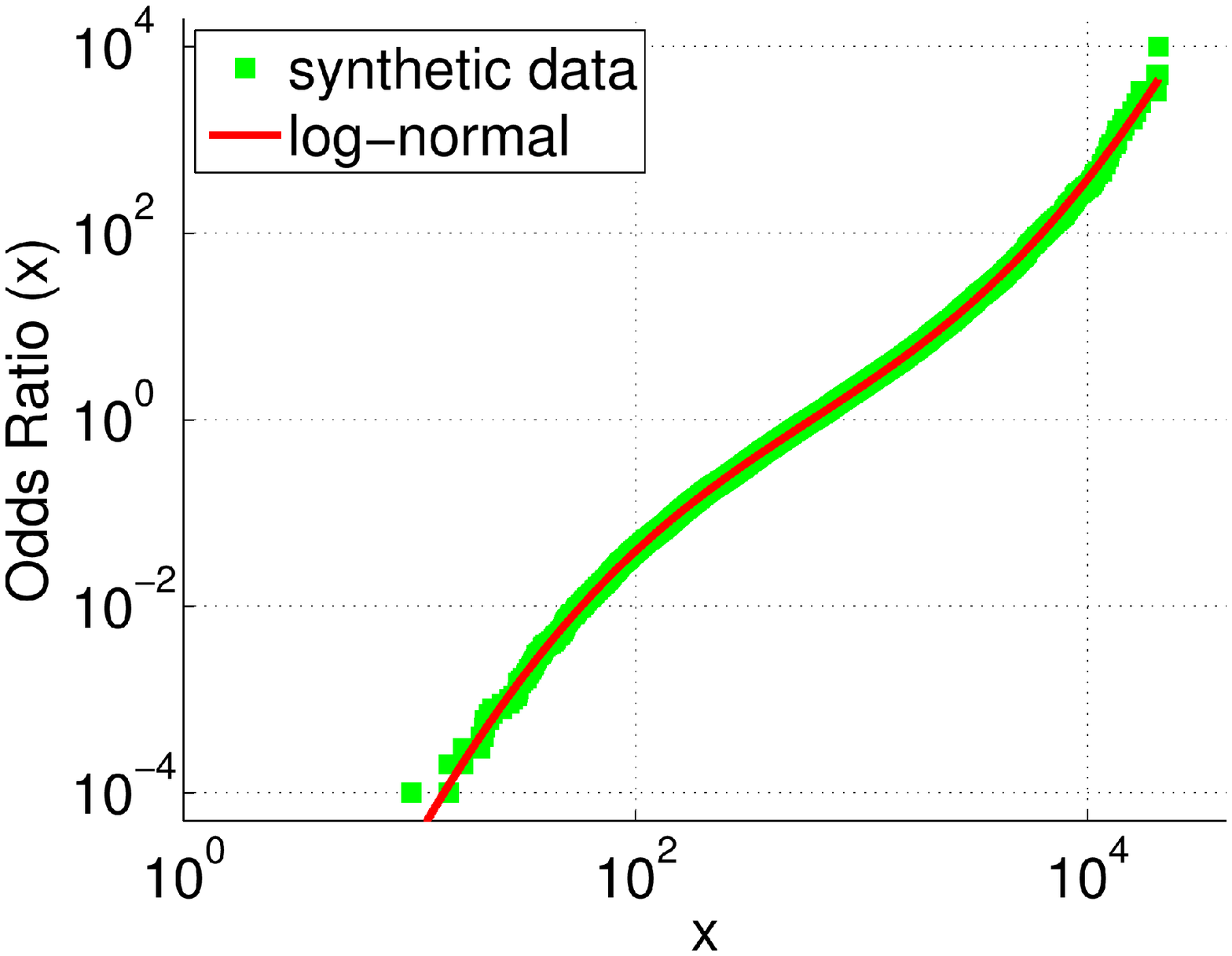}}
\subfigure[exponential]
  {\includegraphics[width=.23\textwidth]{./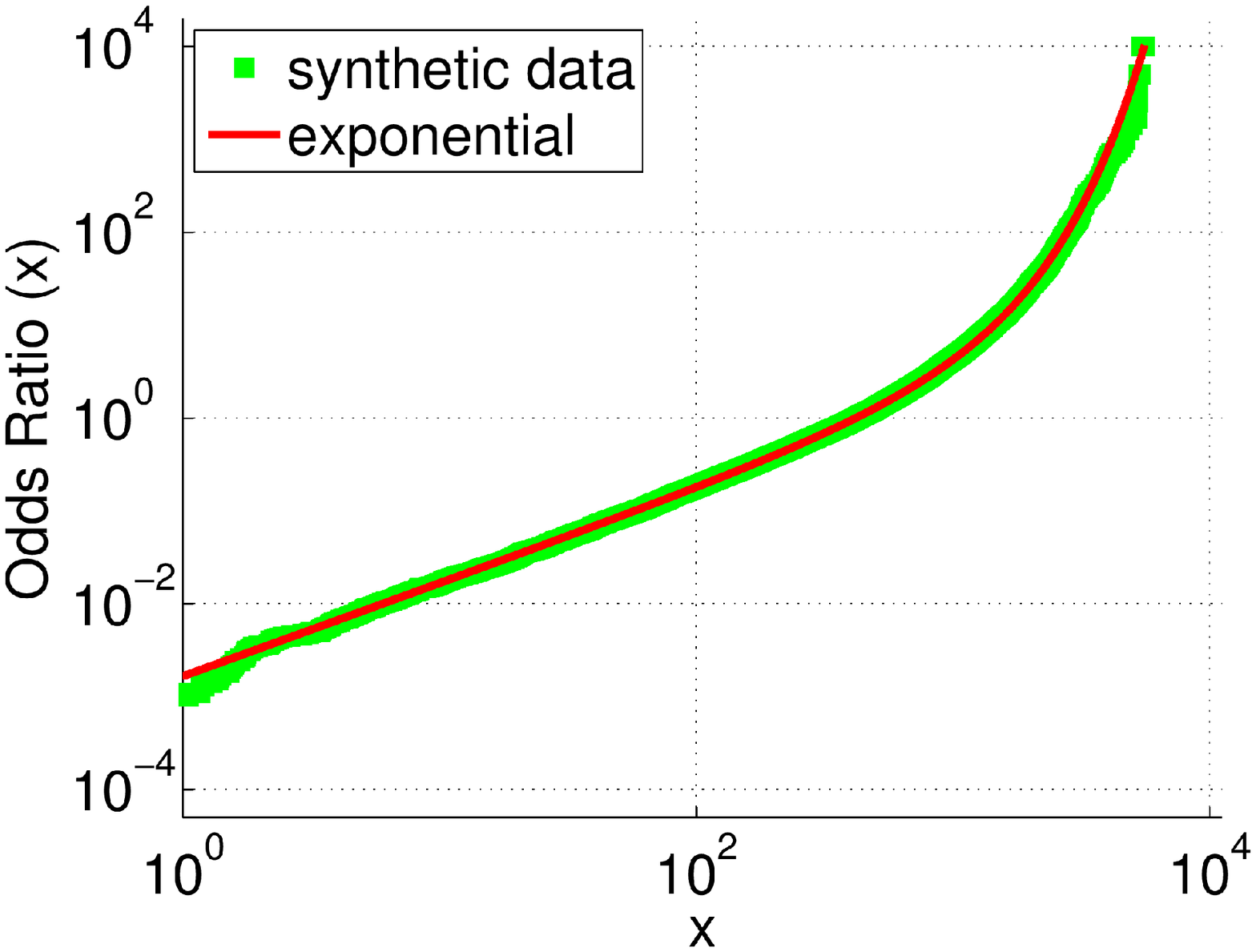}}
\subfigure[Weibull]
  {\includegraphics[width=.23\textwidth]{./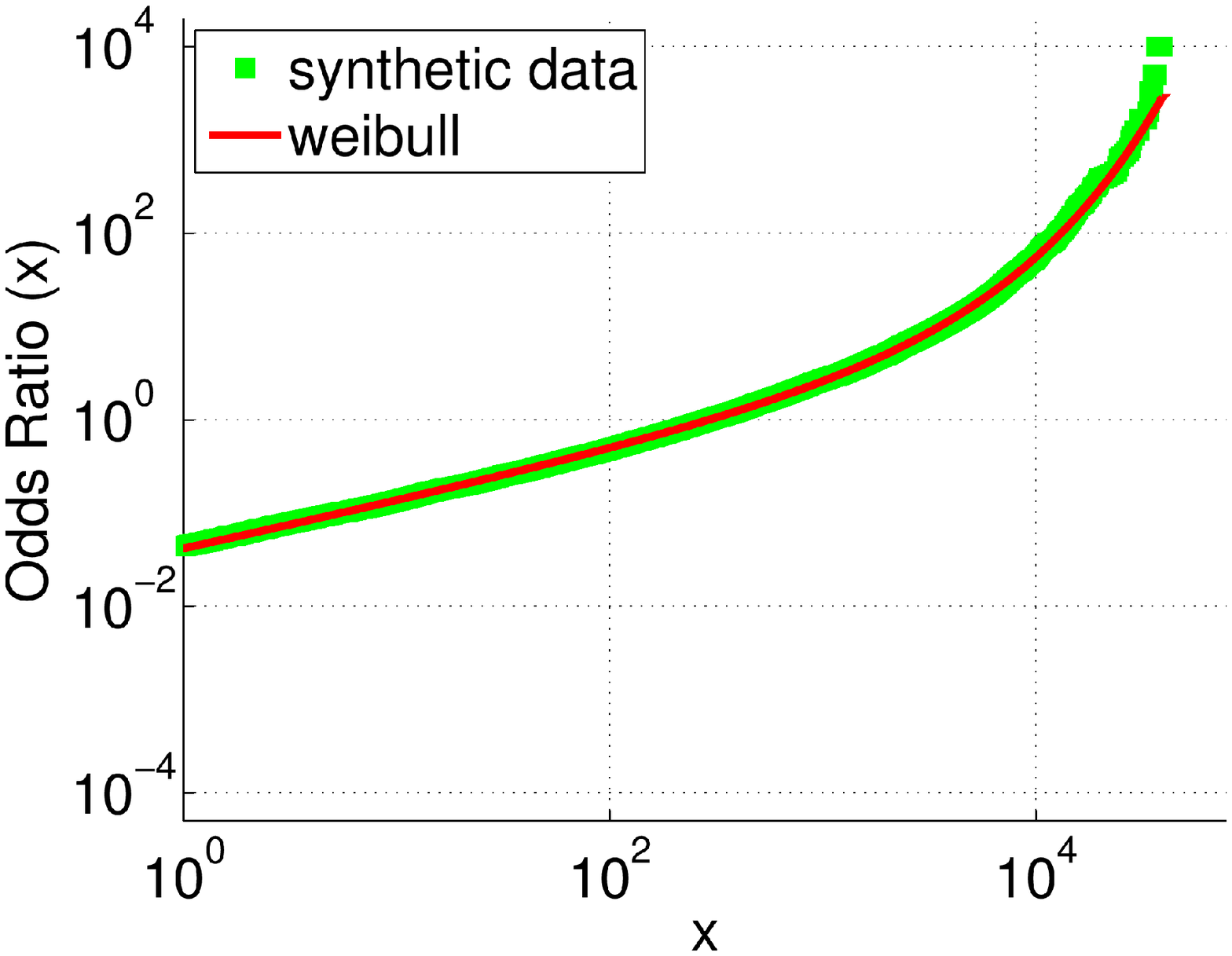}}
\subfigure[Rayleigh]
  {\includegraphics[width=.23\textwidth]{./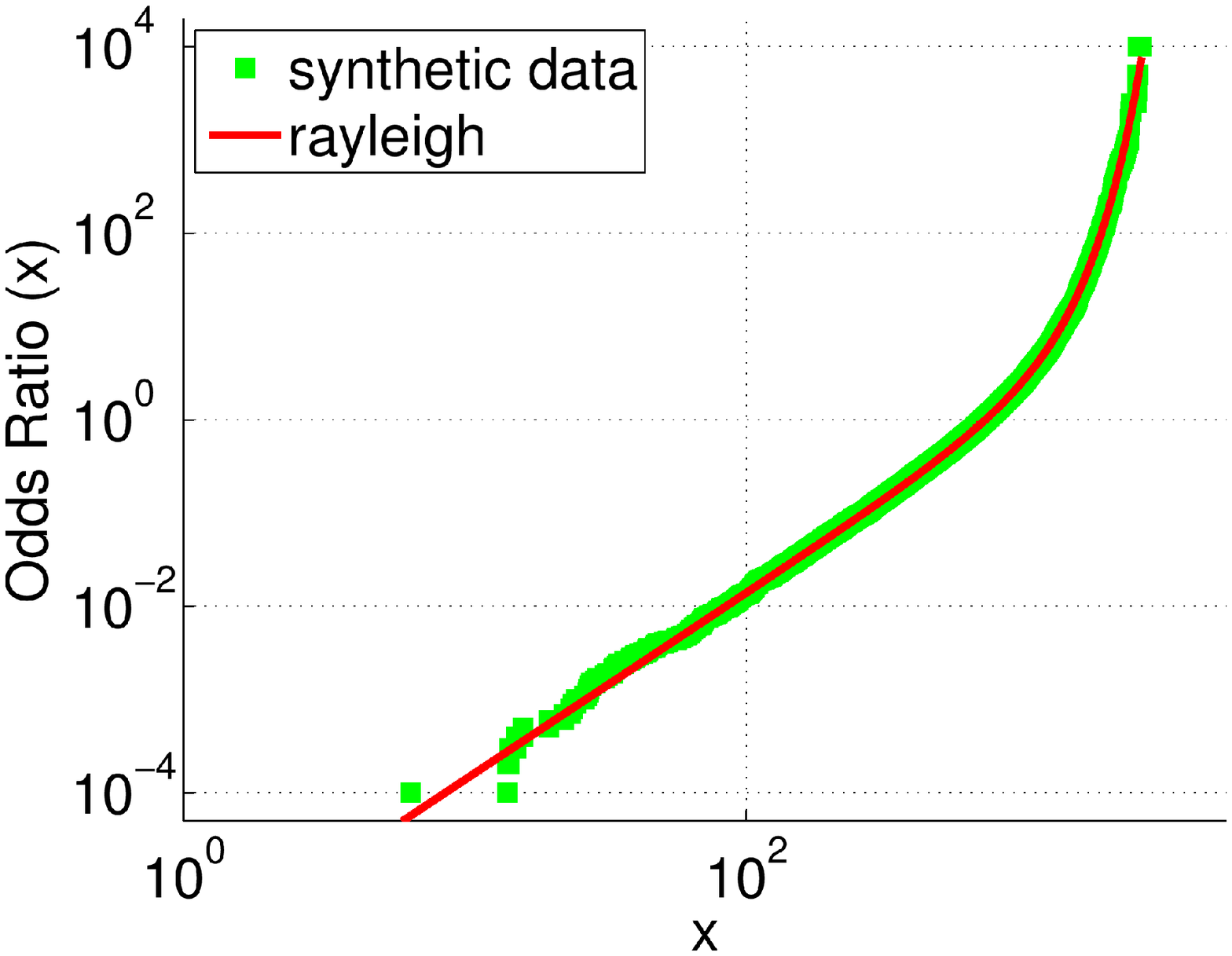}}
\subfigure[gamma]
  {\includegraphics[width=.23\textwidth]{./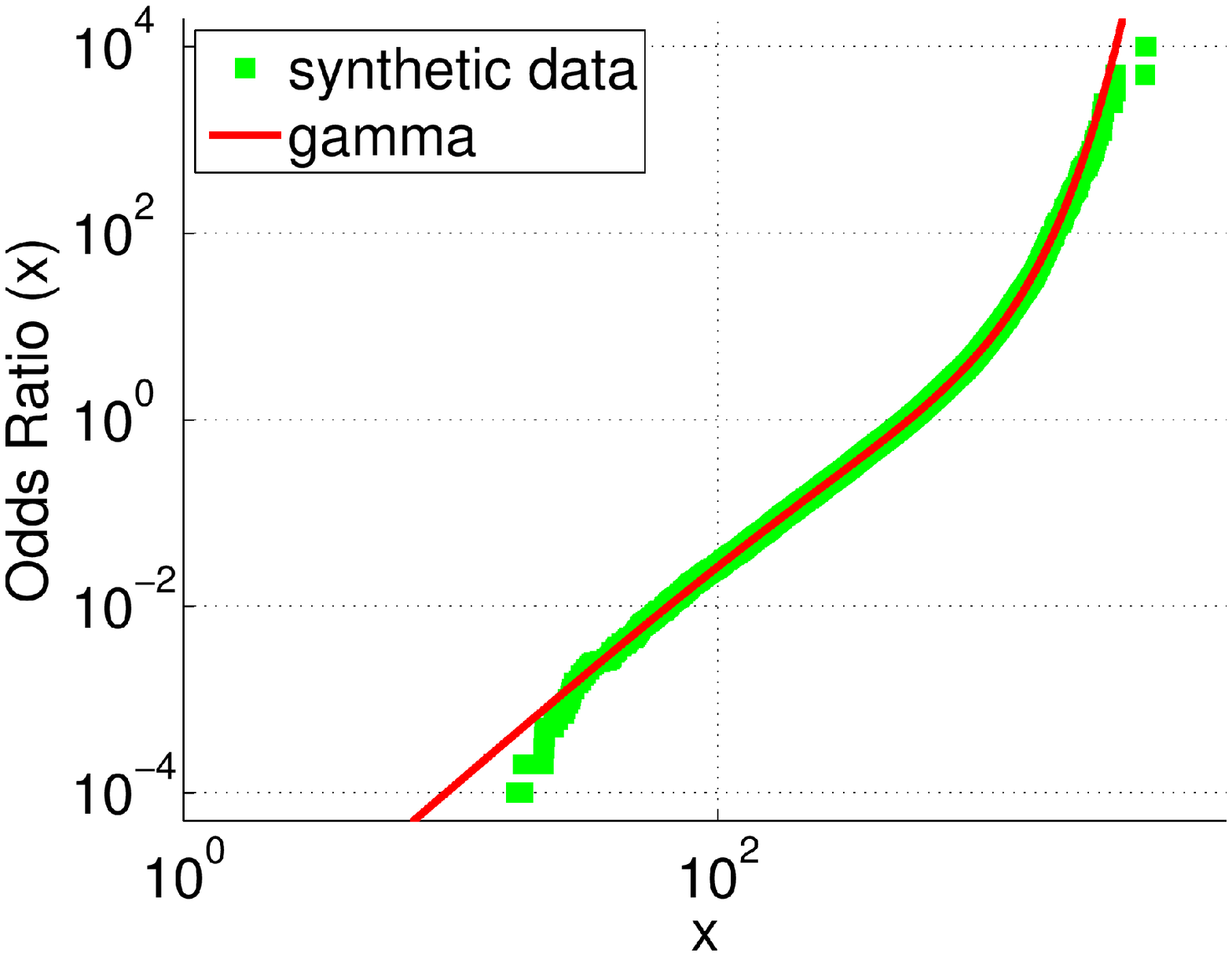}}
\subfigure[normal]
  {\includegraphics[width=.23\textwidth]{./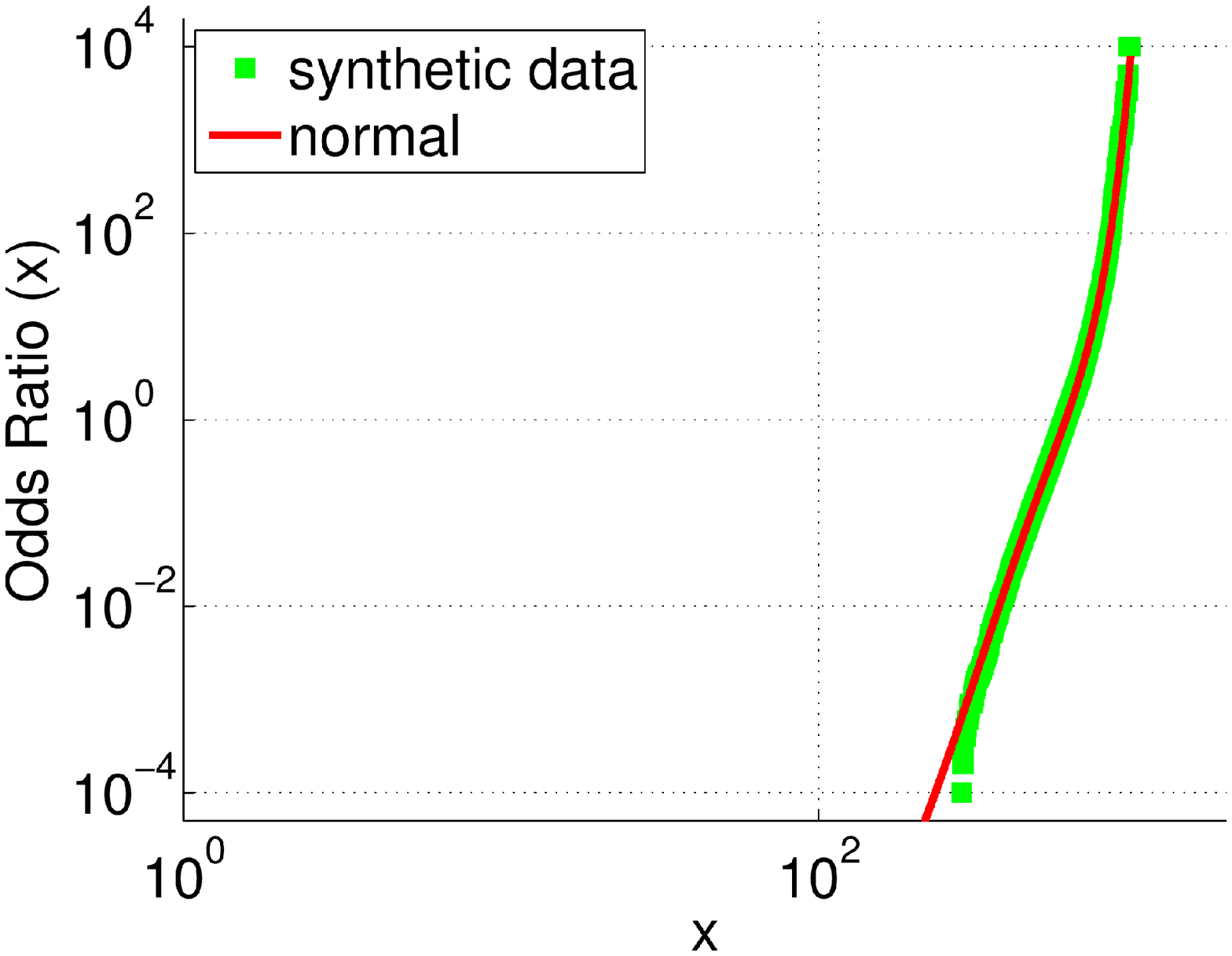}}
\subfigure[power law]
  {\includegraphics[width=.23\textwidth]{./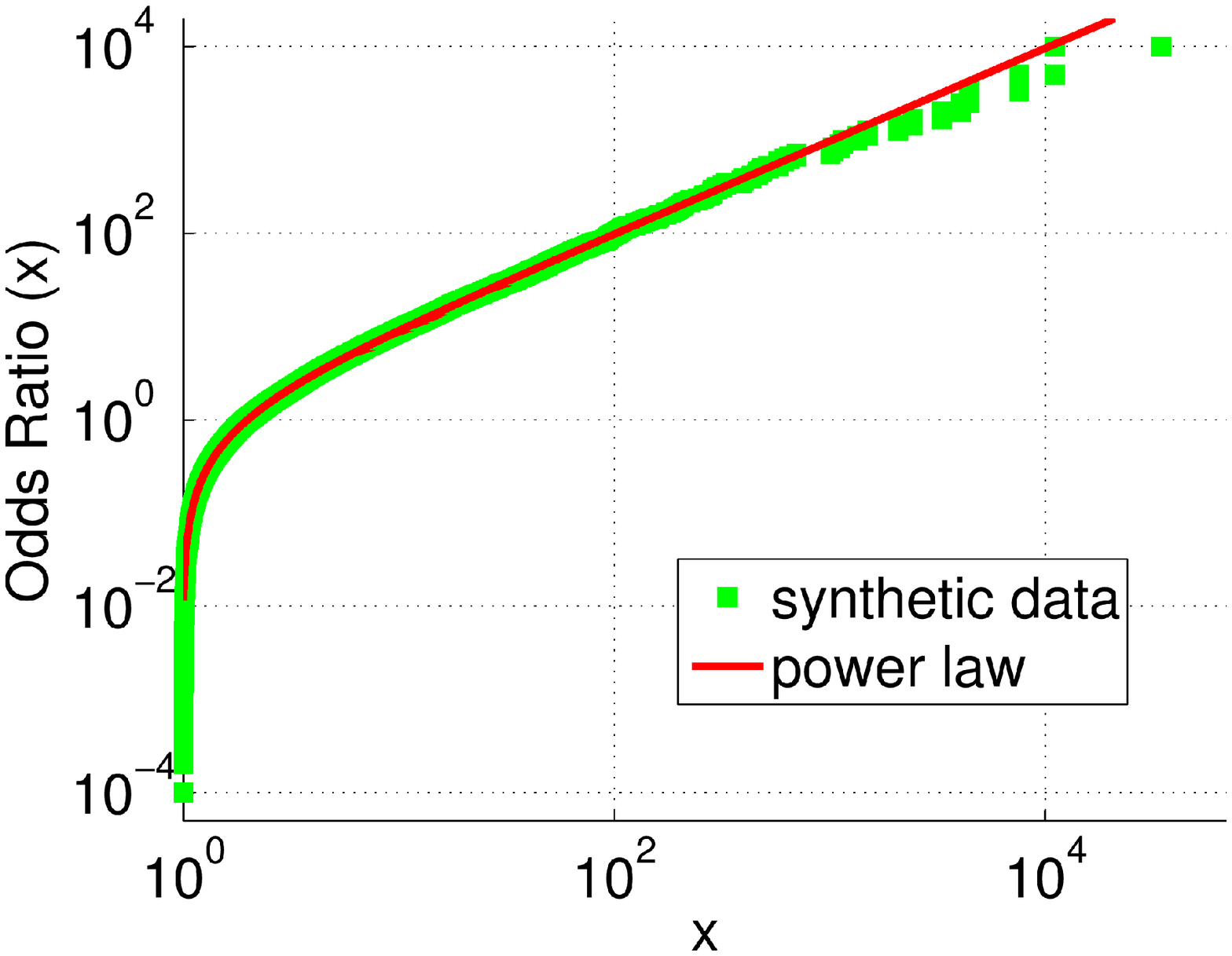}}  
  \caption{The Odds Ratio function for eight well known distributions.}
  \label{fig:ordists}
\end{figure*}

\section{Temporal Correlation}
\label{sec:correlation}

Although most previous analysis focus solely on the marginal \ied, 
a subtle point is the {\em correlation} between successive inter-event times 
($\Delta_{k-1}$ and $\Delta_k$).
What we illustrate here is that the independence between $\Delta_k$ and $\Delta_{k-1}$ does \textbf{not} hold for the eight datasets we analyzed in this work.

In Figure~\ref{fig:acf}, we plot, for the same typical users of Figure~\ref{fig:pdfUser}, all the pairs of consecutive inter-event times $(\Delta_{k-1},\Delta_{k})$. We also show the regression of the data points using the LOWESS smoother~\cite{Cleveland:JASS:1979}.
While the PP, as for any other renewal process, the regression is a flat line with slope $0$, for the eight typical users $\Delta_{k}$ tends to grow
with  $\Delta_{k-1}$. This means that if I called you five years ago, my next phone call will be in about five years later. In short, there is a strong, positive dependency 
between the current inter-event time ($\Delta_k$) and the previous one ($\Delta_{k-1}$), clearly contradicting the independence assumption.

\begin{figure*}[tpb]
\centering
\subfigure[Youtube]
  {\includegraphics[width=.23\textwidth]{./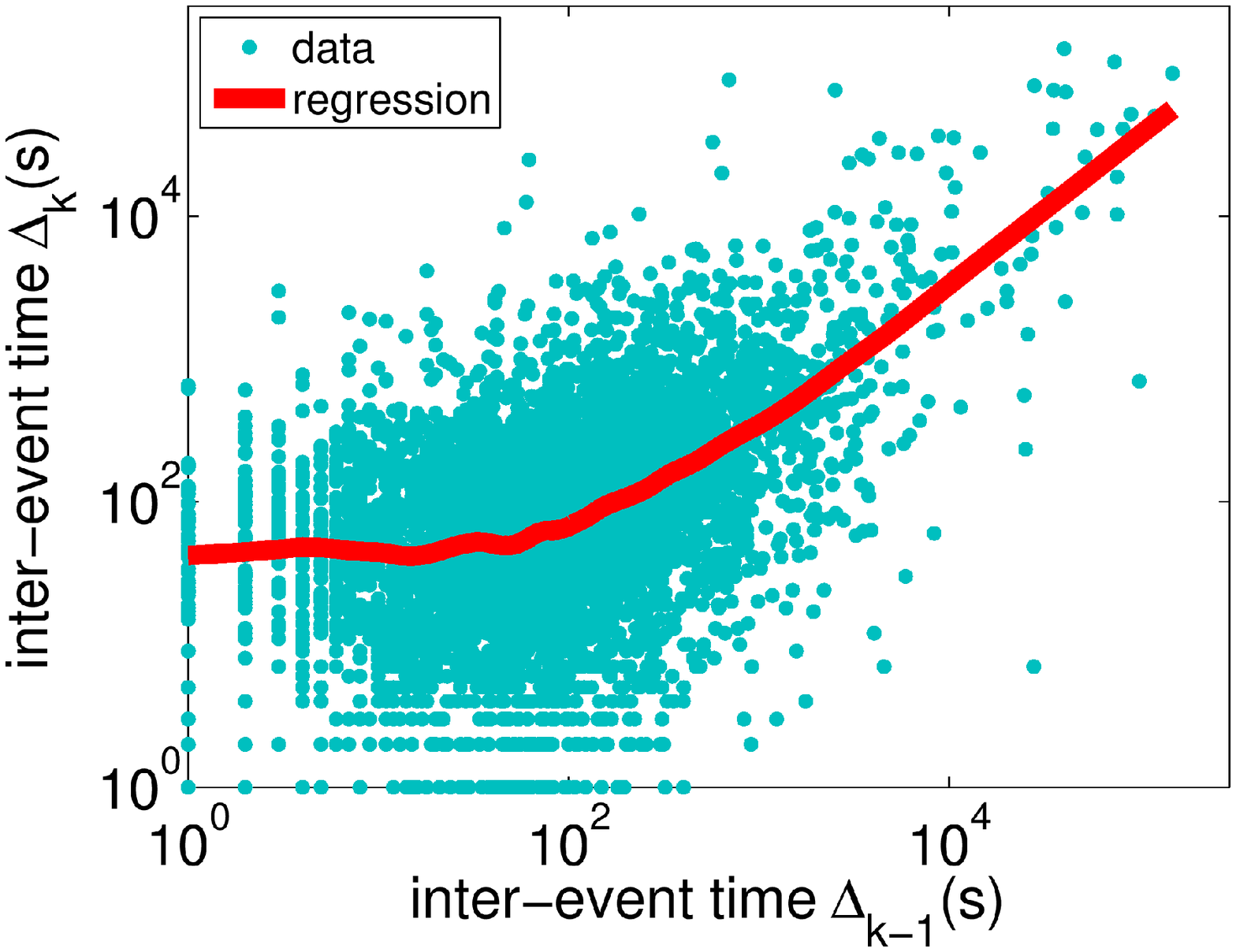}}  
\subfigure[MetaFilter]
  {\includegraphics[width=.23\textwidth]{./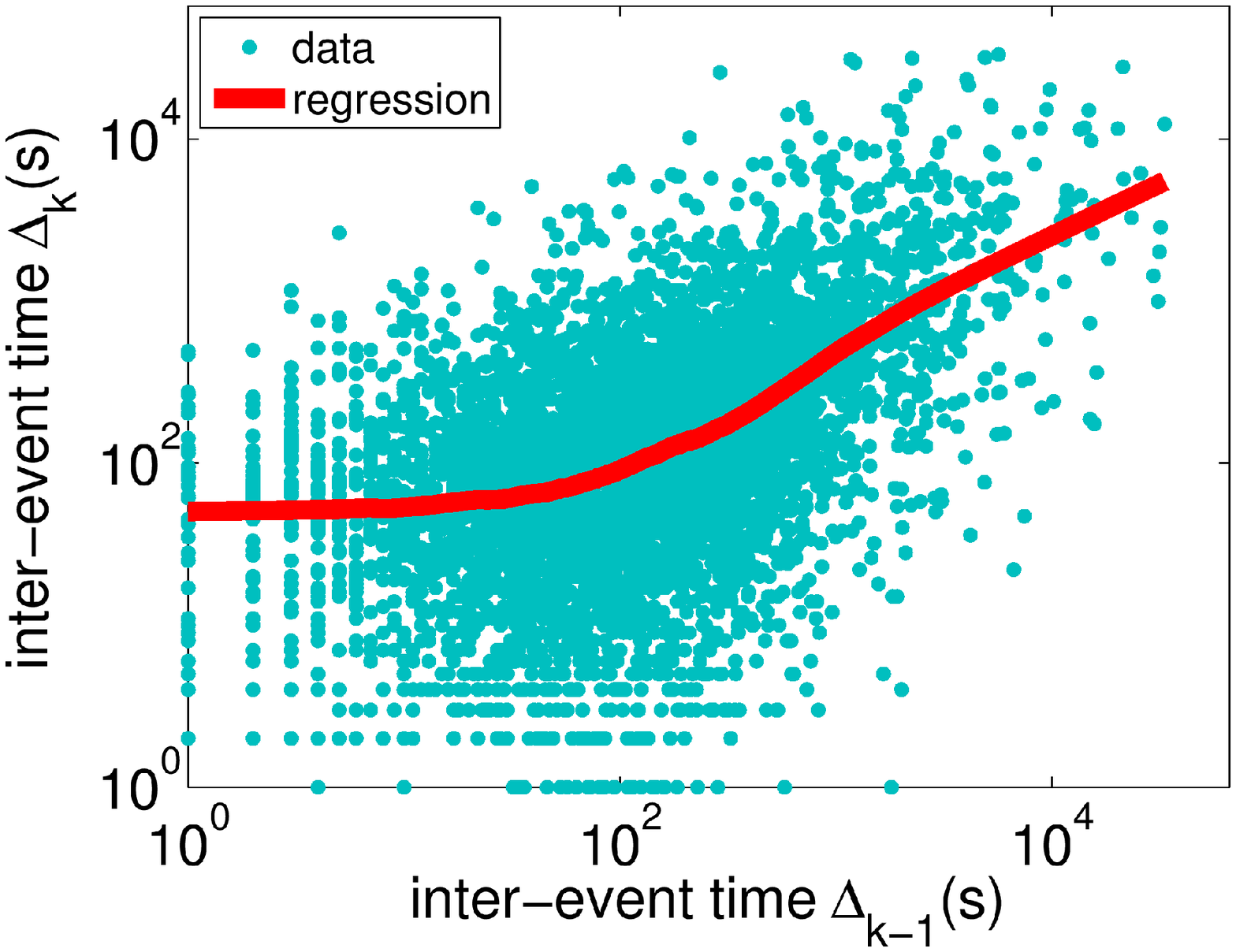}}  
\subfigure[MetaTalk]
  {\includegraphics[width=.23\textwidth]{./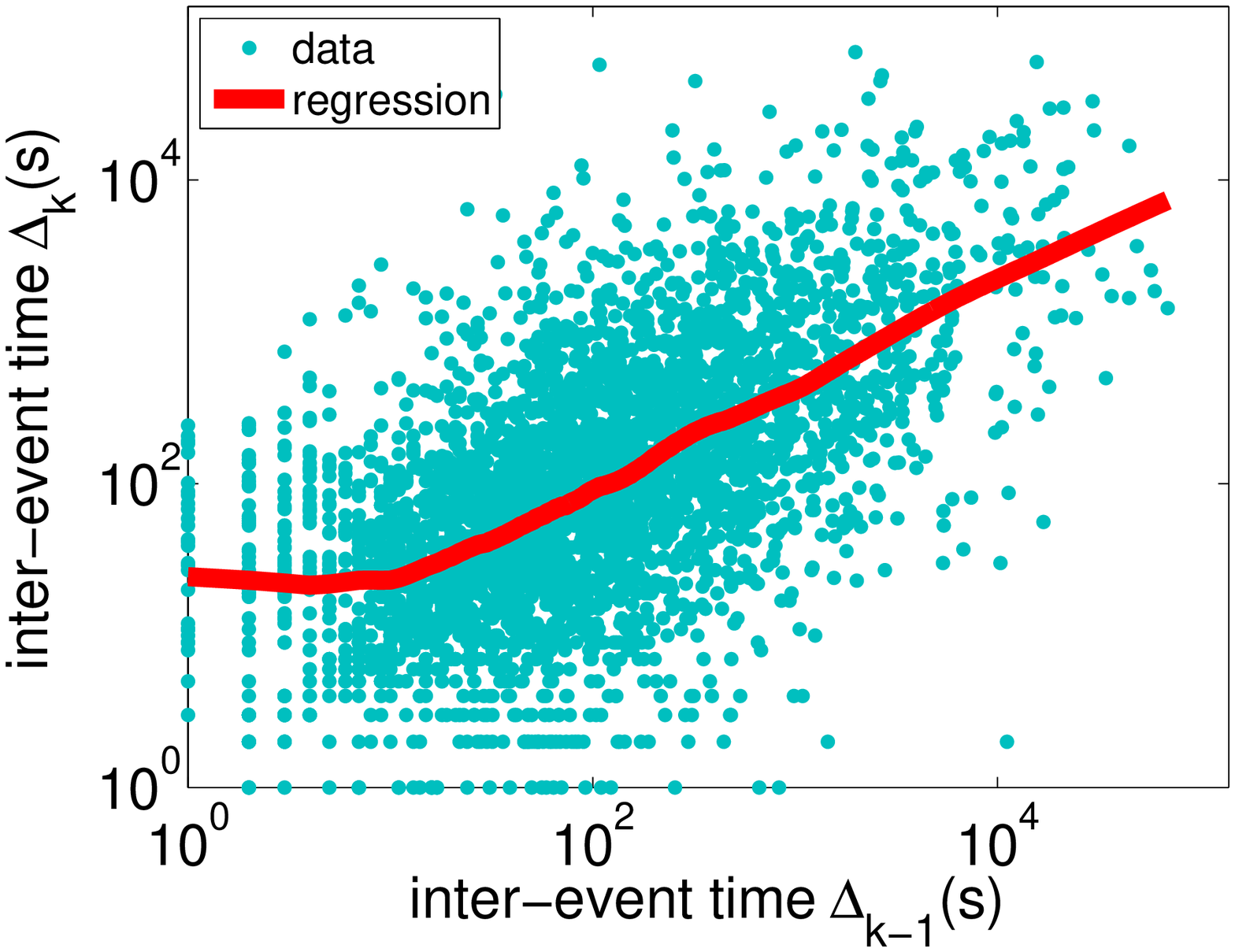}}  
\subfigure[Ask MetaFilter]
  {\includegraphics[width=.23\textwidth]{./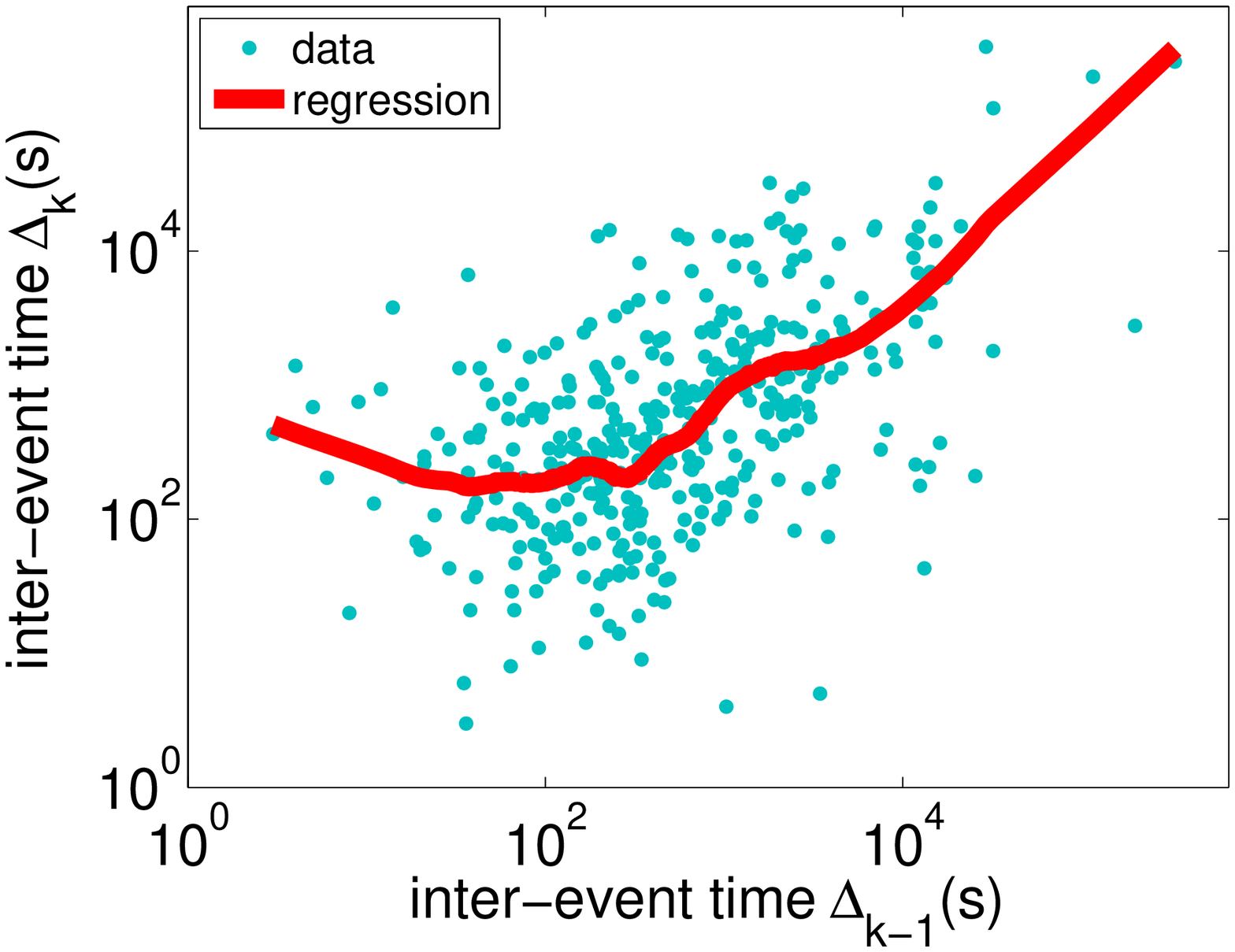}}  
\subfigure[Digg]
  {\includegraphics[width=.23\textwidth]{./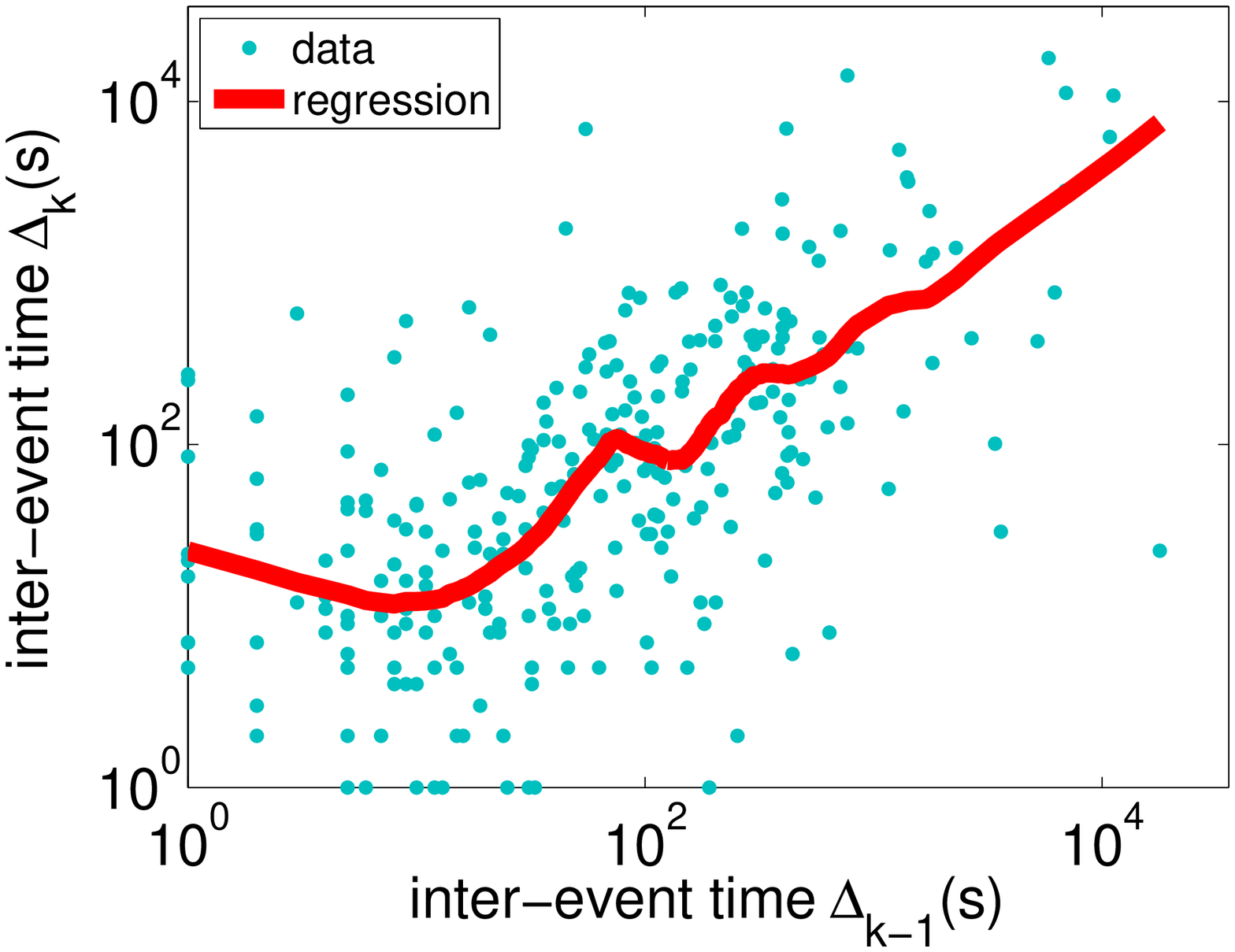}}  
\subfigure[SMS]
  {\includegraphics[width=.23\textwidth]{./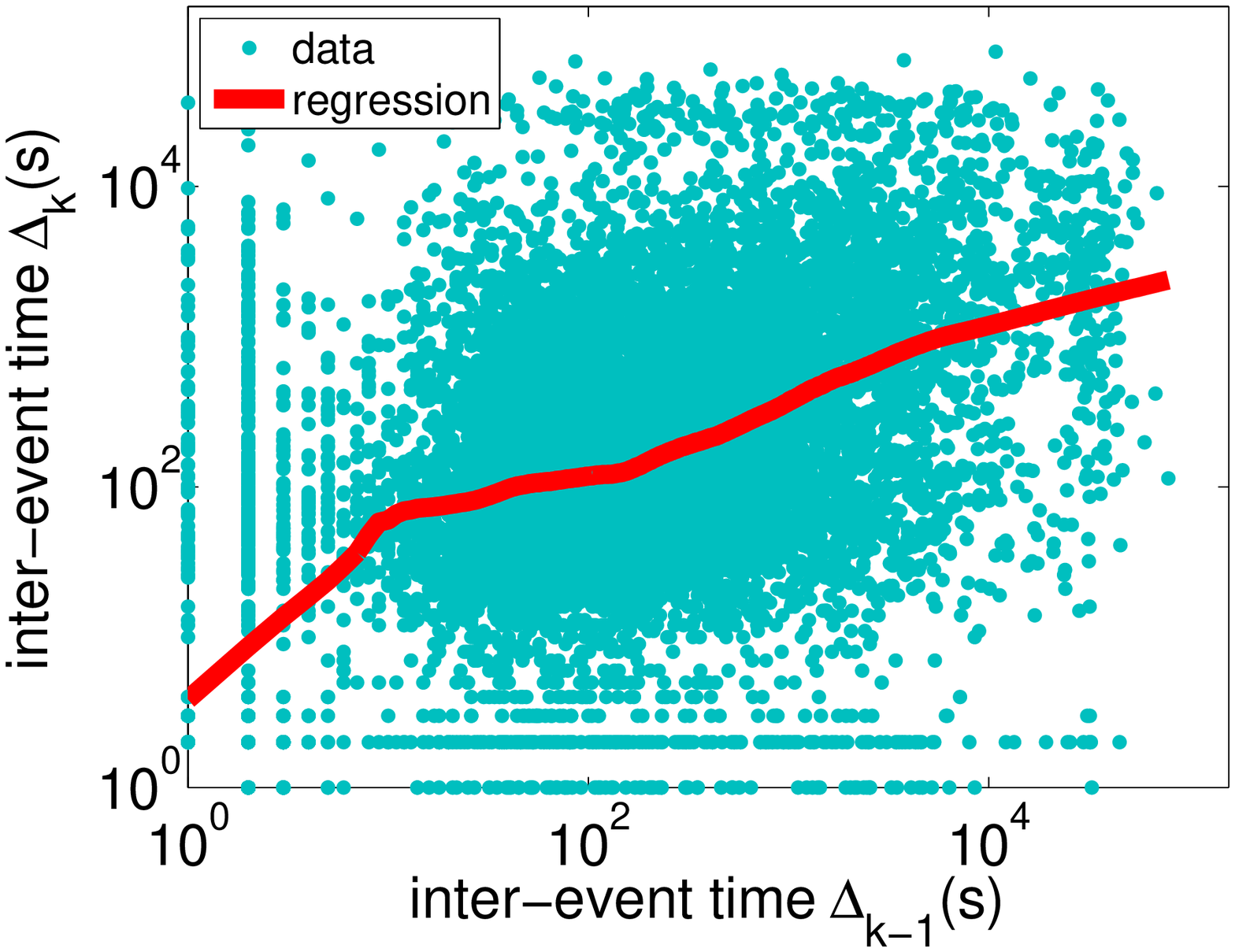}}  
\subfigure[Phone]
  {\includegraphics[width=.23\textwidth]{./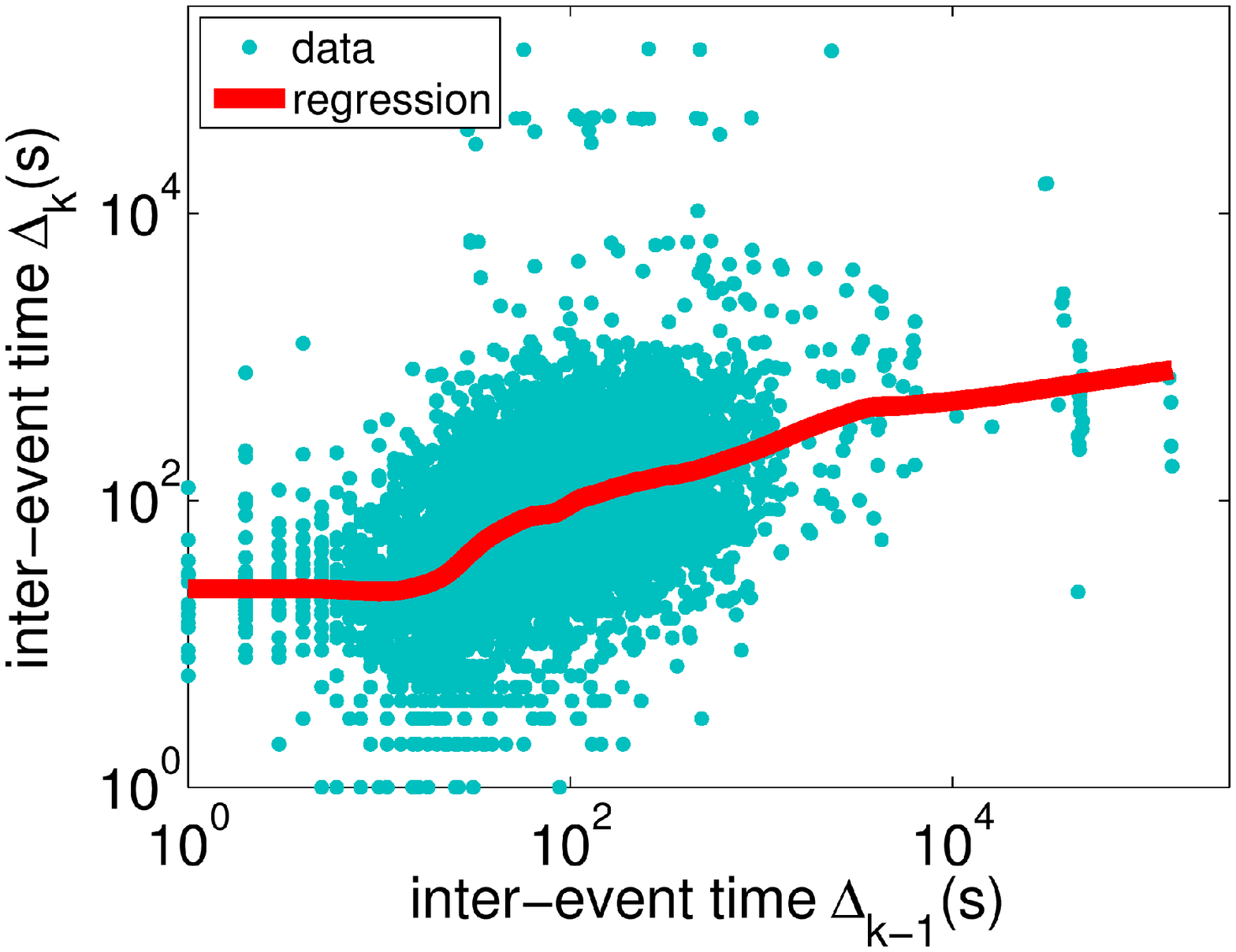}}  
\subfigure[E-mail]
  {\includegraphics[width=.23\textwidth]{./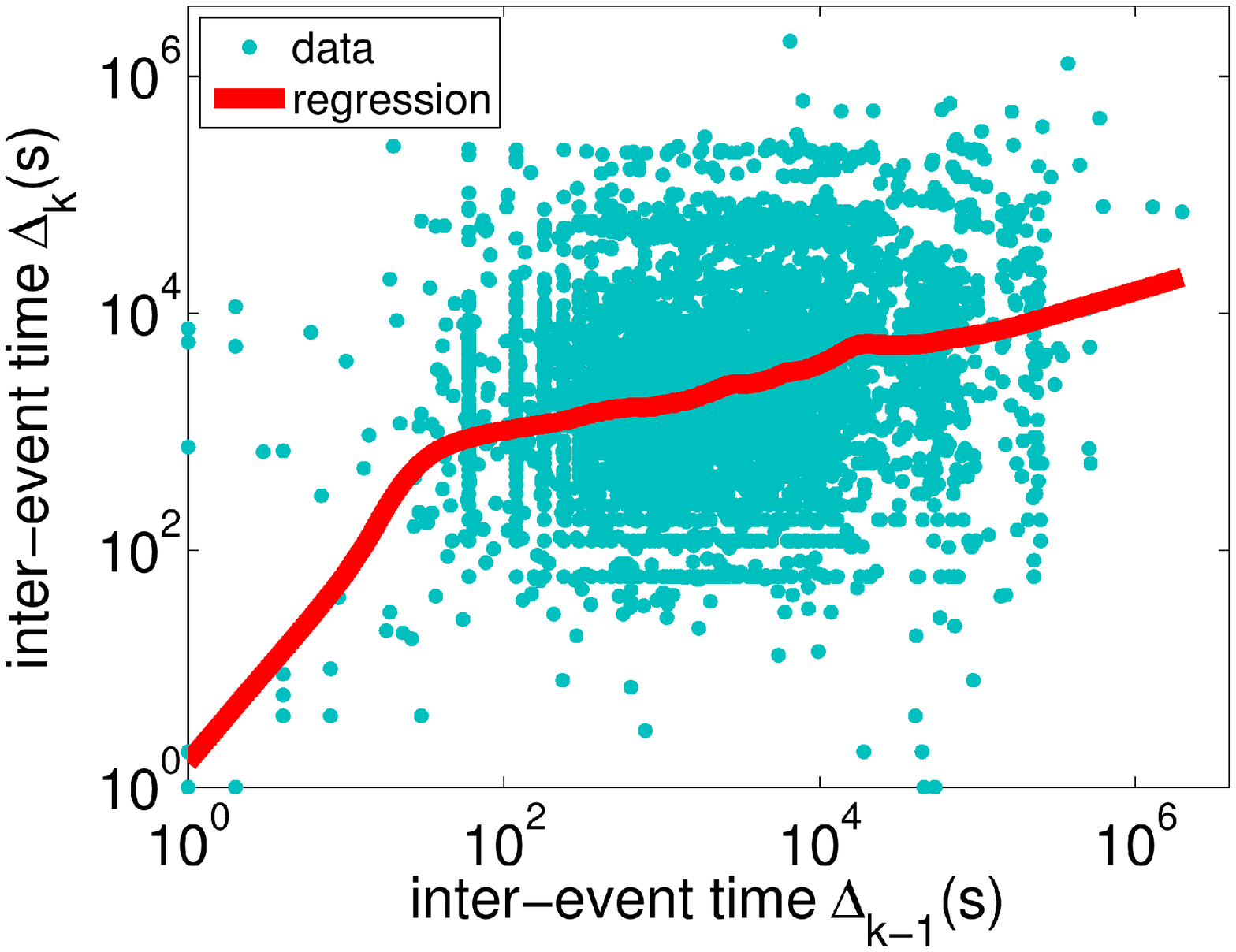}}   
  \caption{\Fallacyiid: dependence between $\Delta_k$ and $\Delta_{k-1}$. Each point represents a pair of consecutive inter-event times $(\Delta_{k-1},\Delta_{k})$ registered for a typical active individual of each dataset. The red line is a regression of the data points using the LOWESS smoother.}
  \label{fig:acf}
\end{figure*}

We formally investigate if two consecutive inter-event times are correlated analyzing the autocorrelation~\cite{box:1994} of all the time series involving the inter-event times $\Delta_k$ of the individuals of our datasets. Autocorrelation refers to the correlation of a time series with its own past and future values. A positive autocorrelation, which is suggested by Figure~\ref{fig:acf}, might be considered a specific form of ``persistence'', i.e., a tendency for a system to remain in the same state from one observation to the next.

We test if all the $\Delta_k$ time series of every individual of our datasets are random or autocorrelated. For this, we define the hypothesis test $H_0$ that a series $S = \{\Delta_0, \Delta_1, ..., \Delta_n\}$ of inter-event times is random. If $S$ is random, then its empirical autocorrelation coefficient $AC_l \approx 0$ for all lags $l>0$, where a lag $l$ is used to compare, in this case, values of $\Delta_k$ and $\Delta_{k-l}$. More formally, if $AC_l$ is within the $95\%$ confidence interval for $S$ to be random, then we accept $H_0$ that $S$ is random. As we show in Figure~\ref{fig:acf1}, we reject the null hypothesis $H_0$ that the inter-event times of the individual of Figure~\ref{fig:superuser} is random, since all $AC_l, 1<l\leq10$ are outside the confidence interval.

\begin{figure}[!hbt]
\centering
{\includegraphics[width=.40\textwidth]{./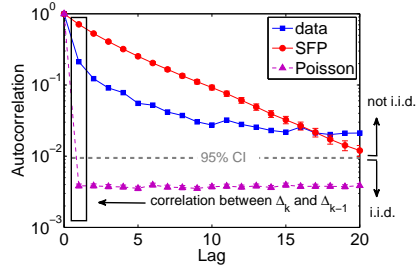}}  
  \caption{The sample autocorrelation for the same individual of Figure~\ref{fig:superuser} and for synthetic data generated by our proposed \dpp model (see Section~\ref{sec:sfp}) and a PP with the same number of communication events and median.}
  \label{fig:acf1}
\end{figure}

Since we are interested only in the case where the lag $l=1$, we propose an alternative hypothesis test $H_1$ that the first-order autocorrelation coefficient $AC_1$ is greater than $0$. If $AC_1$ is greater than the confidence interval for randomness, then we accept $H_1$ that the series is not random, i.e., there is a dependence between $\Delta_k$ and $\Delta_{k-1}$. In Figure~\ref{fig:acf2}, we show the proportion $P(H_1)$ of individuals in our data to which $H_1$ is true grouped by their number of events $n$. As we observe, as the number of communication events $n$ grows and becomes significant, $P(H_1)$ increases rapidly. This strongly suggests that, on the contrary of what happens with the i.i.d. inter-event times distribution generated by the Poisson Process or simply sampling from a log-logistic distribution (LLG-iid), in real data there is a dependence between $\Delta_k$ and $\Delta_{k-1}$. This also agrees with a recent work~\cite{owczarczuk:memory}, which reports that daily series of number of calls made by a customer exhibits strong autocorrelation. Thus, in summary we can state that 
\begin{equation}
E(\Delta_k) \neq E(\Delta_k | \Delta_{k-1})
\end{equation}
or
\
\begin{equation}
E(\Delta_k|\Delta_{k-1}) = f(\Delta_{k-1})
\end{equation}
where $f$ is a function that describes the dependency between $\Delta_k$ and $\Delta_{k-1}$. Moreover, we can state the following universal pattern:

\begin{universal}
There is a significant positive correlation between two consecutive inter-event times.
\end{universal}

\begin{figure}[!hbt]
\centering
\subfigure[First group]
  {\includegraphics[width=.40\textwidth]{./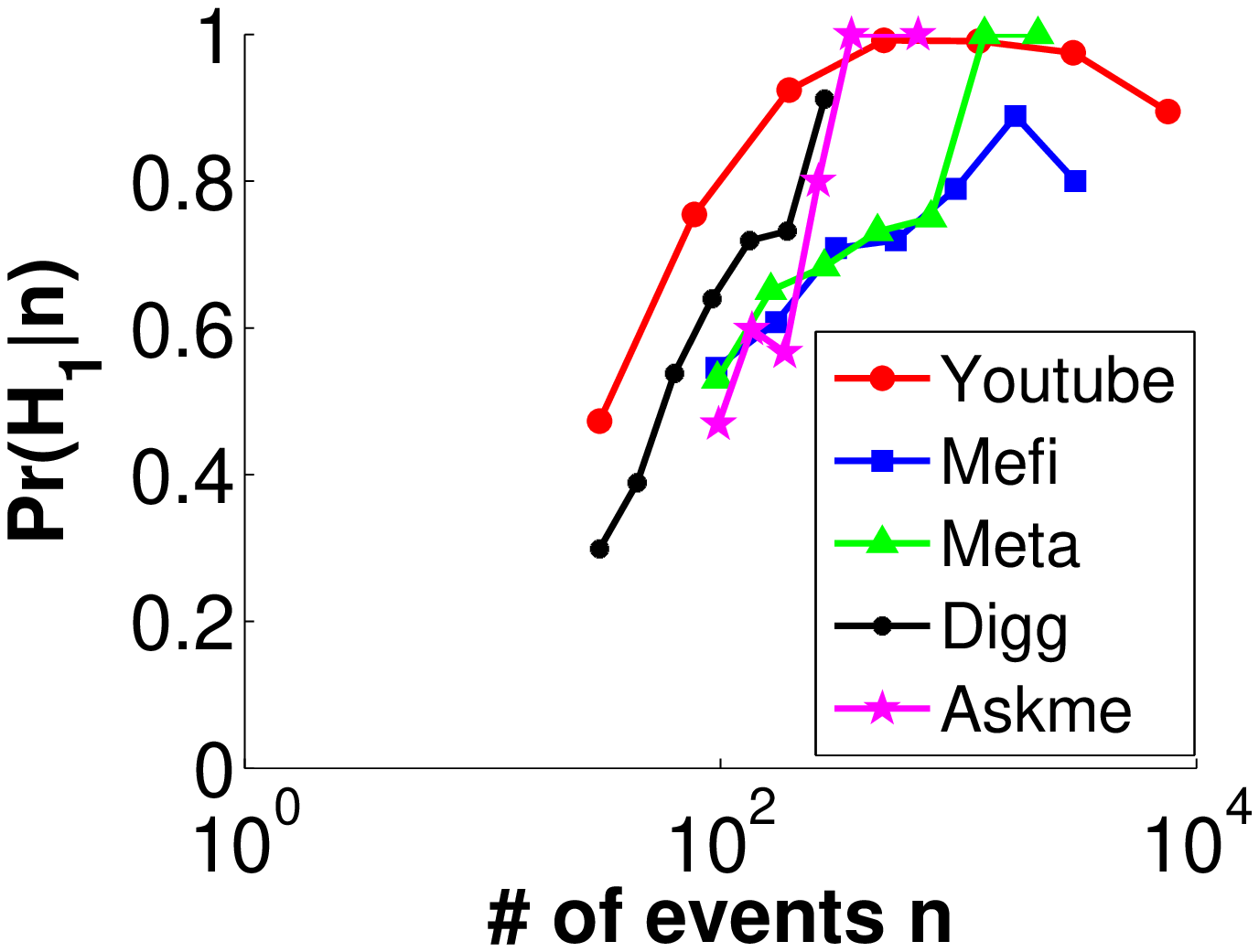}}  
\subfigure[Second group]
  {\includegraphics[width=.40\textwidth]{./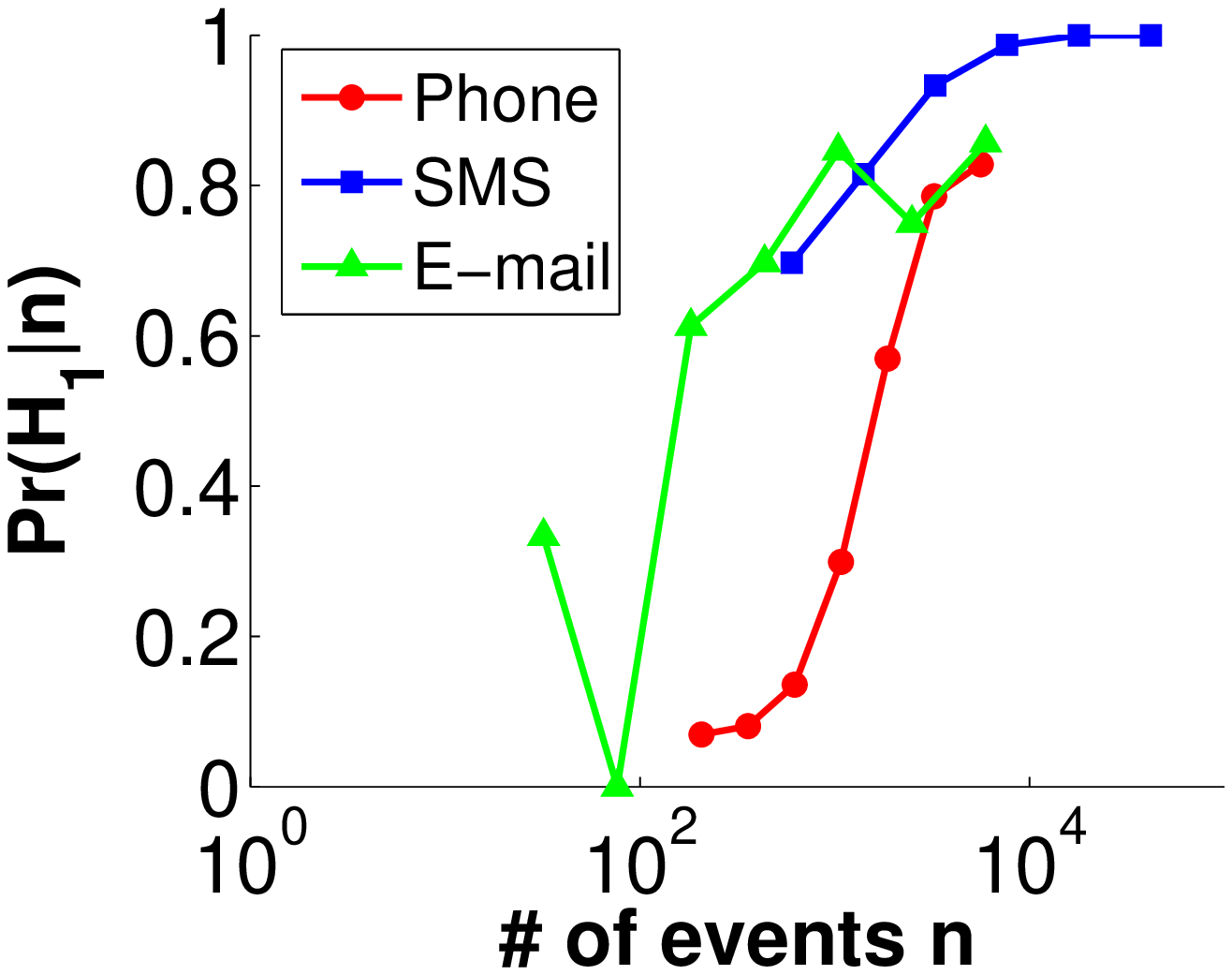}}
\caption{The proportion $P(H_1)$ of individuals in our data to which $H_1$ is true grouped by their number of events. Note that as the number of events grows, the proportion of individuals that have autocorrelated series increases rapidly for the eight datasets.}
  \label{fig:acf2}
\end{figure}

\section{The Self-feeding Process}
\label{sec:sfp}

Given all the above evidence (OR power law; \fallacyiid) and all the previous
evidence (power law tails by Barab\'{a}si; short-term regular behavior as the PP),
the question is whether we can design a generator which will match all these properties?
Our requirements for the ideal generator are the following:
\begin{description}
	\item[R1: Realism -- marginals] \label{req:llg} The model should generate OR power law marginal \ied; 
	\item[R2: Realism -- locally-Poisson:] The model should behave as a Poisson Process within a short window of time; \label{req:poisson}
	\item[R3: Avoid the \fallacyiid] Two consecutive inter-event times should be correlated; \label{req:noniid}
	\item[R4: \Parsimony] It should need only few  parameters, and ideally, just one or two. \label{req:parsimony}
\end{description}

\subsection{Candidate Parameters}
\label{sec:typicalb}

Since the \ied of the majority of individuals is well modeled by an odds ratio power law, which implies a log-logistic distribution, we can characterize their behavior by the two parameters, the slope $\rho$ and the median $\mu$, from the linear relationship in the OR plot. Observe in Figure~\ref{fig:rhoall} the PDF of the slopes $\rho_i$ measured for every individual $i$ of our eight datasets. Except the SMS dataset, the typical $\rho_i$ for the majority of individuals is approximately $1$. This surprising result allows us to state the following universal pattern:

\begin{universal}
The typical slope of the odds ratio power law that best fits the inter-event time distribution of an individual is $1$.
\end{universal}

 Moreover, observe in Figure~\ref{fig:muall} the PDF of the medians $\mu_i$ measured for every individual $i$ of our eight datasets. Observe that, while  the typical $\mu_i$ is around $1$ hour for the first group, for the second group it varies from 3 to 8 minutes. Thus, in Section~\ref{sec:sdpp} we propose a simplified one-parameter model that generates \ieds with slopes $\rho=1$ and varied medians. Then, in Section~\ref{sec:gdpp}, we propose a generalized two-parameter model that generates \ieds with varied slopes and medians.
%
\begin{figure}[!hbt]
\centering
\subfigure[First group]
  {\includegraphics[width=.40\textwidth]{./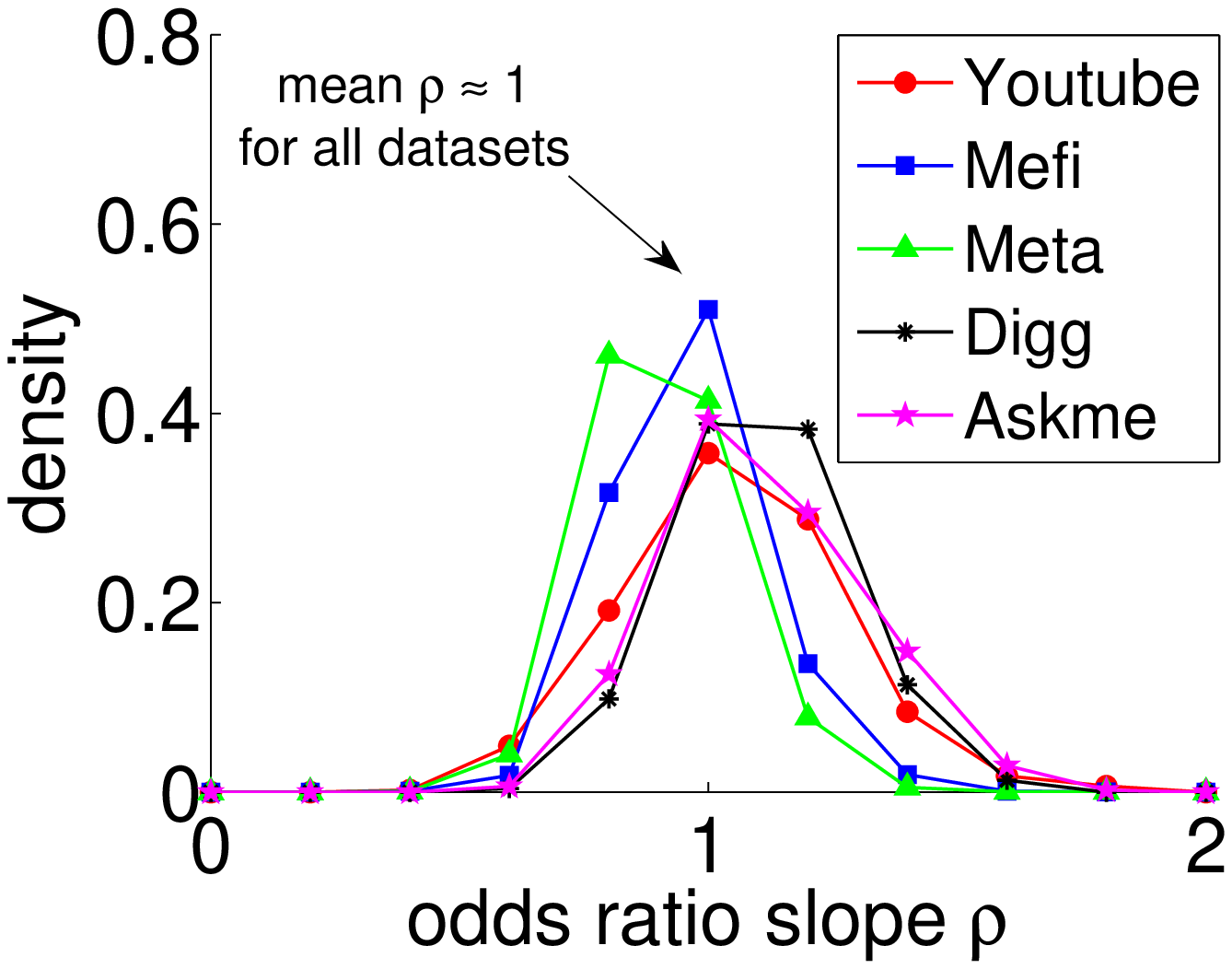}}  
\subfigure[Second group]
  {\includegraphics[width=.40\textwidth]{./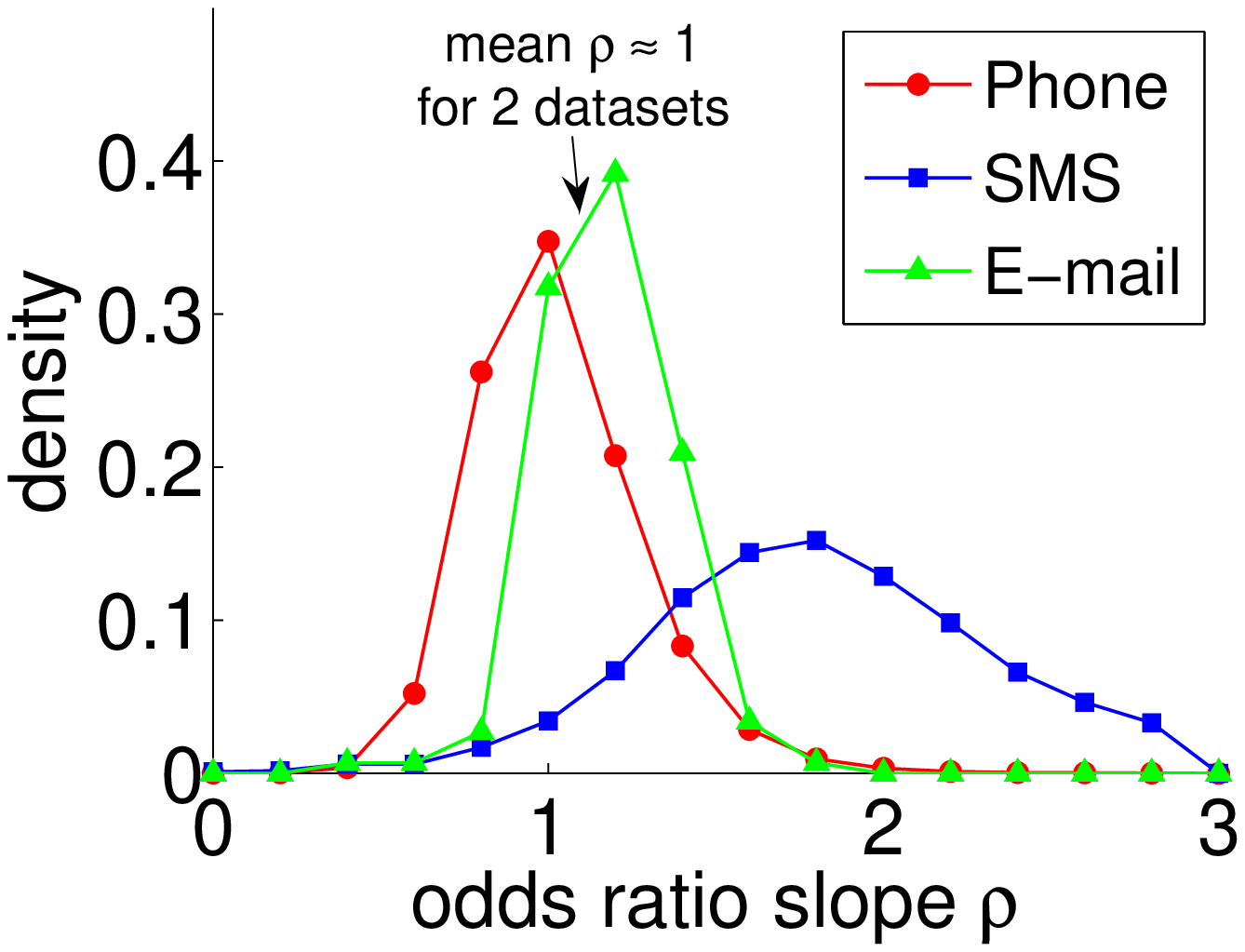}}
  \caption{The PDF of the slopes $\rho_i$ measured for every user $u_i$ of our eight datasets. Except the SMS dataset,  the typical $\rho_i$ for the majority of individuals is approximately $1$.}
  \label{fig:rhoall}
\end{figure}

\begin{figure}[!hbt]
\centering
\subfigure[First group]
  {\includegraphics[width=.40\textwidth]{./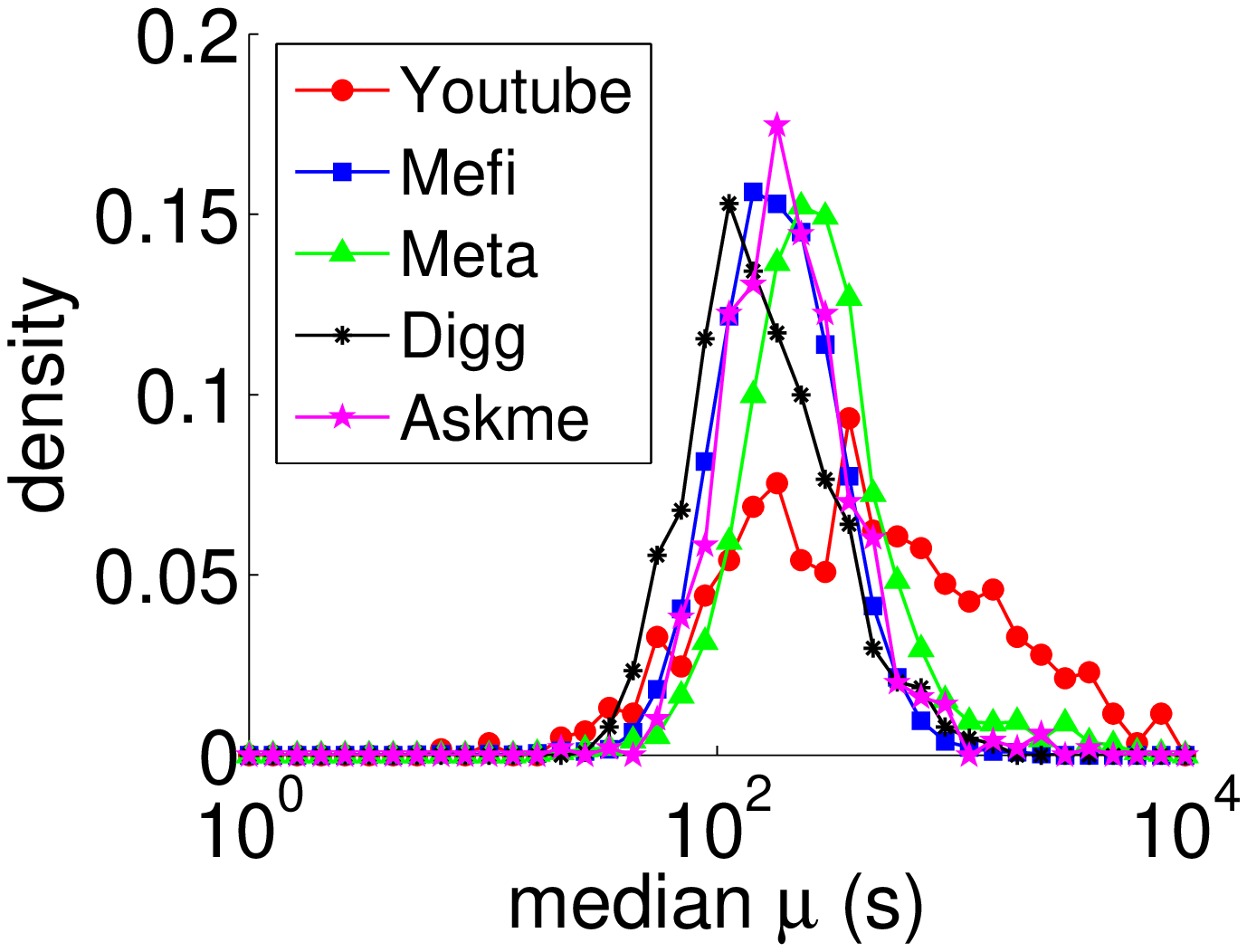}}   
\subfigure[Second group] 
  {\includegraphics[width=.40\textwidth]{./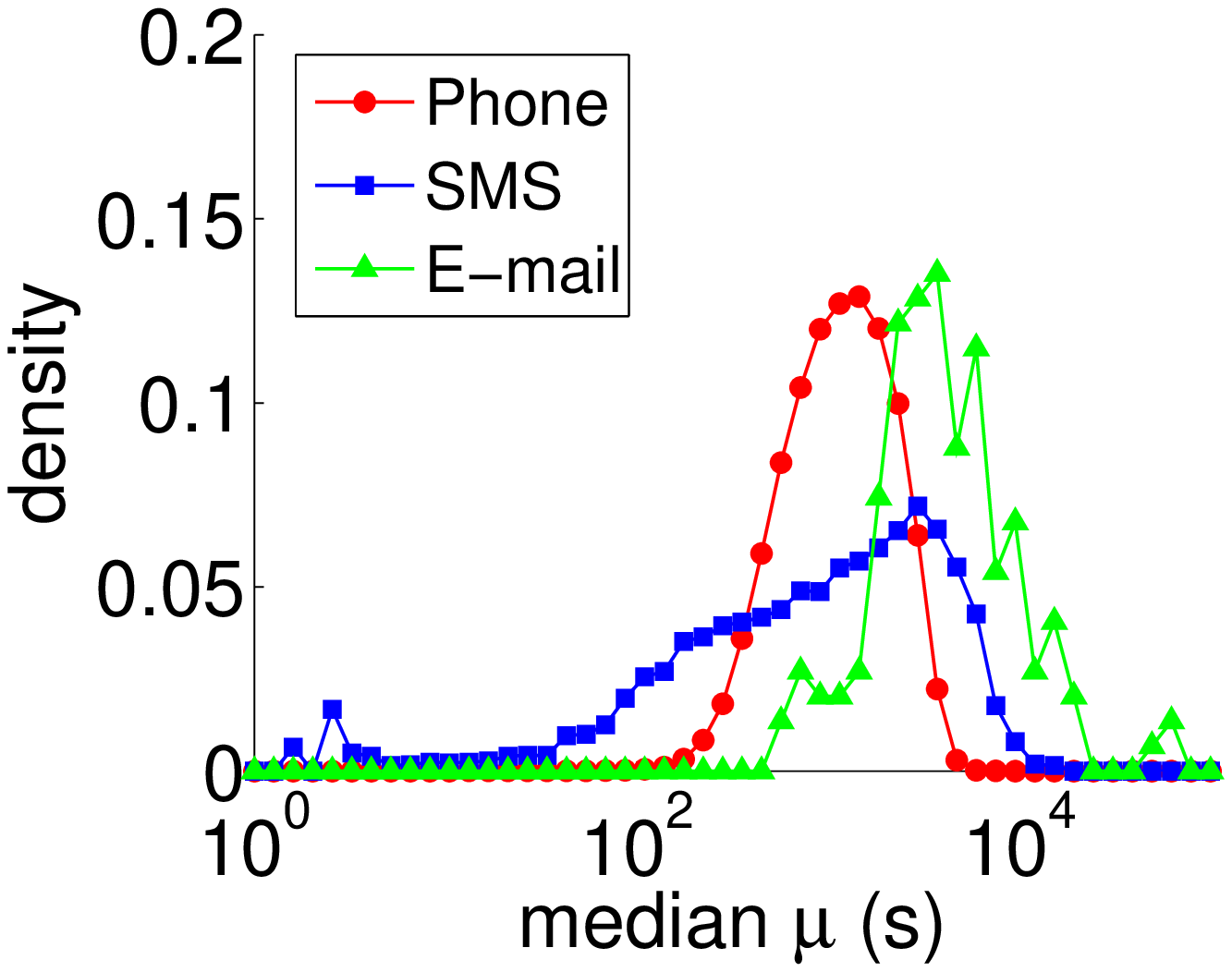}}
  \caption{The PDF of the medians $\mu_i$ measured for every user of our eight datasets. Observe that, while  the typical $\mu_i$ is around 3 and 8 minutes for the first group, and around $1$ hour for the second group.}
  \label{fig:muall}
\end{figure}

\subsection{The Simplified \dpp Model}
\label{sec:sdpp}

At a high level, our proposal is  that the next inter-arrival time will be an exponential random variable, with rate that {\em depends on the previous} inter-arrival time. It is subtle, but in this way our generator behaves like Poisson in the short term, gives power-law tails in the long term, generates OR power law marginals and is extremely parsimonious: just one parameter, the median $\mu$ of the \ied.
We call this model the \textit{Self-Feeding Process} (\dppzero{}).

We propose the  generator as follows
\begin{model}
Self-Feeding Process \dppzero($\mu$). 

//$\mu$ is the desired median of the marginal PDF
\begin{equation*}
\boxed{
\begin{array}{rcl}
\Delta_1 &\leftarrow& \mu \\
\Delta_k &\leftarrow & \mbox{Exponential  }( \mbox{mean~} \beta = \Delta_{k-1} + \mu/e) \\
\end{array}}
\end{equation*}
\label{eq:dppzero}
\end {model}  
where $\mu$ is the only parameter of the model, being the desired median of the \ied. The part $\mu/e$ must be greater than $0$ to avoid $\Delta_k$ to converge to $0$ and has to be divided by the Euler's number \textit{e} to make the median of the generated \ied around the target median $\mu$ (more details in the Appendix~\ref{sec:parameters}). This type of model is not new in the literature~\cite{wold:1948,cox:1955} but they have not been extensively studied, perhaps due to the lack of empirical data fitting the implied distribution.

In Figures~\ref{fig:model1}-a and~\ref{fig:model1}-b we compare, respectively, the histogram and the OR of the inter-event times generated by the \dppzero model, all values rounded up, with the inter-event times of the individual of Figure~\ref{fig:superuser}. Notice that the distributions are very similar
and both are well fitted by a log-logistic distribution, which looks like a hyperbola, thus addressing both the power-law tail, as well as the ``\topconcavity'' that real data exhibits.

\begin{figure}[hbtp]
\centering
\subfigure[Histogram]
  {\includegraphics[width=.40\textwidth]{./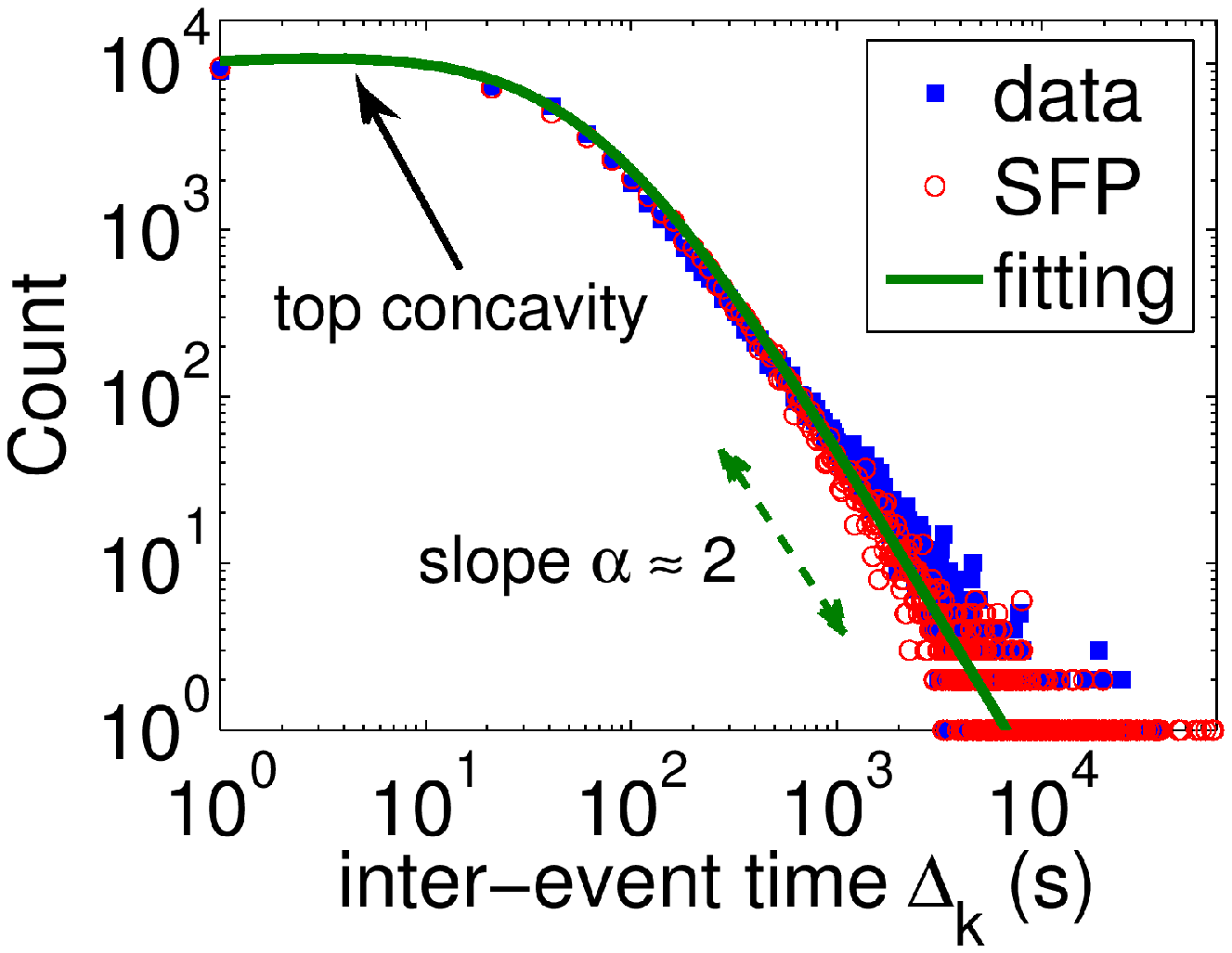}}
\subfigure[Odds ratio]
  {\includegraphics[width=.40\textwidth]{./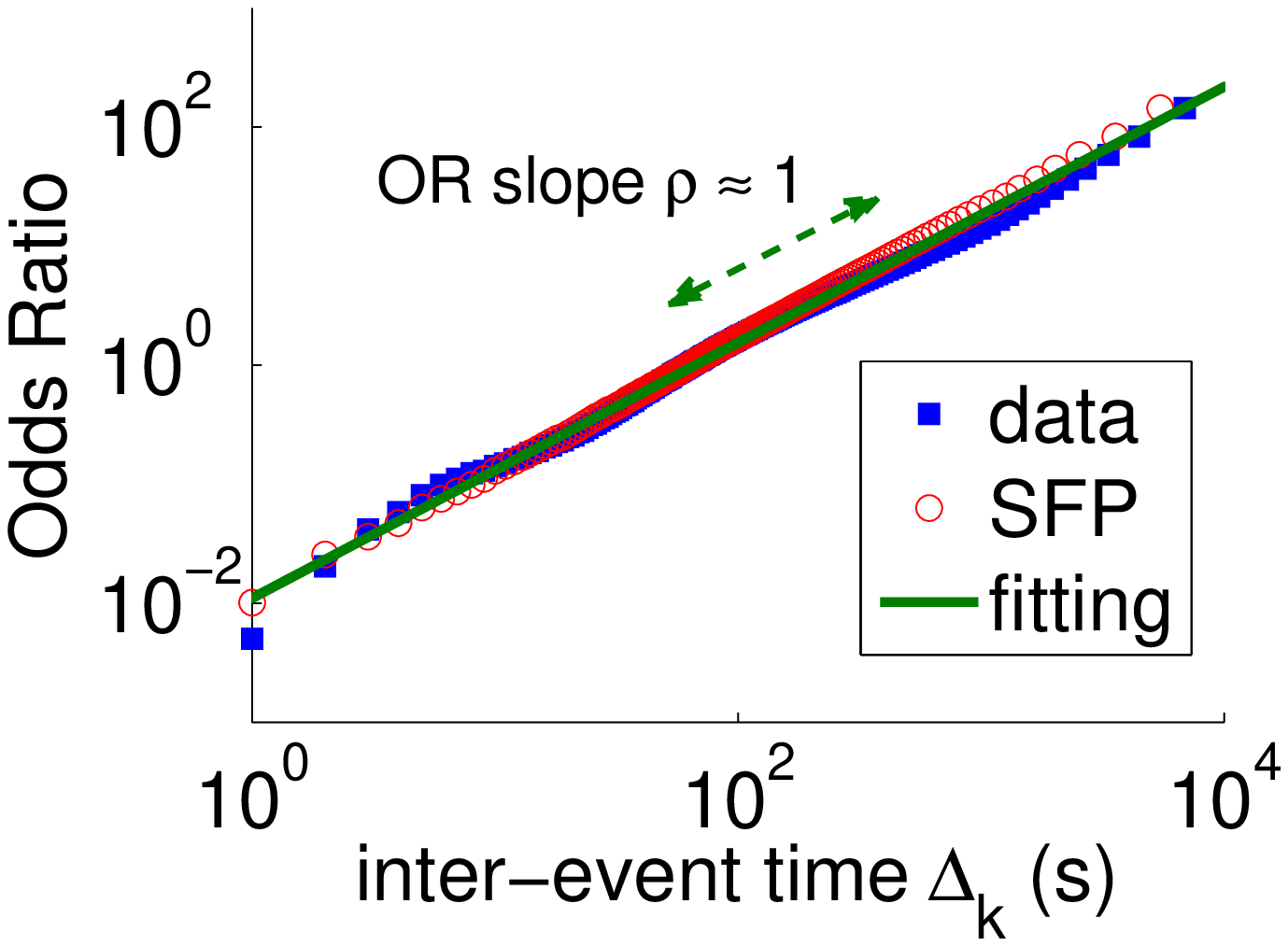}}  
  \caption{Comparison of the marginal distribution of the inter-event times generated by the \dppzero{} model with the inter-event times of the most active user of Figure~\ref{fig:pdfUser}. Observe that both the histogram (a) and the OR (b) are almost identical.}
  \label{fig:model1}
\end{figure}

The \dpp{} model naturally generates an odds ratio power law for IED with slope $\rho=1$, which is the slope that characterizes the majority of the users of our datasets (see Figure~\ref{fig:rhoall}). To the best of our knowledge, this is the first work that studies the \ied{} of human communications using such a varied, modern and large collection of data. Despite the fact that the means of communications are intrinsically different, having their own idiosyncrasies, we have observed that the \ied{} of most individuals of these systems have the same characteristics, i.e., they follow an odds ratio power law behavior. Moreover, when the OR slope $\rho=1$, the power law exponent of the PDF is $\alpha = -2$ (see the Appendix~\ref{sec:powerlaw} for details). This is the same \ied slope $\alpha$ reported in~\cite{hidalgo:2006,vazques:2006} as a result of fluctuations in the execution rate and in particular periodic changes. It has been argued that seasonality can only robustly give rise to heavy-tailed IEDs with exponent $\alpha=2$. 

\subsection{The Generalized \dpp Model}
\label{sec:gdpp}

In Figure~\ref{fig:rhoall} we showed the slopes $\rho$ of the OR fitting for the \ied{}s of all individuals of our datasets. It is fascinating that the typical $\rho_i$ for the individuals of seven of our datasets is approximately $1$, the same slope generated by the simplified \dppzero{} model. 
Several individuals though, mainly from the SMS dataset,
have a much higher value of $\rho$, close to $\rho \approx 2$.
To accommodate that and all the variance seen in the data, we introduce our Generalized \dpp model,
which needs just one parameter more, $\rho$.
Thus, we have:

\begin{model}
Generalized Self-Feeding Process $\dpp(\mu,\rho)$.
\begin{equation*}
\boxed{
\begin{array}{rcl}
       \delta_1 &\leftarrow& \mu \\
       \delta_t &\leftarrow& \mbox{Exponential }(\mbox{mean:~} \beta = \delta_{t-1} + \mu^\rho/e) \\
       \Delta_k &\leftarrow& \delta_t^{1/\rho}.
\end{array}
}
\end{equation*}
\label{eq:dpp}
\end {model}

Note the auxiliary variable $\delta_t$, 
which stores the inter-event times without the influence of $\rho$. For more details about the \dpp parameters, please see Appendix~\ref{sec:parameters}.

\section{The Unifying Power of the \dpp}
\label{sec:unifying}

In this section, we emphasize the unifying power of the \dpp. Several works~\cite{Karagiannis:2004,malmgren:2008,malmgren:2009,malmgren:2009b,kuczura:1973,kleinberg:2002} claim that in the short term, real data behave as regular as a PP. Our model also captures that, since successive inter-event times are exponentially distributed, with similar (but not identical) rates. Thus, one of the major contributions of this work is the unification of the two seemingly-conflicting viewpoints we mentioned earlier. The proposed \dpp model unifies both theories by generating Poisson-like traffic in the short term, with smoothly varying rate, like the second viewpoint, and also generates a power-law tail distribution (see the Appendix~\ref{sec:powerlaw}), even matching  the \topconcavity that power laws can not match, like the first modern approach of Barab\'{a}si~\cite{barabasi:2005}.



In Figure~\ref{fig:timeseries}, we explicitly show the \dpp's unifying power. We compare synthetic data generated by the \dpp{} model using the same odds ratio slope $\rho$, median $\mu$ and number of events of the user of Figure~\ref{fig:superuser}-a with the real data from this user. Notice the bursts of activity and also the long periods of inactivity, in the first two columns of Figure~\ref{fig:timeseries}. Also notice that both synthetic and real traffic significantly deviate from Poisson (sloping lines in Figures~\ref{fig:timeseries}-b and~\ref{fig:timeseries}-f) but are similar between themselves. However, in the short term, both real and synthetic data behave like Poisson, being practically on top of the black dashed lines of Figures~\ref{fig:timeseries}-d and~\ref{fig:timeseries}-h. 


\begin{figure*}
\centering
\begin{tabular}{|cc|cc|}
\hline
\multicolumn{2}{|c|}{\textbf{Long term behavior}} & \multicolumn{2}{|c|}{\textbf{Short term behavior}}\\ \hline
\multicolumn{4}{|c|}{\textbf{Real data}} \\ \hline
\subfigure[]
  {\includegraphics[width=.23\textwidth]{./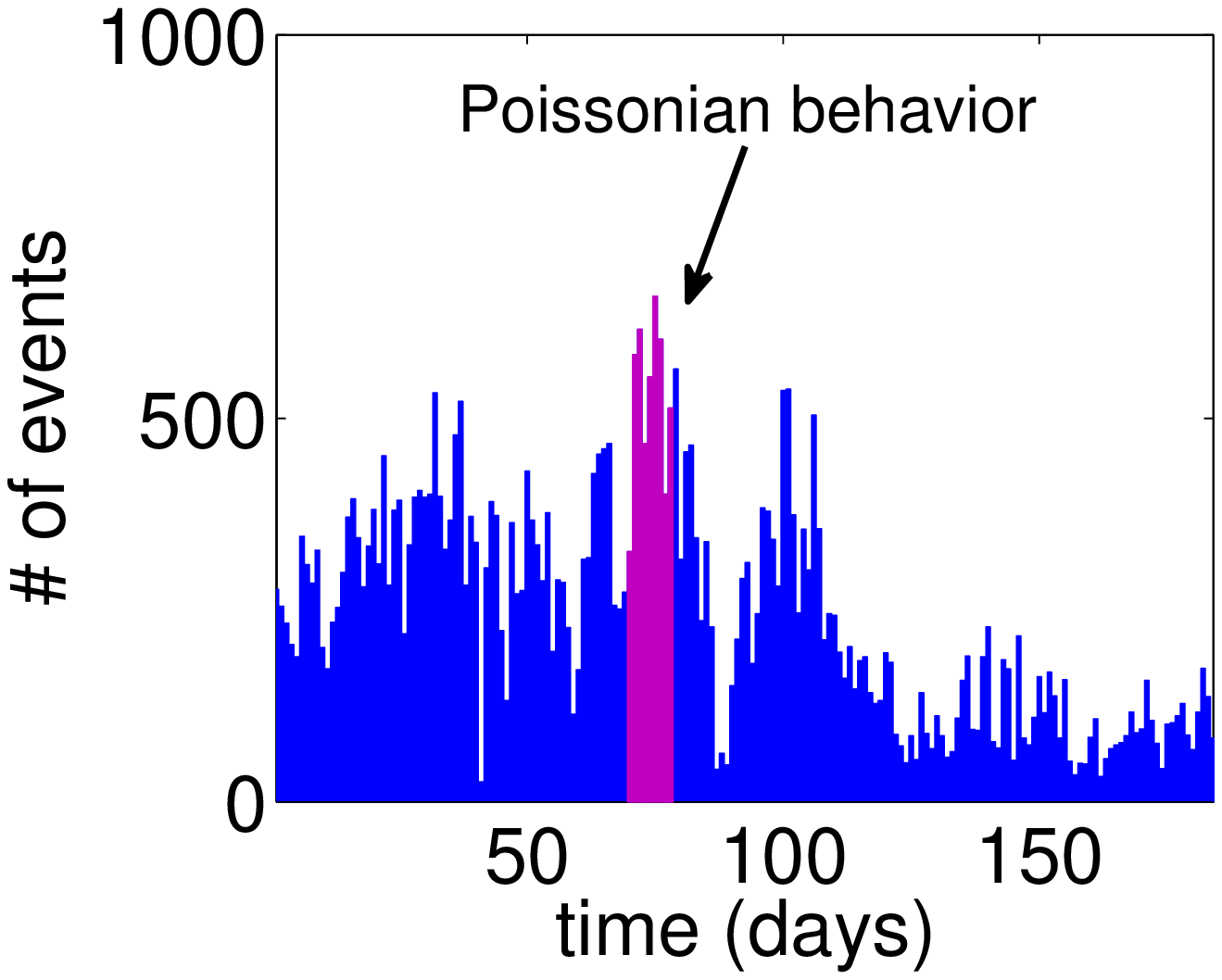}} &
\subfigure[]
  {\includegraphics[width=.23\textwidth]{./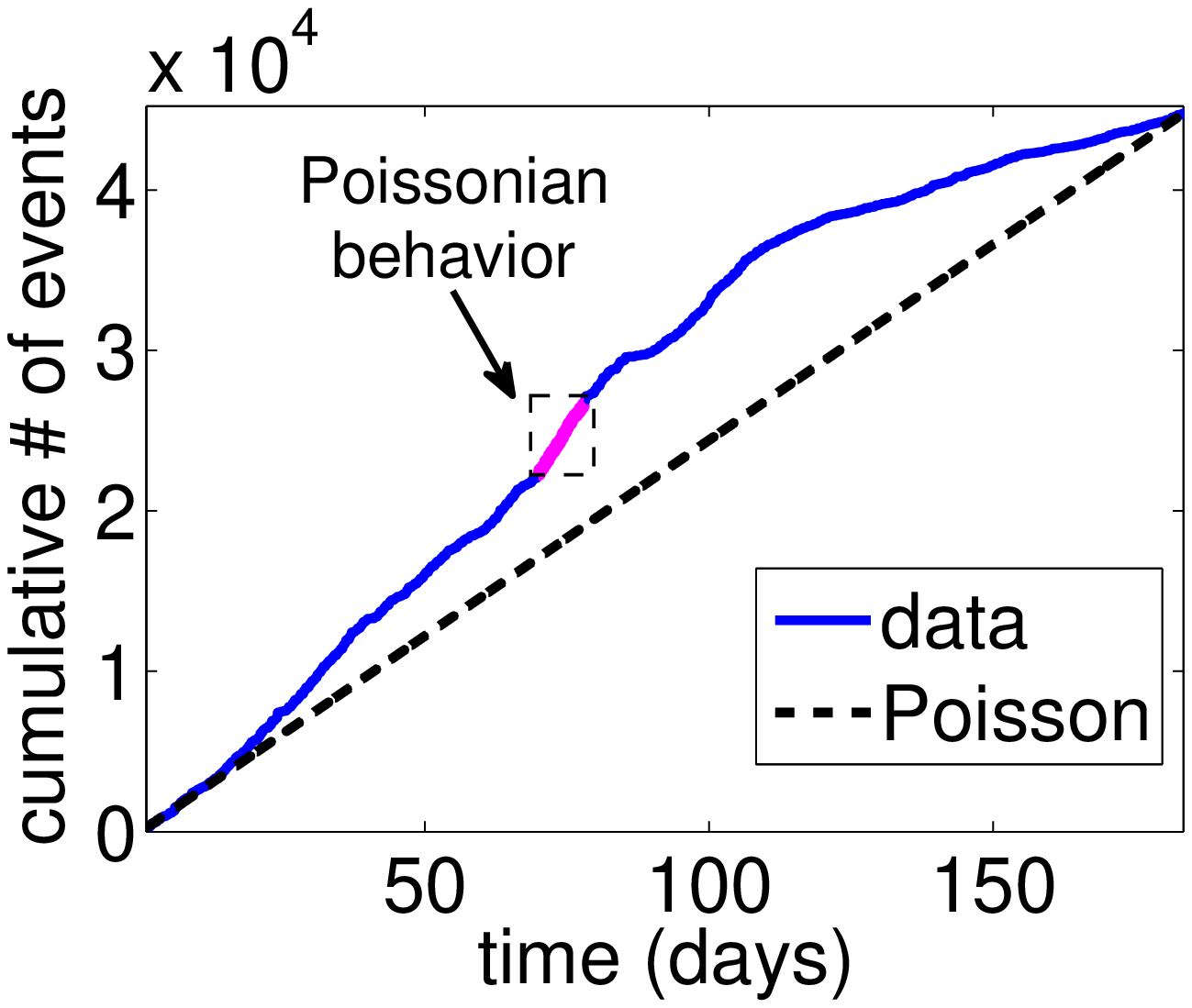}} &
\subfigure[]
  {\includegraphics[width=.23\textwidth]{./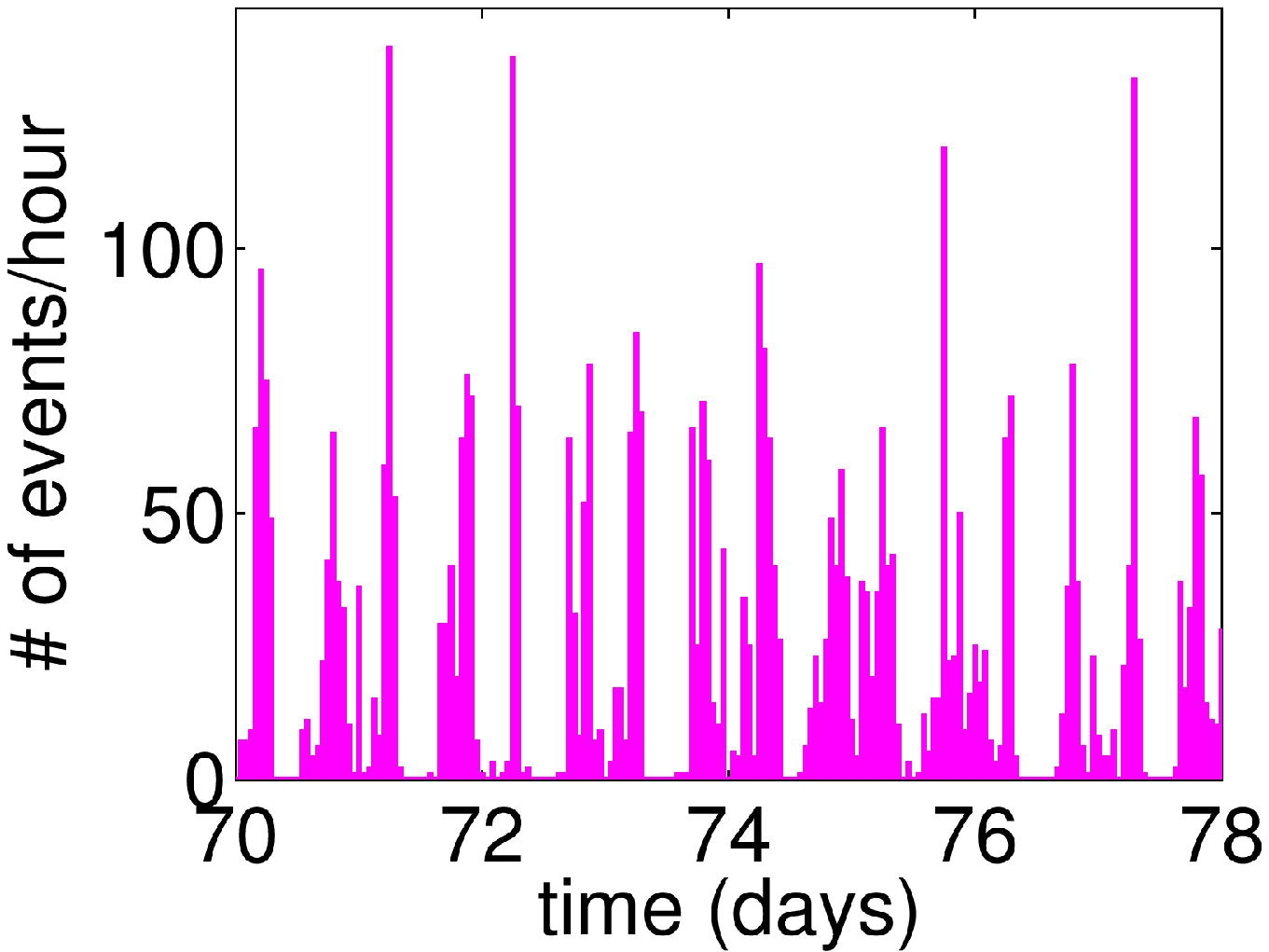}} &
\subfigure[]
  {\includegraphics[width=.23\textwidth]{./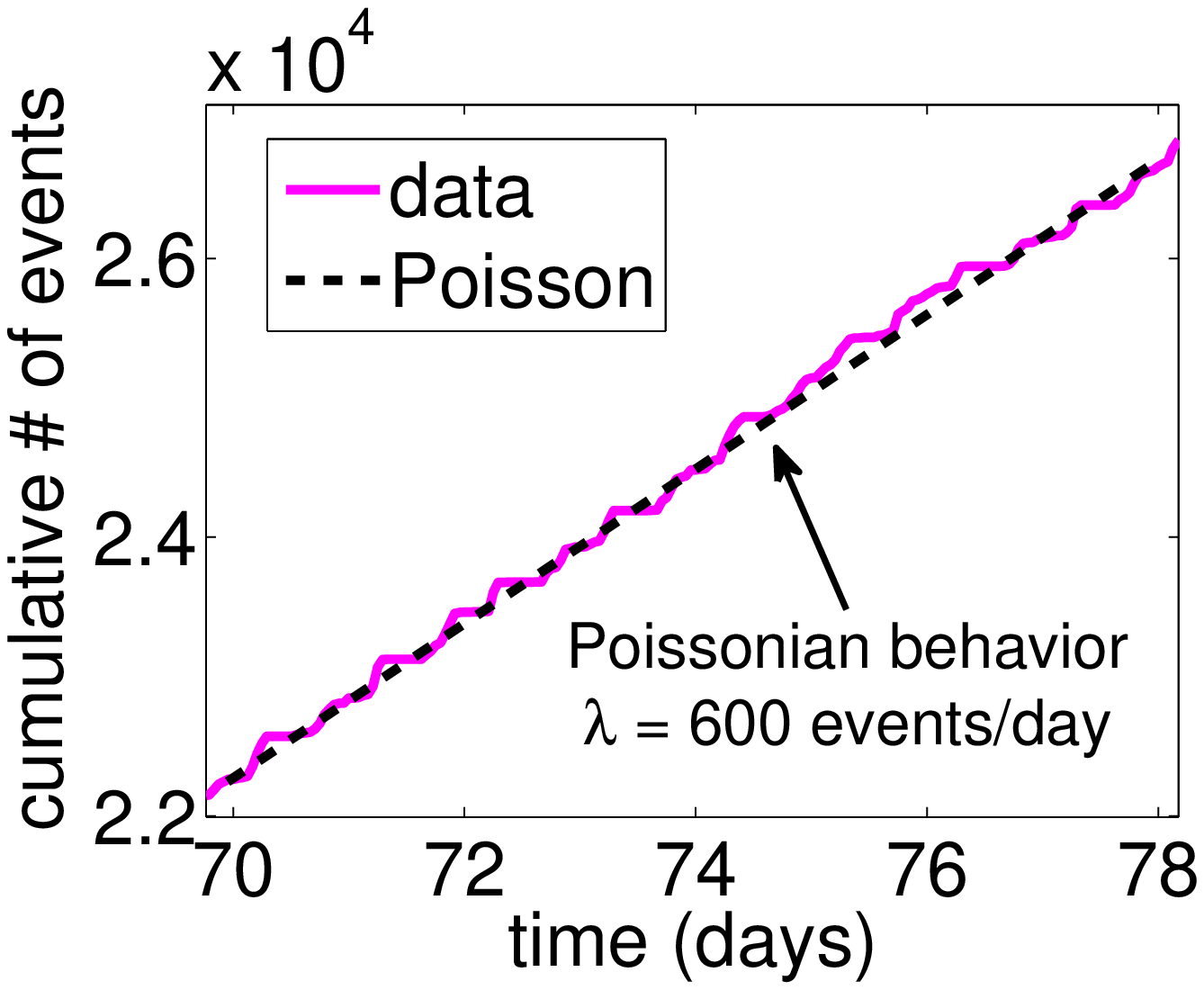}} \\ \hline
\multicolumn{4}{|c|}{\textbf{SFP data}} \\ \hline
\subfigure[]
  {\includegraphics[width=.23\textwidth]{./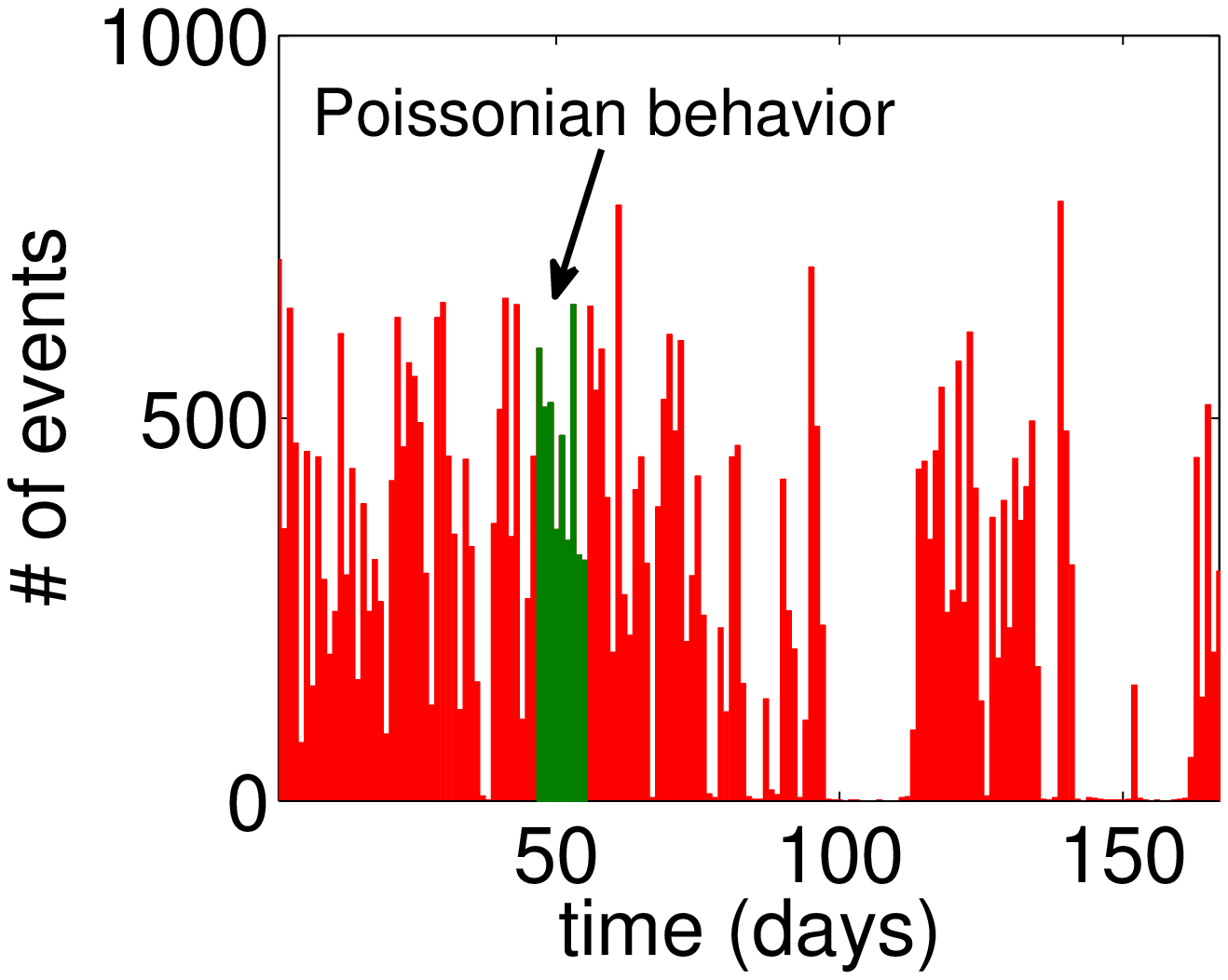}} &
\subfigure[]
  {\includegraphics[width=.23\textwidth]{./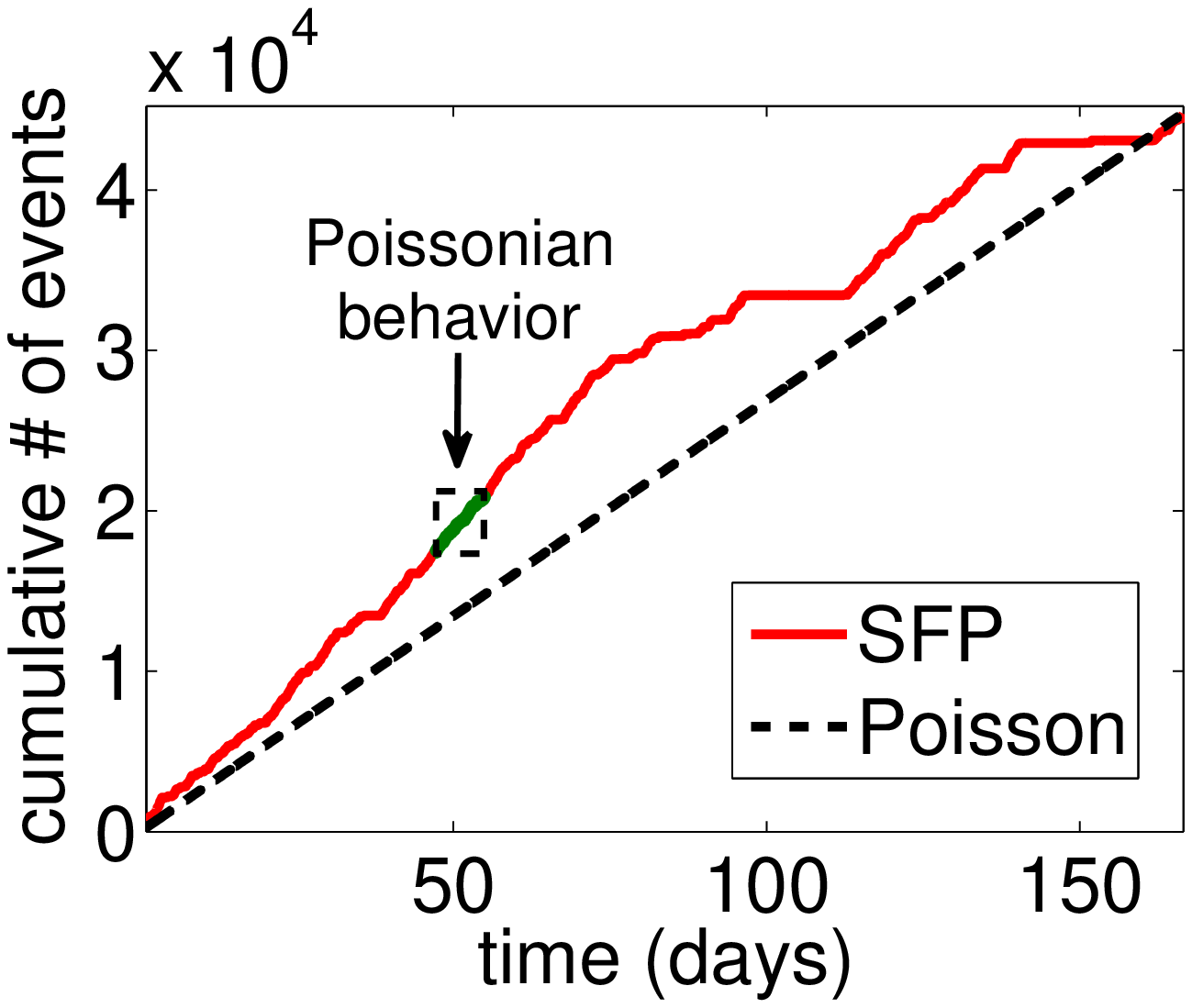}} &
\subfigure[]
  {\includegraphics[width=.23\textwidth]{./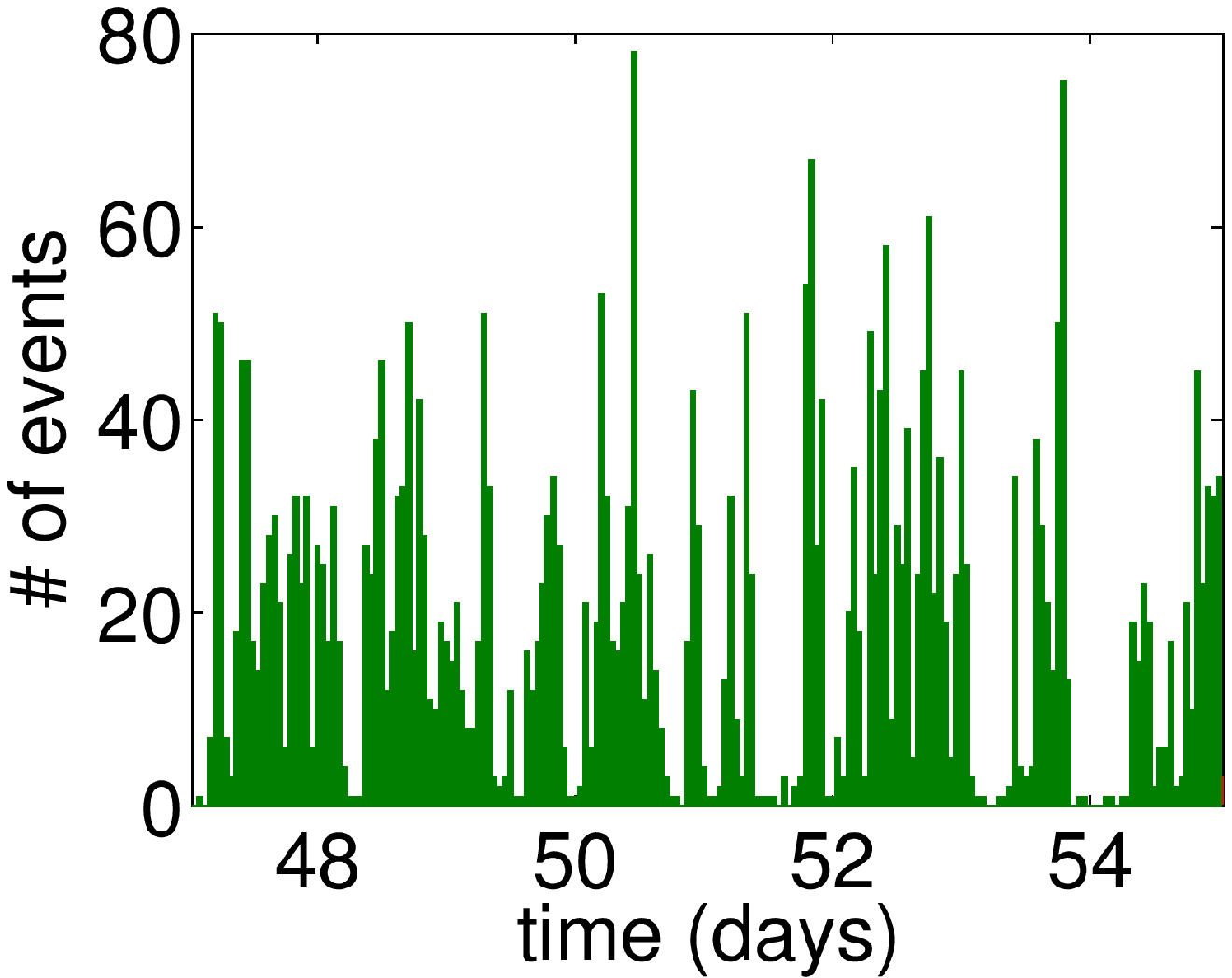}} &
\subfigure[]
  {\includegraphics[width=.23\textwidth]{./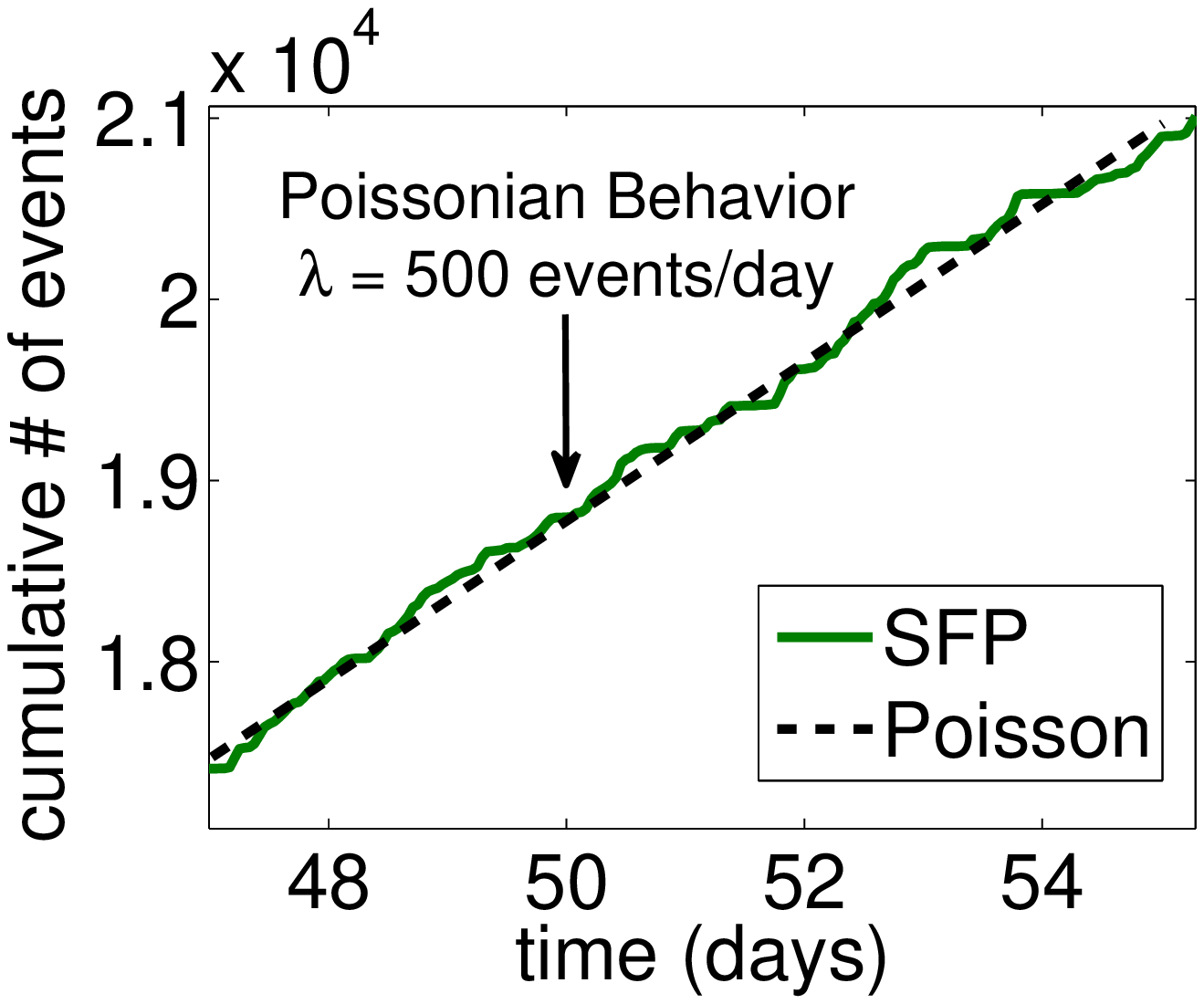}} \\ \hline
  \end{tabular}
  \caption{Unification Power of \dpp{}:
  non-Poisson/bursty in the long term,
  but Poisson in the short term.
  Real data:
  Traffic of the user of Figure~\ref{fig:pdfUser}-1, showing
  event-count per unit time (a and c)
  and respective cumulative event-count (b and d).
  \dpp data:
  synthetic traffic generated by the \dpp model 
  (with matching $\mu$, $\rho$ and event-count).
  Observe that (1) both time series are visually similar;
  (2) both are bursty in the long run (spikes; inactivity)
  (3) both are Poisson-like in the short term (last two columns)}
  \label{fig:timeseries}
\end{figure*}

\section{Collective Behavior}
\label{sec:collectiveBehavior}

Since we know that the great majority of users' \ied can be modeled by the \dpp model, we can figure out how each individual $i$ is distributed in its population according to their parameters $\rho_i$ and $\mu_i$ of the \dpp model. If the meta-distribution of the parameters $\rho_i$ and $\mu_i$ is well defined, then we can model the collective  behavior  of the individuals, which may serve for various applications, such as synthetic generators, anomaly detection, among others. From now on, we will call the meta-distribution of the parameters $\rho_i$ and $\mu_i$ the \groupsfp distribution.

In Figure~\ref{fig:collBehComp}-a, we show the scatter plot of the fitted parameters $\rho_i$ and $\log(\mu_i)$ of every MetaTalk individual $i$. It is difficult to visualize the patterns due to the large number of overlapping data points but, however, we can spot outliers. Moreover, by plotting  the $\rho_i$ and $\log(\mu_i)$ parameters using isocontours, as shown in Figure~\ref{fig:collBehComp}-b, we automatically smooth the visualization by disconsidering low populated regions. While darker color  mean a higher concentration of pairs $\rho_i$ and $\log(\mu_i)$, white color mean that there is small probability of observing a user with these values of $\rho_i$ and $\log(\mu_i)$. We use $\log(\mu_i)$ instead of $\mu_i$ because, as we see in Figure~\ref{fig:muall}, the logarithm of the medians can be approximated by a normal distribution for all datasets.

Surprisingly, we observe that the isocontours of Figure~\ref{fig:collBehComp}-b are very similar to the ones of a bivariate Gaussian. In order to verify this, we extracted from the \groupsfp{} distribution the means $P$ and $B$ of the parameters $\rho_i$ and $\log(\mu_i)$, respectively, and also the covariance matrix $\Sigma$. We use these values to generate the isocontours of a bivariate Gaussian distribution and we plotted it in Figure~\ref{fig:collBehComp}-c. We observe that the isocontours of the generated bivariate Gaussian distribution are very similar to the ones from the \groupsfp distribution. Thus a bivariate Gaussian distribution fits the real data of fitted $\rho_i$s and $\log(\mu_i)$s and hence, it is a good model to represent the population of individuals whose \ied can be modeled by the \dpp.


\begin{figure*}[hbtp]
\centering
\subfigure[MetaTalk data plot]
  {\includegraphics[width=.33\textwidth]{./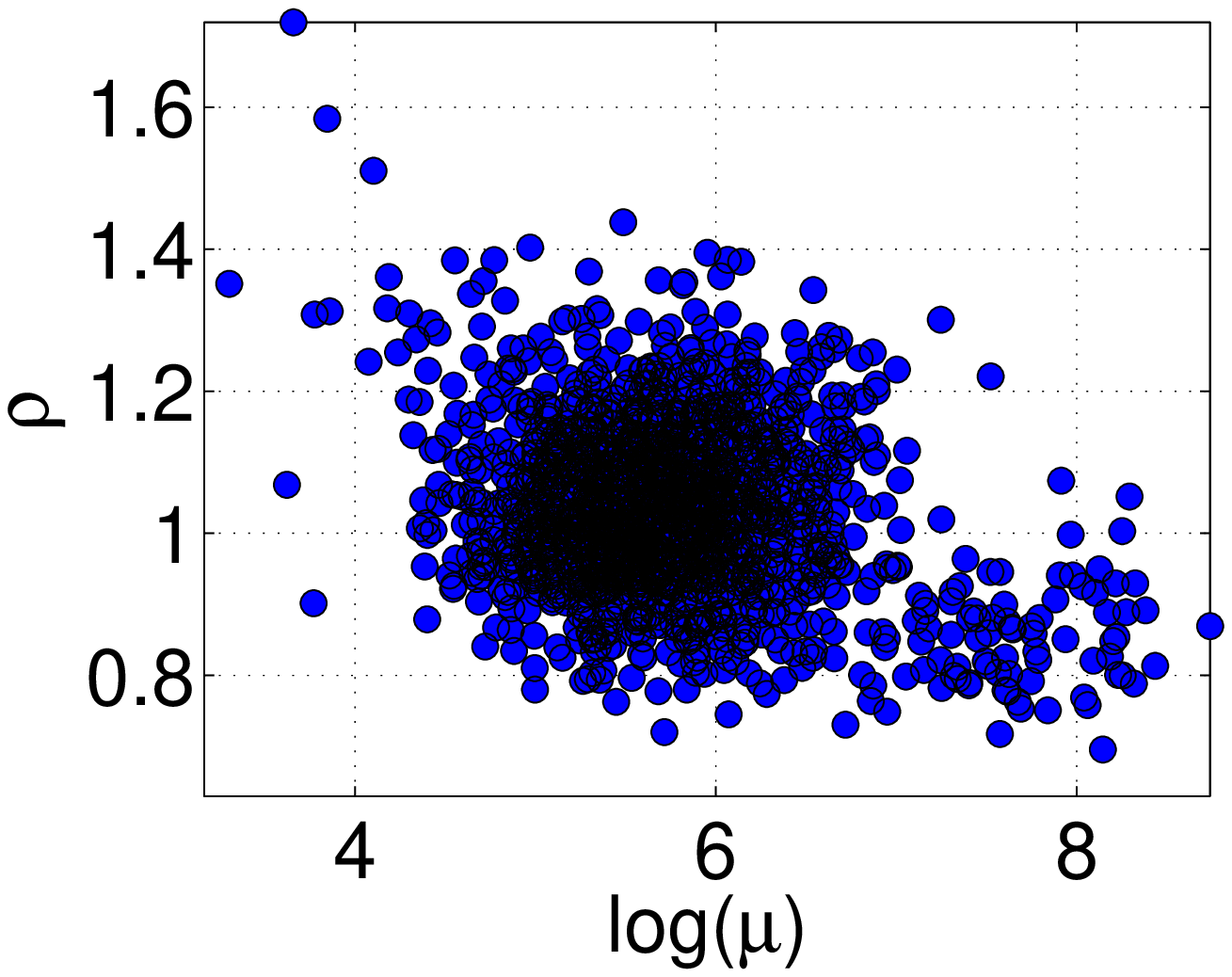}}
\subfigure[MetaTalk data isocontours]
  {\includegraphics[width=.33\textwidth]{./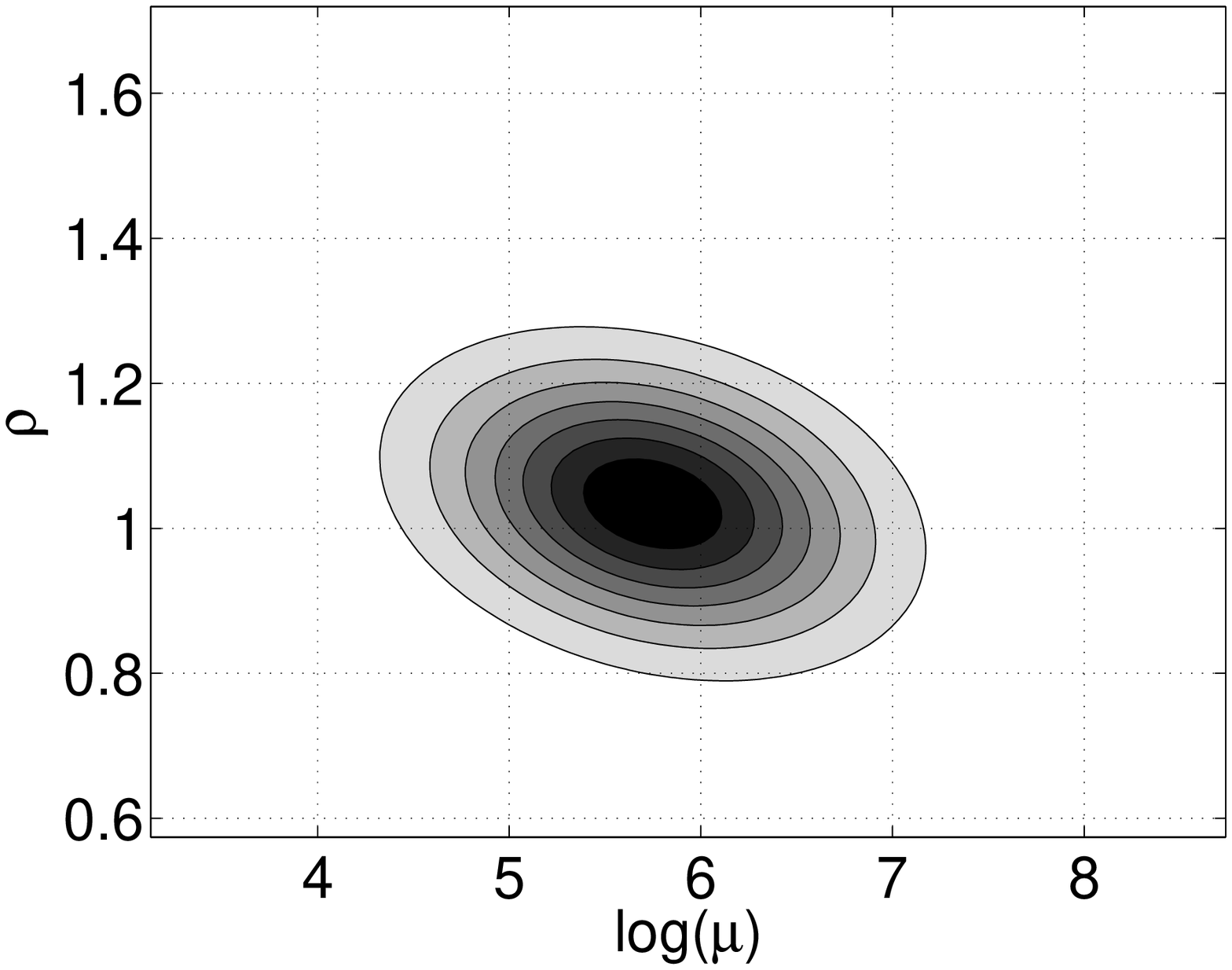}}  
\subfigure[Bivariate Gaussian]
  {\includegraphics[width=.33\textwidth]{./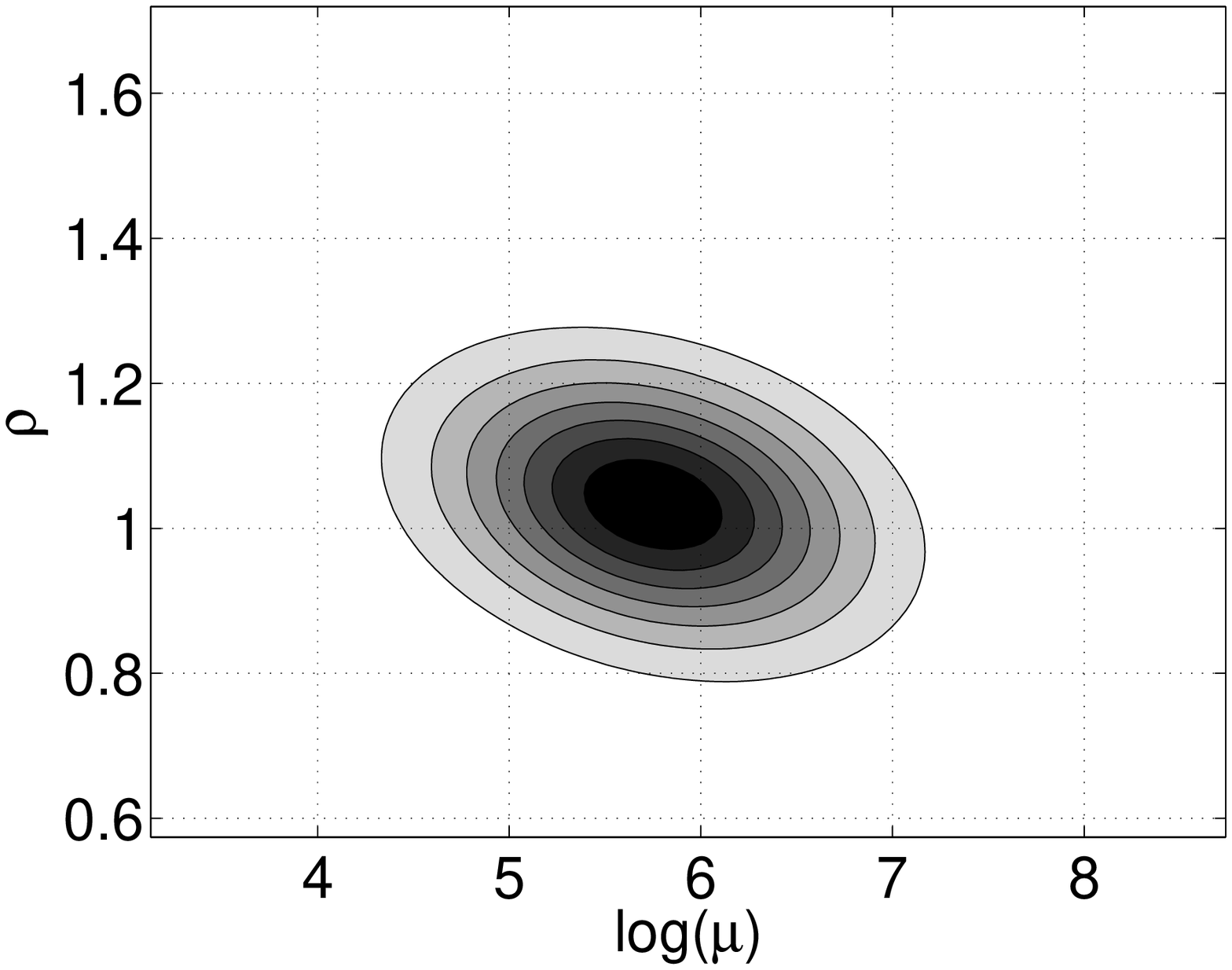}}  
	\caption{Plots of the parameters $\rho_i$ and $\log(\mu_i)$ of every MetaTalk individual $i$. In (a) we can not see any particular pattern, but we can spot outliers. By plotting the isocontours (b), we can observe how well a bivariate Gaussian (c) fits the real distribution of the pairs ($\rho_i$,$\log(\mu_i)$) ('meta-fitting').}
  \label{fig:collBehComp}
\end{figure*}

In order to verify if this pattern replicates in the other datasets, we show in Figure~\ref{fig:metadists} the comparison between the real data and the synthetic data generated from the 'meta-fitting' of the parameters $\rho_i$ and $\log(\mu_i)$. Observe that the bivariate Gaussian distribution can also model the collective behavior for all the other seven datasets, allowing us to state the following universal pattern:

\begin{universal}
The joint distribution of the parameters $\rho_i$ and $\log(\mu_i)$ associated with individual $i$ of a particular communication system follows a bivariate Gaussian distribution.
\end{universal}

This result is very useful, since it allow us to easily generate a synthetic dataset for a particular system. In order to do that, we simply have to perform the following steps:

\begin{enumerate}
	\item Select from Table~\ref{tab:gausstable} the system which you would like to generate the synthetic dataset;
	\item Create a bivariate Gaussian sampler using the correspondent parameters, which are shown in Table~\ref{tab:gausstable};
	\item Sample the $n$ individuals from the bivariate Gaussian sampler;
	\item Select the duration window $T$ of the dataset, e.g. $T = $1 month;
	\item For each individual $i$, generate $N_i$ inter-event times $\Delta_0, \Delta_1, ..., \Delta_{N_i}$, where $N_i$ is the highest natural number that satisfies $\sum_{k}^{N_i}{\Delta_k} < T$. 
\end{enumerate}

\begin{figure*}[!hbtp]
\centering
\subfigure[Askme: Real]
  {\includegraphics[width=.23\textwidth]{./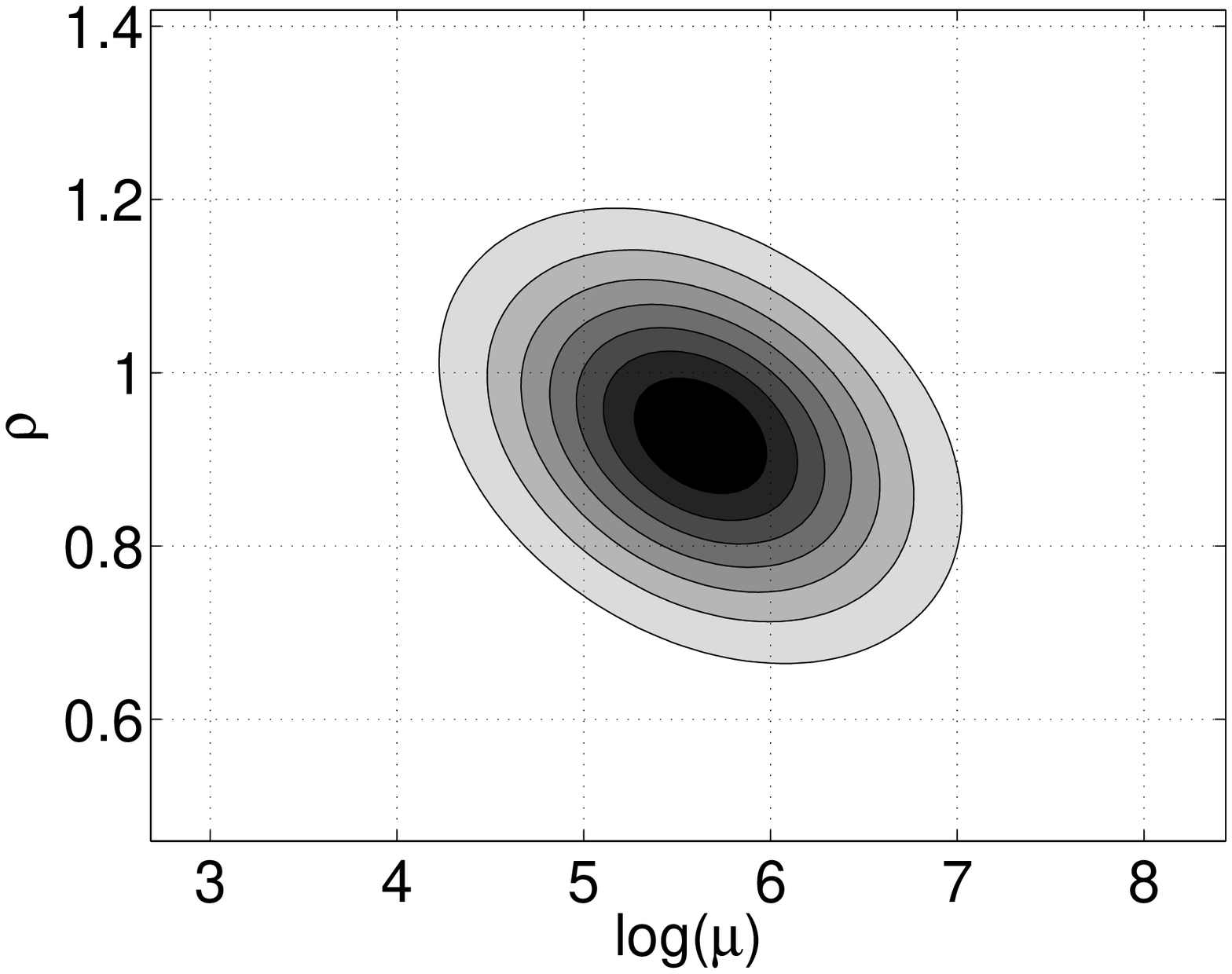}}  
\subfigure[Askme: Synthetic]
  {\includegraphics[width=.23\textwidth]{./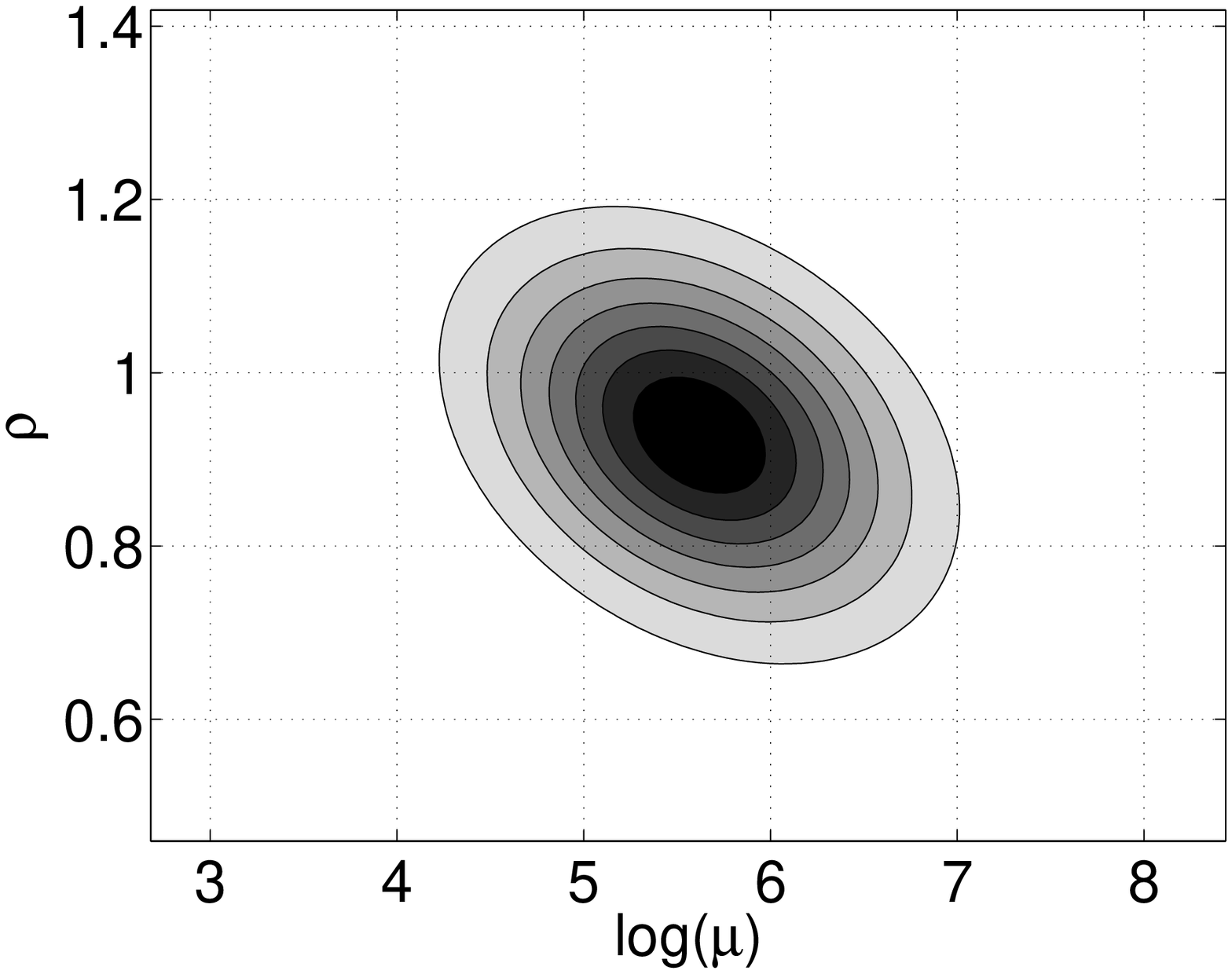}}  
\subfigure[Digg: Real]
  {\includegraphics[width=.23\textwidth]{./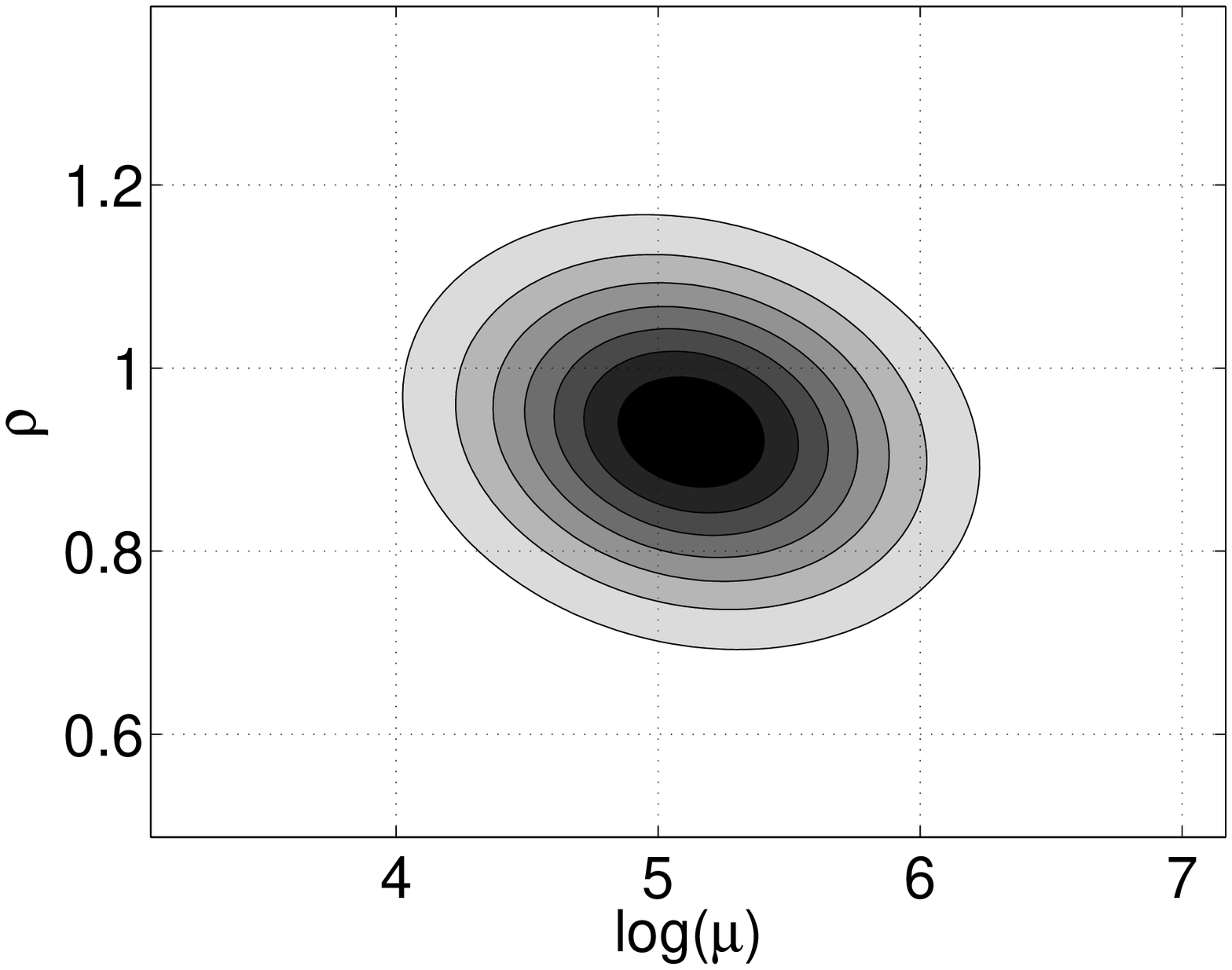}}  
\subfigure[Digg: Synthetic]
  {\includegraphics[width=.23\textwidth]{./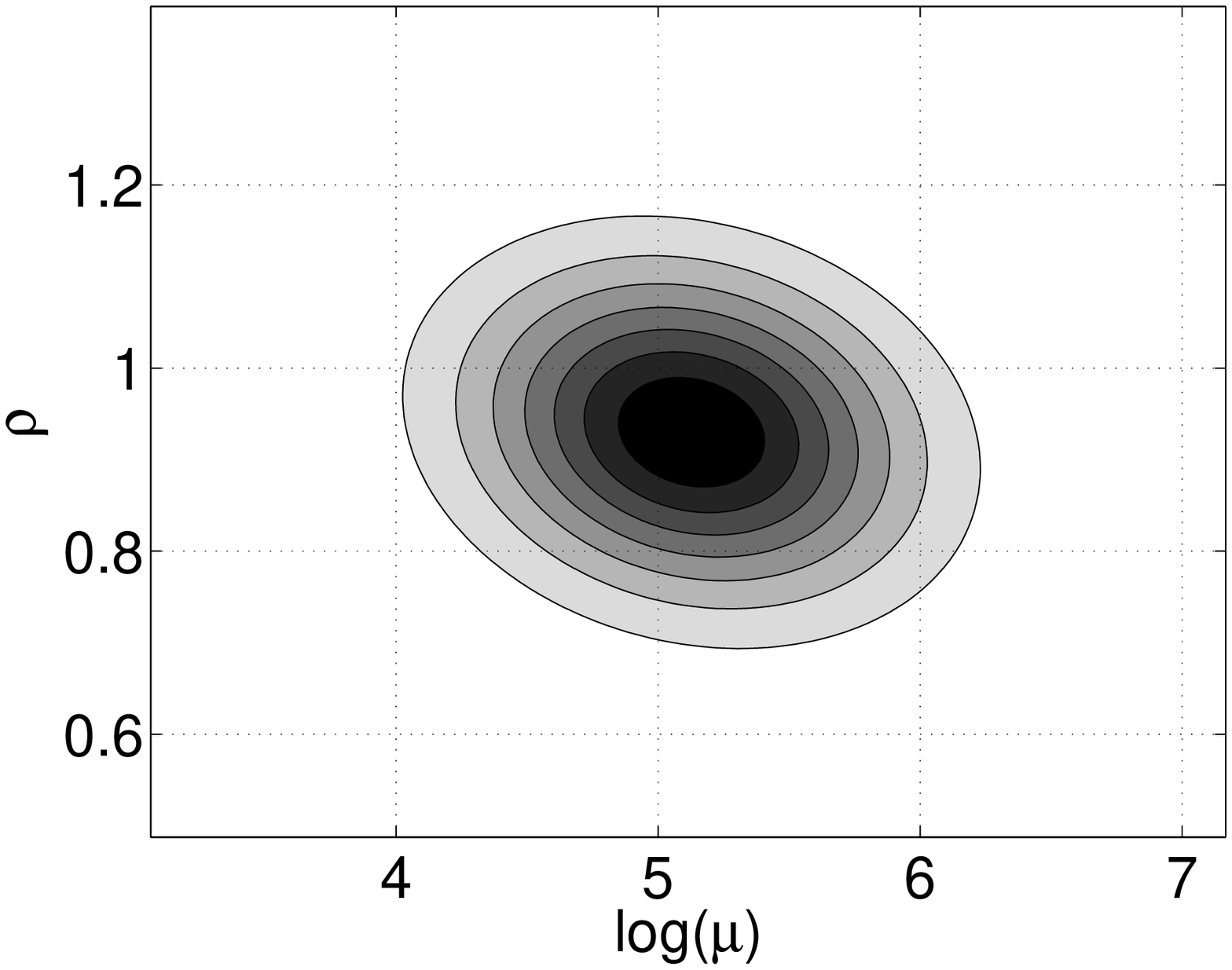}}  
\subfigure[Enron: Real]
  {\includegraphics[width=.23\textwidth]{./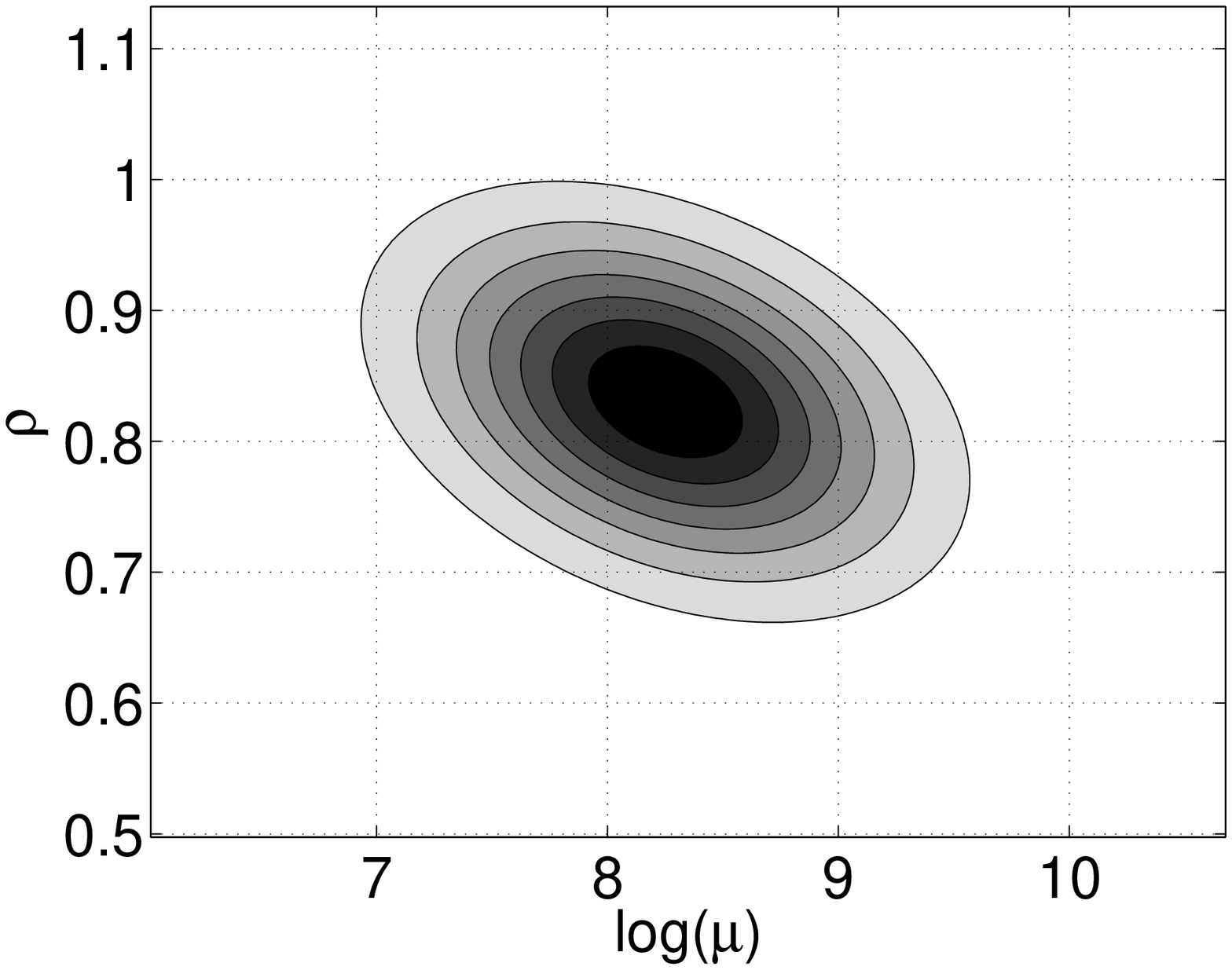}}  
\subfigure[Enron: Synthetic]
  {\includegraphics[width=.23\textwidth]{./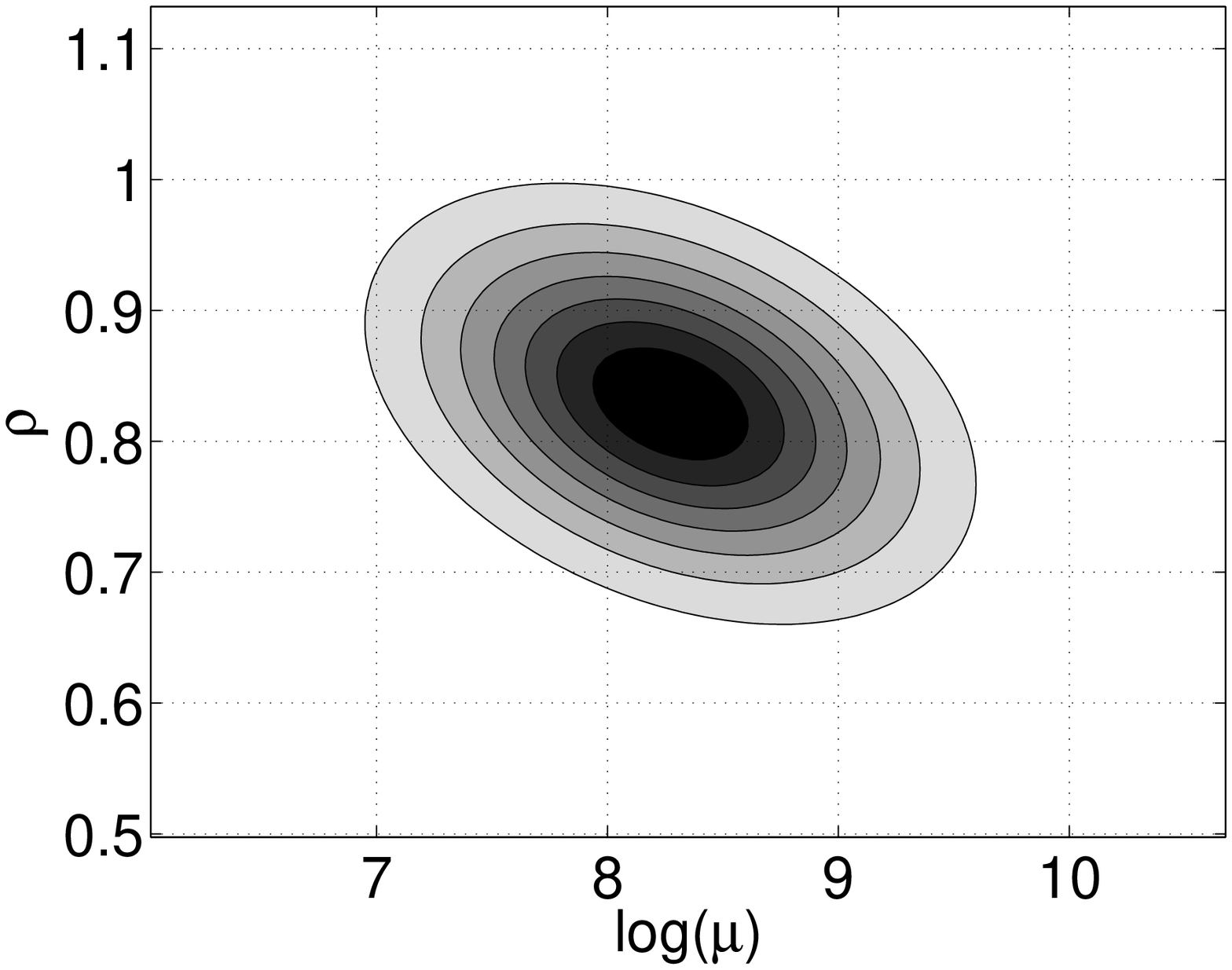}}  
\subfigure[Mefi: Real]
  {\includegraphics[width=.23\textwidth]{./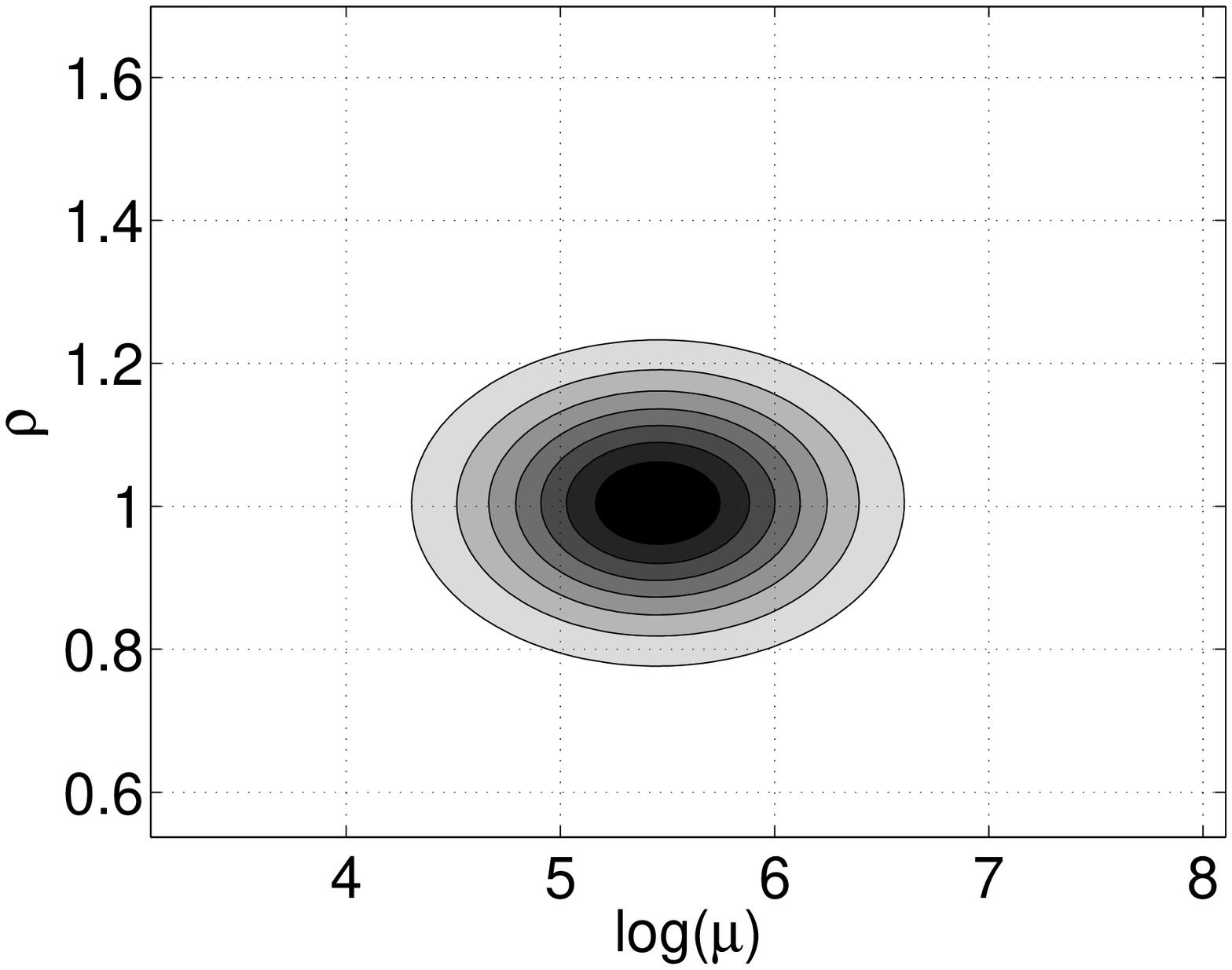}}  
\subfigure[Mefi: Synthetic]
  {\includegraphics[width=.23\textwidth]{./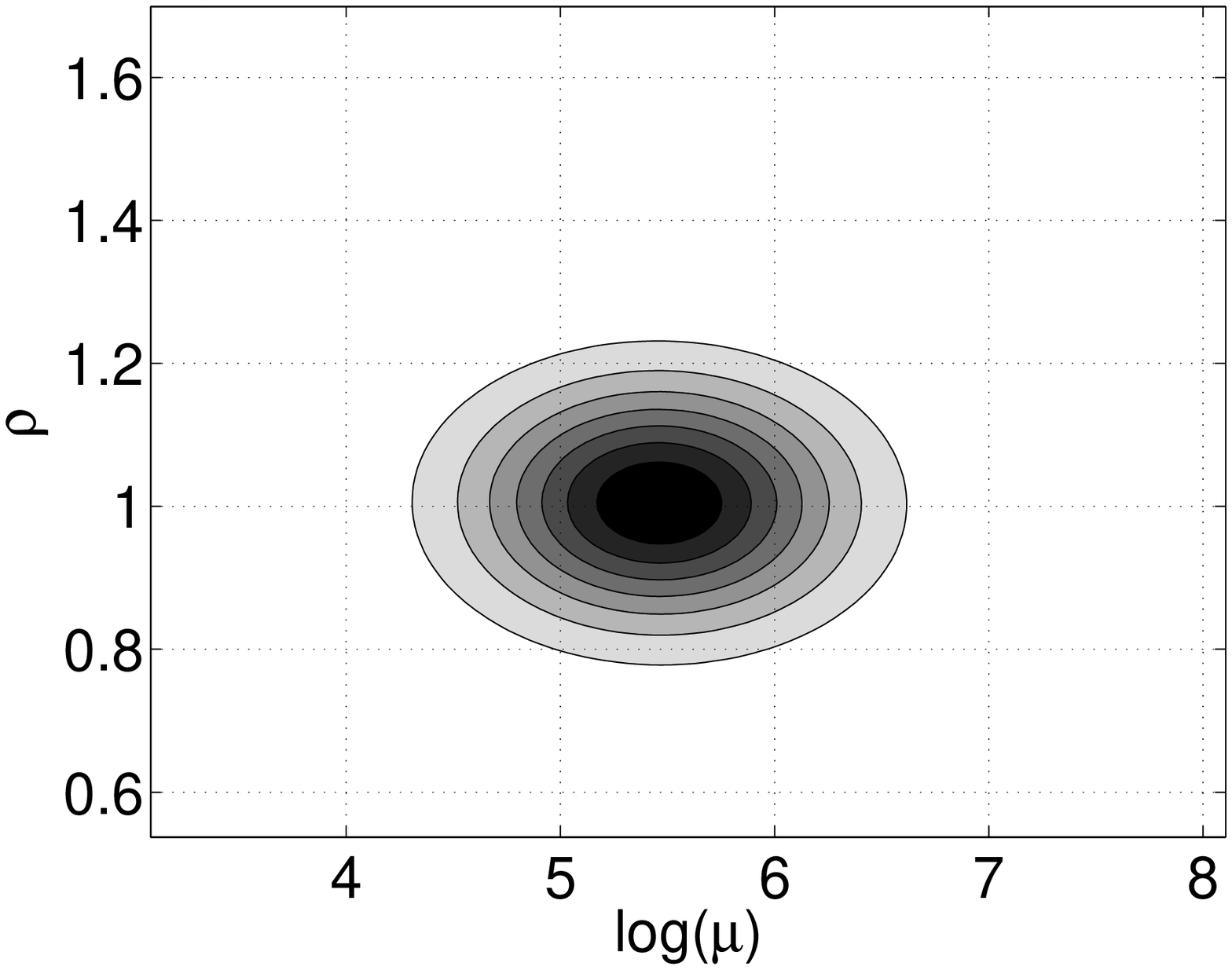}}  
\subfigure[Phone: Real]
  {\includegraphics[width=.23\textwidth]{./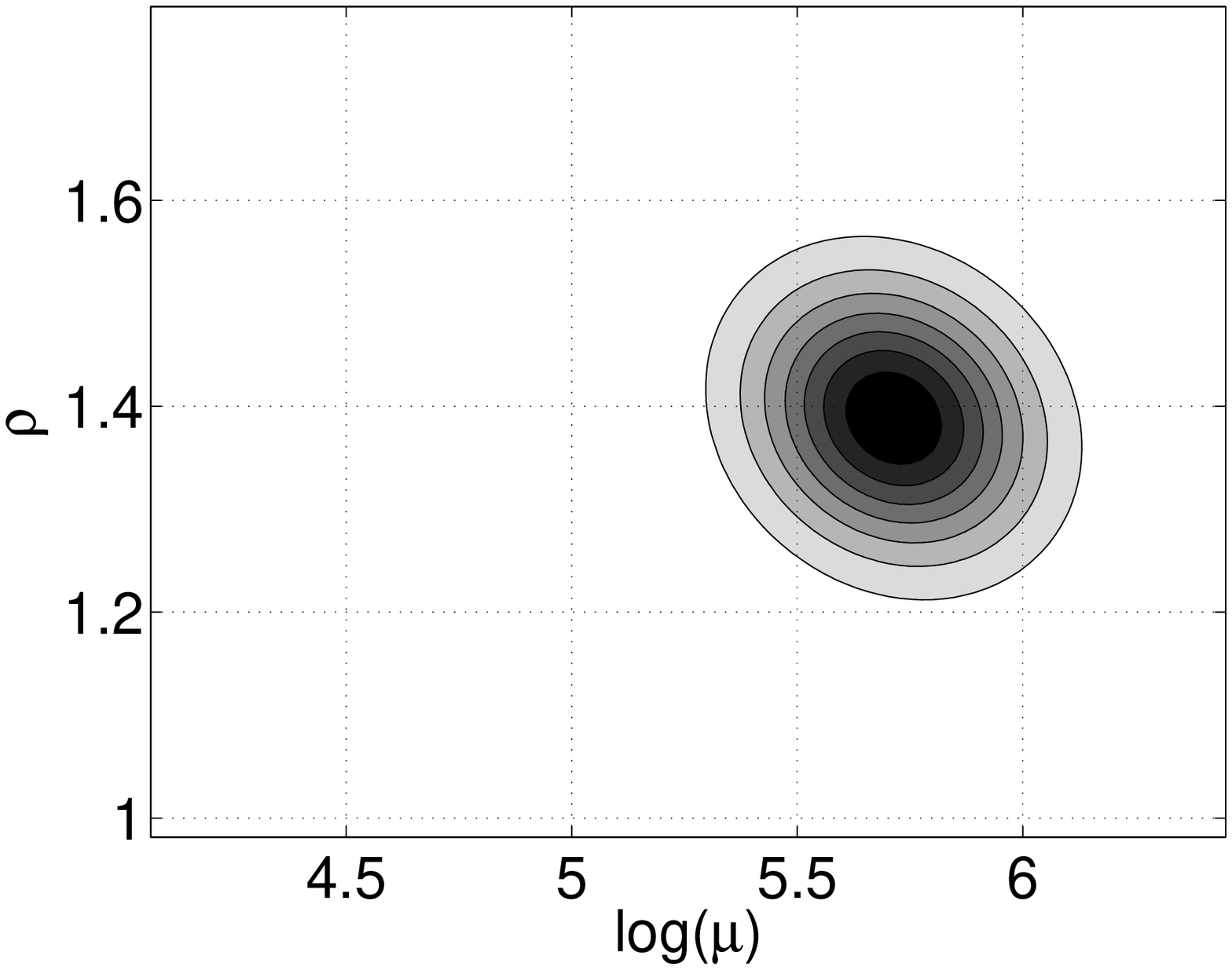}}  
\subfigure[Phone: Synthetic]
  {\includegraphics[width=.23\textwidth]{./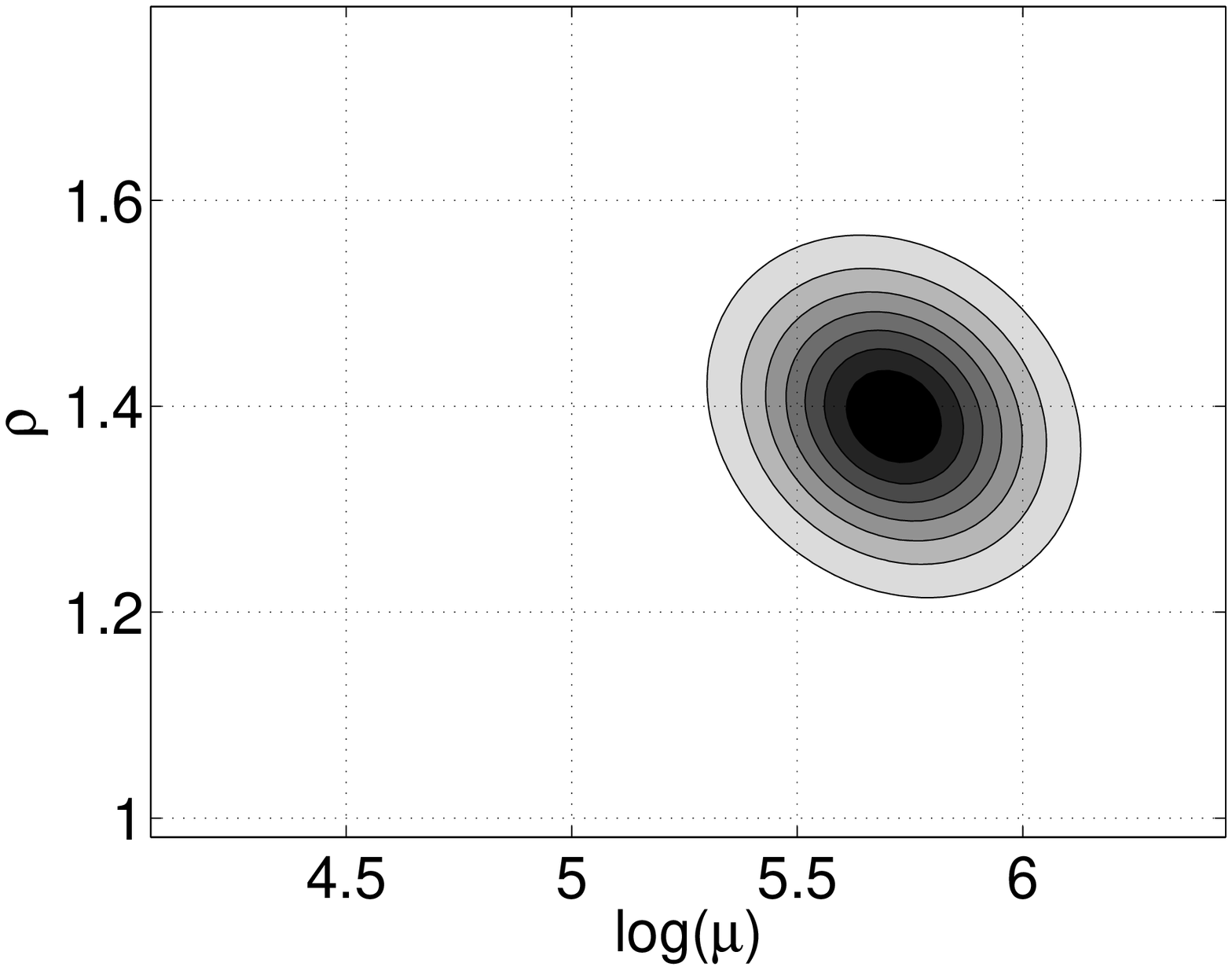}}  
\subfigure[SMS: Real]
  {\includegraphics[width=.23\textwidth]{./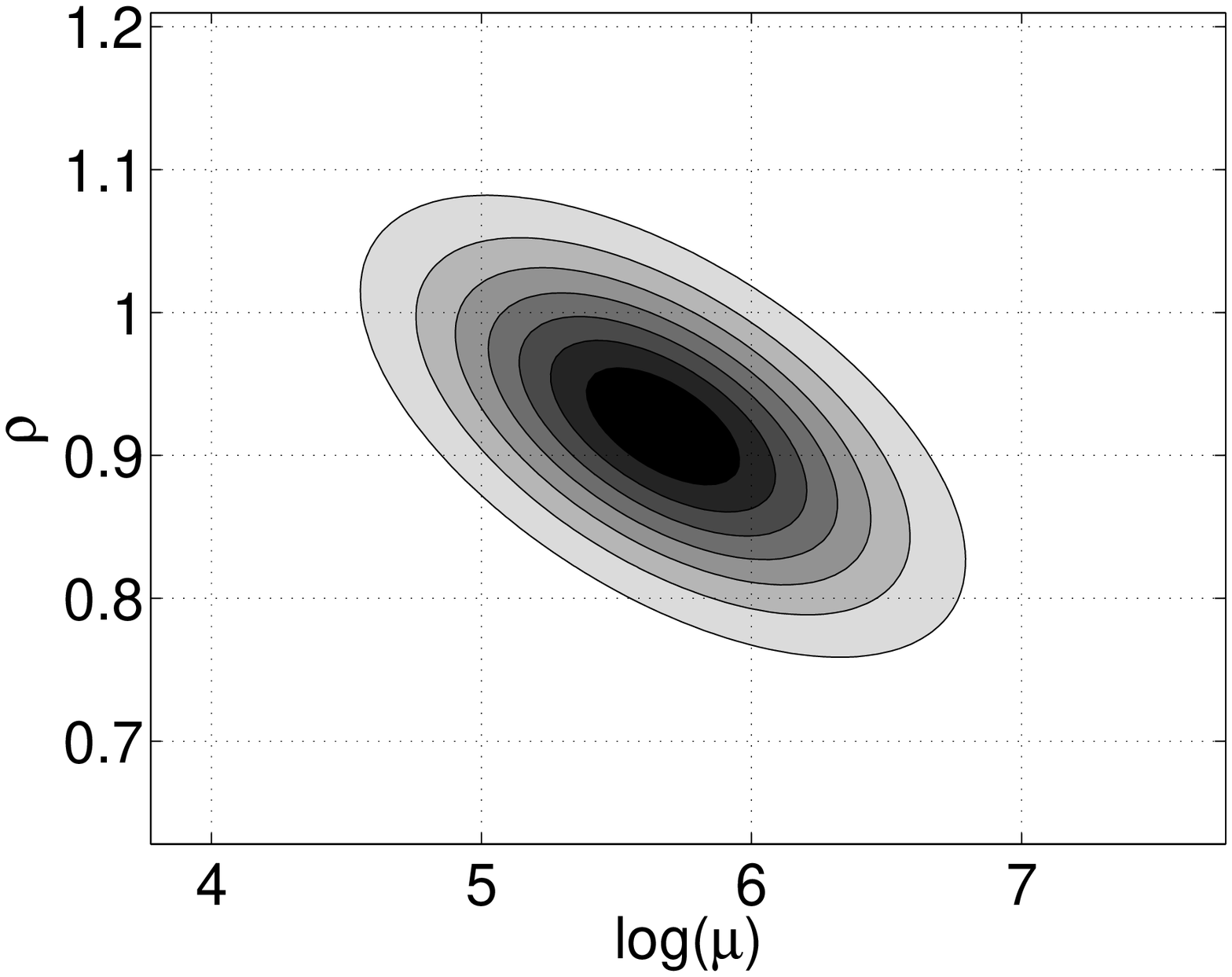}}  
\subfigure[SMS: Synthetic]
  {\includegraphics[width=.23\textwidth]{./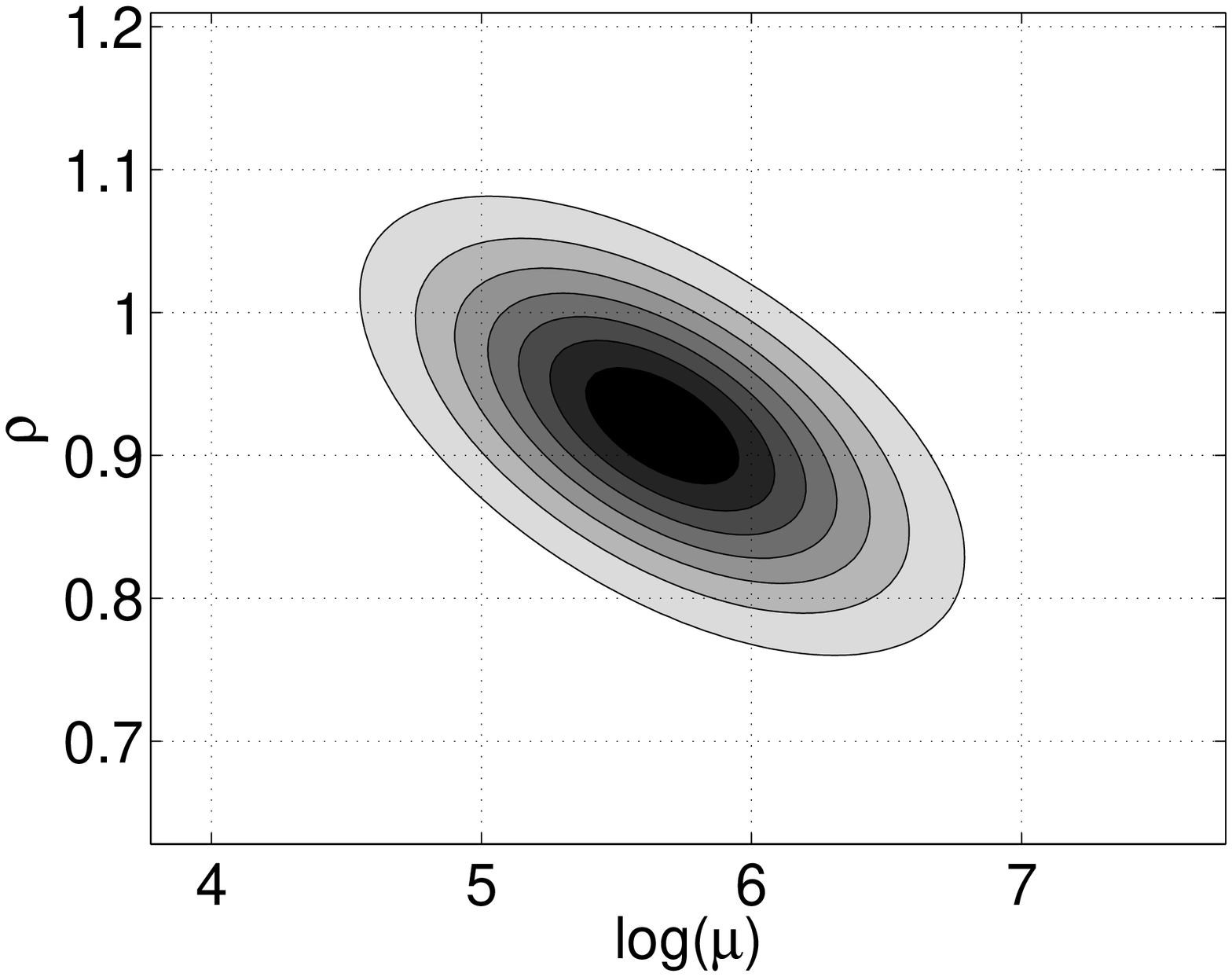}}  
\subfigure[Youtube: Real]
  {\includegraphics[width=.23\textwidth]{./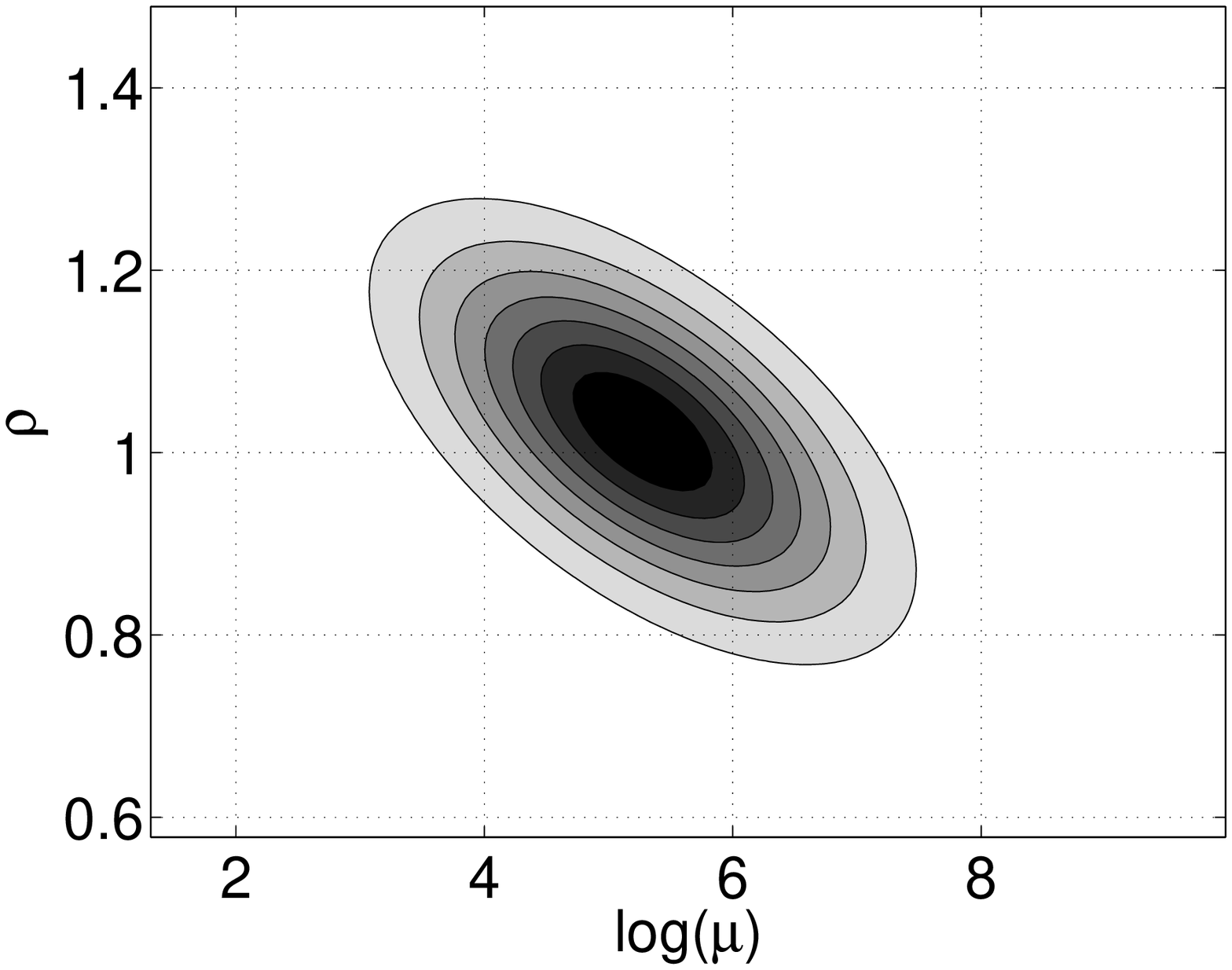}}  
\subfigure[Youtube: Synthetic]
  {\includegraphics[width=.23\textwidth]{./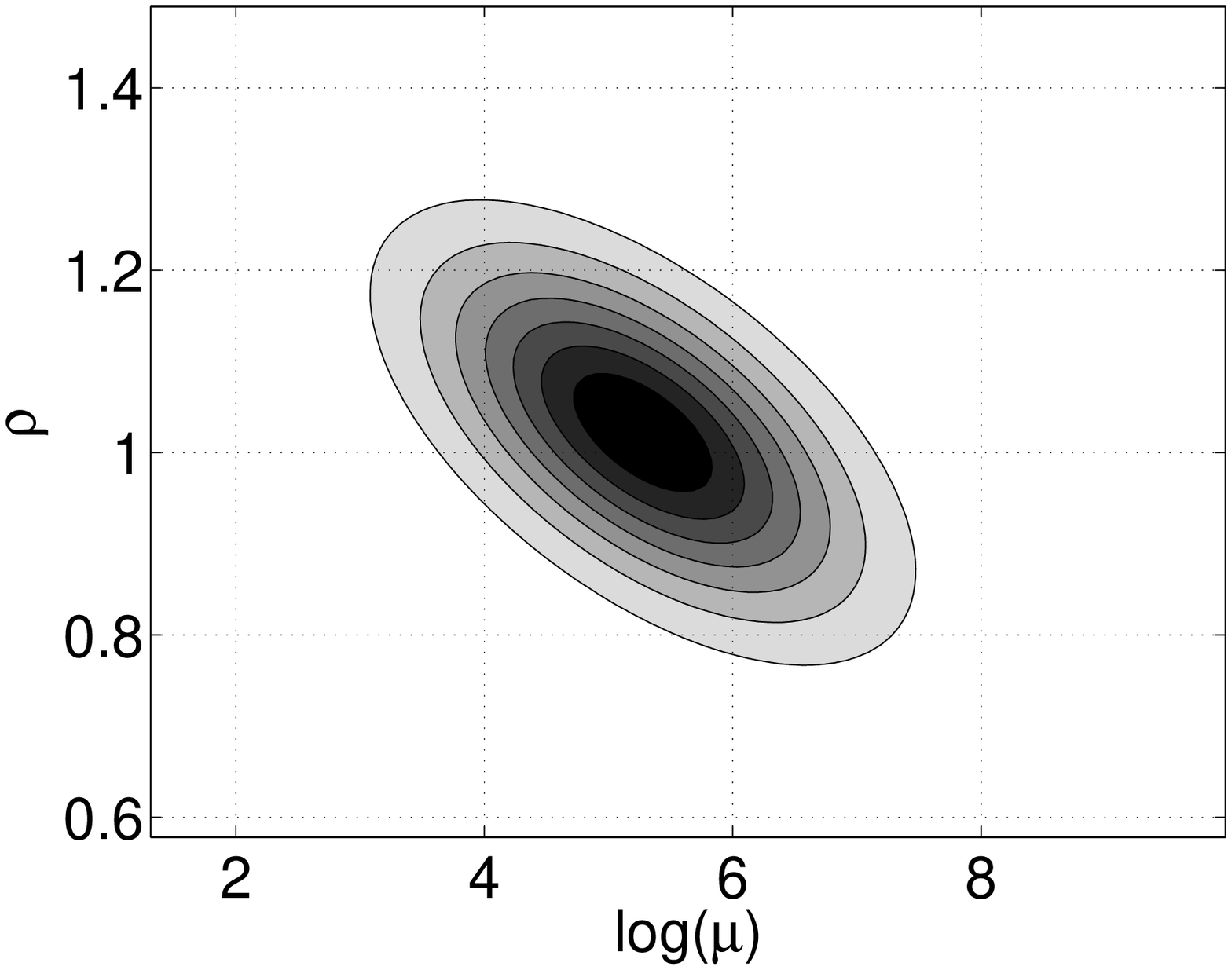}}  
\caption{Comparison between the real and synthetic datasets generated from bivariate Gaussians.}
  \label{fig:metadists}
\end{figure*}

\begin{table}[!htbp]%
\tbl{Parameters of the bivariate Gaussian distributions.\label{tab:gausstable}}{%
	\begin{tabular}{c c c c c c c c c}
\hline
System & AskMe	& Digg &	Enron &	Meta &	Mefi &	Phone	& SMS	& Youtube \\
\hline
$E(\rho_i)$  & 0.927 & 0.930  & 0.830 & 1.004 & 1.033 & 1.388  & 0.920 & 1.023 \\
$E(\log(\mu_i))$ & 5.625 & 5.126 & 8.251 & 5.455 & 5.748 & 5.714 & 5.672 & 5.274 \\
Var$(\rho_i)$ & 0.016 & 0.013 & 0.006 & 0.012 & 0.014 & 0.007 & 0.006 & 0.015 \\
Var$(\log(\mu_i))$ & 0.470 & 0.291 & 0.417 & 0.317 & 0.487 & 0.041 & 0.301 & 1.163 \\
Cov$(\rho_i, \log(\mu_i))$ & -0.028 & -0.010 & -0.018 & 0.000263 & -0.021 & -0.002 & -0.025 & -0.080 \\
\hline
\end{tabular}}
\end{table}

\section{Anomalies}
\label{sec:anomaly}

In the previous section, we showed that the majority of individuals of our eight datasets is well modeled by the \dpp. Moreover, we showed that the collective behavior of the individual \ieds of these datasets is well modeled by a bivariate Gaussian distribution. A natural application of these findings would be for anomaly detection. An individual that does not have a \ied that can be explained by the \dpp is a potential individual to be observed, since it has a distinct communication behavior from the majority of the other users. Moreover, an individual $i$ whose $\rho_i$ and $\log(\mu_i)$ values are significantly different from the typical individual is also a likely target to investigate.  

Thus, we define anomalies those individuals that fall into one of the following three criteria: 
\begin{itemize}
	\item \textbf{A1}: the \ied is well modeled by \dpp but its $\rho_i$ or $\log(\mu_i)$ values are significantly distant from the bivariate Gaussian that describes the population;
	\item \textbf{A2}: the \ied is not well modeled by the \dpp but its $\rho_i$ and $\log(\mu_i)$ values are inside the bivariate Gaussian that describes the population;
	\item \textbf{A3}: the \ied is not well modeled by the \dpp and its $\rho_i$ or $\log(\mu_i)$ values are significantly distant from the bivariate Gaussian that describes the population.
\end{itemize}
   
The distance between an individual and the expected typical behavior modeled by the bivariate Gaussian distribution is calculated by the Mahalanobis distance $D^2$ that is commonly used to measure the distance between an individual sample point $y$ and its expected value $m$~\cite{johnson:2007}. If $y$ follows a bivariate Gaussian distribution with covariance matrix $\Sigma$ then $D$ is given by 
\[ D^2 = (y-m)^t ~ \Sigma^{-1} (y-m) \]
which follows a chi-squared distribution with 2 degrees of freedom.  Individuals with $D^2$ greater than 25 occurs at a rate of 4 per million and hence they are considered outliers.  
 These anomalies are represented in Figure \ref{fig:anomalyFigs} by the colors green (A1), red (A2) and blue (A3), respectively.  White individuals have or are similar to the typical behavior.

\begin{figure*}[!hbtp]
\centering
\subfigure[Askme]
  {\includegraphics[width=.23\textwidth]{./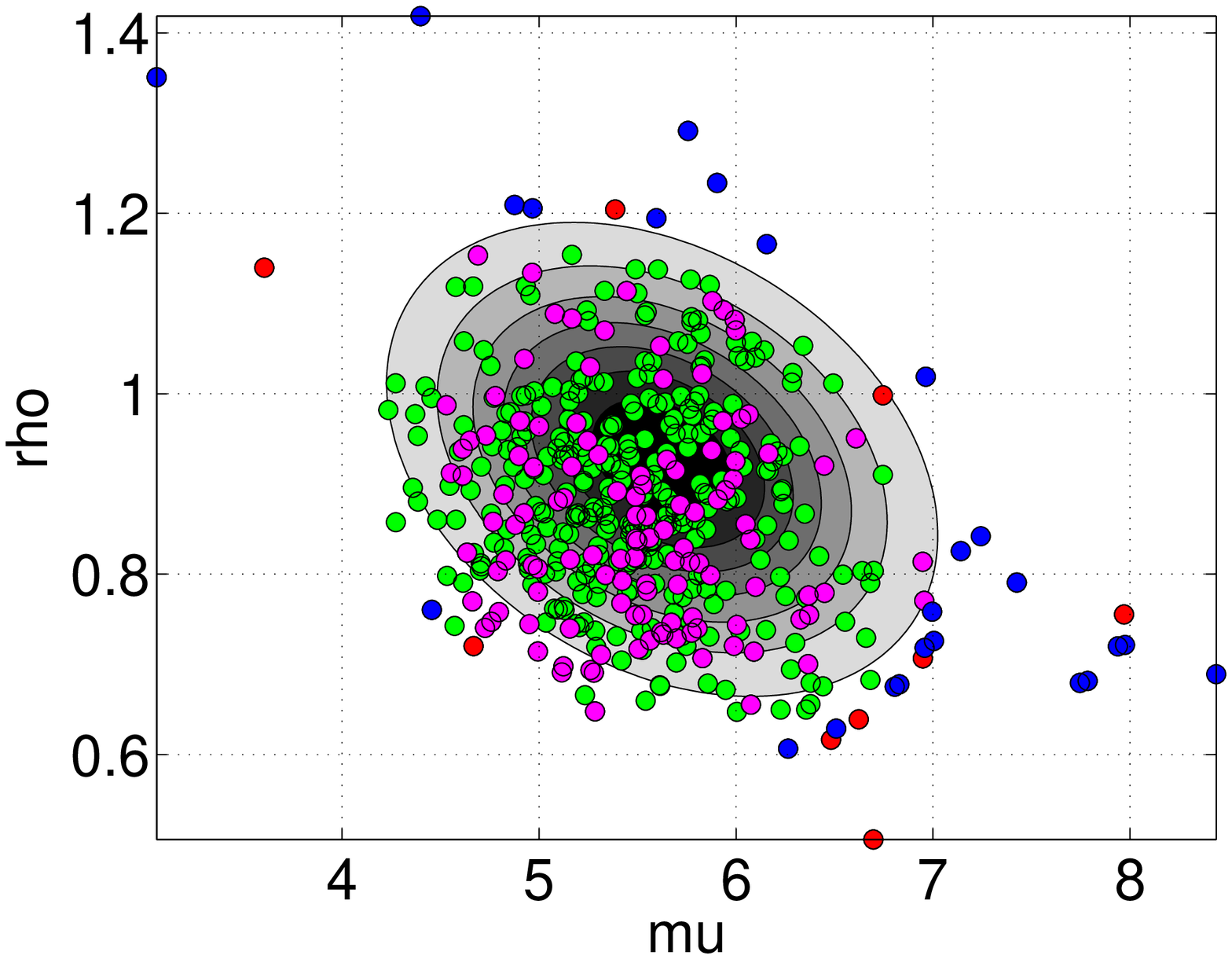}}
\subfigure[Digg]
  {\includegraphics[width=.23\textwidth]{./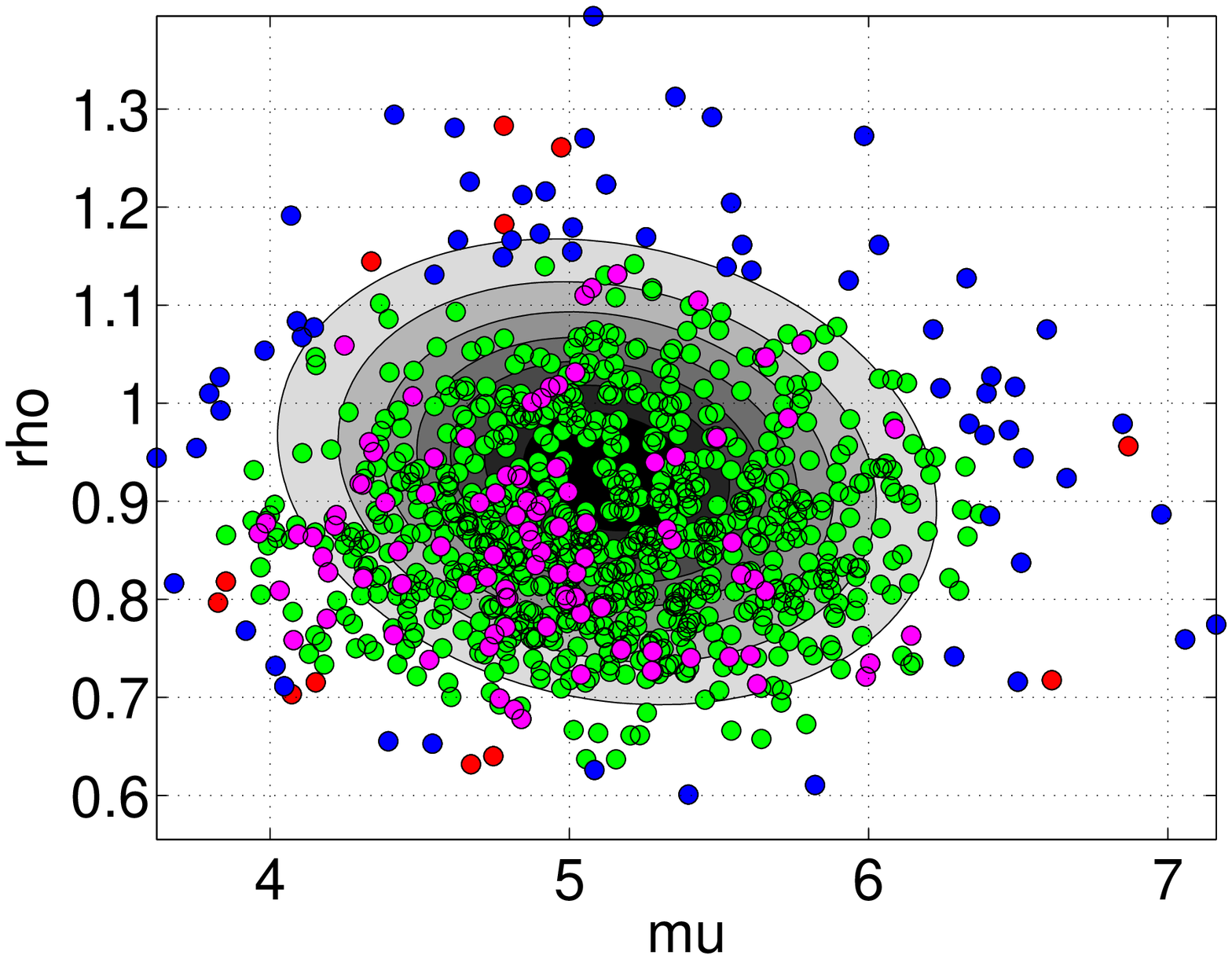}}
\subfigure[Enron]
  {\includegraphics[width=.23\textwidth]{./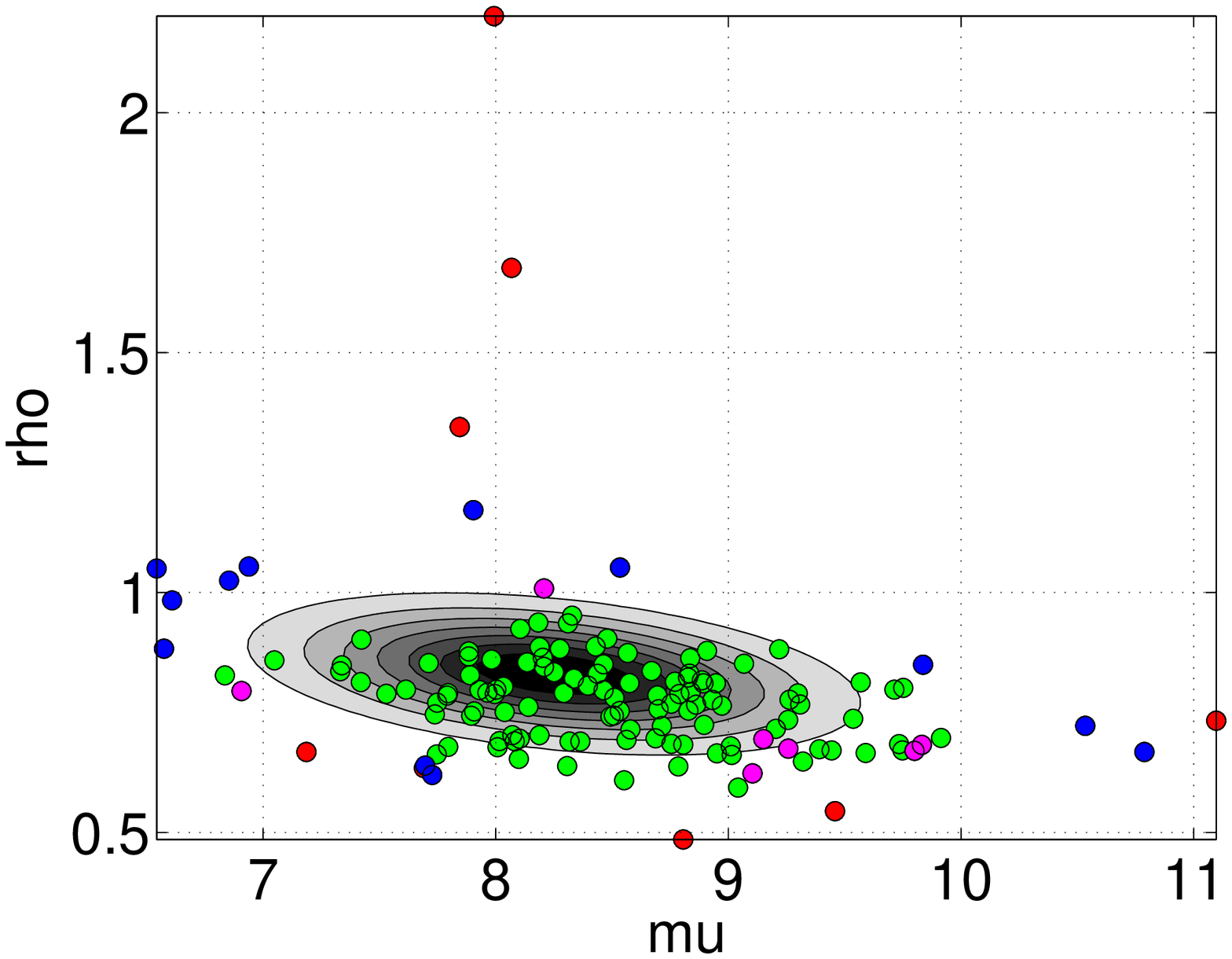}}
\subfigure[MetaFilter]
  {\includegraphics[width=.23\textwidth]{./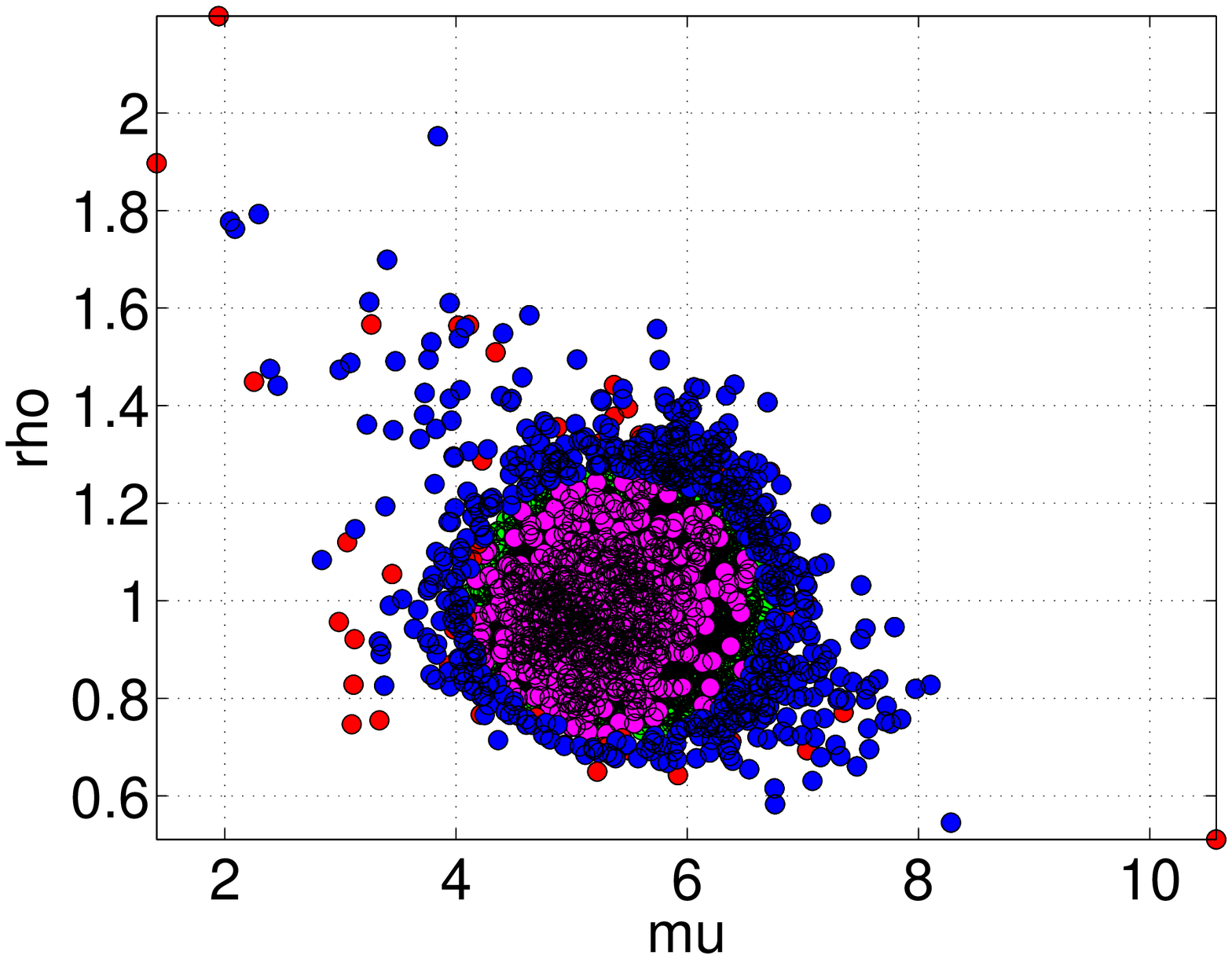}}
\subfigure[MetaTalk]
  {\includegraphics[width=.23\textwidth]{./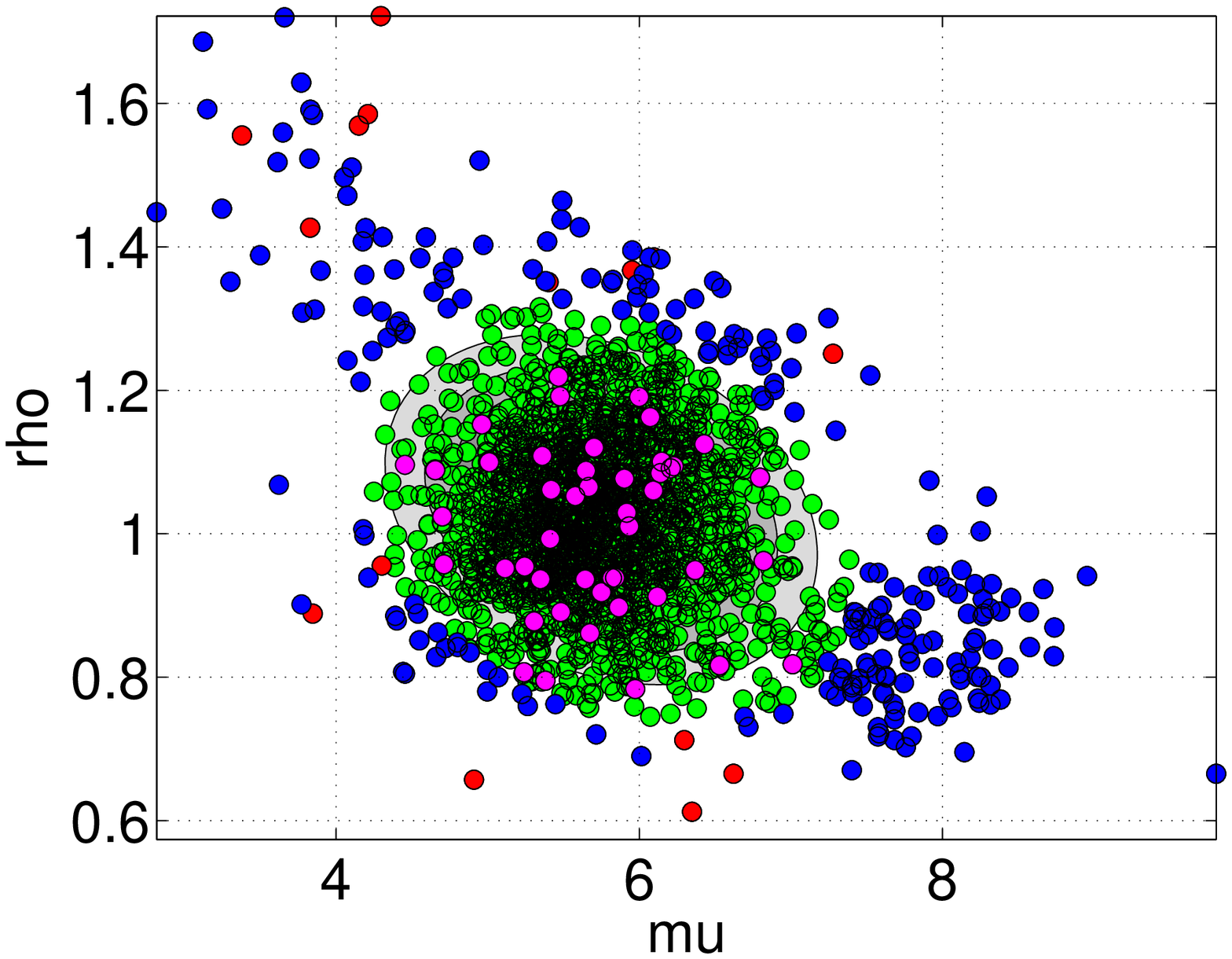}}
\subfigure[Phone]
  {\includegraphics[width=.23\textwidth]{./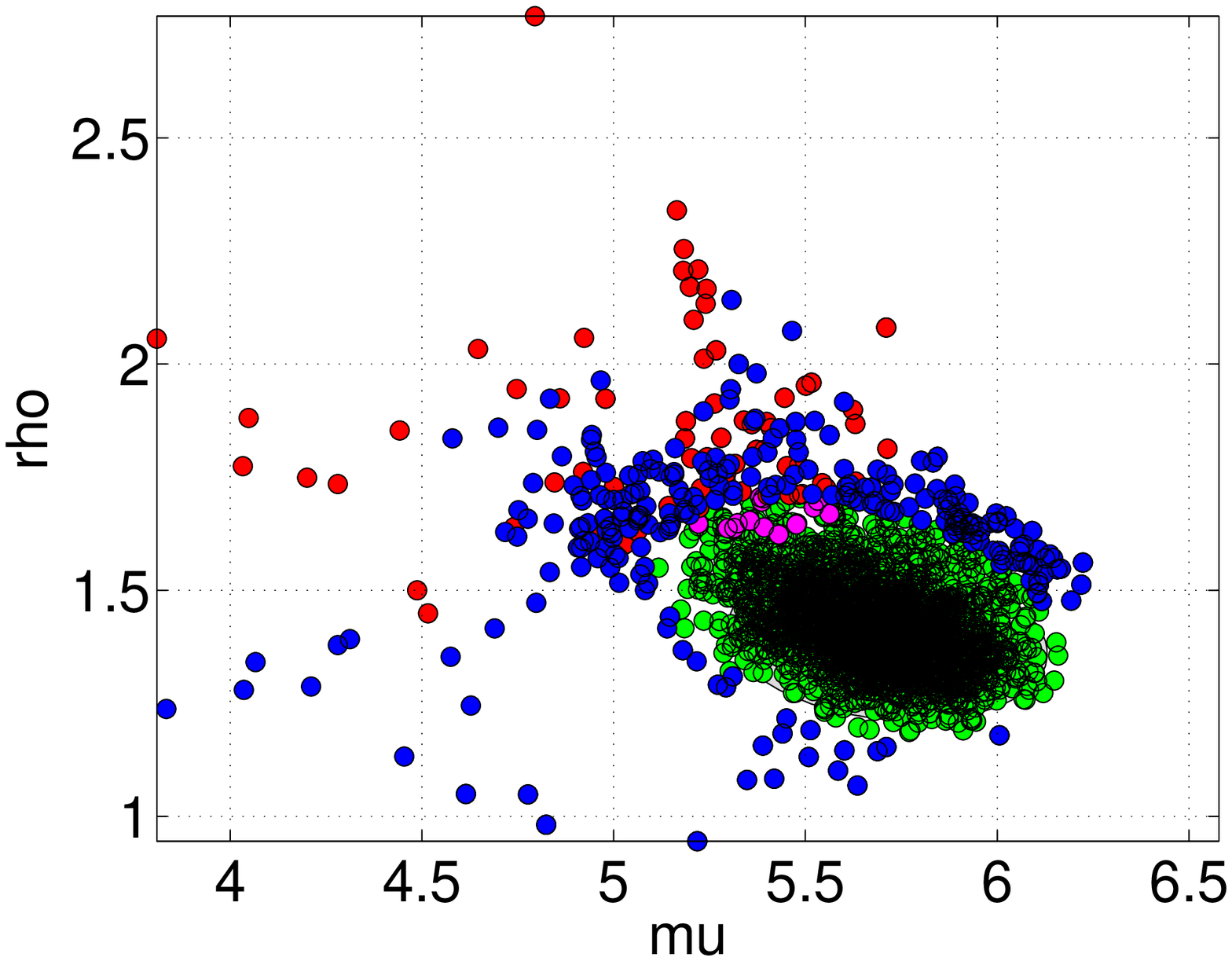}}
\subfigure[SMS]
  {\includegraphics[width=.23\textwidth]{./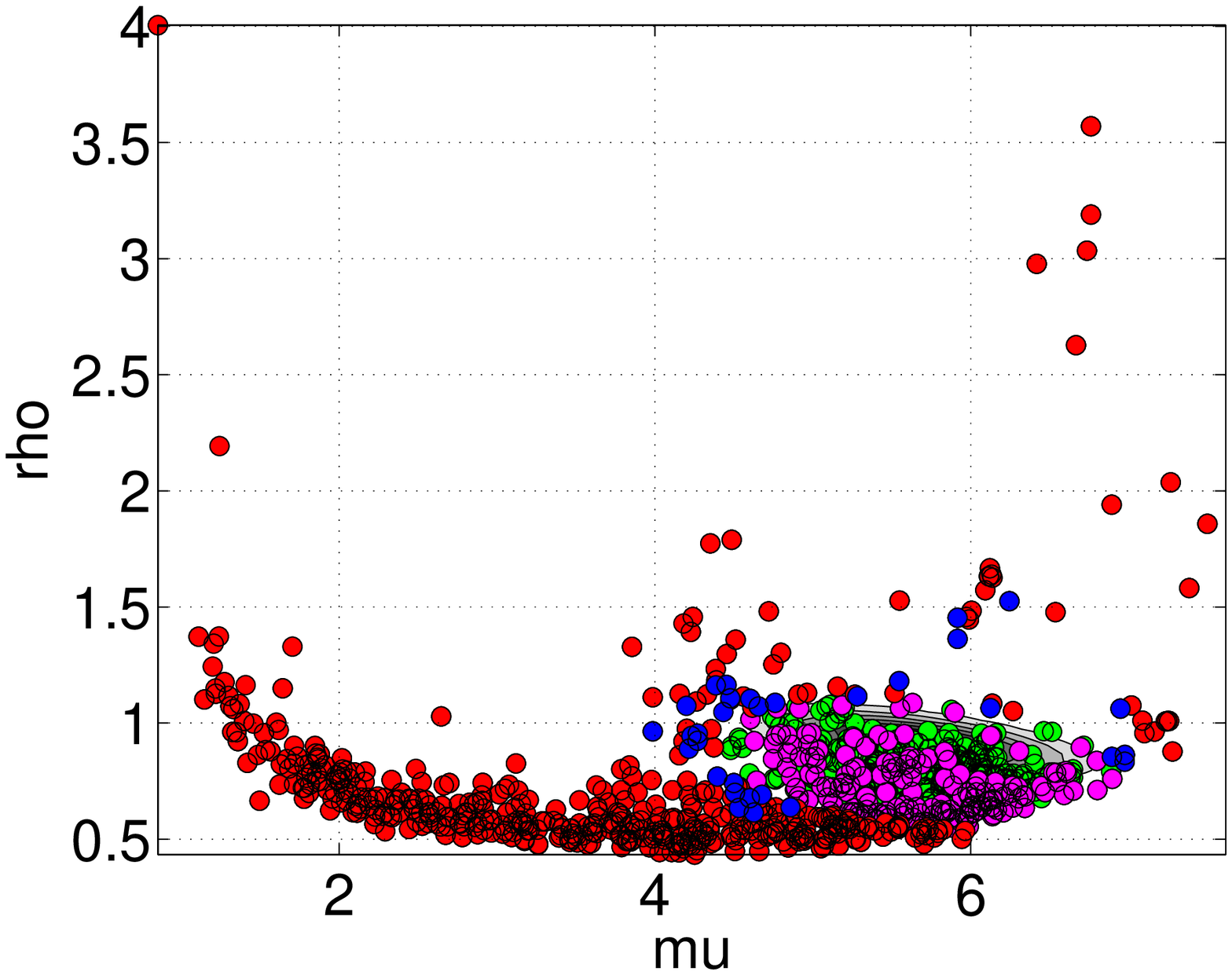}}
\subfigure[YouTube]
  {\includegraphics[width=.23\textwidth]{./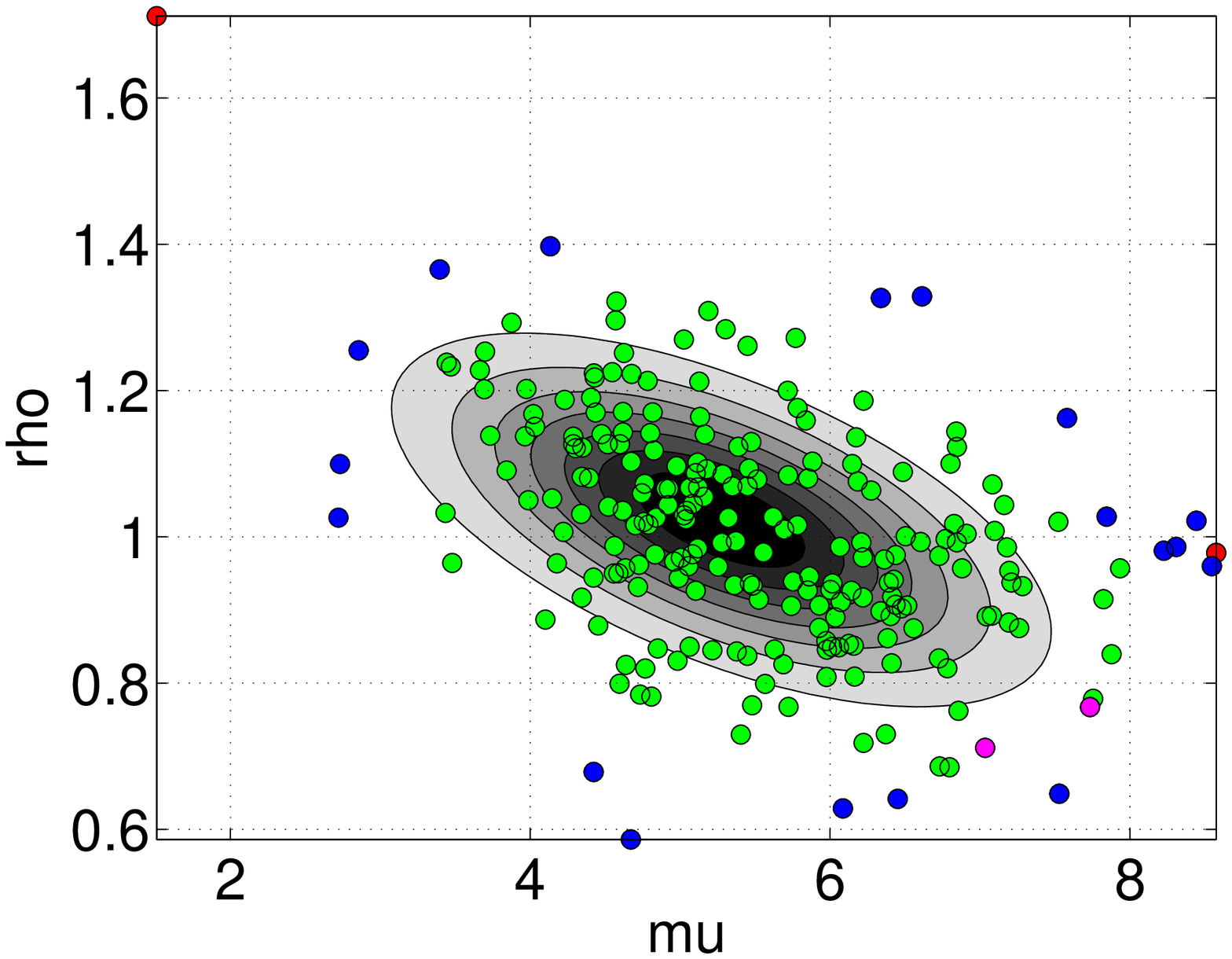}}

	\caption{Types of anomalies detected}
  \label{fig:anomalyFigs}
\end{figure*}


By checking the top anomalies according to our criteria, we found interesting examples. In AskMe dataset, for example, a point out of the bivariate Gaussian (type A3) was detected by the AskMe staff as a topic containing inappropriate content for this kind of service. When we accessed the post (see Figure~\ref{fig:exAnomalies}) we found the following message: ``This post was deleted for the following reason: Historical outliers notwithstanding, this is not what askme is for.''. As another example, we spotted a type A1 anomaly in the MetaFilter dataset  that was also detected manually by the administrator and deleted from the community. The post was changed (see Figure~\ref{fig:exAnomalies}) and the following message was posted: ``This post was deleted for the following reason: if you are going to be subtle, you need to be clearer.''. Considering the anomalies found in the MetaTalk dataset, we detected a type A2 anomaly that is a post from the developers talking about improvements and asking comments and suggestions to the community was spotted. This type of post is not the goal of MetaTalk. Finally, a YouTube video with the subject ``Bill Nye: Creationism Is Not Appropriate For Children''  was identified as a type A3 anomaly by our framework. First, while the typical number of views of the videos posted by the owner of this video is around dozens of thousands, this particular video has close to five million views. Moreover, by analyzing the content of the comments posted on this video, we could identify constant flaming, i.e., insulting interactions among the comments.

\begin{figure*}[!hbt]
\centering
  {\includegraphics[width=.9\textwidth]{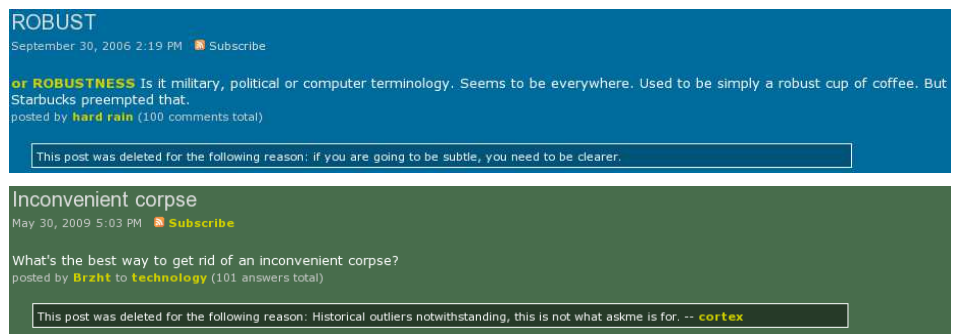}}  
  \caption{Examples of deleted posts detected as anomalies for the Askme (top) and MetaFilter (bottom) datasets.}
  \label{fig:exAnomalies}
\end{figure*}

Concerning the SMS dataset, we found several interesting anomalies of the type A3. These anomalous behavior is derived from the fact that it is common that users subscribe to automated applications which periodically send messages to them about a given topic, e.g. news, movies, sports etc. These messages are counted in the \ied of the user as a regular SMS or e-mail message, but they do not represent a social interaction. Thus, when the percentage of these messages is high, the \ied shape is modeled by two random variables, the one which represents the social interactions and the one which represents the incoming messages from automated services. If the amount of messages of the second type is significant, then the distribution deviates significantly from the one generated by the \dpp, as we observe in Figure~\ref{fig:anomalysms}. 

\begin{figure}[!hbt]
\centering
	\subfigure[Odds Ratio]
  {\includegraphics[width=.40\textwidth]{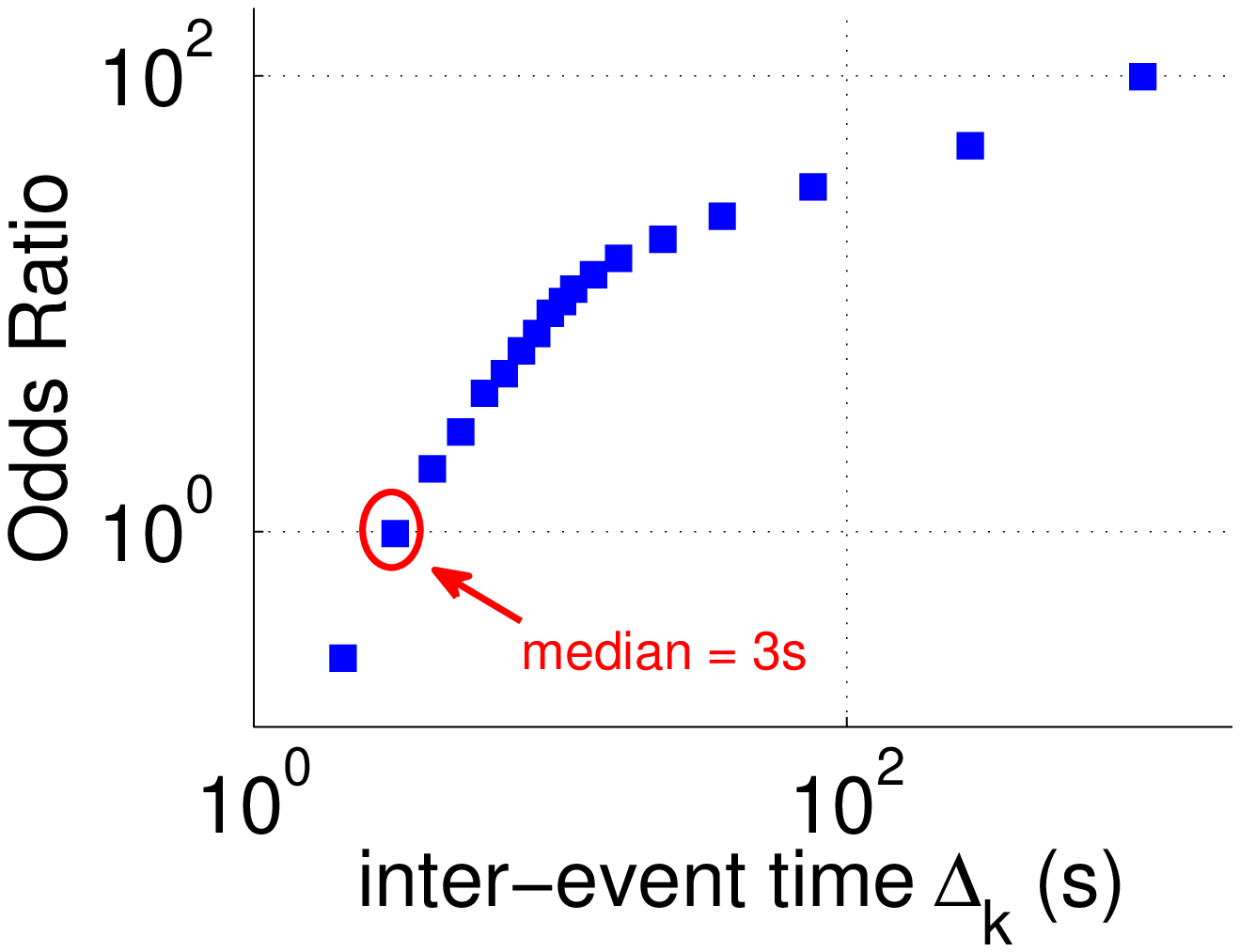}}  
  \subfigure[Histogram]
  {\includegraphics[width=.40\textwidth]{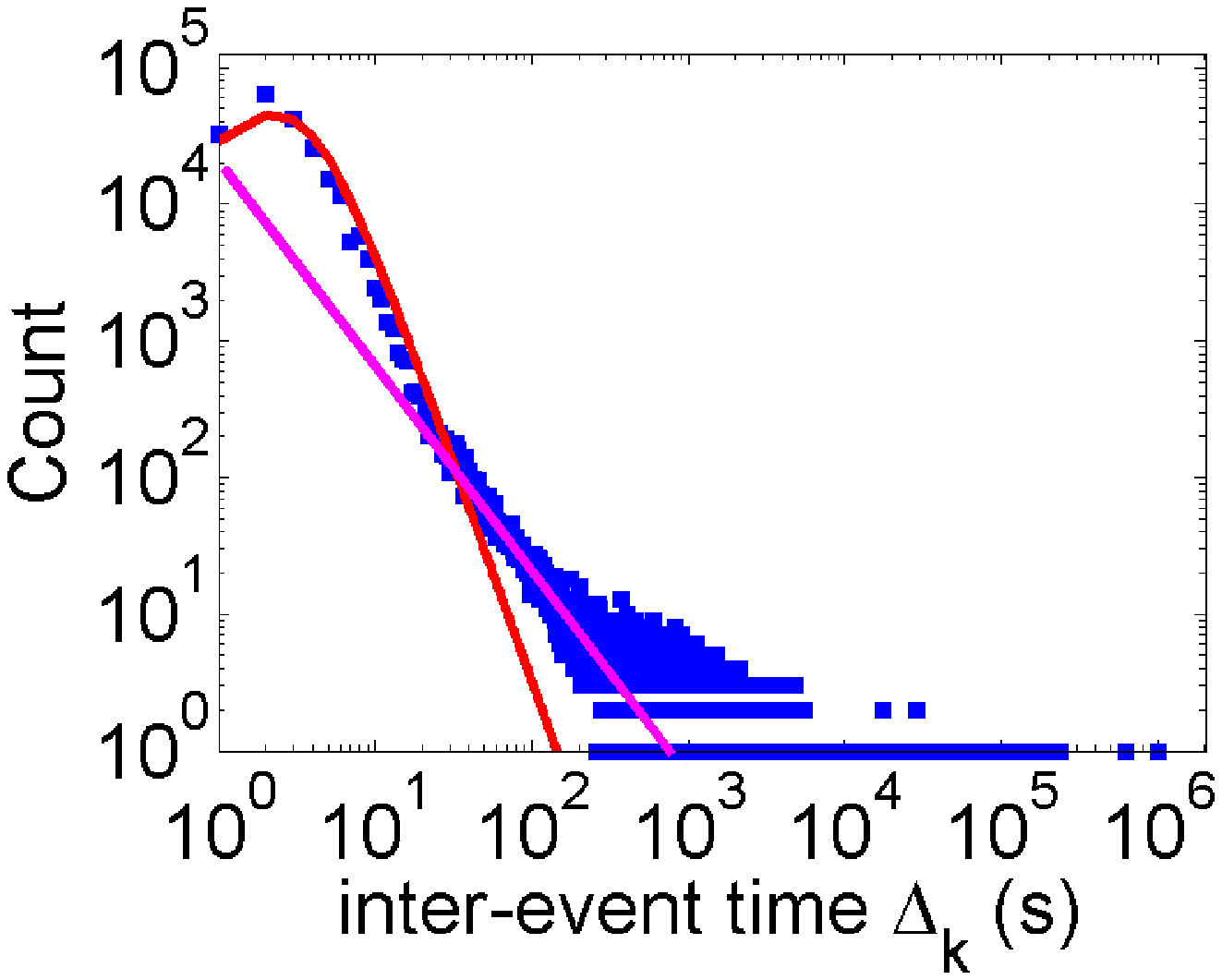}}  
  \caption{The \ied of the most active user of our 8 datasets, with 229590 communication events. Observe how the behavior differs from the \dpp or any other well known distribution, probably due to the significant number of automatic messages sent, that represents $97\%$ of the total number of communication events.}
  \label{fig:anomalysms}
\end{figure}

\section{Conclusions}
\label{sec:conclusion}

In this paper, we showed that eight different systems share four common properties in their communication dynamics. These universal properties are:

\begin{enumerate}
	\item the marginal distribution of the time intervals between communications follows an odds ratio power law;
	\item the slope of this power law is typically $1$;
	\item individual sequences of communications tend to show a high dependence between consecutive inter-arrival times;
	\item the collection of individual \ieds is very well modeled by a Bivariate Gaussian Distribution.
\end{enumerate}

Moreover, we proposed the \dpp model, which reconciles previous approaches for human communication dynamics and also is able to generate communication events that match all the four universal properties listed above. Finally, we showed that from the knowledge presented in this paper is possible to generate realistic synthetic datasets of communications and to spot anomalies.

\appendix
\section*{APPENDIX}

\section{Parameters}
\label{sec:parameters}

Before reaching the generalized \dpp model described in the paper, we had a simpler version of it, relying on a different parametrization scheme:

\begin{model}
Self-Feeding Process \dpp(C,a).
\begin{equation}
\boxed{
\begin{array}{rcl}
			 \delta_1 &\leftarrow& C\\
       \delta_t &\leftarrow& \mbox{Poisson Process}(\beta = \delta_{t-1} + C) \\
       \Delta_k &\leftarrow& \delta_t^a,   
\end{array}
}
\end{equation}
\label{eq:dppC}
\end{model}

where $C$ is the location parameter and $a$ is the shape parameter that defines the odds ratio slope $\rho$. An easy and direct way to define the relationships between this model's parameters and the distribution properties $\mu$ and $\rho$ is through simulations. 

Thus, the first point we consider is the median $\mu$ of the inter-event times generated by the \dppzero{} model when $a=1$. When $OR(x)=1$, $x$ is the median $\mu$ of the distribution. Thus, in Figure~\ref{fig:cvsmu}-a, we plot the OR for different values of $C$. We observe that changing the value of $C$ changes $\mu$ and, consequently, the location of the distribution, but maintains its slope. We also see that $\mu$ is close but different than the value of $C$. 

In order to investigate the relationship between $C$ and $\mu$, we run simulations of the model for all integer values of $C$ between [1,10000]. As we observe in Figure~\ref{fig:cvsmu}-b, the median $\mu$ of the inter-event times distribution (\ied) varies linearly with $C$ according to a slope of $\approx 2.72$, that can be approximated by Euler's number $e$, in a way that $\mu \propto e\times C$. This allows us to generate inter-event times with a determined $\mu$ when the slope $\rho=1$. We ignore the constant factor $3.8$ because its $95\%$ confidence interval is $(-8.596, 16.3)$, which contains zero.

\begin{figure}[!hbtp]
\centering
\subfigure[The OR of the \ied for different values of $C$]
  {\includegraphics[width=.40\textwidth]{./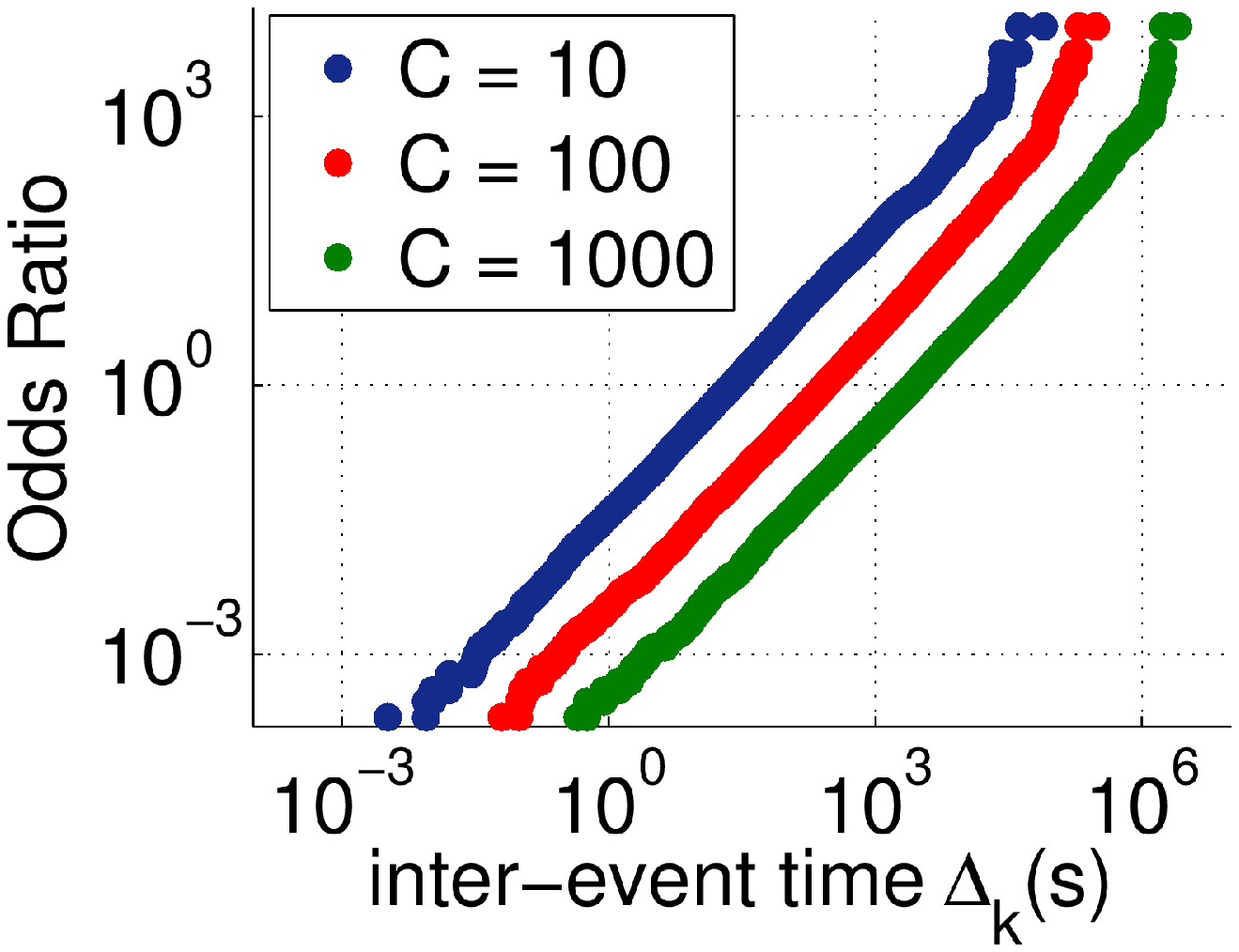}}
\subfigure[$\mu$ as a function of $C$]
  {\includegraphics[width=.40\textwidth]{./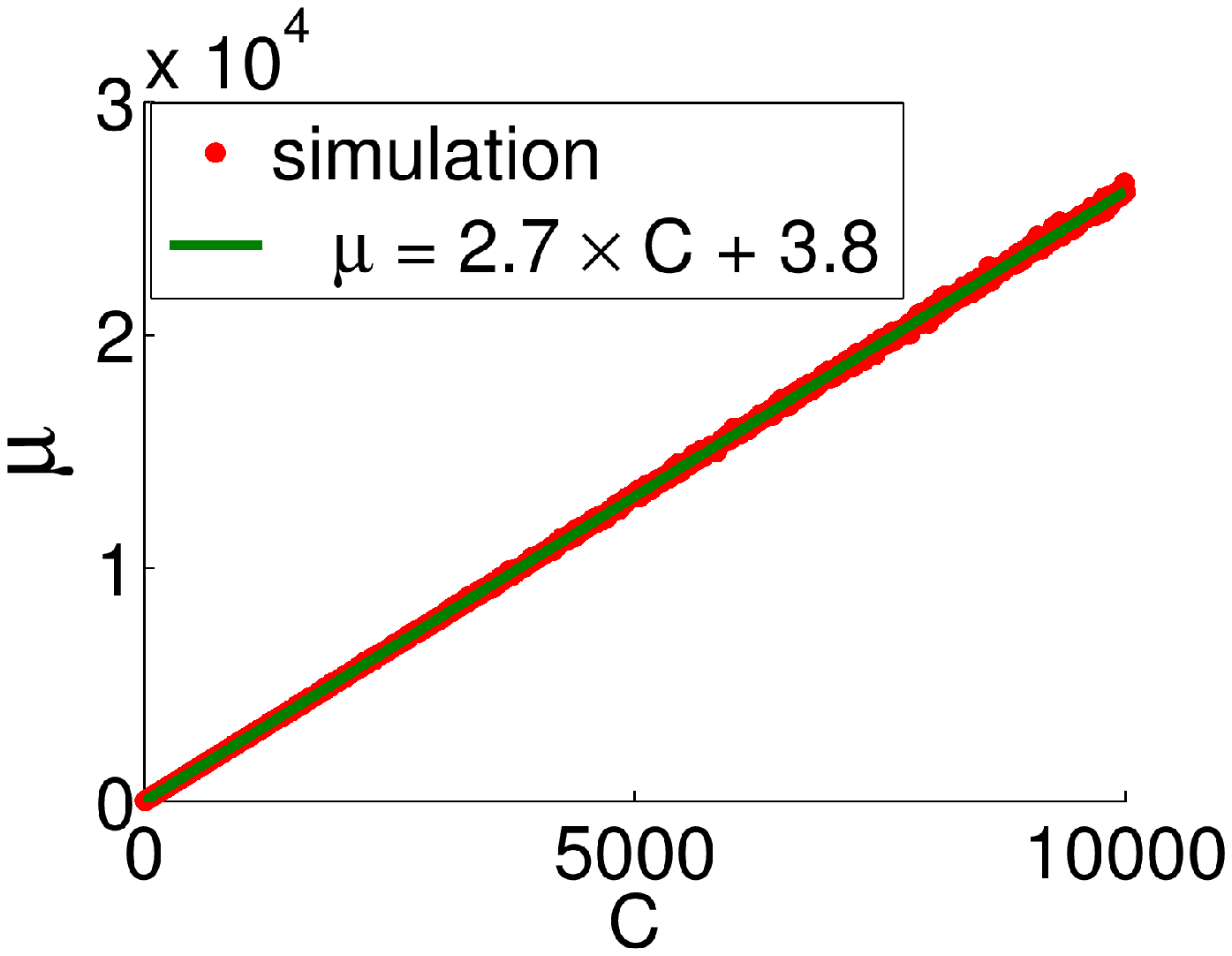}}
  \caption{Changing the value of $C$ changes the location of the distribution. The median of the distribution $\mu$ varies linearly with $C$, $\mu = a\times C + b$, with $a=2.719$ and $b=3.8$. The $95\%$ confidence interval for $a$ is $(2.715, 2.723)$ and for $b$ is $(-8.60, 16.3)$. Since the confidence interval for $b$ contains $0$, $b$ is not significant.}
  \label{fig:cvsmu}
\end{figure}

Now we know how to generate inter-event times with different medians $\mu$ using the parameter $C=\mu/e$ of \dppzero. The next step is to verify how the \dpp{} model can generate IEDs with a desired slope $\rho \neq 1$. Considering that up to this point the \dpp{} model generates a set of inter-event times $I_1$ with a slope $1$, the idea is to use an exponent $a$ to transform $I_1$ into $I_{\rho}$, which is an IED with a different slope $\rho$. When we elevate each $\Delta_k \in I_1$ to the power of $a \neq 1$, the resulting slope ${\rho}$ becomes different from $1$, as we see in Figure~\ref{fig:rhovsa}-a. In the same way we did for $C$, we run simulations of the model for 1000 different values of $a \in [0.1,2]$. As we observe in Figure~\ref{fig:rhovsa}-b, there is an inverse relationship between $a$ and $\rho$, i.e., $\rho = a^{-1}$. Moreover, since the median of the distribution is also elevated to the power of $a$, we have to elevate the parameter $\mu$ to the power of $\rho = a^{-1}$ to preserve the median.

\begin{figure}[!hbtp]
\centering
\subfigure[The OR of the \ied for different values of $a$]
  {\includegraphics[width=.40\textwidth]{./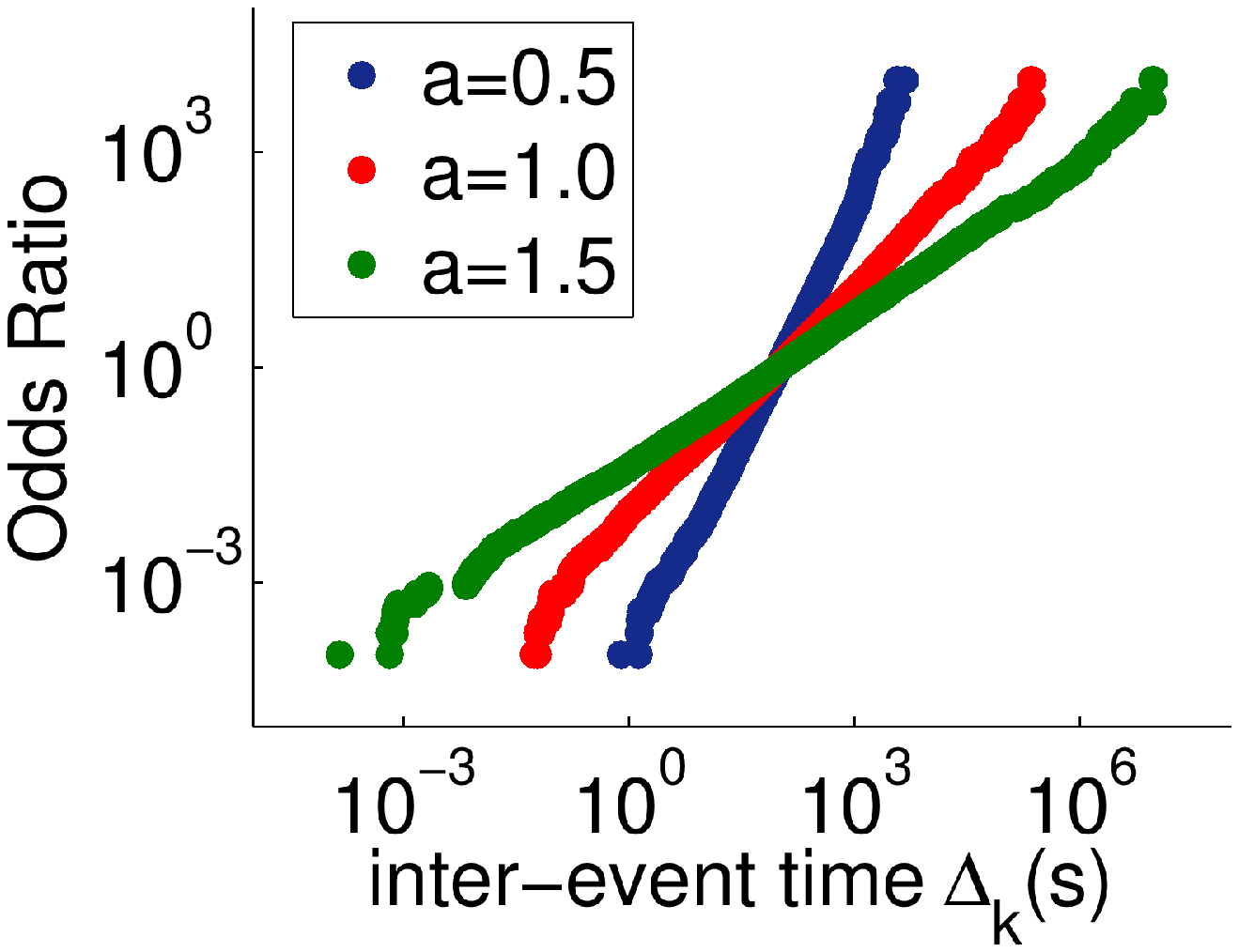}}
\subfigure[$\rho$ as a function of $a$]
  {\includegraphics[width=.40\textwidth]{./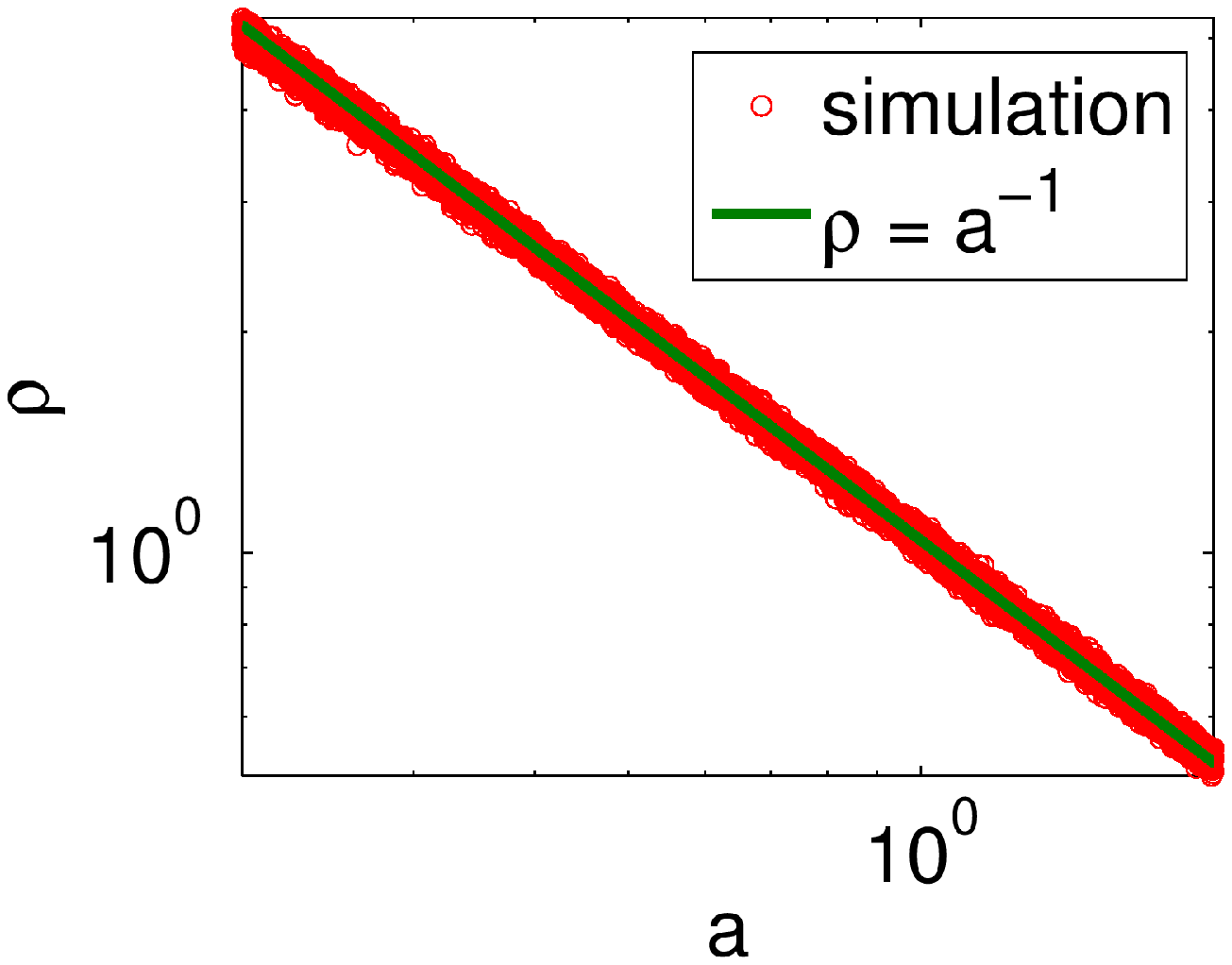}}
  \caption{Changing the value of $a$ changes the slope $\rho$ of the distribution in a way that $\rho = a^{-1}$.}
  \label{fig:rhovsa}
\end{figure}

Regarding the need for the constant $\mu/e$ in \dpp, we propose the following lemma:

\begin{lemma}
The constant $C = \mu/e > 0$ of Model~\ref{eq:dpp} is needed to assure that the inter-event times generated by the \dpp{} model will not converge to zero.
\end{lemma}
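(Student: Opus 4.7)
The plan is to establish the two sides of the statement separately: that $C=0$ forces almost sure convergence to zero, and that any $C>0$ prevents it. First, I would set $C=0$ and exploit the scaling property of the exponential distribution to rewrite the recursion in multiplicative form. Since an Exponential random variable with mean $\beta$ equals $\beta\,E$ for a standard Exponential $E$, the model reduces to
\[
\Delta_k \;=\; \Delta_{k-1} \cdot E_k, \qquad \{E_k\}_{k\geq 2}\ \text{i.i.d.}\ \text{Exp}(1),
\]
with deterministic initial value $\Delta_1 = \mu > 0$.

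Next, I would take logarithms to turn the multiplicative recursion into an additive random walk:
\[
\log \Delta_k \;=\; \log \Delta_1 \;+\; \sum_{i=2}^{k}\log E_i.
\]
The crucial input is the identity $\mathbb{E}[\log E_1] = -\gamma$, where $\gamma$ is the Euler-Mascheroni constant (this follows from $\int_0^\infty (\log x)\,e^{-x}\,dx = -\gamma$). Because the drift is strictly negative, the Strong Law of Large Numbers yields $\tfrac{1}{k}\log\Delta_k \to -\gamma$ almost surely, hence $\log \Delta_k \to -\infty$ and $\Delta_k \to 0$ almost surely. This establishes that $C=0$ indeed collapses the process, so $C>0$ is necessary.

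To close the loop I would verify that $C>0$ suffices to rule out convergence to zero. Fix any $\epsilon>0$ and observe that
\[
P(\Delta_k > \epsilon \,\mid\, \mathcal{F}_{k-1}) \;=\; \exp\!\bigl(-\epsilon/(\Delta_{k-1}+C)\bigr) \;\geq\; \exp(-\epsilon/C) \;>\; 0,
\]
uniformly in the past. By the conditional Borel-Cantelli lemma the event $\{\Delta_k>\epsilon\}$ then occurs infinitely often almost surely, so $\limsup_k \Delta_k \geq \epsilon$ a.s.; since $\epsilon$ is arbitrary this precludes $\Delta_k \to 0$.

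The main obstacle is not technical depth but compact presentation: the identity $\mathbb{E}[\log E_1]=-\gamma$ must be invoked cleanly, and the conditional Borel-Cantelli step needs care because the events $\{\Delta_k>\epsilon\}$ depend on the evolving history rather than being independent. Both tools are standard, so the argument should fit comfortably into a short proof.
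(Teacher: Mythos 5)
Your proof is correct and follows a genuinely different, and in fact more rigorous, route than the paper's. The paper argues heuristically: it writes the $C=0$ recursion as $\Delta_k=\Delta_{k-1}\cdot X$ with $X\sim\mbox{Exp}(1)$ and compares the ``probabilities'' $P(X=\tfrac{1}{k})=e^{-1/k}$ versus $P(X=k)=e^{-k}$ to argue that shrinking is more likely than growing, concluding that ``the expected value of $\Delta_k$ is $0$'' in the limit; for the converse it merely observes that with $C>0$, even from $\Delta_{k-1}=0$ the process restarts as an exponential with mean $C$. Your version replaces the first half with the multiplicative random walk $\log\Delta_k=\log\Delta_1+\sum_{i}\log E_i$, the identity $\mathbb{E}[\log E_1]=-\gamma<0$, and the SLLN, which gives genuine almost-sure convergence $\Delta_k\to 0$; and it replaces the second half with a uniform lower bound $P(\Delta_k>\epsilon\mid\mathcal{F}_{k-1})\ge e^{-\epsilon/C}$ plus conditional Borel--Cantelli. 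What your approach buys is precision: since $\mathbb{E}[E_k]=1$, the $C=0$ process is a nonnegative martingale whose \emph{mean} is constant at $\mu$ for all $k$, so the paper's stated conclusion about the expectation is actually not the right quantity --- the collapse to zero is an almost-sure phenomenon driven by the negative drift of $\log E_k$, exactly as you identify. The paper's comparison of $e^{-1/k}$ and $e^{-k}$ is also not literally a comparison of point probabilities of a continuous variable, so your argument is the one that would survive refereeing. Both halves of your plan are sound as written; just state the filtration $\mathcal{F}_{k-1}$ explicitly and cite L\'evy's extension of Borel--Cantelli for the divergent-conditional-probabilities step.
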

\begin{proof}
If we remove the constant $C$ from Model~\ref{eq:dpp}, $\Delta_k = (\Delta_{k-1})\times(-\ln(U(0,1)))$, or $\Delta_k$ will be equal to $\Delta_{k-1}$ multiplied by a random number $X$ extracted from the exponential distribution with parameter $\beta=\lambda=1$. If ($X=\frac{1}{k}\ |\ k>1$), then $\Delta_k$ will be equal to $\Delta_{k-1}$ divided by $k$. The probability of $X$ to be $\frac{1}{k}$ is
 $P(X=\frac{1}{k}) = e^{-\frac{1}{k}} = \frac{1}{\sqrt[k]{e}}$. On the other hand, the probability of multiplying $\Delta_{k}$ by $k$ and, therefore, return $\Delta_{k+1}$ to $\Delta_{k-1}$ value is $P(X=k) = e^{-k} = \frac{1}{e^k}$. Given these probabilities, observe that $P(X=\frac{1}{k}) = \frac{1}{\sqrt[k]{e}} > P(X=k) = \frac{1}{e^k}, \forall k>1$. From this, we conclude that the expected value of $\Delta_k$ when $t\rightarrow\infty$ is $0$. With $C$ in the equation, even when $\Delta_{k-1} = 0$, $\Delta_k = -C\times \ln(U(0,1)$, that is a classic Poisson process with $\beta = C$, and, obviously, does not converge to $0$.
\end{proof}

\section{The \dpp Stationary Distribution}
\label{sec:dppdist}

In this section, we analyze the properties of the stationary distribution generated by the \dpp. All the results shown in this section are coherent with the following conjecture:
\begin{conjecture}
The \dppzero{} model generates a log-logistic distribution with $\rho=1$, \label{conj:dpp}
\end{conjecture}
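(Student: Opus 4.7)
The plan is to recognize $\{\Delta_k\}$ as a Markov chain driven by a Kesten-type multiplicative recursion and analyze its stationary law in two layers: existence plus tail behavior via classical perpetuity theory, followed by full distributional identification via Laplace transforms. Rewriting Model~\ref{eq:dppzero} in the multiplicative form $\Delta_k=(\Delta_{k-1}+c)X_k$ with $c=\mu/e$ and $X_k\sim\mathrm{Exp}(1)$ i.i.d.\ places the chain in the Kesten class $Y\stackrel{d}{=}AY+B$ with $A=X_k$, $B=cX_k$. Since $E[\log X_1]=-\gamma<0$ (Euler--Mascheroni constant), standard Letac/Goldie contraction arguments give a unique stationary distribution $\pi$ on $(0,\infty)$ to which the chain converges geometrically from any starting point, reinforcing the non-degeneracy argument already sketched in Appendix~\ref{sec:parameters}.

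Kesten's theorem on perpetuities then pins down the tail: the power-law index is the unique $\kappa>0$ with $E[A^\kappa]=\Gamma(\kappa+1)=1$, i.e.\ $\kappa=1$. Because a log-logistic law with $\rho=1$ and scale $\mu$ has survival function $\mu/(\mu+y)\sim \mu/y$, the two tails agree at leading order, which already pins down the OR slope $\rho=1$ predicted by the conjecture; combined with the calibration in Appendix~\ref{sec:parameters} that places the median at $\mu$, this delivers the two salient features of the target distribution.

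To upgrade the tail match to full distributional equality I would pass to Laplace transforms. Writing $\tilde\pi(s)=E[e^{-s\Delta_\infty}]$, the stationary identity becomes $\tilde\pi(s)=E_{Y\sim\pi}\!\left[1/(1+s(Y+c))\right]$. Plugging in the candidate density $\pi(x)=\mu/(\mu+x)^2$, substituting $u=x+c$, and using partial fractions reduces the right-hand side to logarithmic terms with denominators $(1-sa)$ for $a=\mu-c$; the left-hand side evaluates to $1-s\mu e^{s\mu}E_1(s\mu)$ with $E_1$ the exponential integral. The crux would be to show these two forms coincide, either by matching Taylor expansions around $s=0$ or via a direct integral identity relating $E_1$ to the logarithmic residuals.

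The main obstacle I foresee is that $c=\mu/e$ was calibrated empirically to land the median at $\mu$, not to make the stationary density exactly log-logistic; the Laplace-transform comparison above is therefore likely to hold only up to correction terms, which is precisely why the statement is flagged as a conjecture rather than a theorem. If exact equality fails, my fallback is an approximate identification matching (i) the median by construction, (ii) the tail index $\kappa=1$ via Kesten, and (iii) a finite number of low-order Laplace-expansion coefficients, with the residuals controlled by the geometric ergodicity from the first step and corroborated by the simulation evidence in Figure~\ref{fig:model1}.
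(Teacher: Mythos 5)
Your route is genuinely different from the paper's. The paper does not actually prove this statement: it labels it a conjecture and supports it with an integral equation it cannot solve, a moment calculation showing the stationary mean is infinite (consistent with a log-logistic of shape $\le 1$), simulations, a discretized Markov chain, and a numerical check that the log-logistic is approximately a fixed point of the transition kernel. You instead write $\Delta_k=(\Delta_{k-1}+c)X_k$ with $c=\mu/e$ and $X_k\sim\mathrm{Exp}(1)$ as a random affine recursion $Y\stackrel{d}{=}AY+B$ with $A=X$, $B=cX$, and this buys you two things the paper never rigorously establishes: (i) existence and uniqueness of the stationary law (Vervaat/Letac, since $E[\log X]=-\gamma<0$ and $E[\log^{+}B]<\infty$), and (ii) the exact tail index via Goldie's implicit renewal theorem, $E[A^{\kappa}]=\Gamma(\kappa+1)=1\Rightarrow\kappa=1$, hence survival $\sim C_{+}/y$ and odds-ratio slope $1$ --- a genuine strengthening of what the paper can claim. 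Be aware, though, that the same machinery already settles your final step in the negative, confirming the obstacle you anticipate: Goldie's constant at $\kappa=1$ is $C_{+}=E[B]/E[A\log A]=c/(1-\gamma)=\mu/\bigl(e(1-\gamma)\bigr)\approx 0.87\,\mu$, whereas an exact log-logistic with median $\mu$ and shape $1$ forces $C_{+}=\mu$; equivalently, your Laplace-transform identity cannot hold, since $1-s\mu e^{s\mu}E_{1}(s\mu)$ is not an elementary function of $s$ while the image of the log-logistic under the transition kernel is rational-plus-logarithmic. So the honest output of your approach is sharper than the paper's: the stationary law provably has the claimed tail exponent and OR slope, but it is provably \emph{not} exactly log-logistic, and the conjecture can only hold in the approximate sense that the paper's numerics suggest. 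Your fallback position --- matching the median, the tail index, and finitely many low-order transform coefficients, with residuals controlled by geometric ergodicity --- is therefore the correct framing of what is actually true.
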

where $\rho = 1/\sigma$ and $\sigma$ is the shape parameter of the log-logistic distribution.

As we show in this section, we have several and significant evidences that the \dpp{} generates a log-logistic distribution, but at this moment we do not have a formal analytical proof that this is true. 

\subsection{Log-logistic Distribution}
\label{sec:loglogistic}

The log-logistic distribution was first proposed by Fisk~\cite{fisk:1961} to model income distribution, after observing that the OR plot of real data in log-log scales follows a power law $OR(x) = cx^{\rho}$. In summary, a random variable is log-logistically distributed if the logarithm of the random variable is logistically distributed. The logistic distribution is very similar to the normal distribution, but it has heavier tails. In the literature, there are examples of the use of the log-logistic distribution in survival analysis~\cite{bennet:1983,mahmood:2000}, distribution of wealth~\cite{fisk:1961}, flood frequency analysis~\cite{ahmad:1988}, software reliability~\cite{gokhale:1998} and phone calls duration~\cite{vazdemelo:2010}.  A commonly used log-logistic parametrization is~\cite{wiley:1982}:

\begin{equation}
\begin{array}{rcl}
PDF_{\tll}(x) &=& \frac{e^z}{\sigma x (1+e^z)^2}, \\
CDF_{\tll}(x) &=& \frac{1}{1 + e^{-z}}, \\
z &=& (\ln(x) - \ln(\mu)) / \sigma,\end{array}\label{eqn:llg}\end{equation}
where $\sigma = 1/\rho$, the slope of our \dpp{} model, and $\mu$ is the same. Moreover, when $\sigma = 1$, it is the same distribution as the Generalized Pareto distribution~\cite{Lorenz:1905} with shape parameter $\kappa = 1$, scale parameter $\mu$ and threshold parameter $\theta = 0$.

\subsection{Analytical Result}

Wold processes \cite{wold:1948,cox:1955} are stochastic processes where the inter-events intervals have a dependence following 
a Markovian property. That is, the probability distribution law of the $t$-th inter-event time  $\Delta_t$
depends only on the previous inter-event time $\Delta_{t-1}$. Our \dpp model falls within this Wold processes 
class as we assume that, conditionally on the entire previous inter-event times, the distribution of 
$\Delta_k = \delta_t^{1/\rho}$ is an exponential distribution with expected value given by $\delta_{t-1}+\mu^{\rho}/e$.
Wold processes are not well understood due to the mathematical difficulties
in deriving their probabilistic properties. 

Consider the existence of a stationary distribution for the generalized SFP model.
A stationary PDF $f(x)$ of the Markov chain $\delta_t$ must satisfy
\begin{eqnarray*}
 f(x) &=& \int_0^{\infty} f(y \rightarrow x) f(y) dy \\
      &=& \int_0^{\infty} \frac{1}{y+\mu^{\rho}/e}\exp(-x/(y + \mu^{\rho}/e)) f(y) dy \\
\end{eqnarray*}
This integral equation has no obvious analytical solution but in the next sections we show via
simulations of the point process that $f(x)$ is very well approximated
by a log-logistic density.
This mathematical difficulty is common in the previous attempts to
model data with Wold processes. Even if a consistent density
$f(x)$ and a transition kernel $f(y \rightarrow x)$ are given, properties
are, in general, difficult to obtain~\cite{cox:1980}.

Let the Markovian distribution of $\delta_t$ conditional on $\delta_{t-1} = x$ be given by an 
exponential distribution with mean $\alpha x + c$ where $0 < \alpha \leq 1$ and $c > 0$ are constants.  We have 
\[ \mu_t = \mathbb{E}(\delta_t) = \mathbb{E}\left\{ \mathbb{E}(\delta_t | \delta_{t-1}) \right\} = 
\mathbb{E}\left\{ \alpha \delta_{t-1} + c \right\} = \alpha \mu_{t-1} + c  \]
Applying recursively, we find 
\[ \mu_t = \alpha^t \mu_0 + c \sum_{k=0}^{t-1} \alpha^k \]

If $\alpha = 1$, $\mu_t = \mu_0 + tc$. Assuming that this process is stationary 
implies that $\mu_t = \mu_0$ is constant and the only solution is
to take $\mu_t = \mu_0 = \infty$. Hence the process has infinite mean, 
as is the case of the log-logistic distribution with shape parameter equal or smaller than 1.

If $\alpha < 1$, then 
\[ \mu_t = \alpha^t \mu_0 + c \frac{1 - \alpha^{t}}{1-\alpha} \rightarrow \frac{c}{1-\alpha} \]
Obviously, $\mu_t = \mu = c/(1-\alpha)$ is a solution to the recursive equation $\mu_t = \alpha \mu_{t-1} + c$.

When we have a finite expectation for $\delta_t$ we can calculate the variance $\mathbb{V}(\delta_t)=\sigma^2_t$:
\begin{eqnarray*}
  \sigma^2_t &=& \mathbb{E} \left\{ \mathbb{V}(\delta_t|\delta_{t-1}) \right\} + \mathbb{V} \left\{ \mathbb{E}(\delta_t|\delta_{t-1})  \right\} \\
   &=& \mathbb{E} \left\{ (\alpha \delta_{t-1}+ c)^2 \right\} + \mathbb{V} \left\{  \alpha \delta_{t-1}+ c \right\} \\
   &=& \alpha^2 \left( \sigma_{t-1}^2 + \mu_{t-1}^2 \right) + 2c\alpha \mu_{t-1} + c^2 + \alpha^2 \sigma_{t-1}^2  \\
   &=& 2 \alpha^2 \sigma_{t-1}^2 + (\alpha \mu_{t-1} + c)^2  
\end{eqnarray*}
Assuming that the process is stationary, we have $\mu_t = c/(1-\alpha)$ and $\sigma^2_t = \sigma^2$ constant, which implies
into 
\[ 0 < \sigma^2 = 2 \alpha^2 \sigma^2 + (c/(1-\alpha))^2 \]
If $\alpha < 1/\sqrt{2}$, this has a solution as

\begin{eqnarray*}
	\sigma^2 &=& \left( \frac{c}{1-\alpha} \right)^2 \frac{1}{(1-\sqrt{2}\alpha)(1+\sqrt{2}\alpha)} \\
	&=& \mu^2  \frac{1}{(1-\sqrt{2}\alpha)(1+\sqrt{2}\alpha)}
\end{eqnarray*}


Therefore, if $\alpha = 1$, the process has infinite mean. If $\alpha < 1$, the expected value is finite and equal to $\mu = c/(1-\alpha)$.
Concerning the variance, it exists only if  $\alpha < 1/\sqrt{2} \approx 0.70$ and, 
in this case, we have $\sigma^2 = \mu^2/(1-2\alpha^2)$. 

\subsection{Fitting Synthetic Data}

In Figure~\ref{fig:model}-a, we plot the histogram of 100,000 time intervals $\Delta_k$ generated by the \dppzero{} model with $\mu = e$. Moreover, in Figure~\ref{fig:model}-b, we plot the OR for the same time intervals. While a classic PP generates an exponential distribution, we observe that the generated data by the \dppzero{} perfectly fits a distribution with an Odds Ratio function that is a power law with slope $\rho = 1$. This is also coherent with Conjecture~\ref{conj:dpp}.

\begin{figure}[!hbtp]
\centering
\subfigure[Histogram (marginal).]
  {\includegraphics[width=.40\textwidth]{./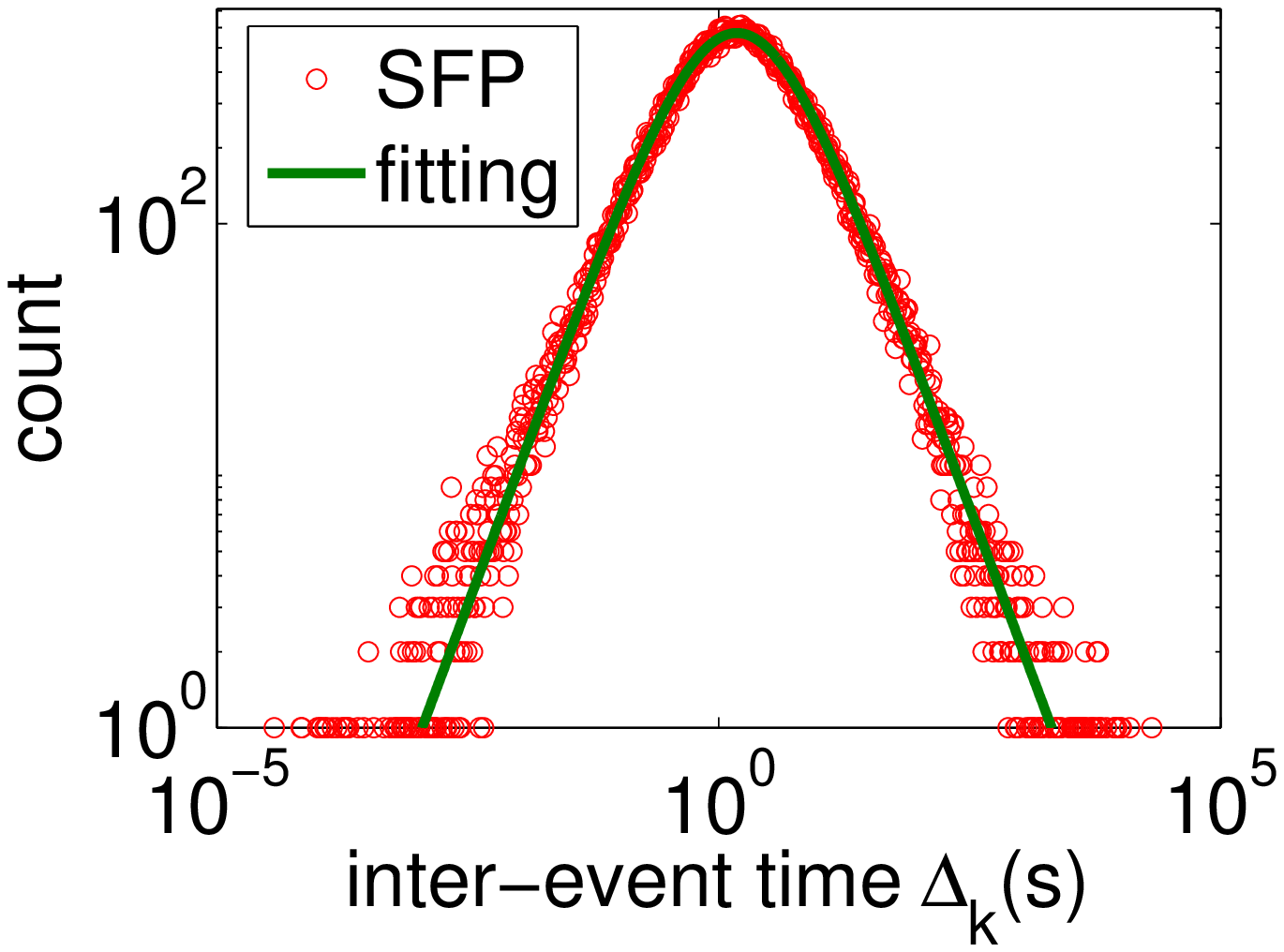}}
\subfigure[OR (marginal).]
  {\includegraphics[width=.40\textwidth]{./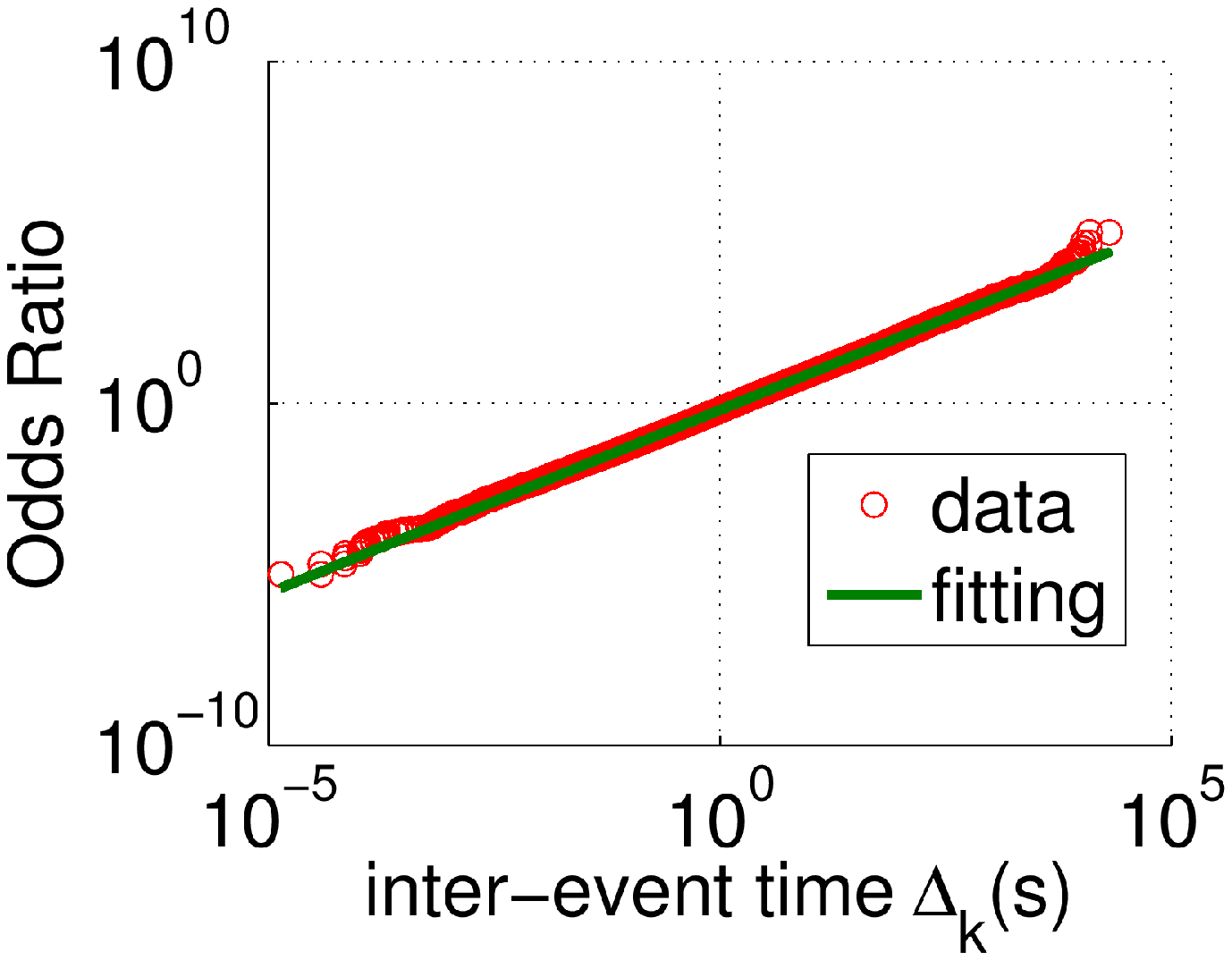}}
  \caption{Inter-event times $\Delta_k$ generated by the \dppzero{}. The generated $\Delta_k$s are perfectly fitted by a log-logistic distribution with the slope $\rho = 1$.}
  \label{fig:model}
\end{figure}

\subsection{The \dpp Markov Chain}
\label{sec:markov}

The \dpp can be naturally considered as a Markov Chain (MC), since it is a sequence of random variables $\Delta_1, \Delta_2, \Delta_3, ...$ with the Markov property, namely that, given the present inter-event time, or state, the future and past inter-event times, or states, are independent. Thus, here we model the \dpp as a time-homogeneous Markov chain with a finite state space to give another evidence that the \dpp has a stationary distribution and that is very likely that this distribution is the log-logistic.

Originally, the \dpp can be considered as a continuous-time MC, but for simplicity, we build a discrete-time Markov chain in a way that each state $i = \{1, 2, 3, ...\}$ is associated with an inter-event time $\Delta_i = \{\Delta_1, \Delta_2, \Delta_3, ...\}$ with values within the interval $(i-1,i]$. For instance, considering the granularity in seconds, if the current inter-event time is $3.8$ seconds, then the MC is in the state $4$. Also for simplicity, we build a finite-state MC with a maximum number of states $n$, i.e., the states go from $1$ to $n$. The MC will be in state $n$ every time the current inter-event time is within the interval $(n,\infty)$.

Thus, considering a $n$-state MC build from the \dpp model, the transitions probabilities $p_{i,j}$ of going from state $i$ to $j$ are given in the following way:

\scriptsize
\[
  p_{i,j} = \left\{ 
  \begin{array}{l l}
    CDF_{exp}(\text{x=j,$\beta$=i+C}) - CDF_{exp}(\text{x=j-1,$\beta$=i+C}) & \quad \text{if j$<$n}\\
    1-CDF_{exp}(\text{x=j,$\beta$=i+C}) & \quad \text{if j=n},\\
  \end{array} \right.
\]
\normalsize
where $CDF_{exp}(x,\beta)$ is the cumulative distribution function of the exponential distribution on $x$ with mean $\beta$ and $C=\mu/e$, given in the \dpp (Equation \ref{eq:dppzero}). Observe in Figure~\ref{fig:mc} that the  probability density function of the log-logistic is virtually identical to the one of the stationary distribution of the \dpp Markov Chain. This is another strong indication that the \dpp generates log-logistically distributed data.

\begin{figure}[!hbt]
\centering
  {\includegraphics[width=.40\textwidth]{./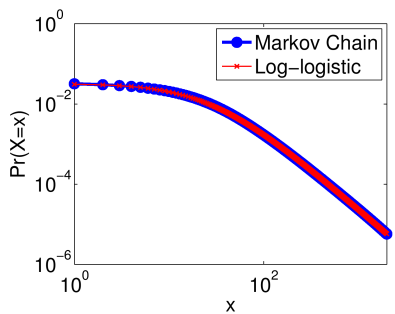}}
\caption{The probability density function of the log-logistic distribution and the stationary distribution of the \dpp MC.}
  \label{fig:mc}
\end{figure}

It is important to point out that in~\cite{Chierichetti:www:2012} the authors showed that behavior of the Web users is not Markovian, i.e., a user's next action does not depends only on her/his current state. Our assumption differs from this one because we assume that users have Markovian behavior in communications, while~\cite{Chierichetti:www:2012} studied whether users have Markovian behavior while navigating on the Web.

\subsection{ Solving by Computation}

If the \dppzero{} model generates a log-logistic distribution with slope $\rho=1$, then we can write the PDF of the inter-event times generated by the \dppzero{} model as

\begin{equation}
f(x) = \int_{0}^{\infty}{PDF_{LLG}(y; \rho=1,\mu) \times PDF_{EXP}(x;\beta=y+C)dy}, \label{eq:dppzeroint}
\end{equation}

or

\begin{equation}
f(x) = \int_{0}^{\infty}{(\frac{1}{\mu} \times \frac{1}{(1 + \frac{y}{\mu})^2}) \times \frac{e^{\frac{-x}{y+c}}}{y+c}dy}. \label{eq:dppzerofull}
\end{equation}

%
%
%

In this way, we verify if Equation~\ref{eq:dppzerofull} is correct by numerically evaluating the integral using the adaptive Gauss-Kronrod quadrature method for every $x \in ]0:10^4]$ and comparing the result with the $PDF_{LLG}(x)$. As we can see in Figure~\ref{fig:sfpVsLLG2}, the proposed PDF and the $PDF_{LLG}$ match perfectly for different values of $\mu$.

\begin{figure}[!hbtp]
\centering
  {\includegraphics[width=.40\textwidth]{./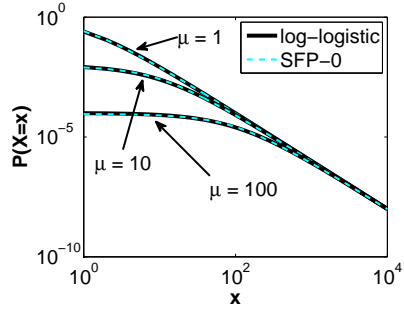}}
  \caption{The PDF of the log-logistic distribution VS. the PDF of Equation~\ref{eq:dppzerofull}, that is derived from the \dppzero{}.  Observe that the PDFs are identical for different values of $\mu$. We numerically evaluated the integral using the adaptive Gauss-Kronrod quadrature method.}
  \label{fig:sfpVsLLG2}
\end{figure}


\subsection{\dpp{} has Power Law Tail}
\label{sec:powerlaw}

The universality class model proposed by Barab\'{a}si~\cite{barabasi:2005} states that the IED has a power law tail. The proposed \dpp{} model agrees with this model in a way that:
\begin{lemma}
If Conjecture~\ref{conj:dpp} is correct, then the \dpp{} model generates an IED that converges to a power law when $x\rightarrow\infty$, i.e., $\lim_{x\to\infty}\frac{PDF_{\tll}(x)}{x^{-\alpha}} = k$, where $k$ is a constant greater than $0$. 
\end{lemma}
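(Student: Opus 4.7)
My plan is to work directly from the explicit form of $PDF_{\tll}(x)$ given in Equation~\ref{eqn:llg} and extract its tail behavior by isolating the dominant power of $x$. Recall that $z = (\ln x - \ln \mu)/\sigma$, so
\begin{equation*}
e^{z} \;=\; \exp\!\left(\frac{\ln x - \ln \mu}{\sigma}\right) \;=\; \left(\frac{x}{\mu}\right)^{1/\sigma}.
\end{equation*}
Substituting this identity into the PDF yields
\begin{equation*}
PDF_{\tll}(x) \;=\; \frac{(x/\mu)^{1/\sigma}}{\sigma\, x\,\bigl(1+(x/\mu)^{1/\sigma}\bigr)^{2}}.
\end{equation*}
This closed form is the only ingredient I need; everything else is an elementary limit computation.

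Next, I would factor $(x/\mu)^{1/\sigma}$ out of the squared term in the denominator to expose the tail. Writing $\bigl(1+(x/\mu)^{1/\sigma}\bigr)^{2} = (x/\mu)^{2/\sigma}\bigl(1+(x/\mu)^{-1/\sigma}\bigr)^{2}$, we obtain
\begin{equation*}
PDF_{\tll}(x) \;=\; \frac{\mu^{1/\sigma}}{\sigma}\,\cdot\, x^{-1-1/\sigma}\,\cdot\, \frac{1}{\bigl(1+(x/\mu)^{-1/\sigma}\bigr)^{2}}.
\end{equation*}
This already suggests that the natural candidate for the power-law exponent is $\alpha = 1 + 1/\sigma = 1 + \rho$. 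With $\rho = 1$ (the case relevant to the simplified \dppzero{}) this gives $\alpha = 2$, matching the value quoted in the paper.

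Finally, I would take the ratio with $x^{-\alpha}$ and pass to the limit. Since $\sigma > 0$, we have $(x/\mu)^{-1/\sigma} \to 0$ as $x \to \infty$, so the last factor above tends to $1$. Therefore
\begin{equation*}
\lim_{x \to \infty} \frac{PDF_{\tll}(x)}{x^{-\alpha}} \;=\; \frac{\mu^{1/\sigma}}{\sigma} \;>\; 0,
\end{equation*}
which is the required positive constant $k$. Strictly speaking there is no hard step: once Conjecture~\ref{conj:dpp} is invoked to identify the stationary PDF as log-logistic, the lemma reduces to a one-line asymptotic expansion of $1/(1+u)^{2}$ as $u \to 0$. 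The only subtlety worth flagging is that $\alpha$ must be defined as $1+1/\sigma = 1+\rho$, not $1/\sigma$ alone; otherwise the limit is either zero or infinite. The whole argument is conditional on the conjecture that the \dpp{} stationary distribution is log-logistic, which (as the authors already acknowledge) is the genuinely hard claim and is outside the scope of this lemma.
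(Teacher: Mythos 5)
Your proof is correct and follows essentially the same route as the paper's: substitute $e^{z}=(x/\mu)^{1/\sigma}$ into the log-logistic PDF, isolate the dominant power $x^{-(1+1/\sigma)}$, and observe that the remaining factor tends to $1$. The only differences are cosmetic improvements — you keep $\mu$ general and exhibit the constant $k=\mu^{1/\sigma}/\sigma$ explicitly, whereas the paper sets $\mu=1$ and (somewhat sloppily) writes the exponent as $\alpha=-(1+1/\sigma)$ rather than your sign-consistent $\alpha=1+\rho$.
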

\begin{proof}
Considering the Probability Density Function of the log-logistic distribution showed in Equation~\ref{eqn:llg}, if we set the location parameter $\mu=1$ for simplicity, $e^z = x^{1/\sigma}$. Then, $PDF_{\tll}(x)$ can be simplified to
\begin{equation*}
PDF_{\tll}(x) = \frac{x^{\frac{1}{\sigma}-1}}{\sigma (1+x^{\frac{1}{\sigma}})^2}.
\end{equation*}
When $x\to\infty$, the addition of $1$ in the denominator can be disregarded, resulting in the following simplification:
\begin{equation*}
PDF_{\tll}(x) = \frac{x^{-(1+1/\sigma)}}{\sigma}, {x\to\infty}.
\end{equation*}
Thus, when $x\rightarrow\infty$, the \ied generated by the \dpp{} model is a power law with slope
\begin{equation}
\alpha = -(1+1/\sigma) = -(1+\rho).
\end{equation}
Observe again Figure~\ref{fig:model}-a and note the power law tail.
\end{proof}

\section{Generating Realistic Inter-event Times}
\label{sec:deviations}

Time intervals between communications are very hard to model. We showed in this work that even today there is no consensus about the more appropriate model to represent them. One of the main reasons for that is that real data is significantly noisier, what may harms deeply the statistical analysis. In this section, we show some of these noisy behaviors seen in real data and, more important, we show how to reproduce them in order to generate more realistic data.

First, in Section~\ref{sec:minordev}, we describe minor deviations seen in data that we prefer not to approach via changes in the \dpp model. Then, in the following sections, we make slight alterations in the \dpp in order to explain the most significant deviations seen in data. In Section~\ref{sec:ltc}, we show how to generate inter-event times with more realistic correlation between consecutive inter-event times. Then, in Section~\ref{sec:mult}, we show how to mimic the inter-event times between SMS messages sent to multiple recipients. Finally, in Section~\ref{sec:phoneoverhead}, we add an overhead parameter to the \dpp model in order to explain the apparent lower bound for inter-event times seen in the phone dataset. 

\subsection{Minor Deviations}
\label{sec:minordev}

In most of the \ied{s}, there are small deviations from the OR power law for $\Delta_k$ values close to 10 hours. This is explained by the regular sleep intervals between two communication events. Since everyone has to go to sleep and it is not usual to call during the sleeping hours, it is common to have a time interval of approximately 10 hours at every 24 hours, more than what is expected by the fitting. For more details about sleep intervals, refer to~\cite{vazdemelo:2011b}. 

Additionally, it is also common to see a high amount of communication events at round times, such as 1 minute or 1 hour, because the recorded time of the event is rounded by the server, e.g. 8:54 is rounded to 9:00. We overcome such deviations by computing the odds ratio for the percentiles of the distribution. Since the typical individual has thousands of communication events, then these deviations are not shown in the odds ratio plot.

Finally, it is important to consider the deviations that appear specifically in the SMS dataset, that is the dataset that presented the worst goodness of fit result. Probably the main reason for that is the fact that a significant amount of messages arrive at their destinations with a considerable delay, such as human delay and other noisy non-regular delays caused by the mobile network infrastructure or personal issues, e.g., a customer left his mobile phone unattended and the battery died, delaying all the incoming SMS messages for when the mobile phone is recharged again. Imagine, for instance, that Smith had sent a message to John at time $t_1$ and, due to a transmission delay $d_1$, the message arrived only at $t_2$. In his turn, Smith saw the message at $t_2$ and immediately replied, but again, due to a transmission delay $d_2$, the message arrived to John only at $t_3$. Thus, for John, the inter-event time between sending the message and receiving the reply is $\Delta = t_3 - t_1 = (t_2 + d_2) - (t_1 + d_1)$, with \textit{two} transmission delays embedded in the registered inter-event time.

\subsection{Lower Temporal Correlation}
\label{sec:ltc}

The \dpp model is build upon a direct dependence between consecutive inter-event times. Because of that, the correlation between consecutive inter-event times is significantly higher than real data. While the average Pearson's correlation coefficient for real data is approximately $0.4$, for synthetic data generated by the \dpp model is approximately $0.7$. In order to generate more realistic data, we suggest a slight modification in the \dpp process. Instead of generating the next inter-event time ($\Delta_k$) based on the immediate previous one ($\Delta_{k-1}$), we propose that it should be generated from a $\epsilon$-th previous one ($\Delta_{k-\epsilon}$). This can be done by extracting $\epsilon$ from an exponential distribution with mean $\beta=1$ and making its ceiling, so the lower bound for $\epsilon$ is $1$. In summary, the \dpp model is changed as follows:

\begin{model}
Self-Feeding Process* \dppzero($\mu$). 

//$\mu$ is the desired median of the marginal PDF
\begin{equation*}
\boxed{
\begin{array}{rcl}
\Delta_1 &\leftarrow& \mu \\
\epsilon &\leftarrow& \lceil  \mbox{Exponential  }( \mbox{mean~} \beta = 1) \rceil \\
\Delta_k &\leftarrow & \mbox{Exponential  }( \mbox{mean~} \beta = \Delta_{k-\epsilon} + \mu/e) \\
\end{array}}
\end{equation*}
\label{eq:dppcorr}
\end {model}

Observe in Figure~\ref{fig:ltc} that the synthetic data generate by the SFP* has a lower correlation ($0.43$) between consecutive inter-event times than the original one ($0.70$). Despite of that, the odds ratio generated by the SFP* is still a power law with slope $\rho \approx 1$.

\begin{figure*}[!hbt]
\centering
\subfigure[Correlations in \dpp]
  {\includegraphics[width=.23\textwidth]{./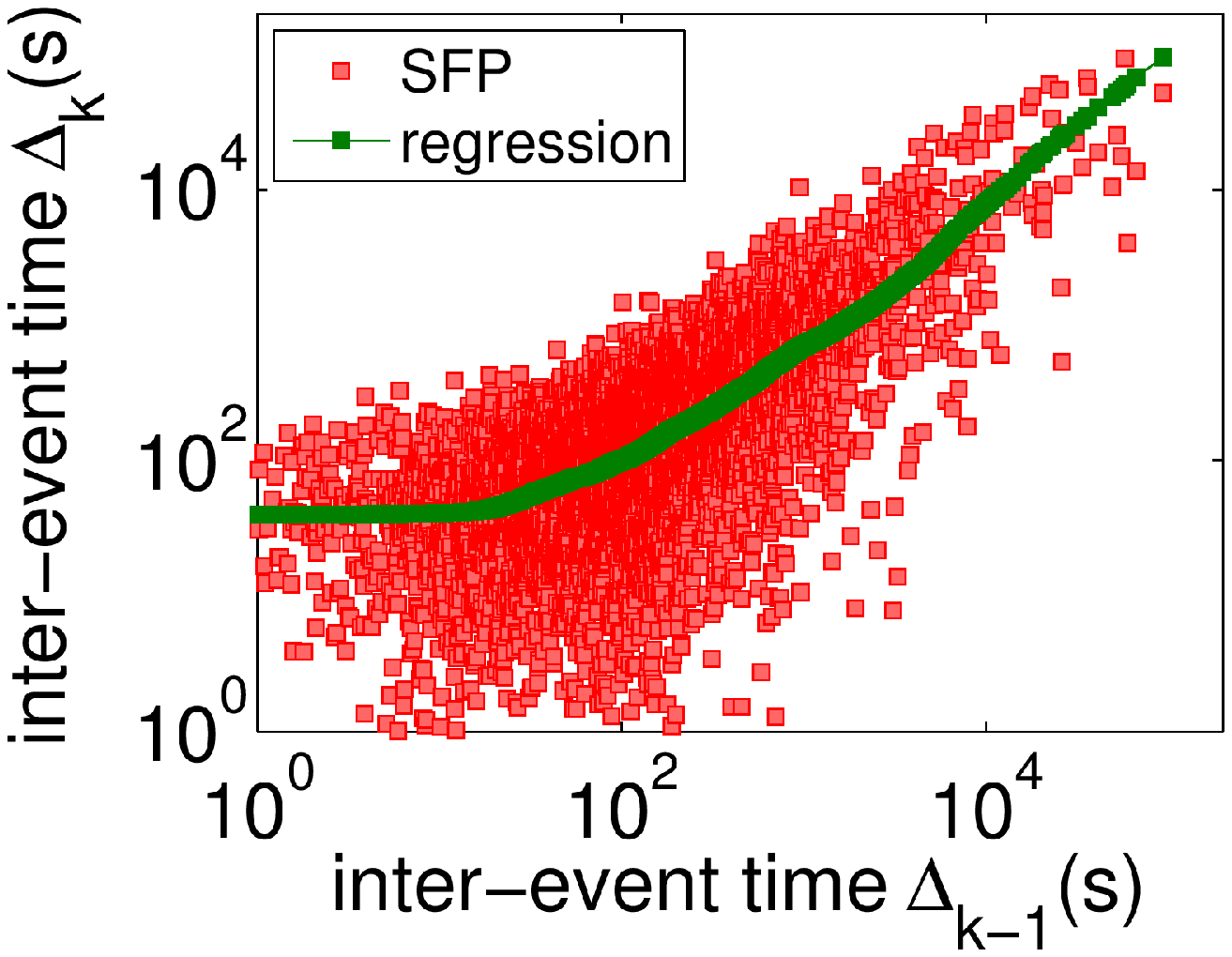}}  
\subfigure[Correlations in \dpp*]
  {\includegraphics[width=.23\textwidth]{./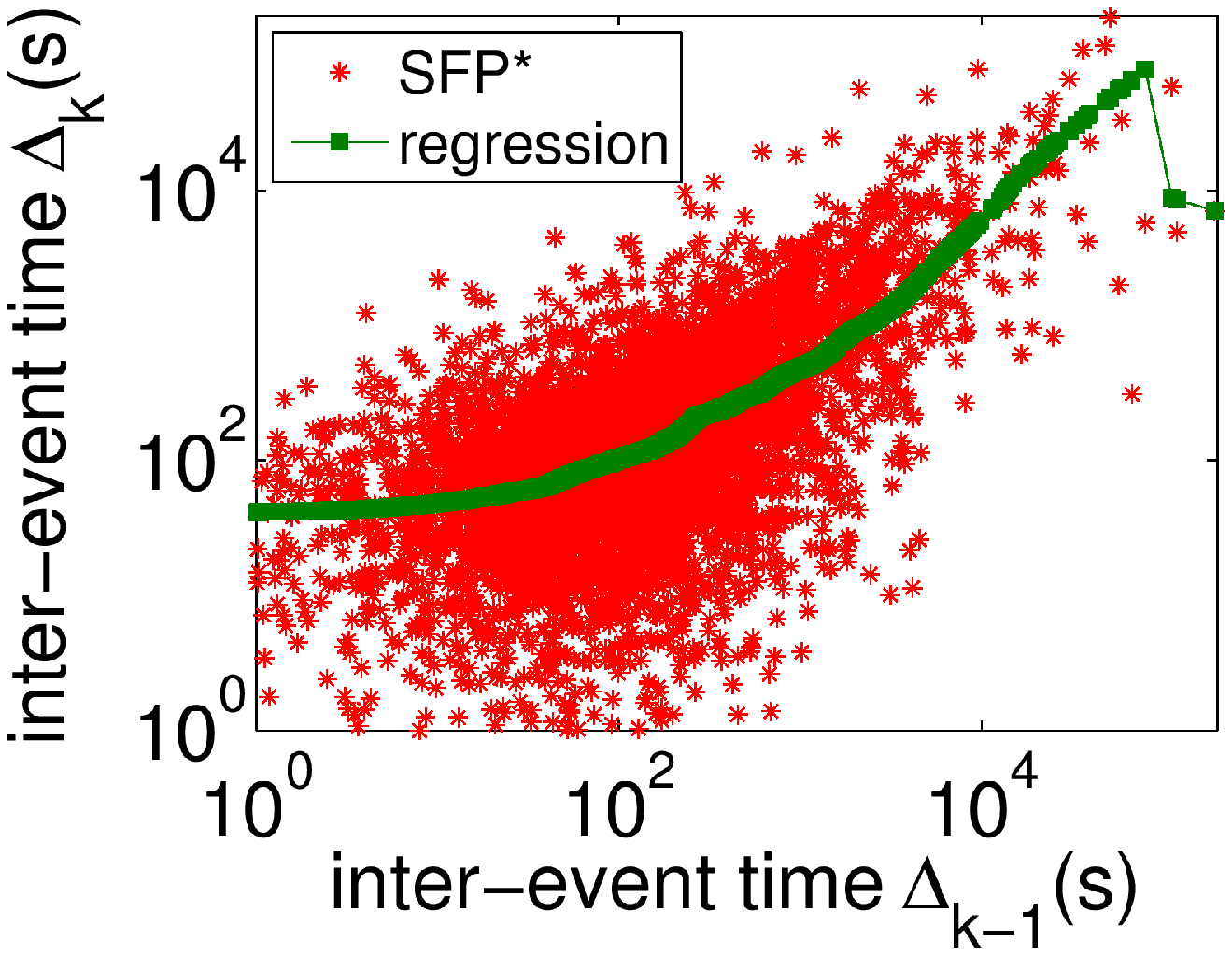}}     
\subfigure[Odds Ratio for \dpp]
  {\includegraphics[width=.23\textwidth]{./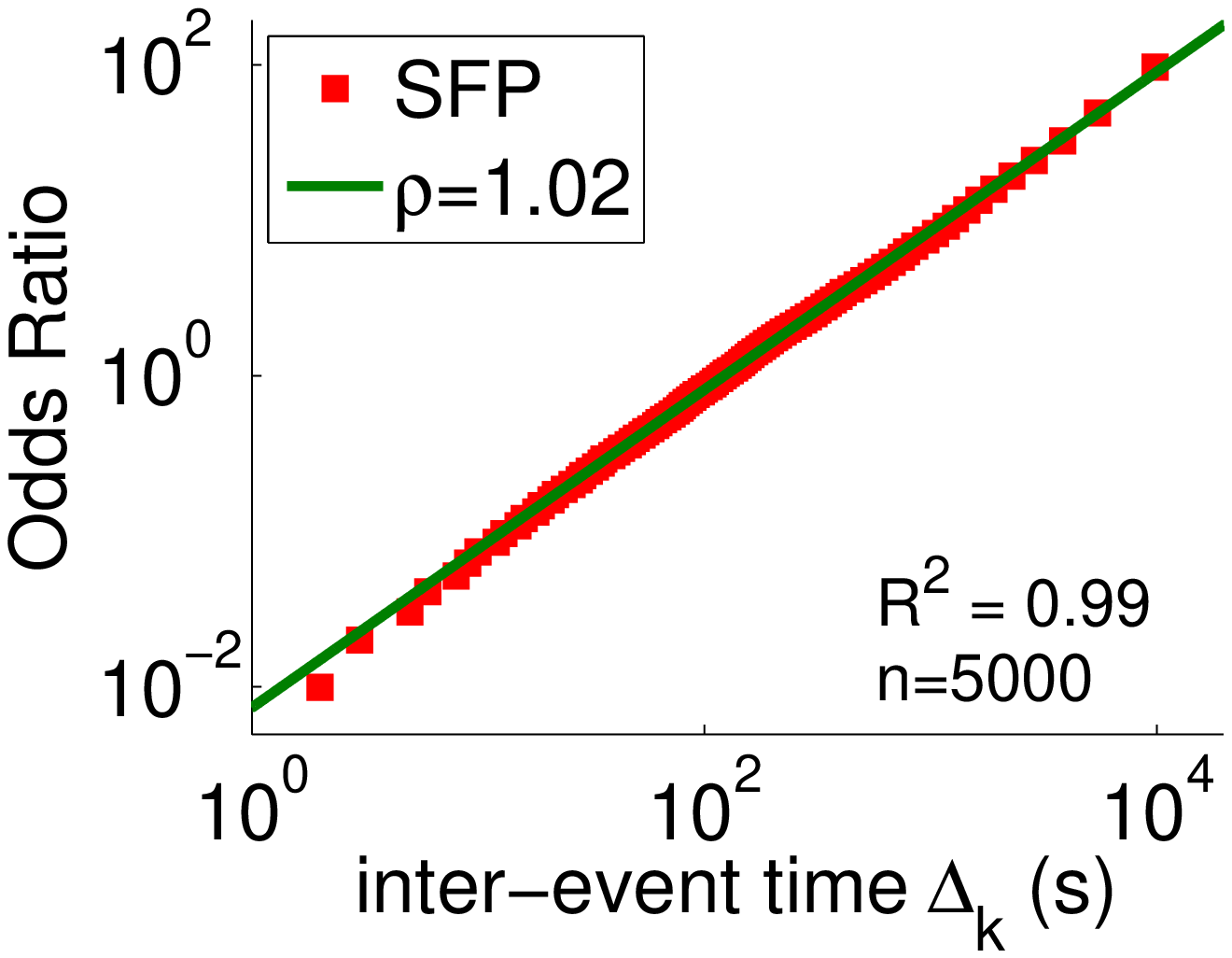}}  
\subfigure[Odds Ratio for \dpp*]
  {\includegraphics[width=.23\textwidth]{./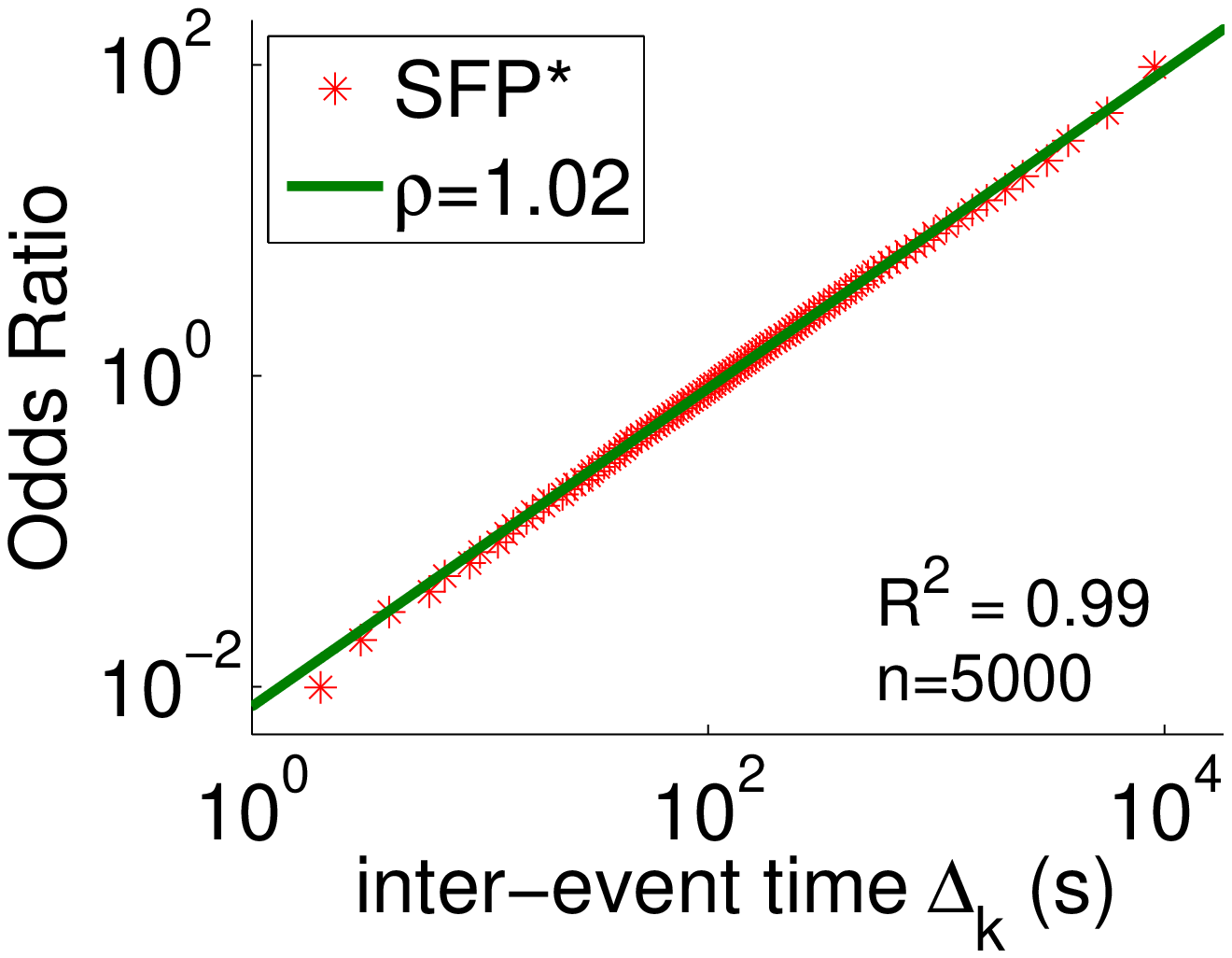}}     
  \caption{Comparison between the synthetic data generated by the \dpp and the \dpp*. Observe that the synthetic data generate by the \dpp* has a lower correlation ($0.43$) between consecutive inter-event times than the \dpp ($0.70$). Despite of that, the odds ratio generated by the \dpp* is still a power law with slope $\rho \approx 1$.}
  \label{fig:ltc}
\end{figure*}

\subsection{Multiple Recipients and Sleep Intervals}
\label{sec:mult}

When sending SMS messages it is common to copy the message to multiple recipients. Because of that, several \ied{s} of the SMS dataset show a deviation from the OR power law in the first seconds of the distribution. In order to mimic this behavior, we propose two small changes in the \dpp. First, we generate the number of recipients of a communication event from an exponential distribution (e.g.: mean $\beta = 1$). Then, we send the communication event to every recipient with a delay also extracted from an exponential distribution (e.g.: mean $\beta = 1$). As we observe in Figure~\ref{fig:deviation1}, these small changes are able to represent the inter-event times sent to multiple recipients.

\begin{figure}[!hbt]
\centering
  {\includegraphics[width=.40\textwidth]{./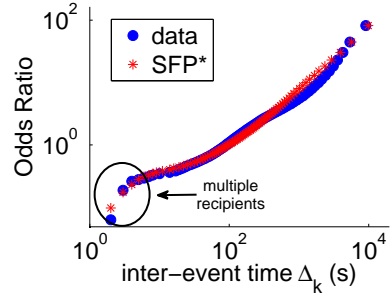}}      
  \caption{
  The comparison of the between the \ied of a typical SMS user with the one synthetically generated by the \dpp* model described in this section.}
  \label{fig:deviation1}
\end{figure}

\subsection{Phone Dialing Speed Limit}
\label{sec:phoneoverhead}

For the majority of phone users, the number of $\Delta_k$ values close to 10 seconds is underestimated by the odds ratio power law fitting. This happens because the IED of phone data is usually lower bounded by the setup time $\Delta_k^0$ of making a phone call, that involves dialing the numbers, waiting for the signal, and waiting for the other part to answer the call. We can mimic this behavior by simply adding a overhead constant $\theta$ to every inter-event time generated by the \dpp model, representing the time it takes for a individual to dial and wait for the reply. We change the \dpp model in the following way:

\begin{model}
Self-Feeding Process* \dppzero($\mu$, $\theta$). 

//$\mu$ is the desired median of the marginal PDF
//$\theta$ is the usual time it takes for a individual to dial and get the reply
\begin{equation*}
\boxed{
\begin{array}{rcl}
\Delta_1 &\leftarrow& \mu + \theta\\
\Delta_k &\leftarrow & \mbox{Exponential  }( \mbox{mean~} \beta = \Delta_{k-\epsilon} + \mu/e) + \theta\\
\end{array}}
\end{equation*}
\label{eq:dppspeed}
\end {model}

Observe in Figure~\ref{fig:deviation2} that this simple modification can accurately mimic the phone dialing speed limit seen in real data.

\begin{figure}[!hbt]
\centering
  {\includegraphics[width=.45\textwidth]{./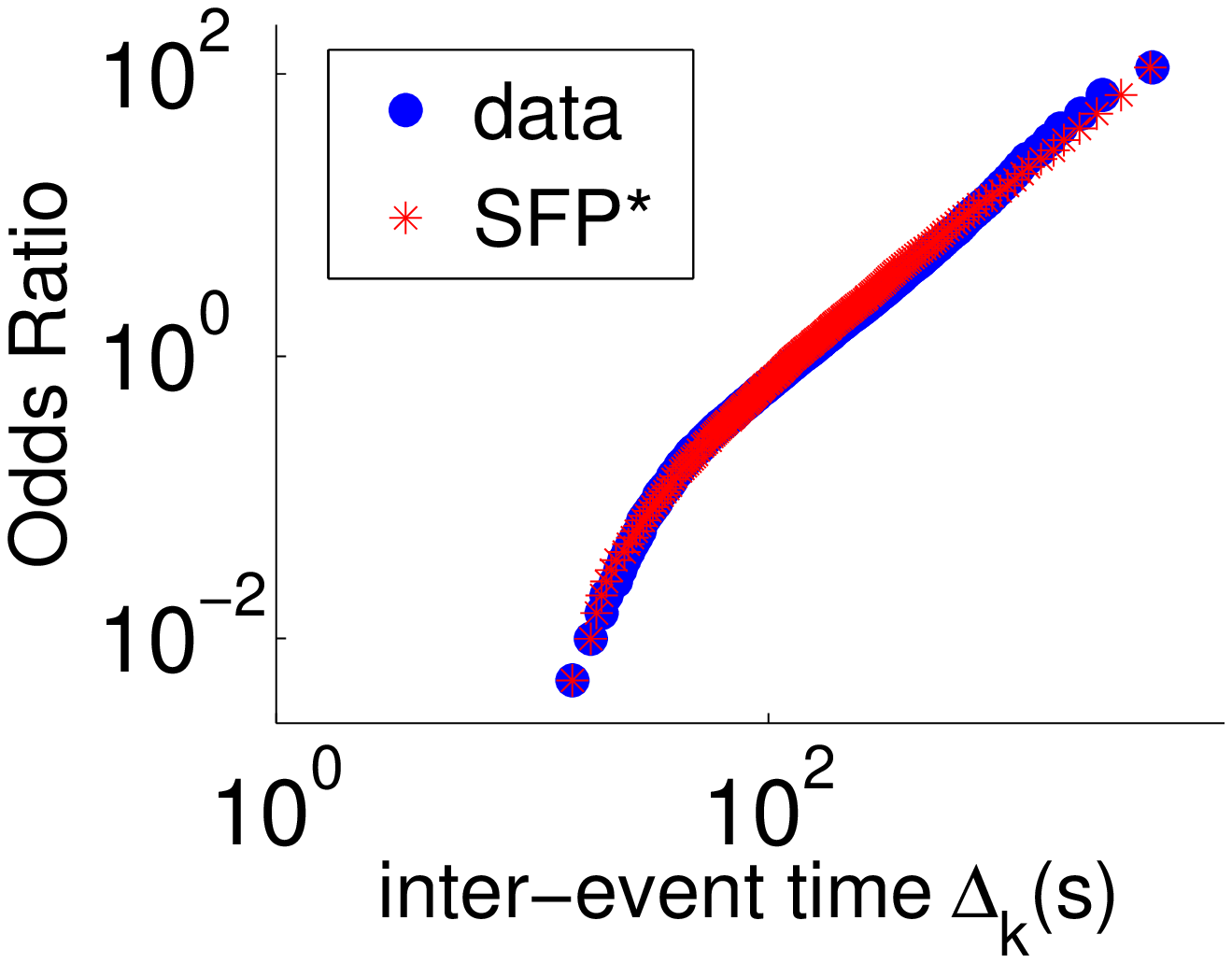}}  
  \caption{\dpp* vs. real data}
  \label{fig:deviation2}
\end{figure}

\newpage

\section{\dpp Code}

Below we show the \textit{Python} code for the \dpp generator.

\scriptsize
\begin{verbatim}
def SFP(n, mu, rho=1):
    #first inter-event time
    deltat = mu
    #list of inter-event times
    Deltat = []
    for i in range(1, n):
        #Poisson Process which Beta=deltat+mu/e
        deltat = -(deltat+(mu**rho)/math.e) 
        deltat = deltat * math.log(random.random()) 
        Deltat.append(deltat**(1/rho))
    return Deltat
\end{verbatim}
\normalsize

\section{Data Sample}
\label{sec:sample}

In this section we show the \ied of 8 typical talkative individuals of each dataset. In each figure we show the identification of the individual (when possible), the slope $\rho$, the determination coefficient $R^2$ and the number of communication events $n$. 

\begin{figure*}[htpb]
\centering
{\includegraphics[width=.23\textwidth]{./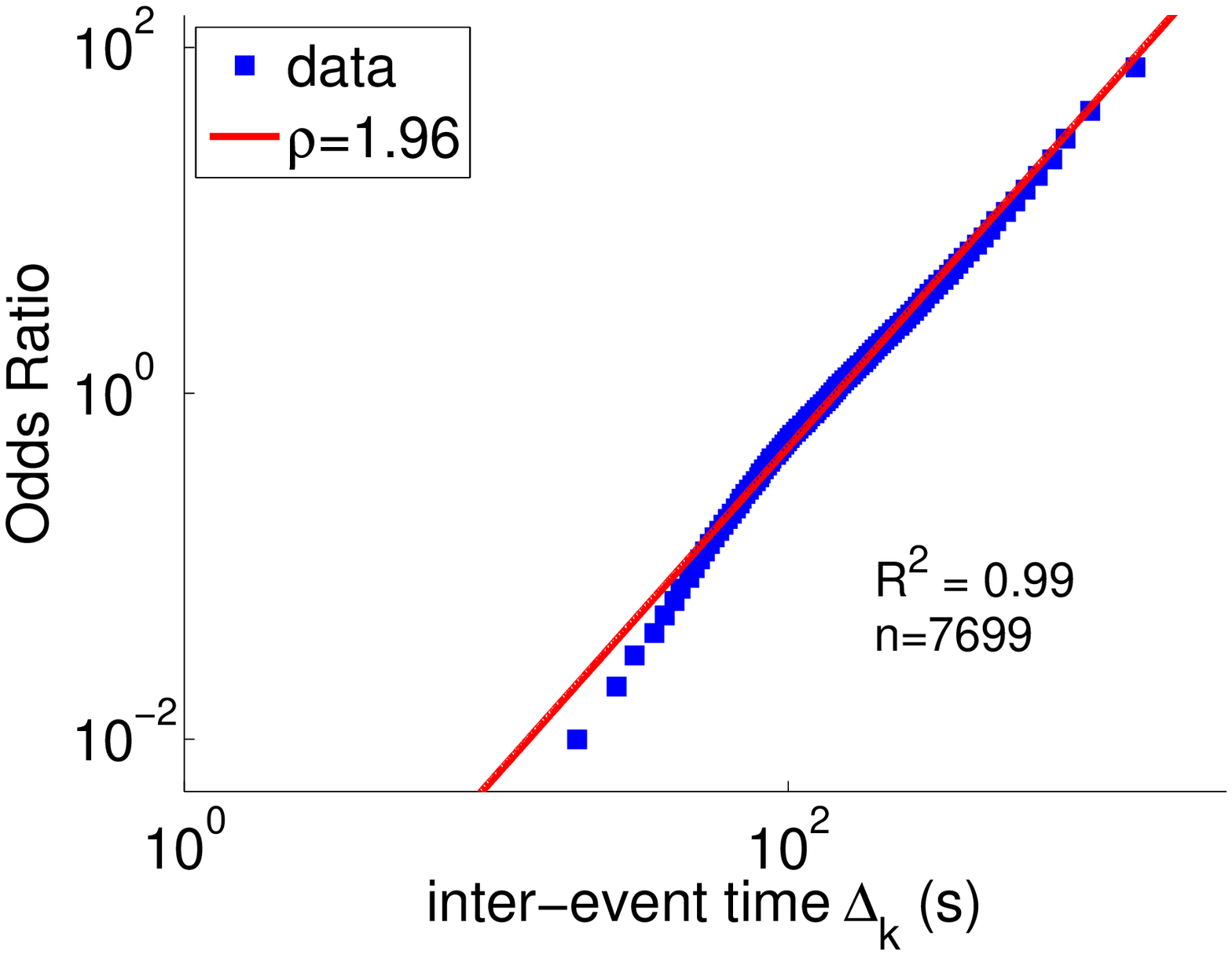}}
{\includegraphics[width=.23\textwidth]{./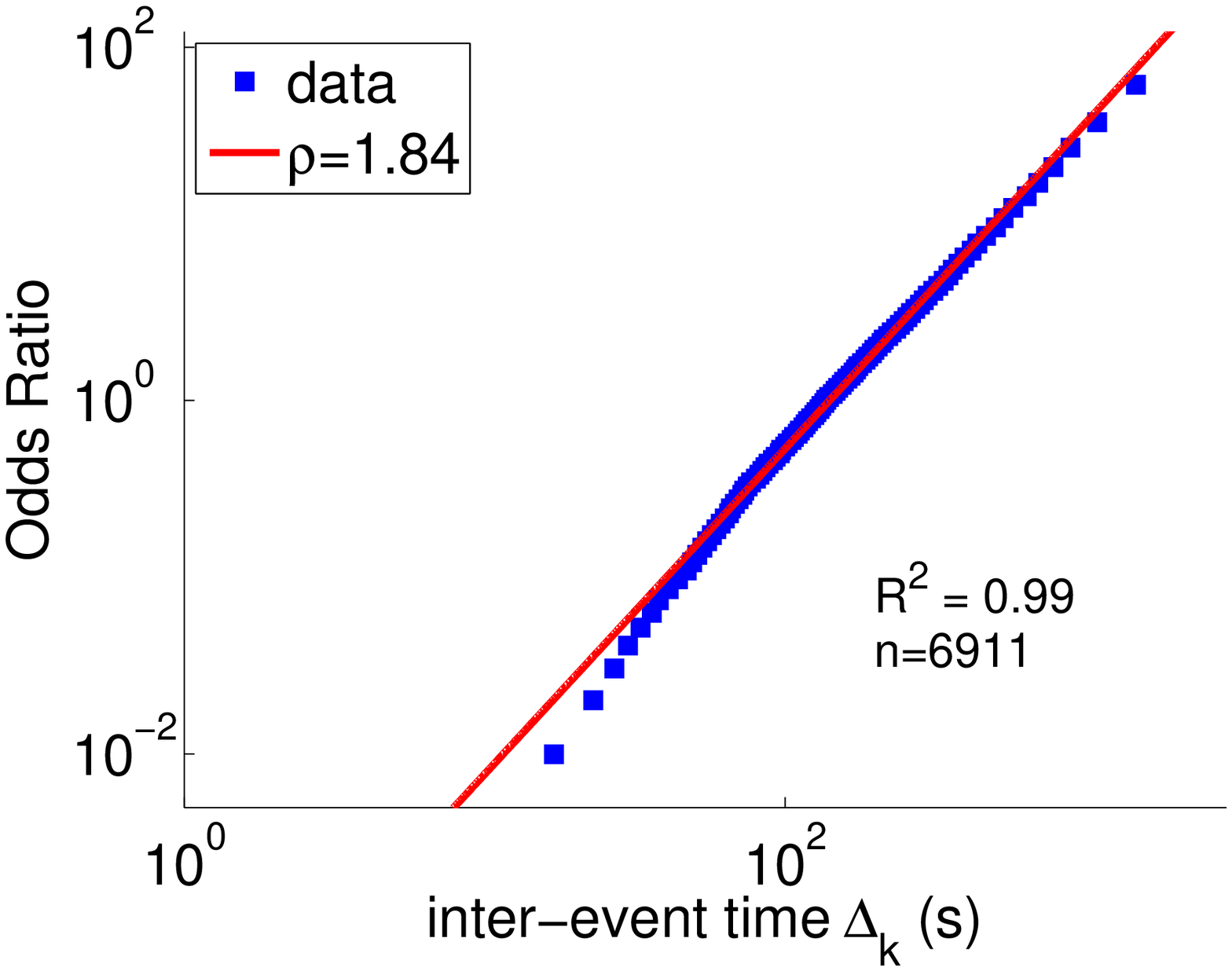}}
{\includegraphics[width=.23\textwidth]{./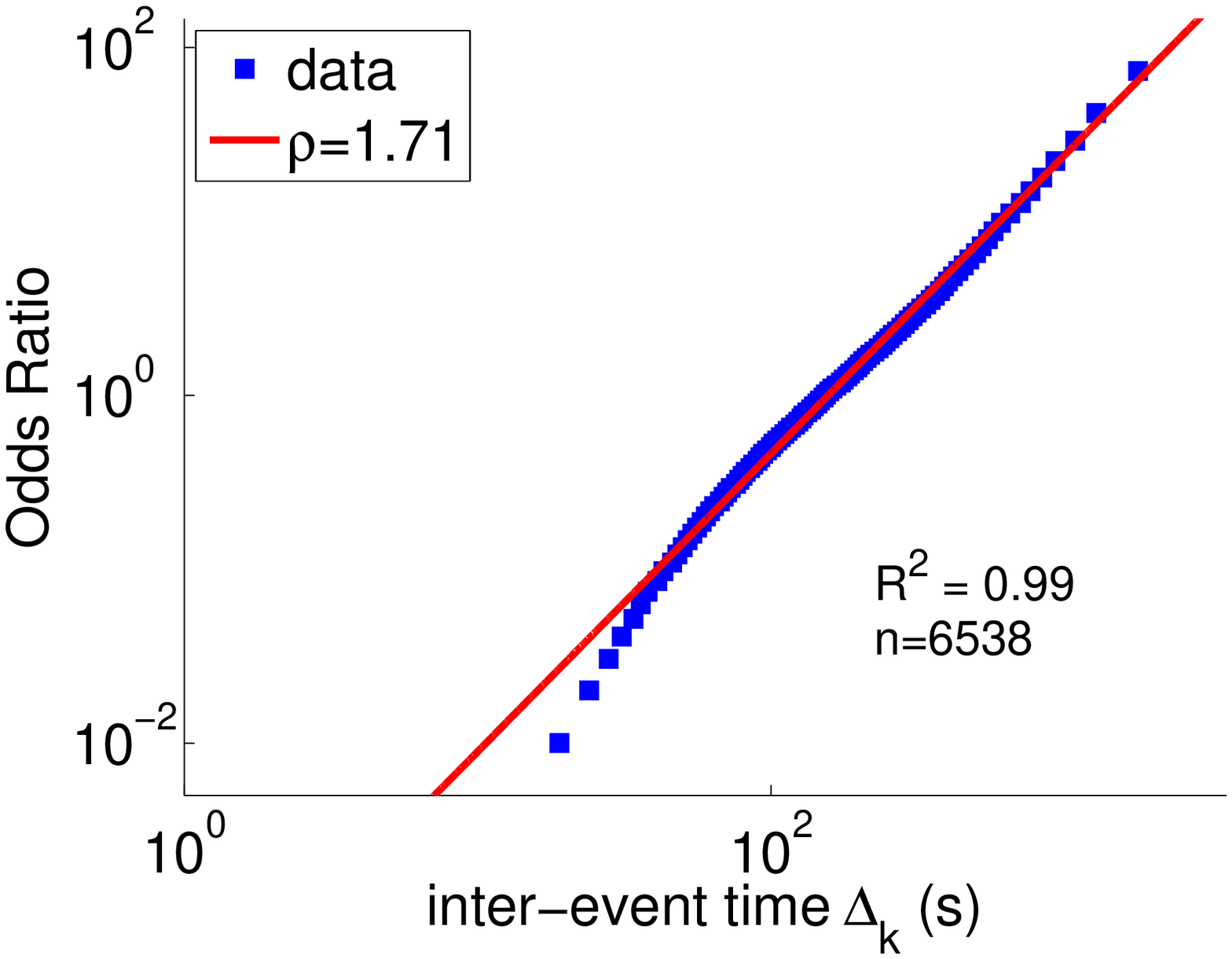}}
{\includegraphics[width=.23\textwidth]{./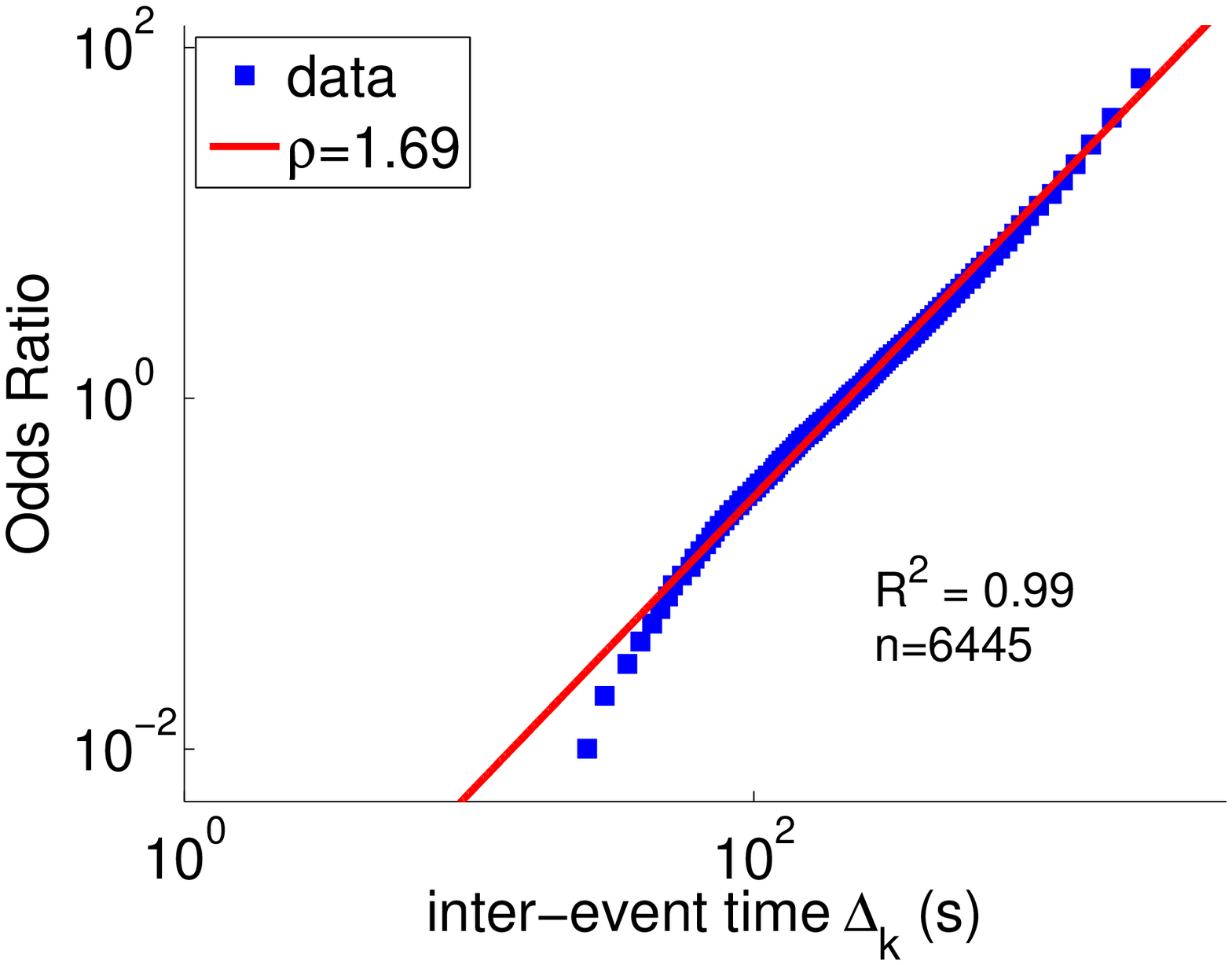}}
{\includegraphics[width=.23\textwidth]{./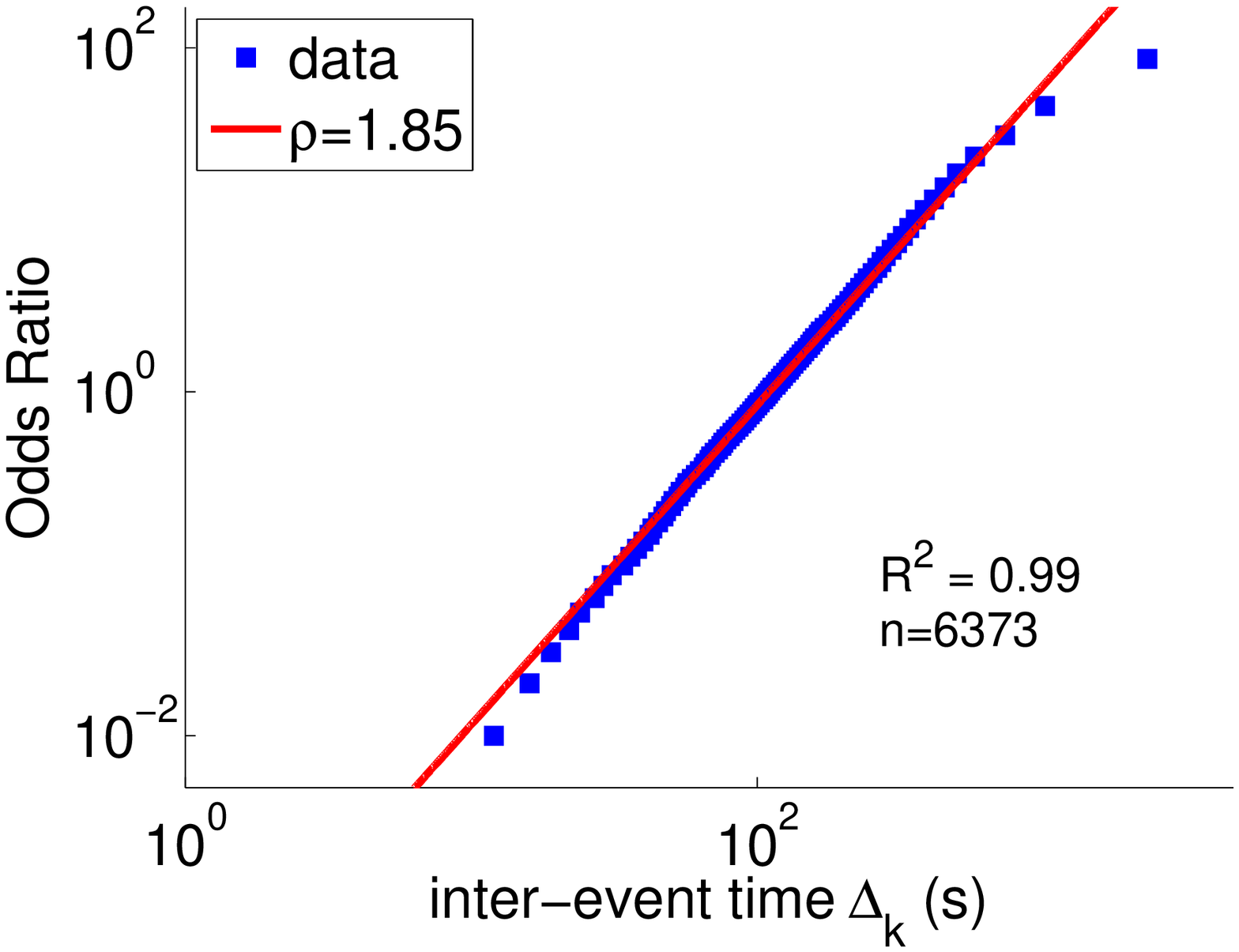}}
{\includegraphics[width=.23\textwidth]{./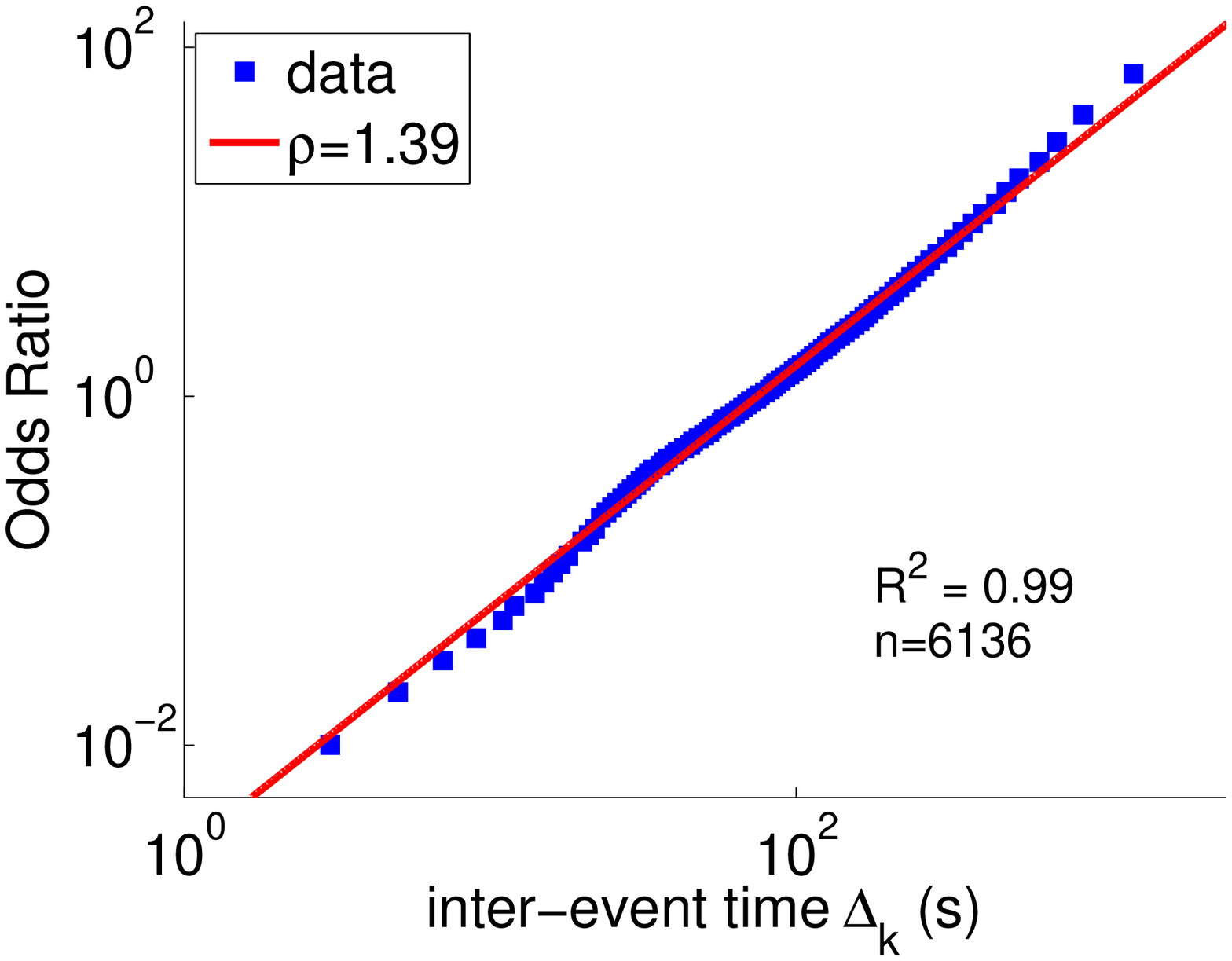}}
{\includegraphics[width=.23\textwidth]{./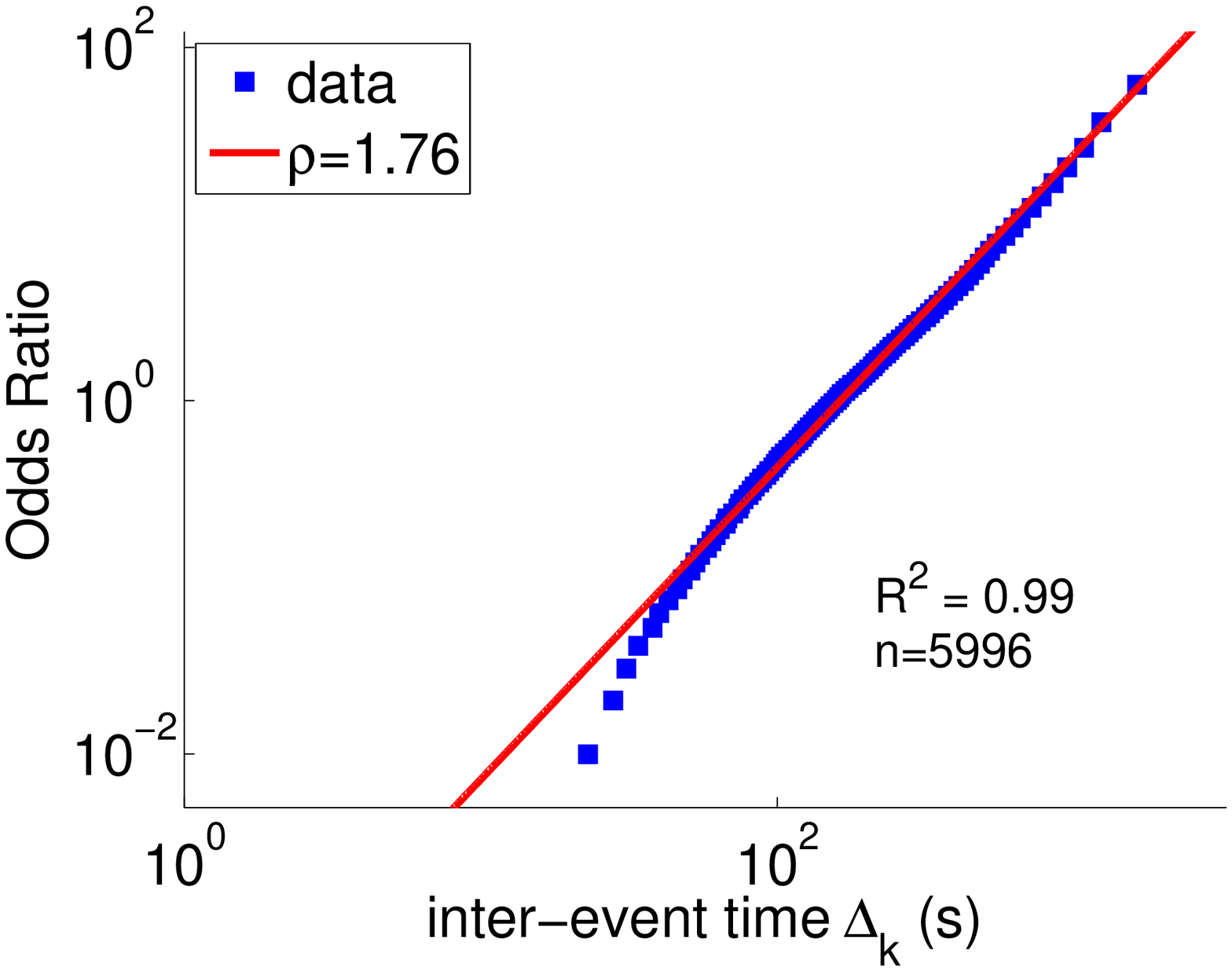}}
{\includegraphics[width=.23\textwidth]{./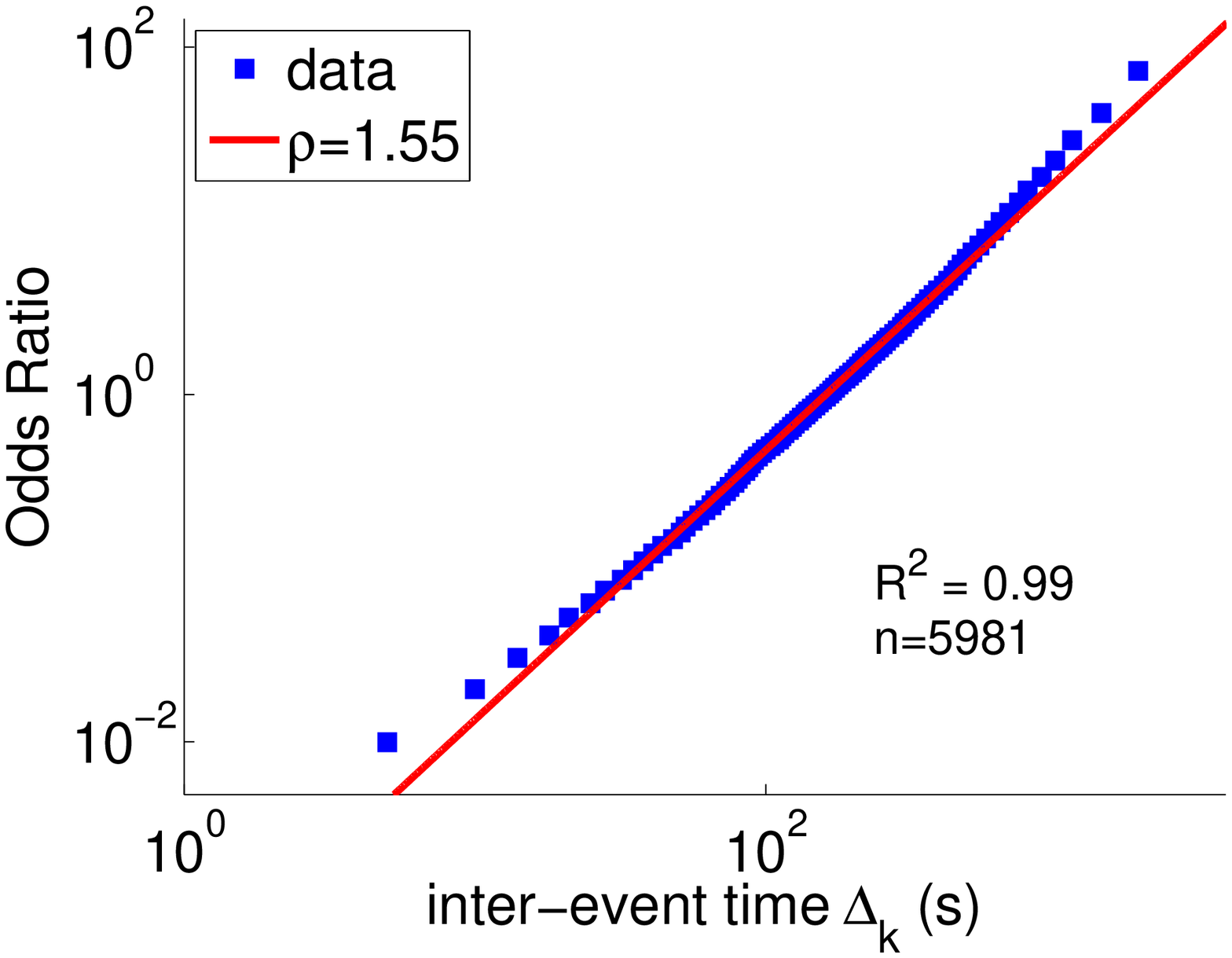}}
\caption{Sample from the Phone dataset.}
\end{figure*}

\begin{figure*}[htpb]
\centering
{\includegraphics[width=.23\textwidth]{./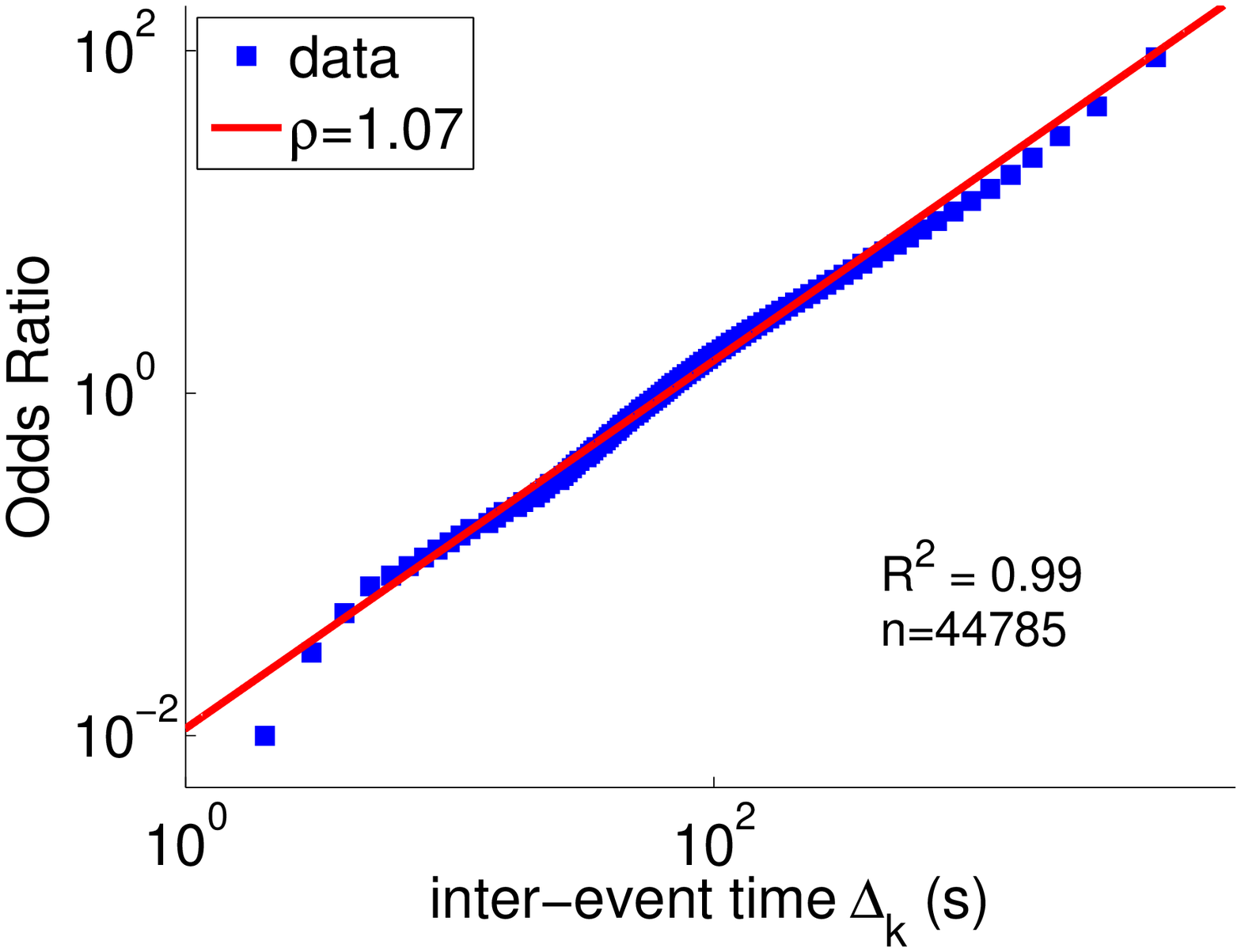}}
{\includegraphics[width=.23\textwidth]{./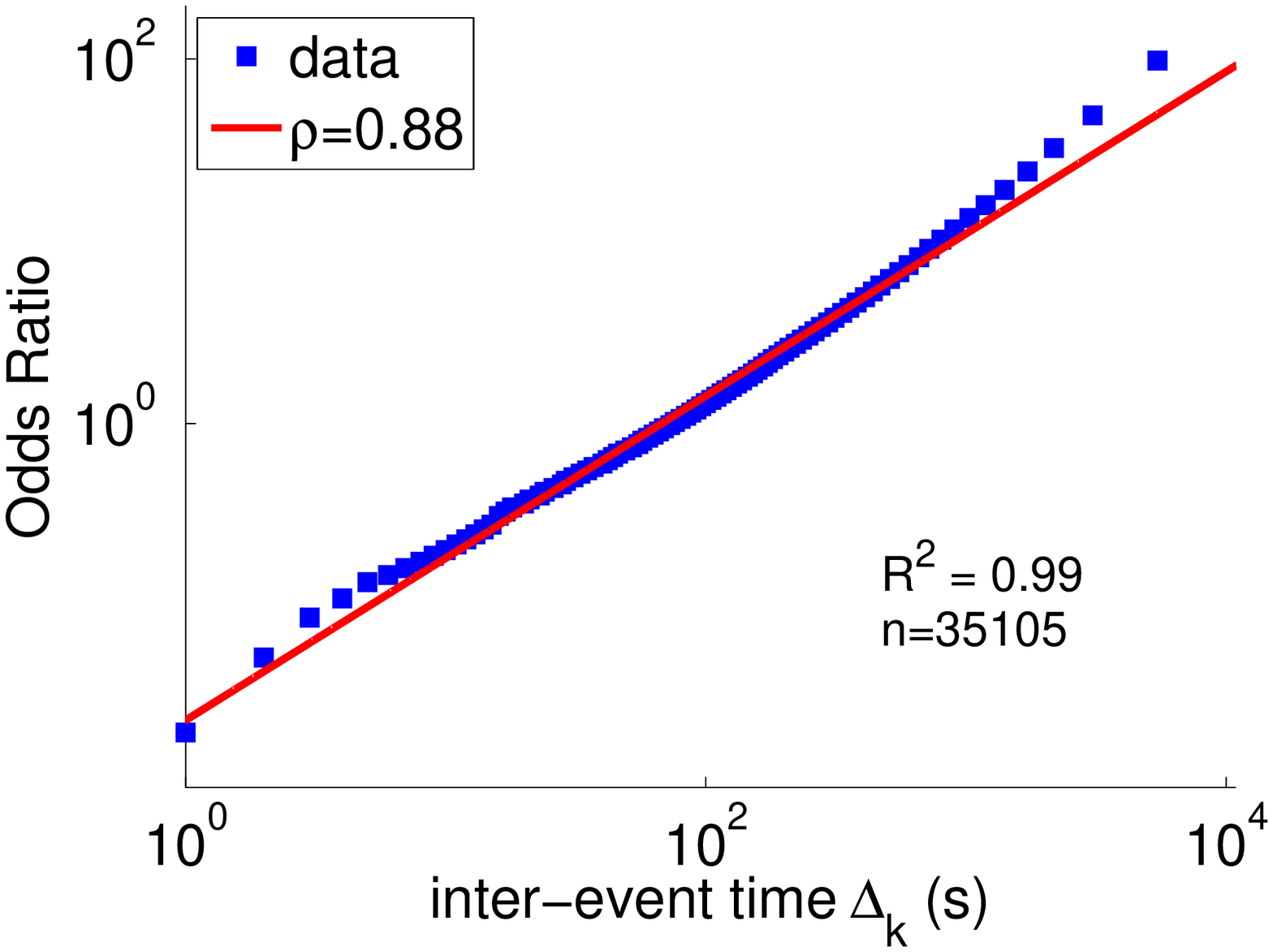}}
{\includegraphics[width=.23\textwidth]{./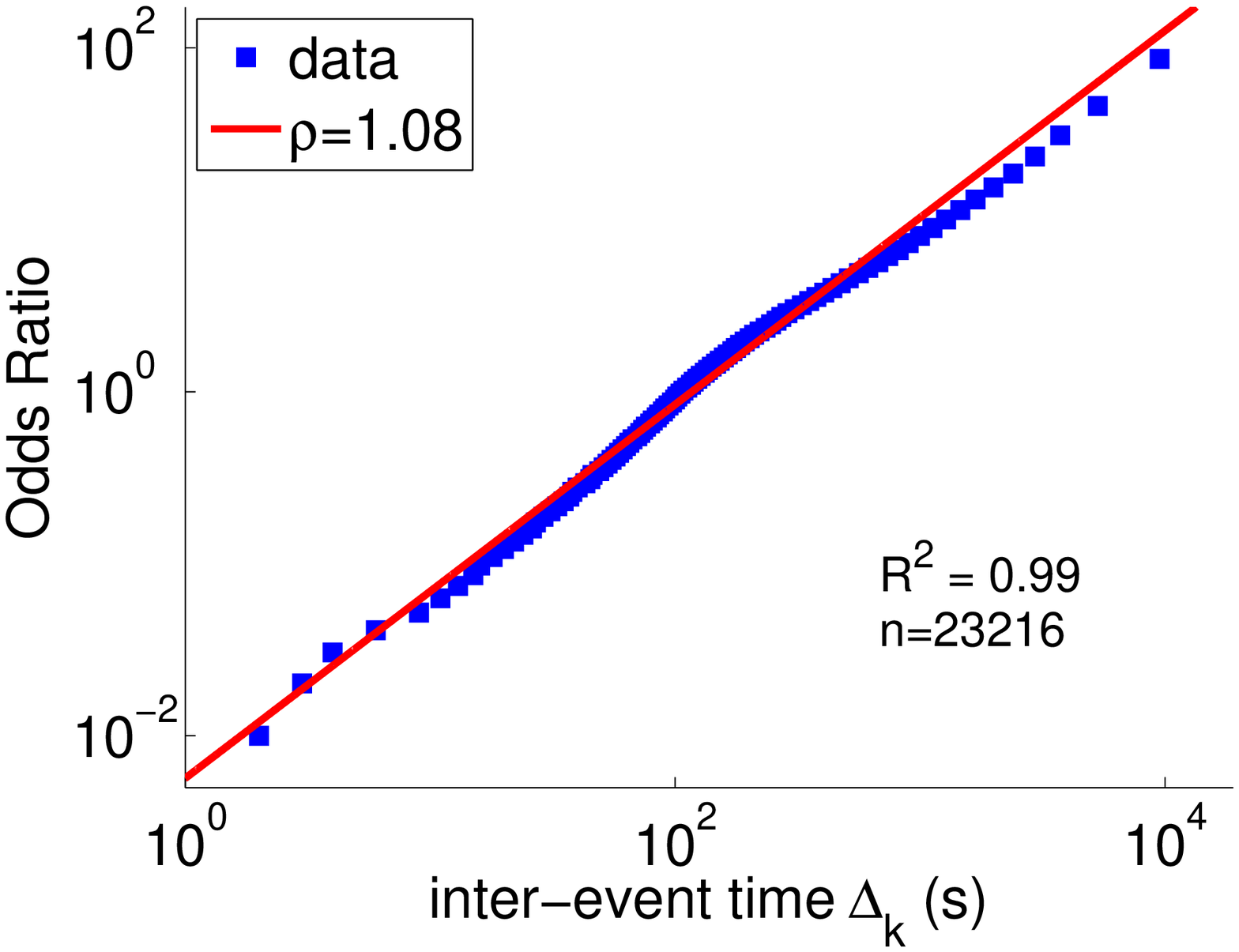}}
{\includegraphics[width=.23\textwidth]{./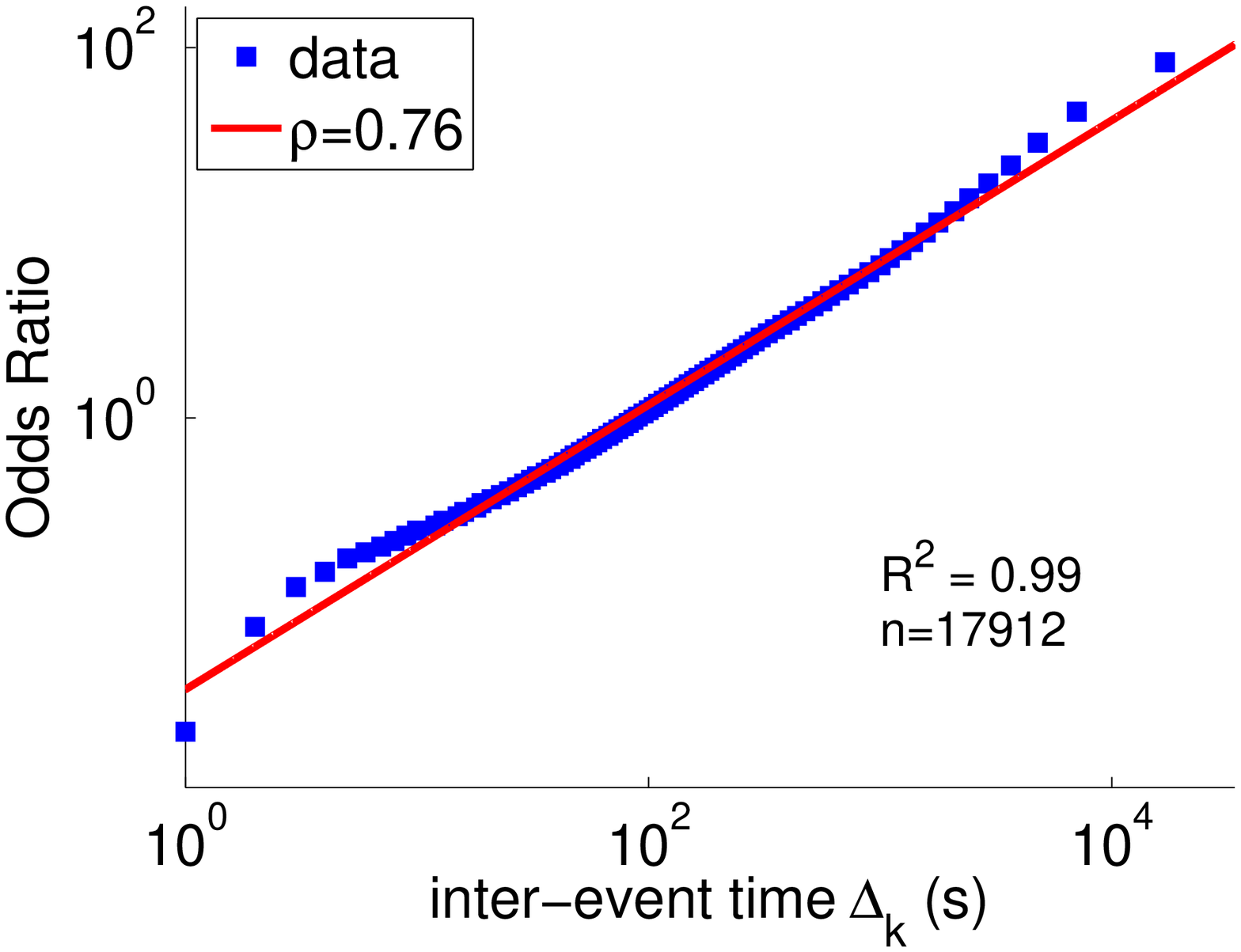}}
{\includegraphics[width=.23\textwidth]{./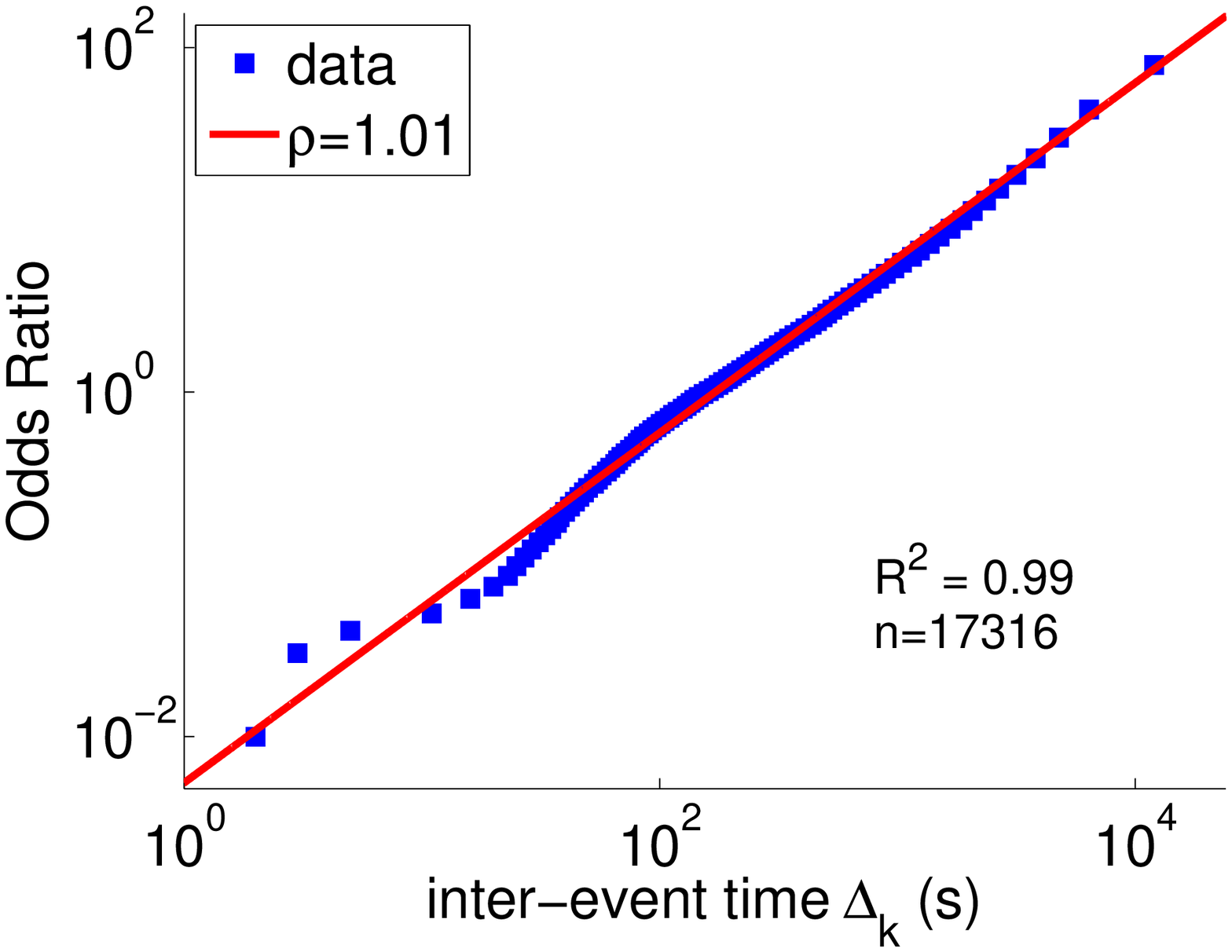}}
{\includegraphics[width=.23\textwidth]{./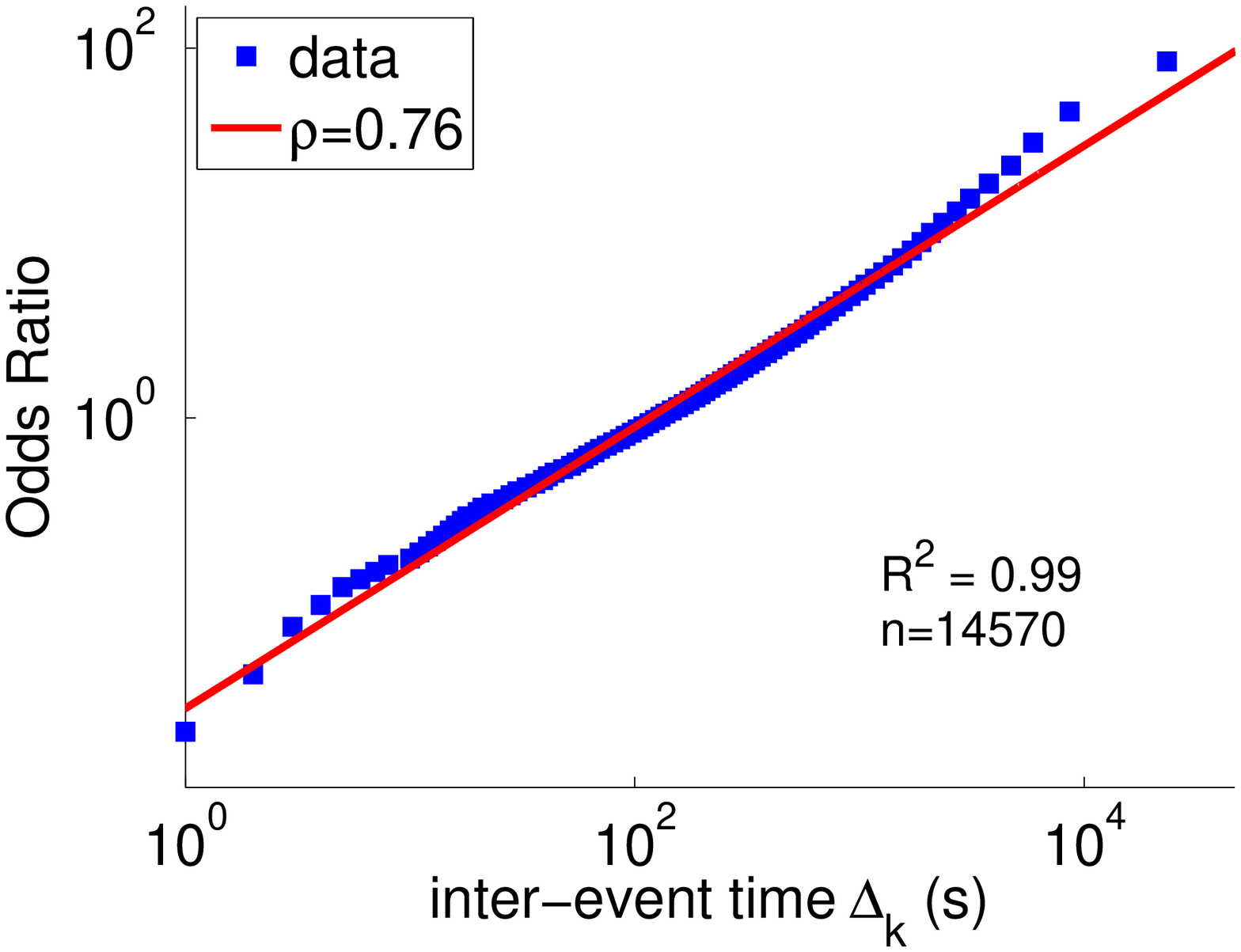}}
{\includegraphics[width=.23\textwidth]{./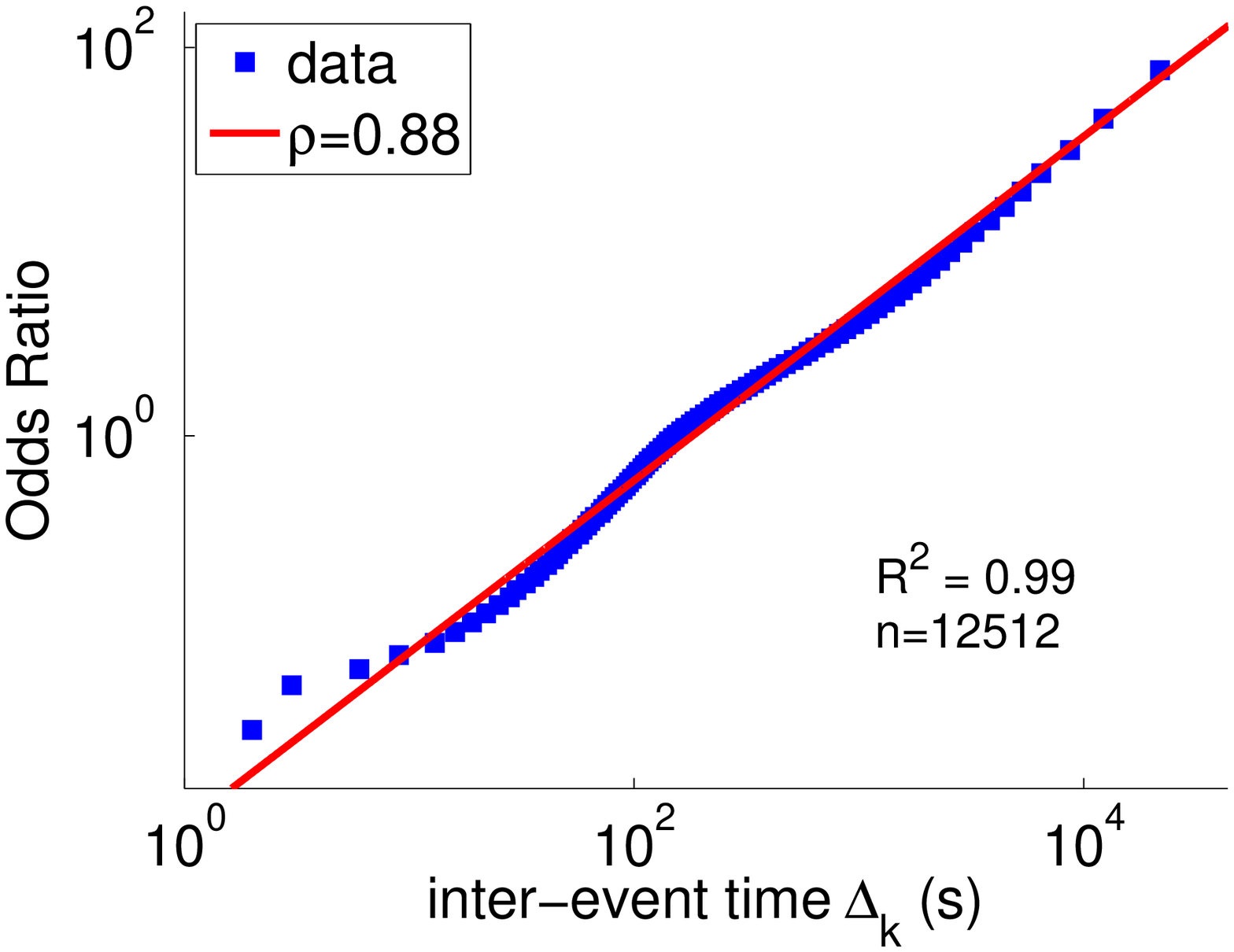}}
{\includegraphics[width=.23\textwidth]{./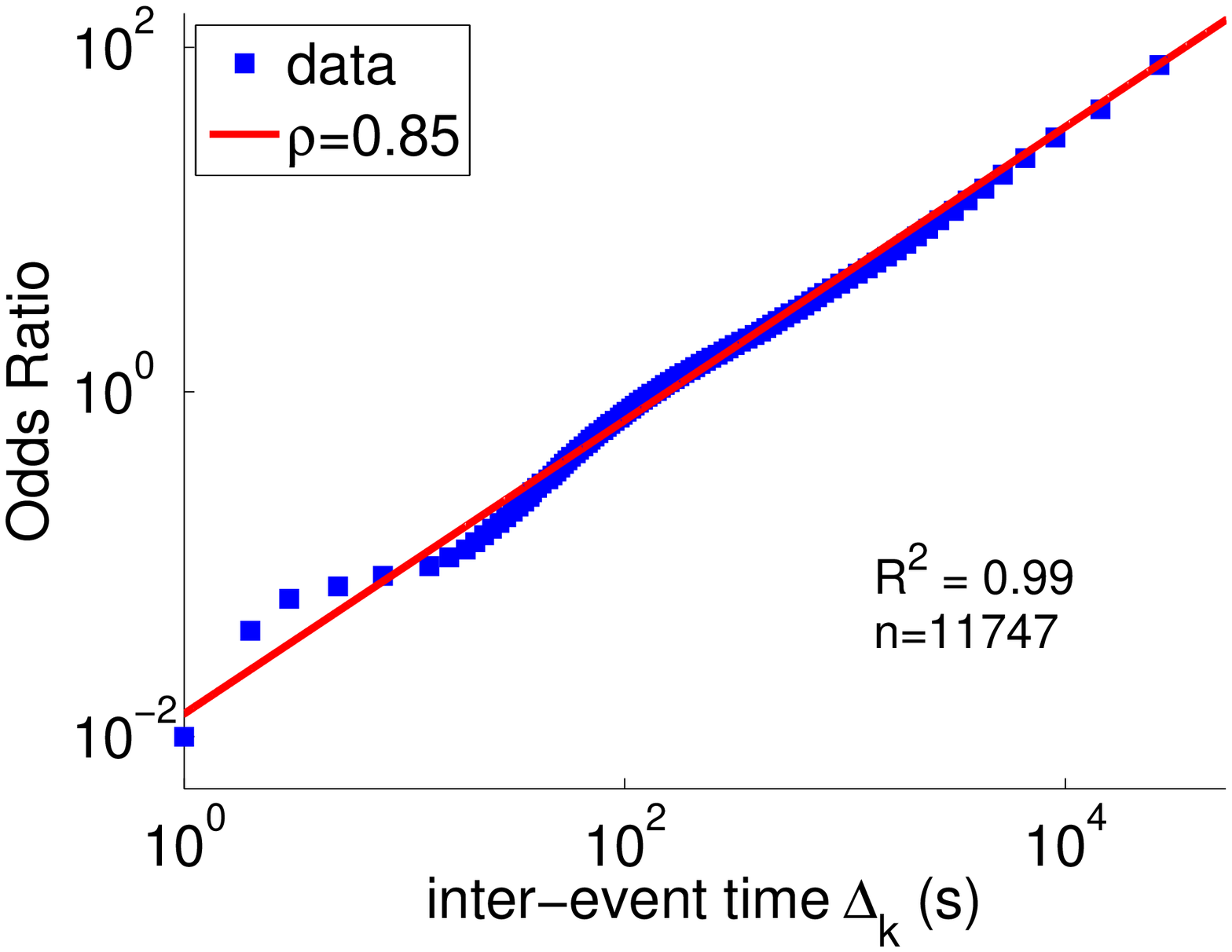}}
\caption{Sample from the SMS dataset.}
\end{figure*}

\begin{figure*}[htpb]
\centering
{\includegraphics[width=.24\textwidth]{./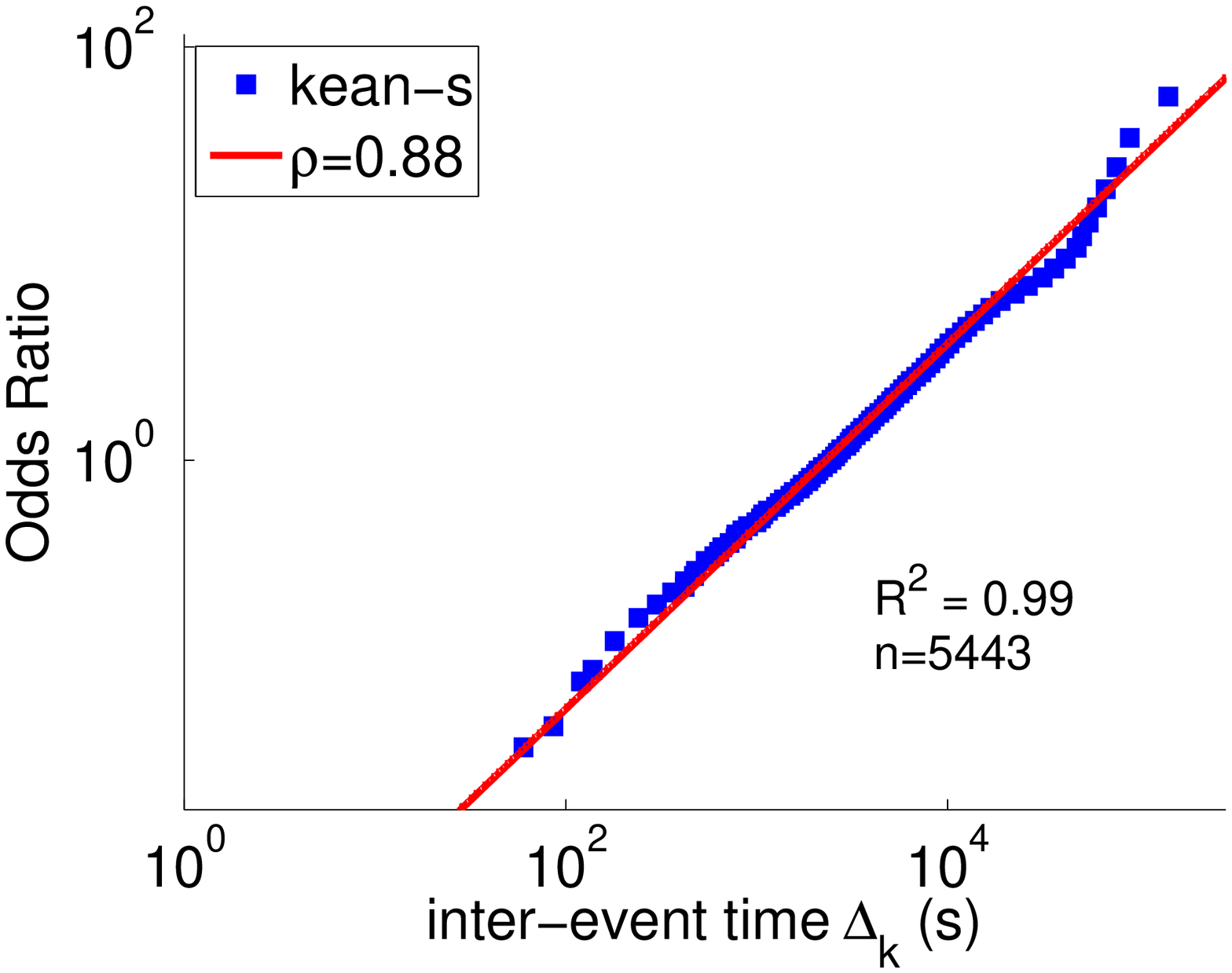}}
{\includegraphics[width=.24\textwidth]{./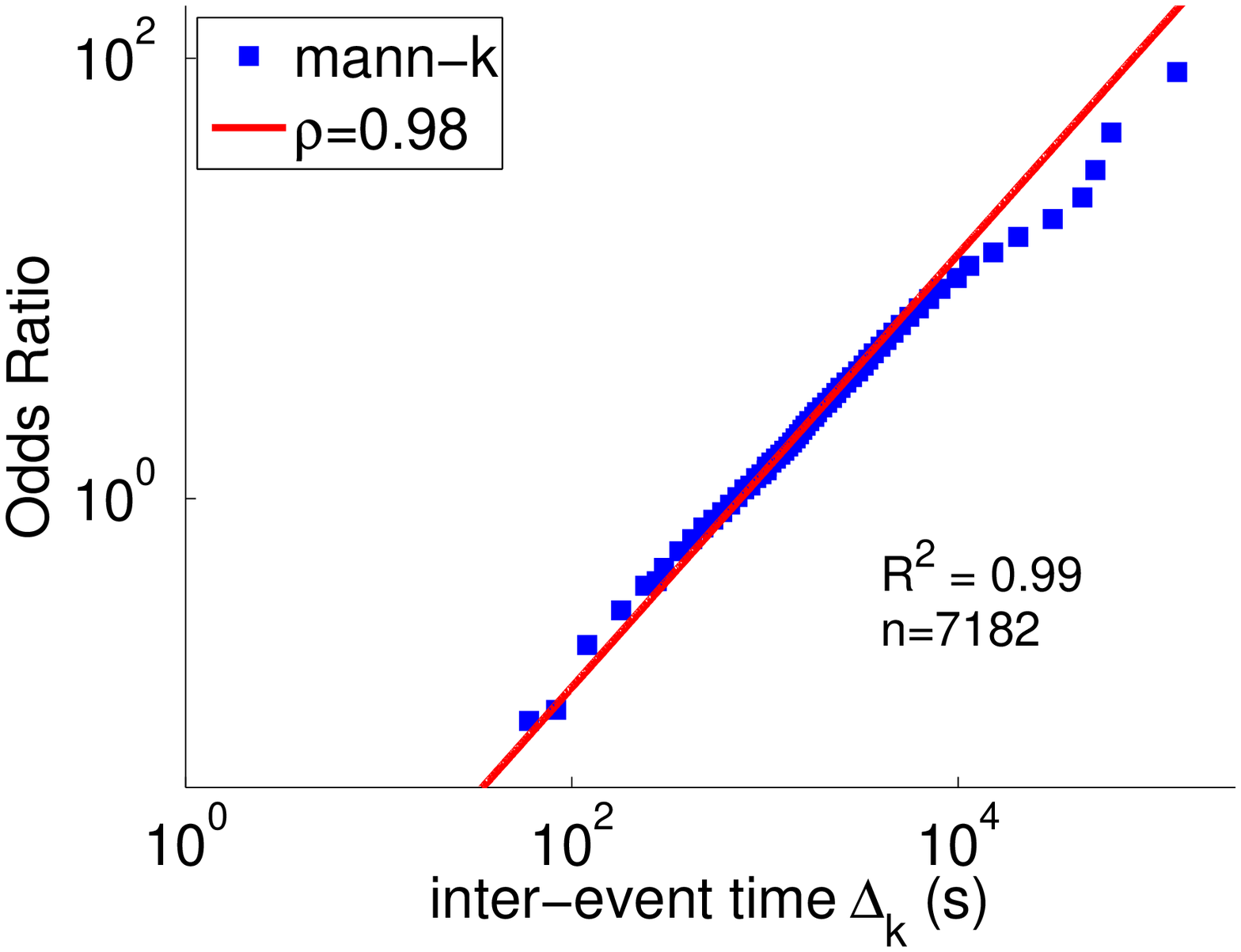}}
{\includegraphics[width=.24\textwidth]{./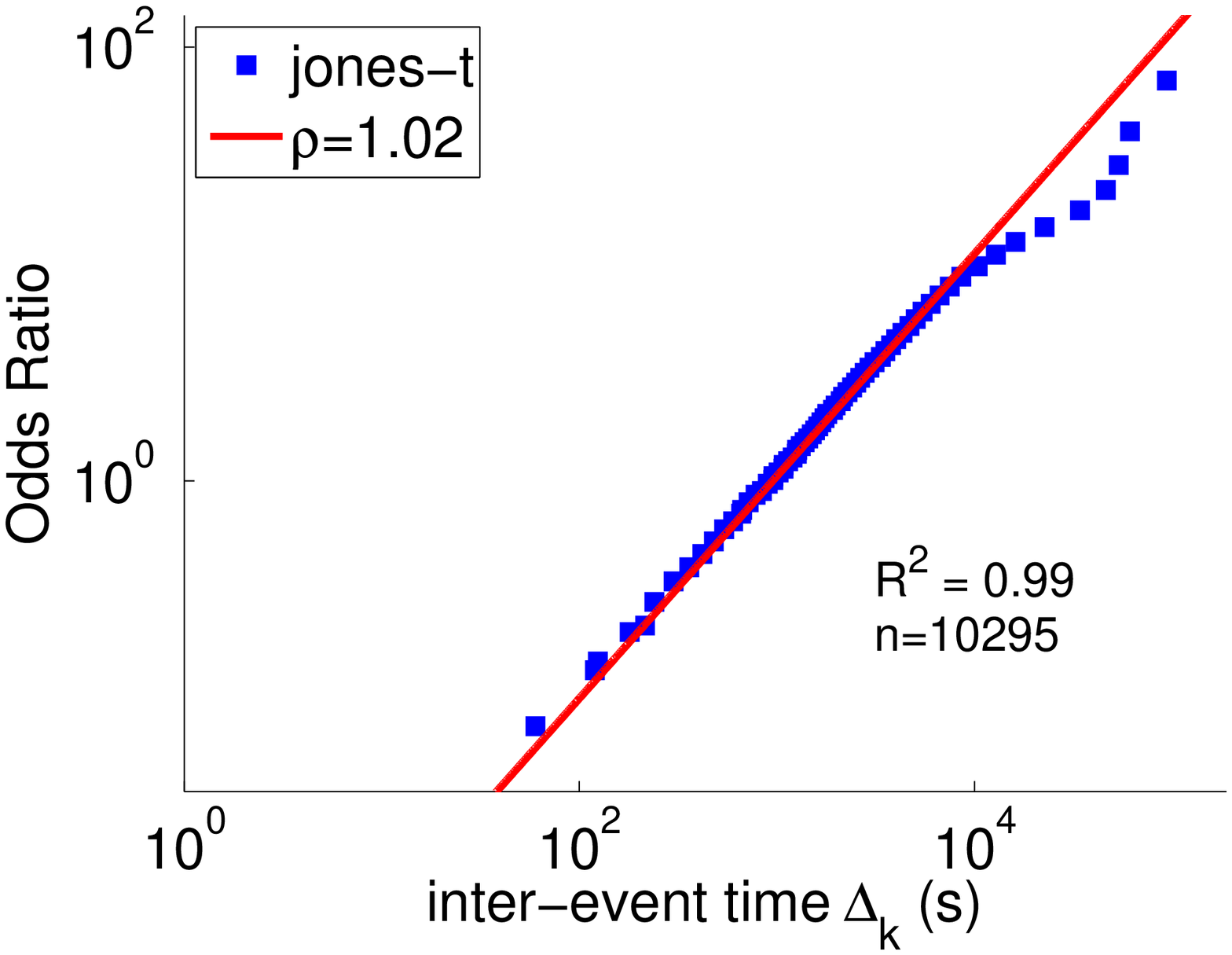}}
{\includegraphics[width=.24\textwidth]{./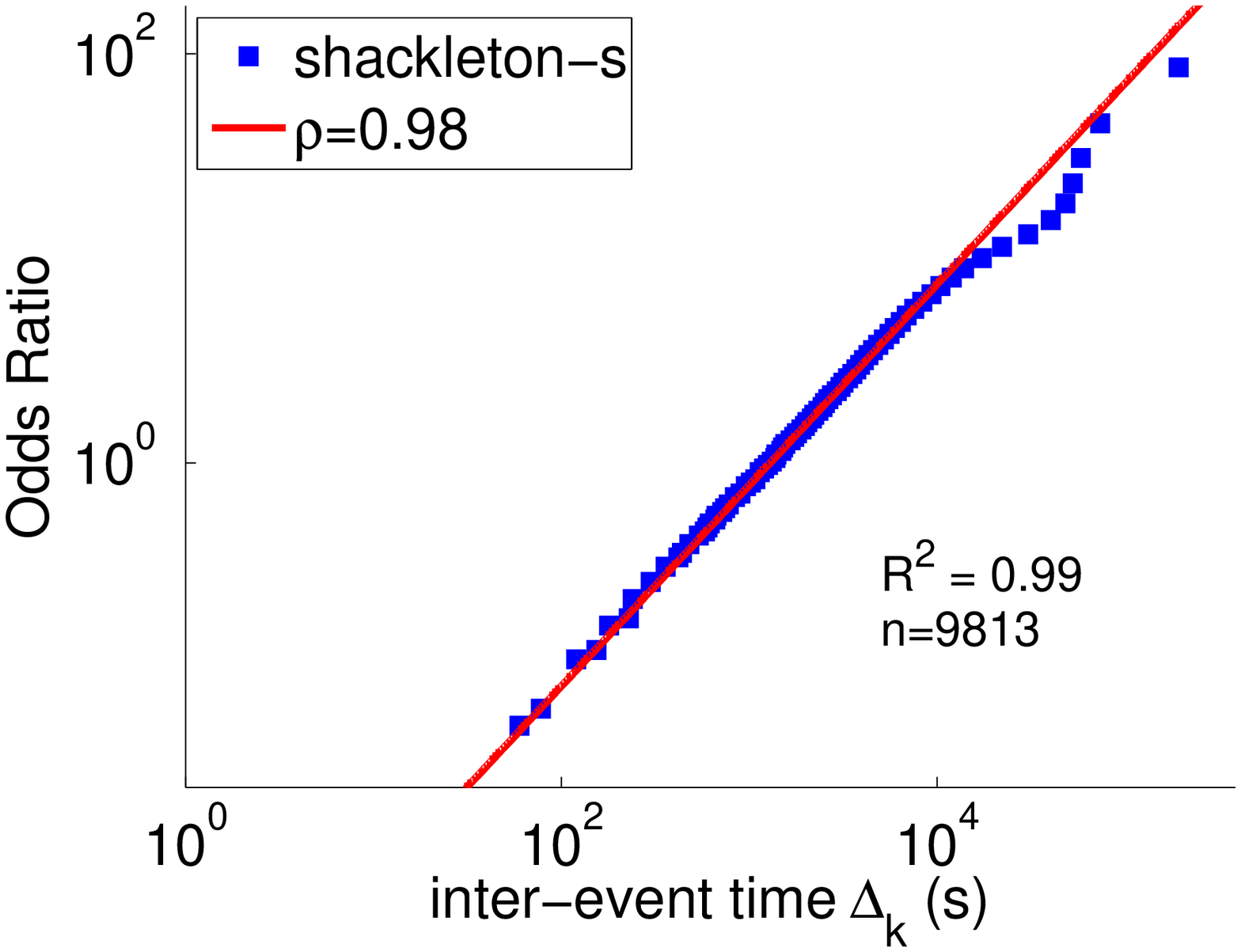}}
{\includegraphics[width=.24\textwidth]{./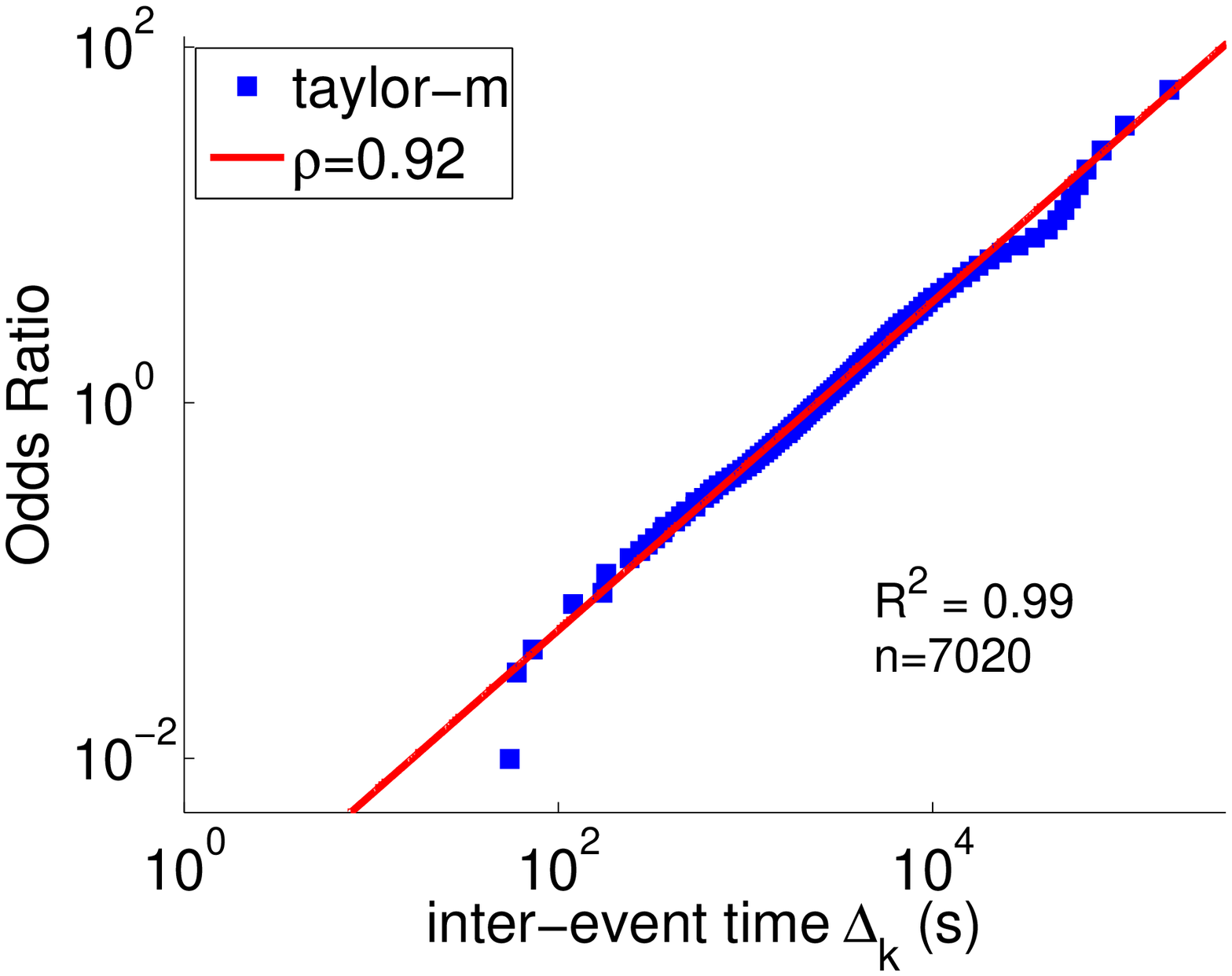}}
{\includegraphics[width=.24\textwidth]{./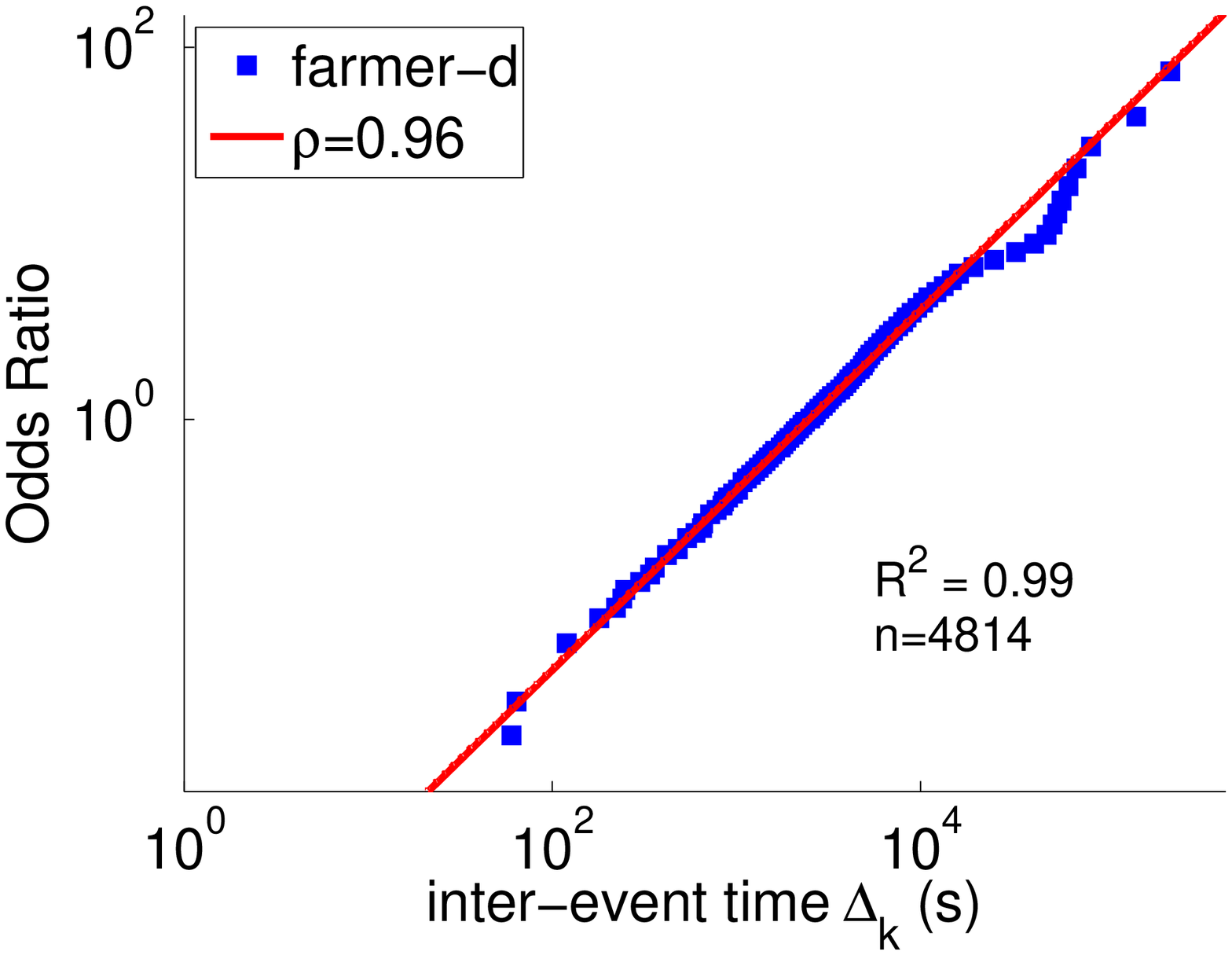}}
{\includegraphics[width=.24\textwidth]{./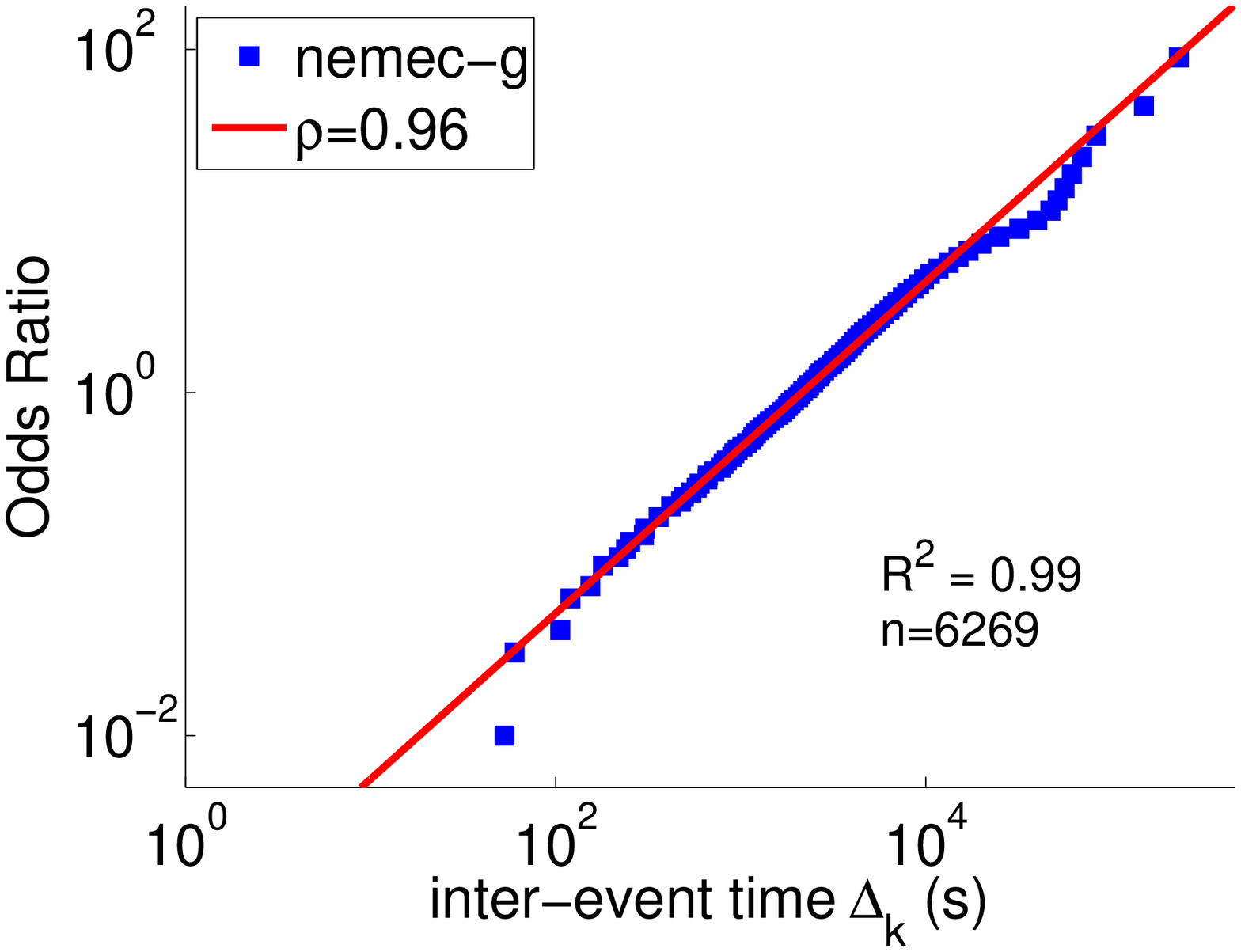}}
{\includegraphics[width=.24\textwidth]{./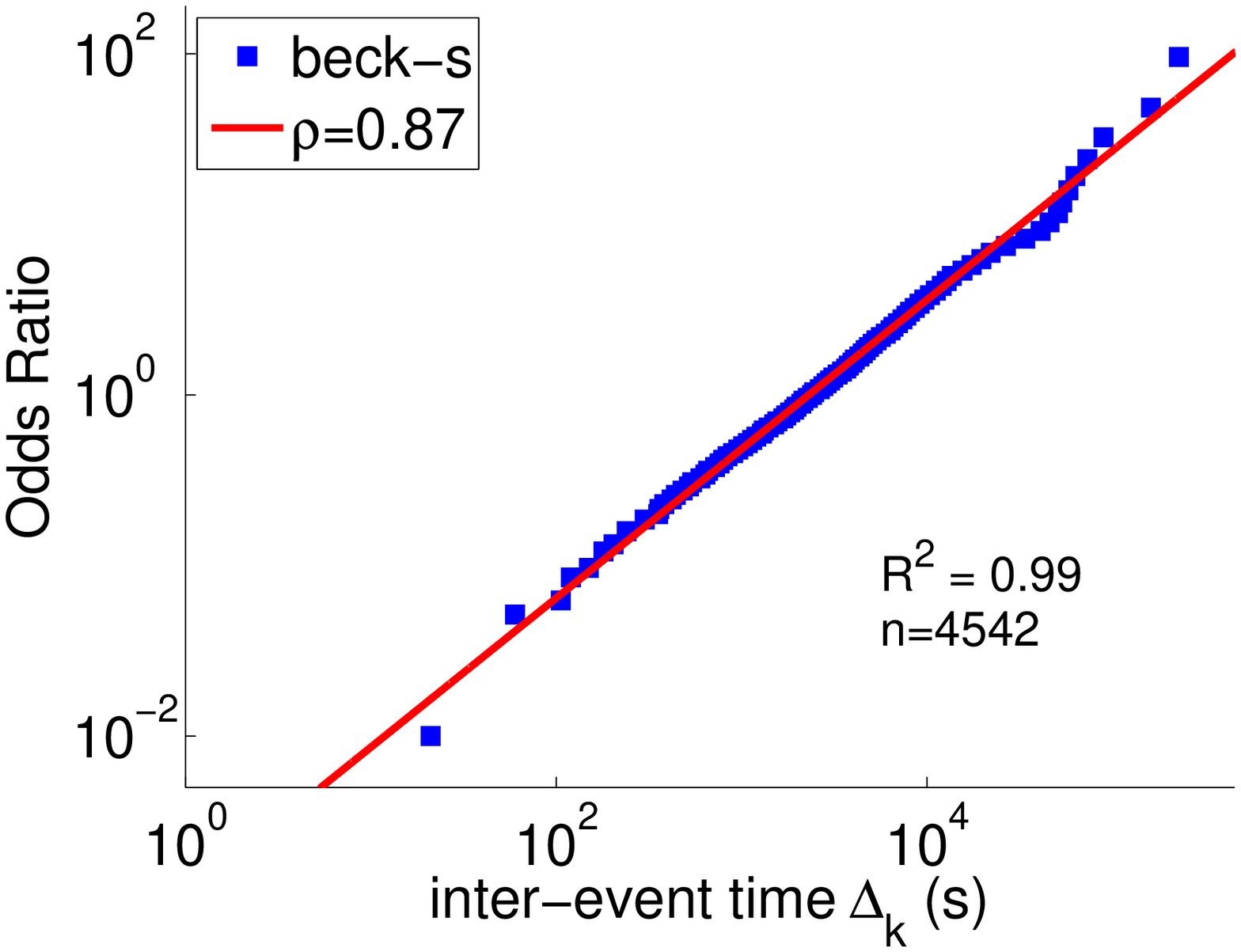}}
\caption{Sample from the Enron dataset.}
\end{figure*}

\begin{figure*}[htpb]
\centering
{\includegraphics[width=.24\textwidth]{./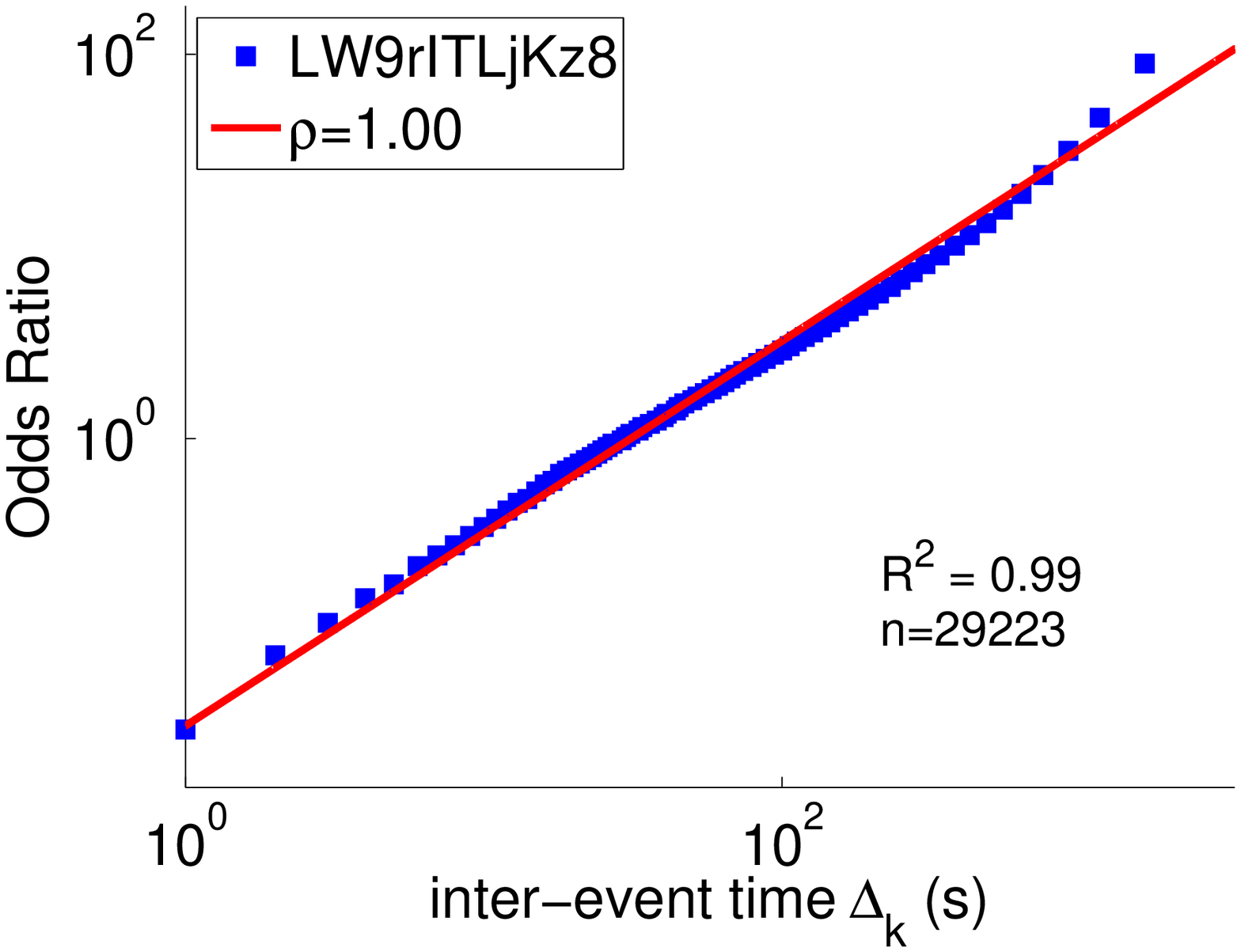}}
{\includegraphics[width=.24\textwidth]{./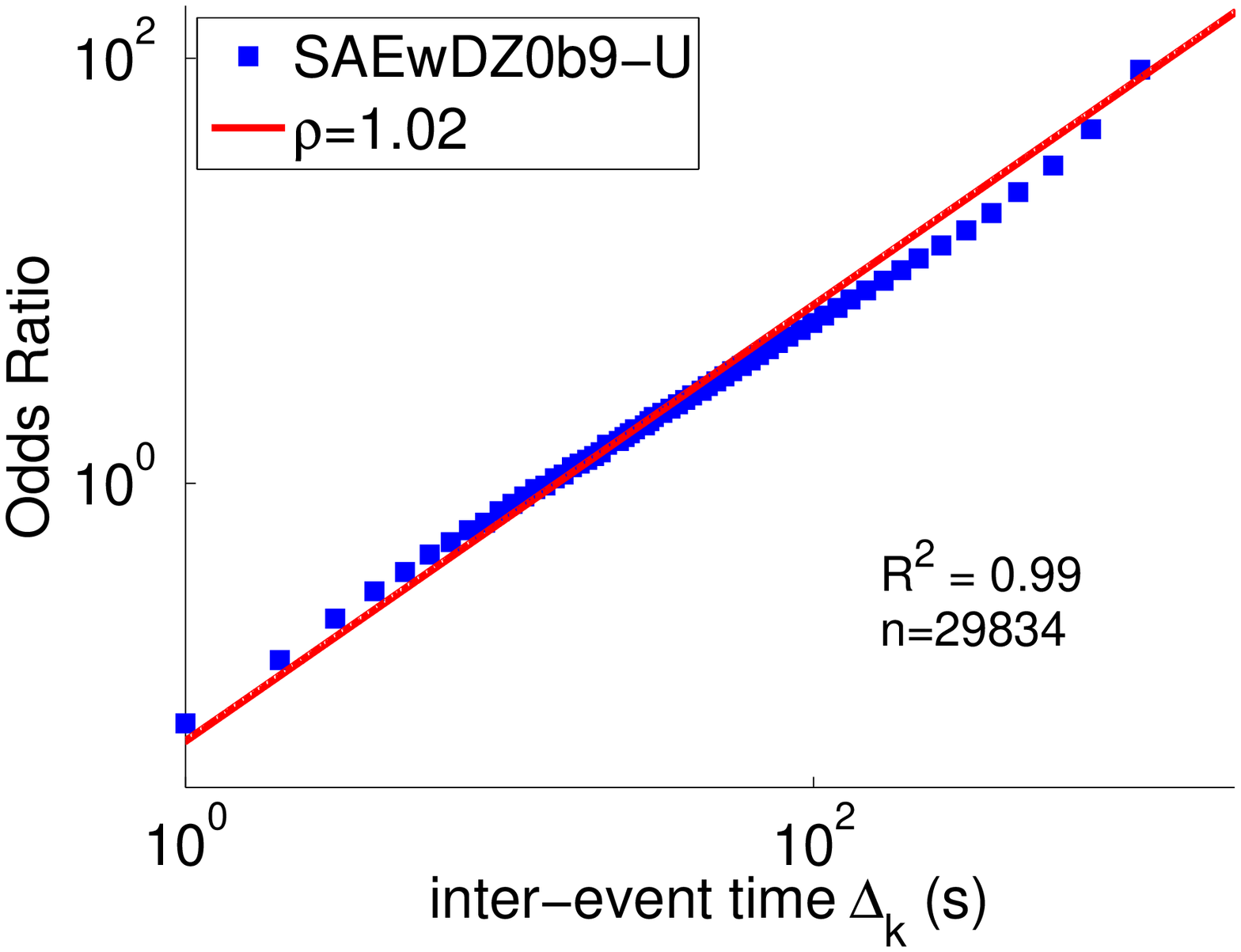}}
{\includegraphics[width=.24\textwidth]{./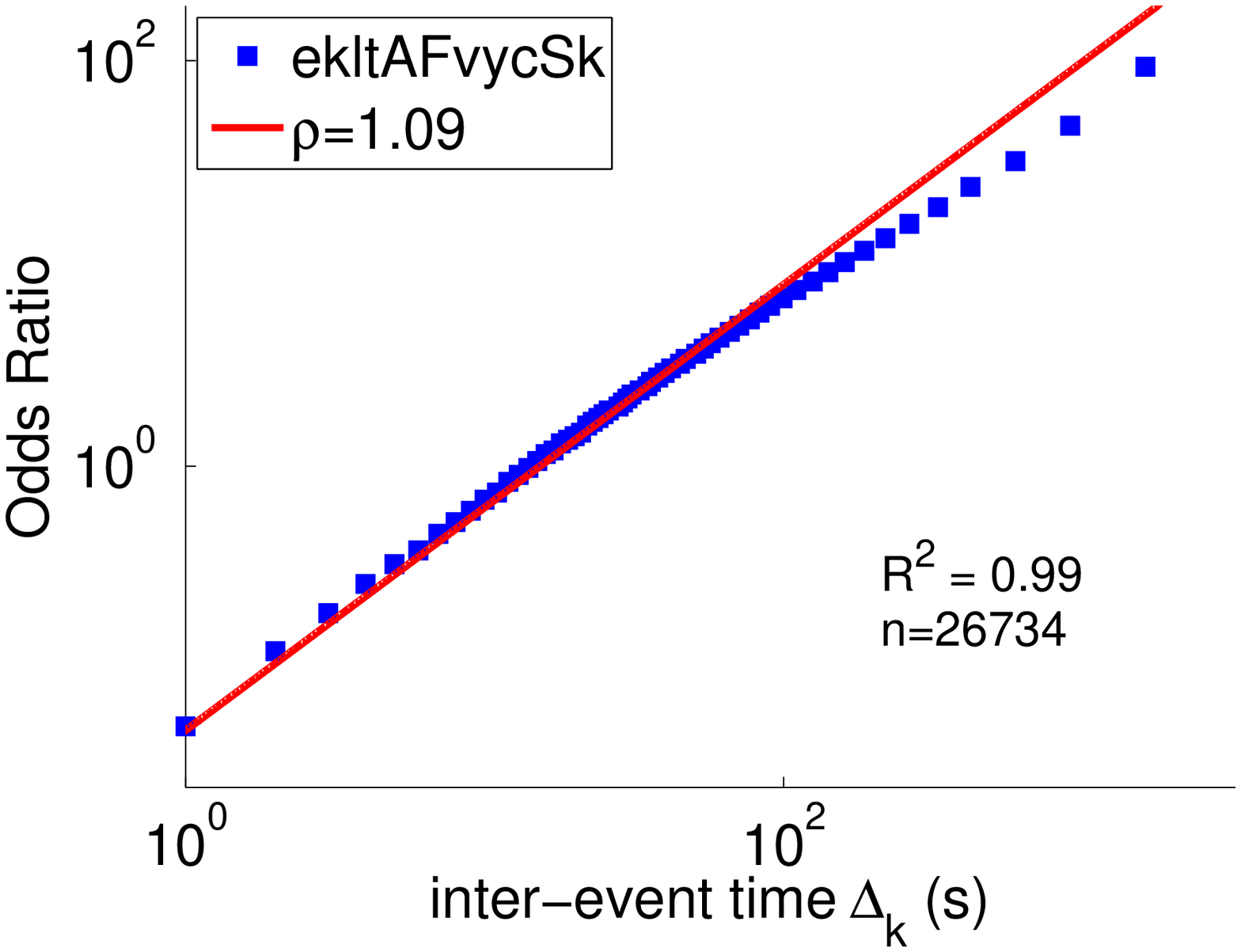}}
{\includegraphics[width=.24\textwidth]{./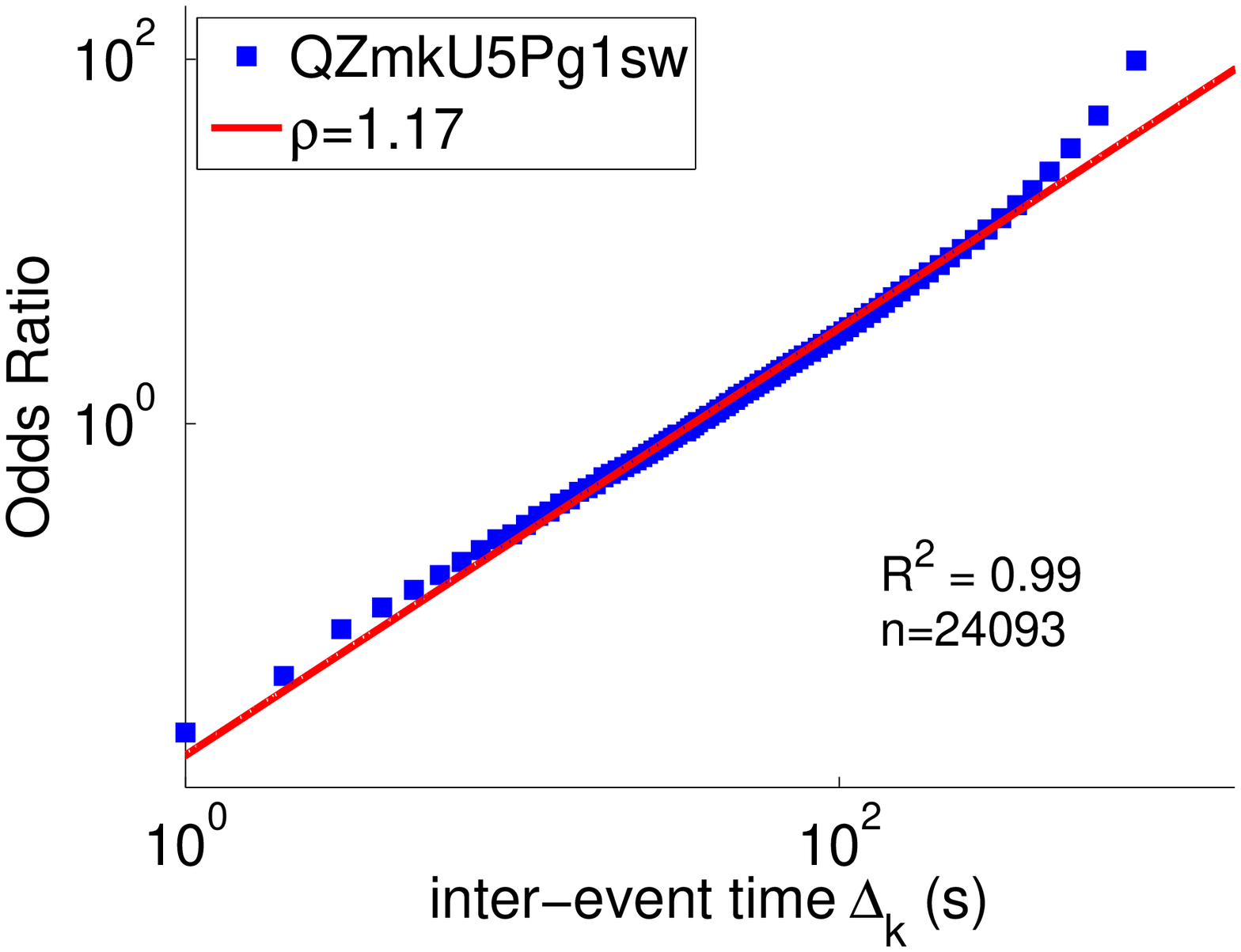}}
{\includegraphics[width=.24\textwidth]{./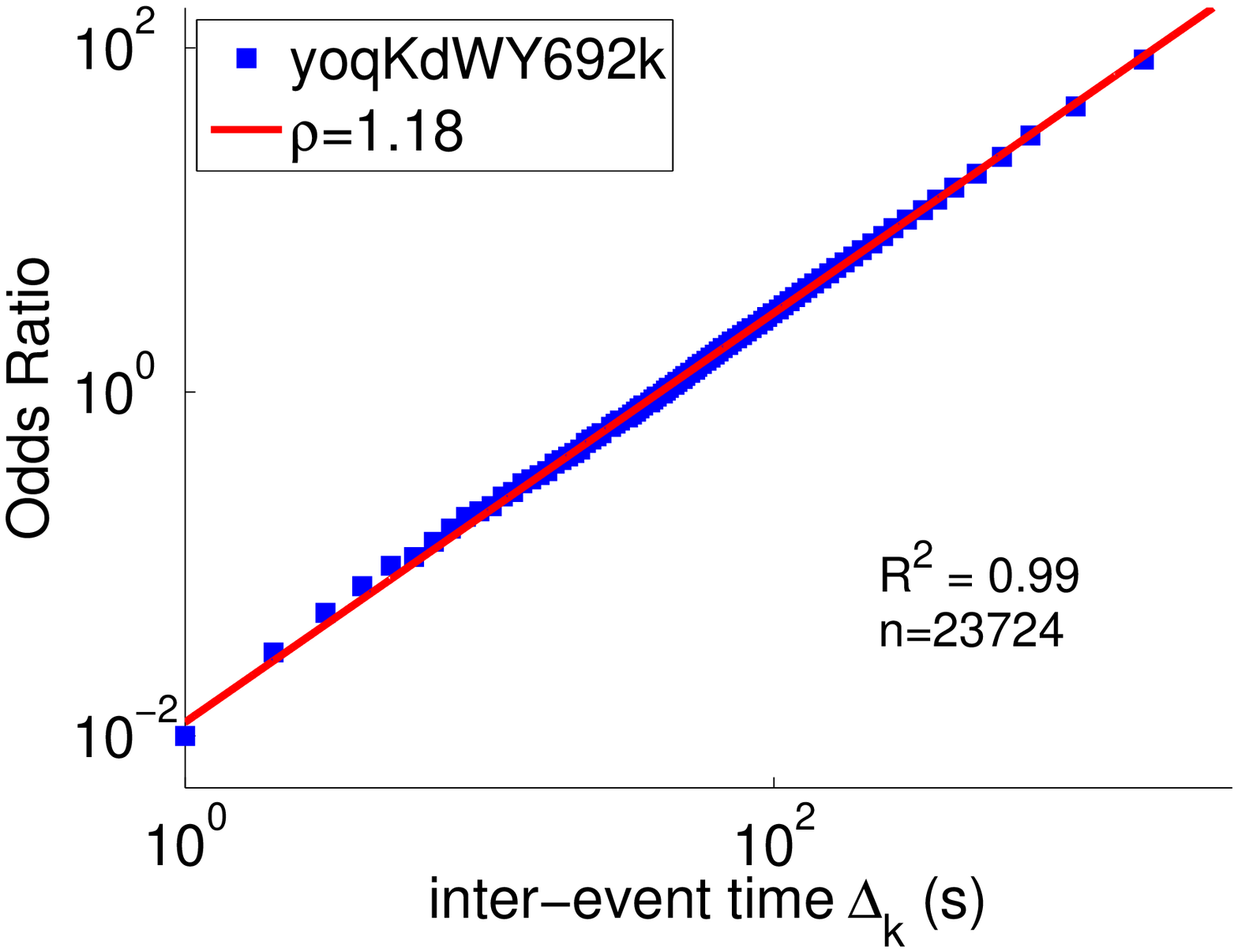}}
{\includegraphics[width=.24\textwidth]{./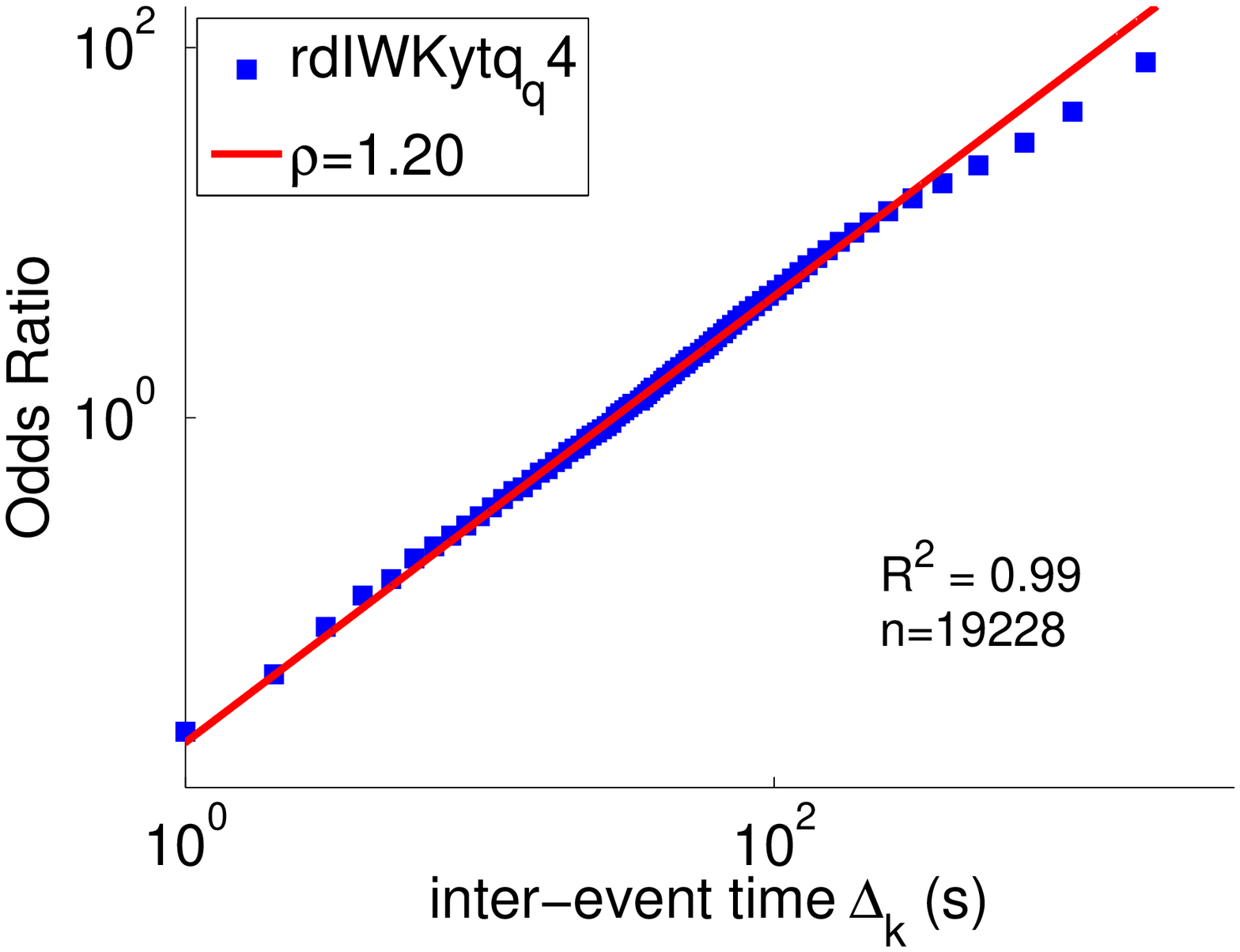}}
{\includegraphics[width=.24\textwidth]{./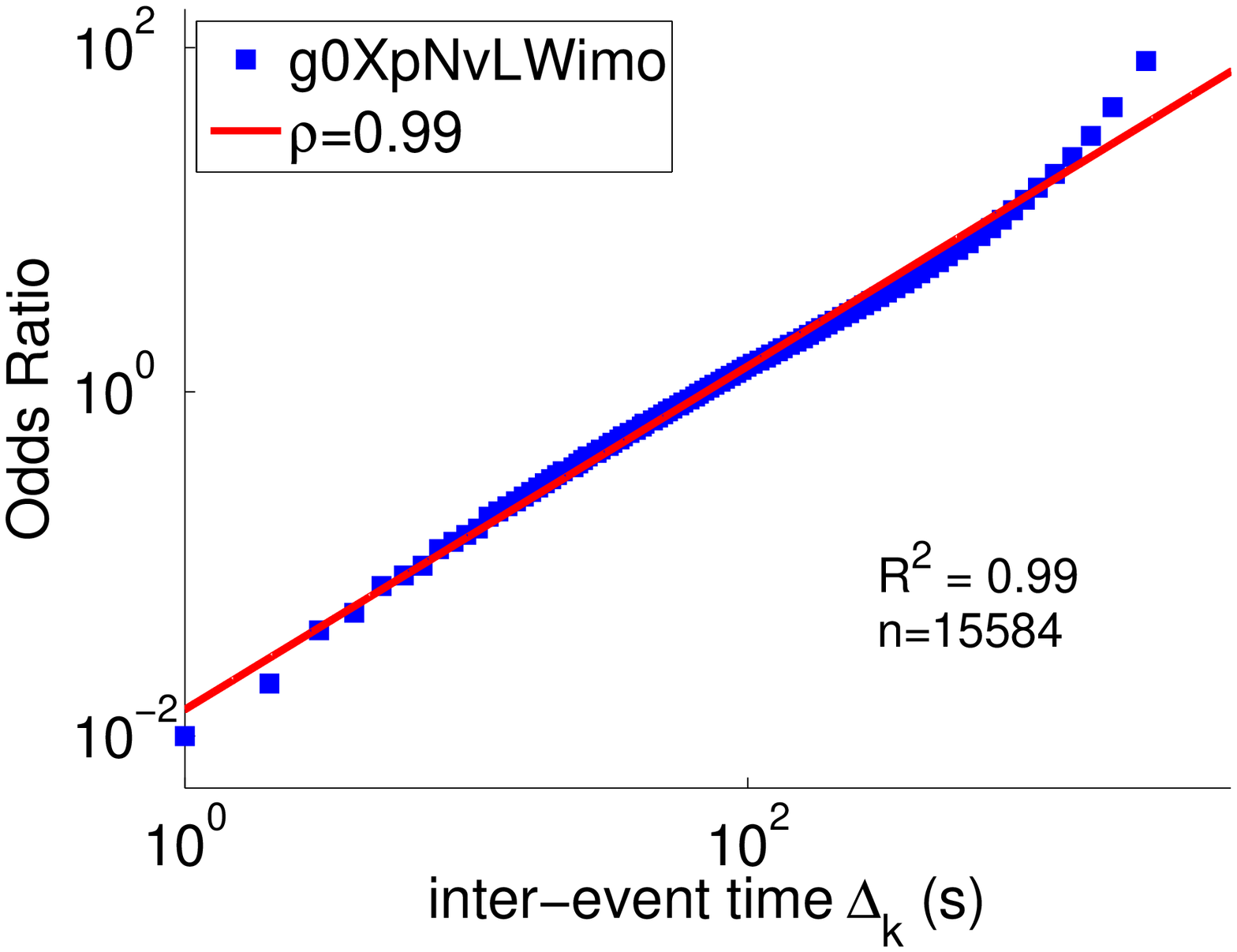}}
{\includegraphics[width=.24\textwidth]{./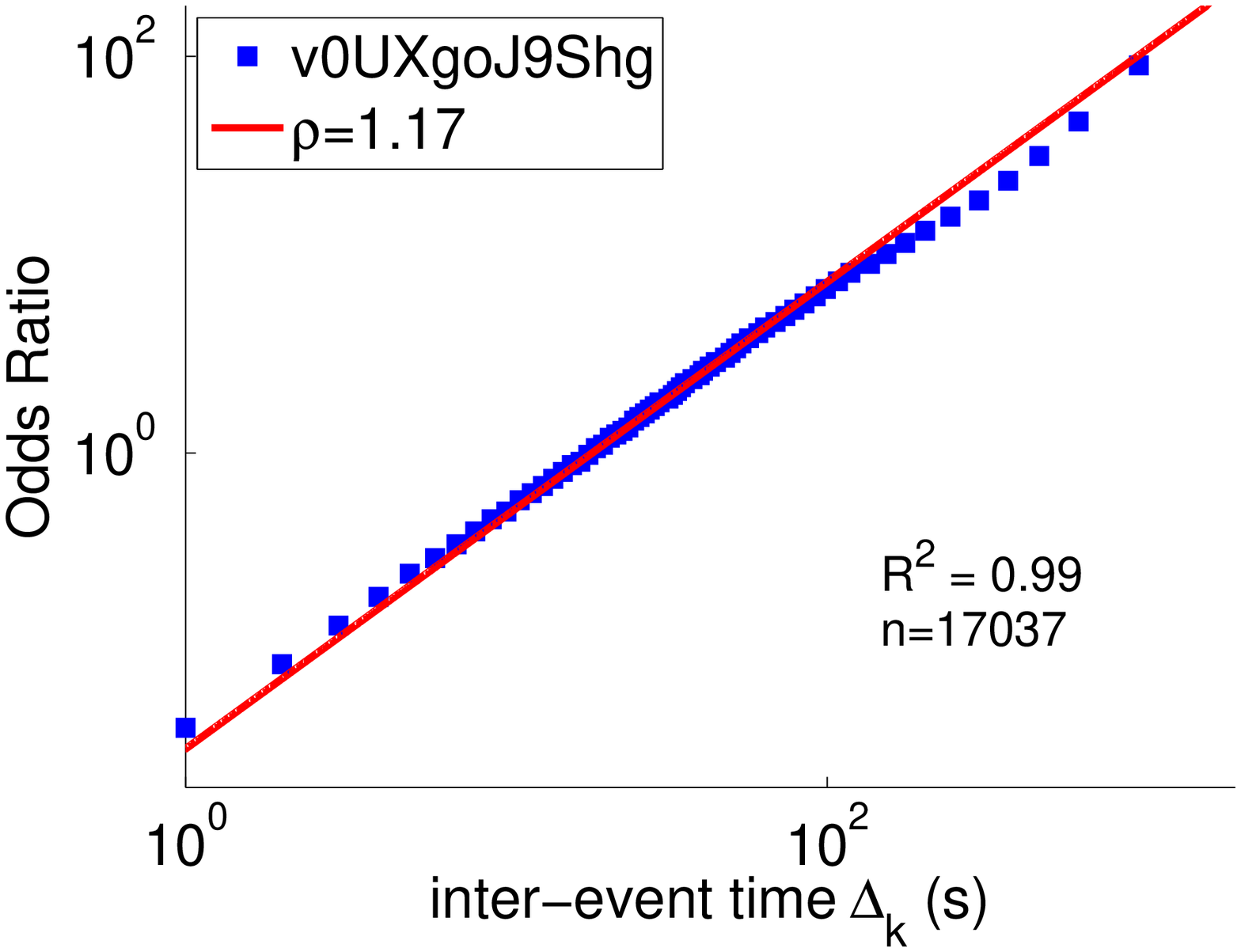}}
\caption{Sample from the Youtube dataset.}
\end{figure*}

\begin{figure*}[htpb]
\centering
{\includegraphics[width=.24\textwidth]{./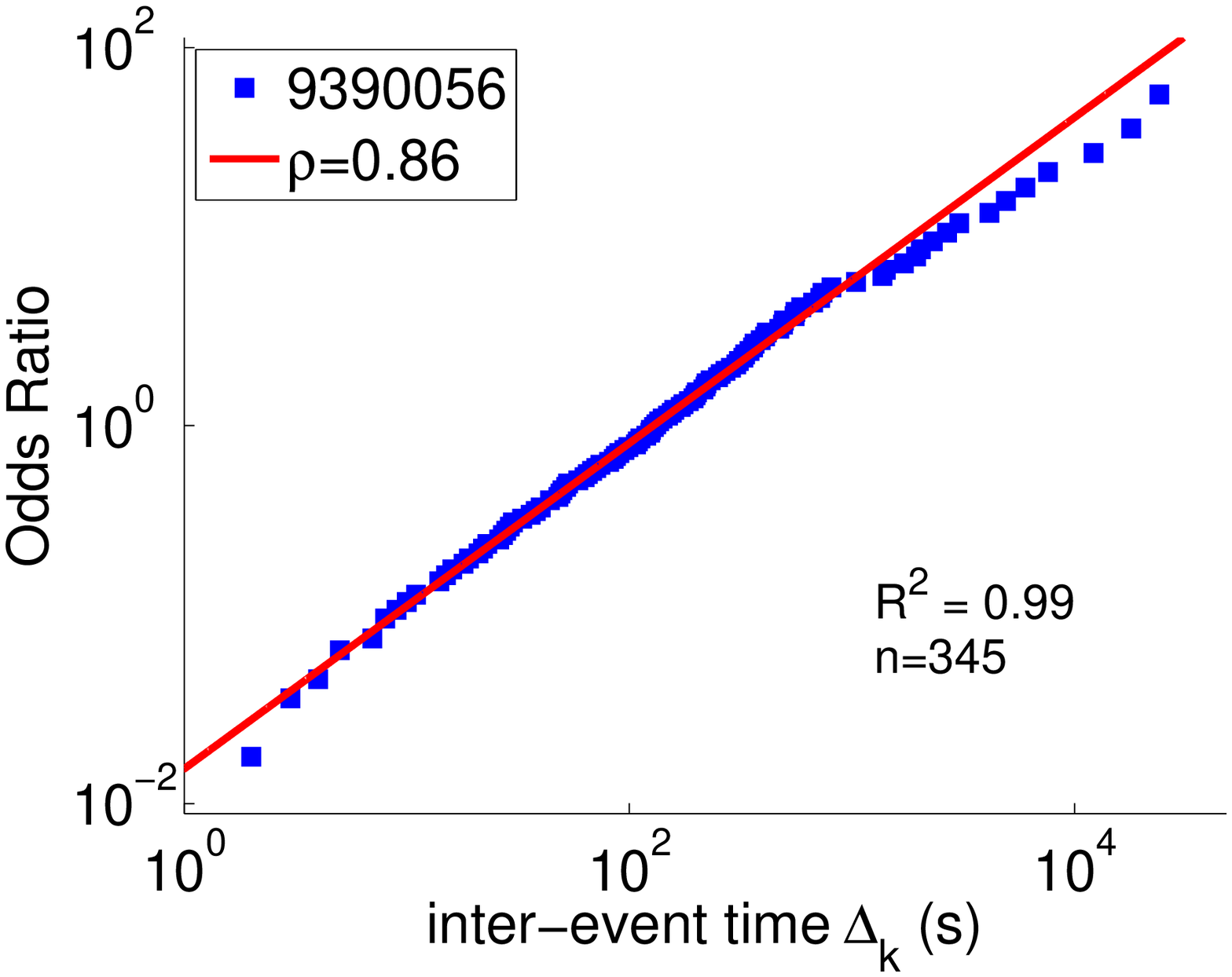}}
{\includegraphics[width=.24\textwidth]{./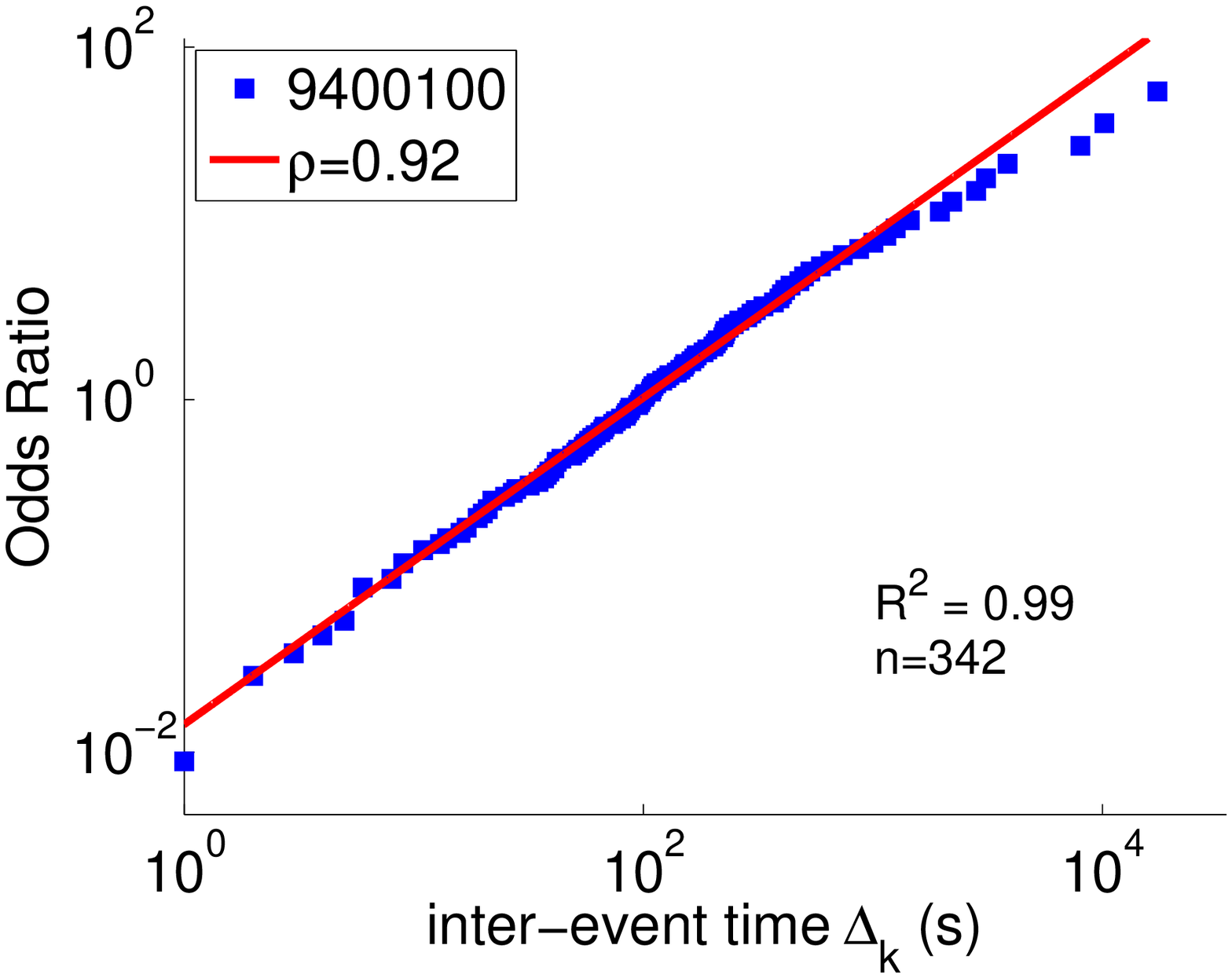}}
{\includegraphics[width=.24\textwidth]{./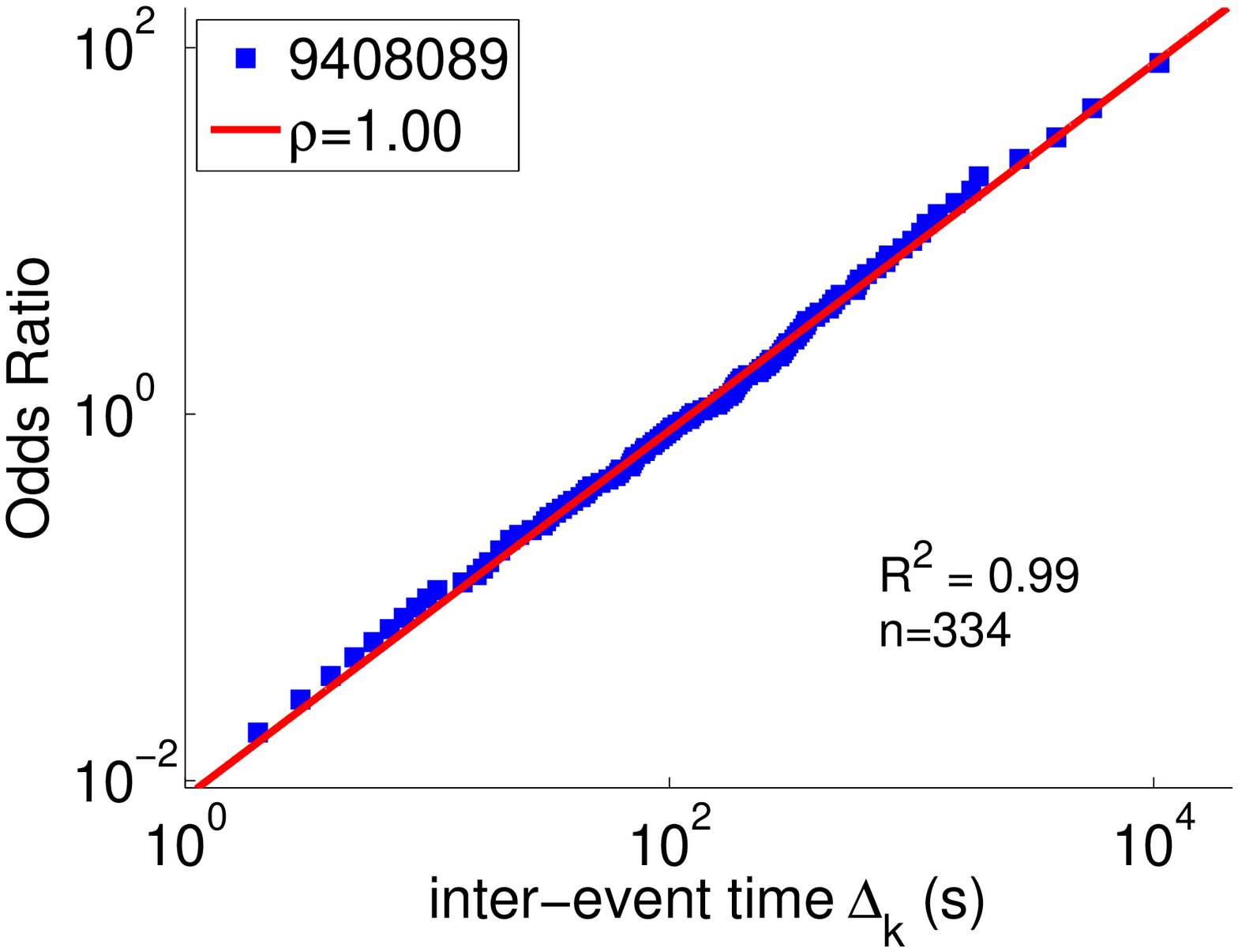}}
{\includegraphics[width=.24\textwidth]{./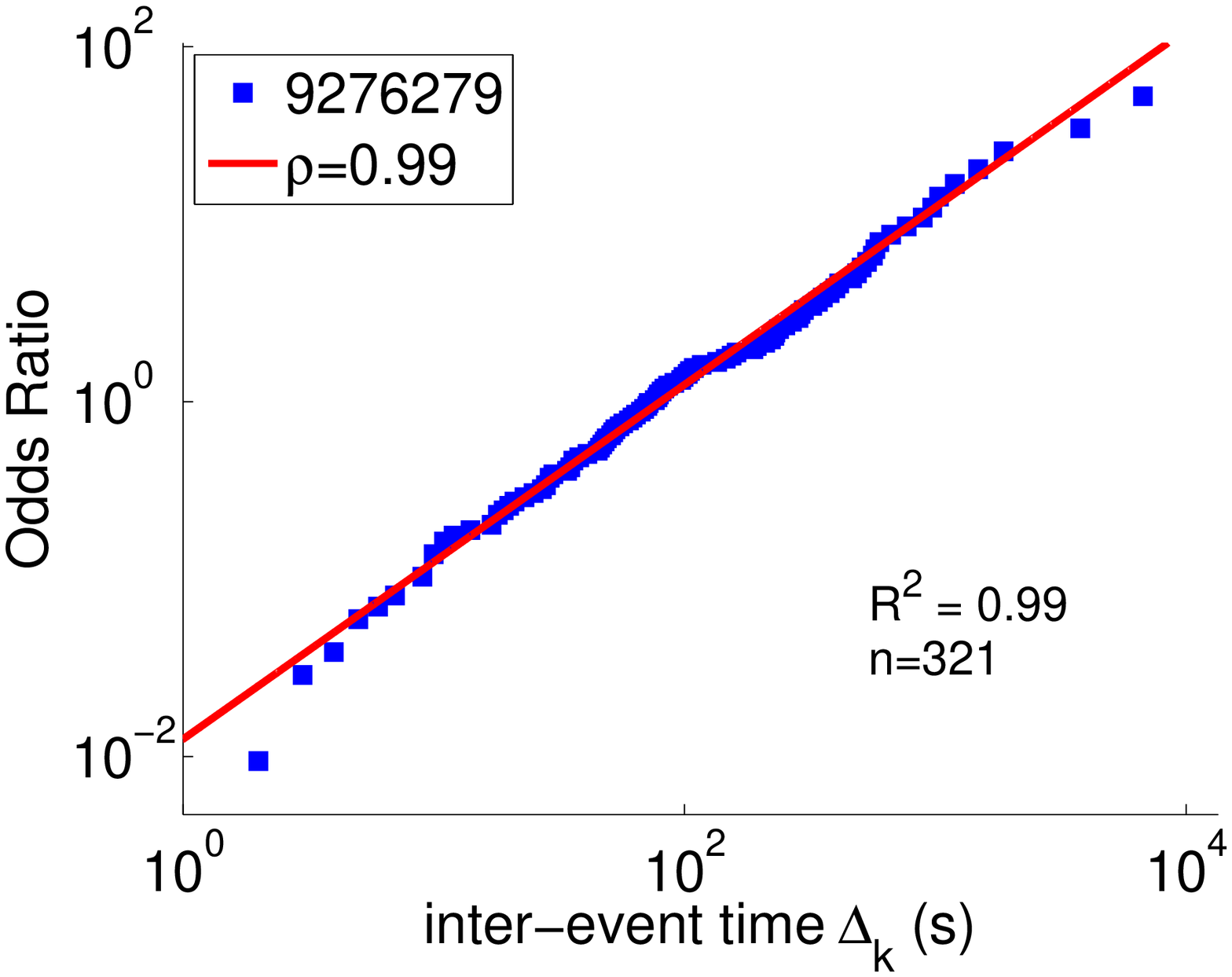}}
{\includegraphics[width=.24\textwidth]{./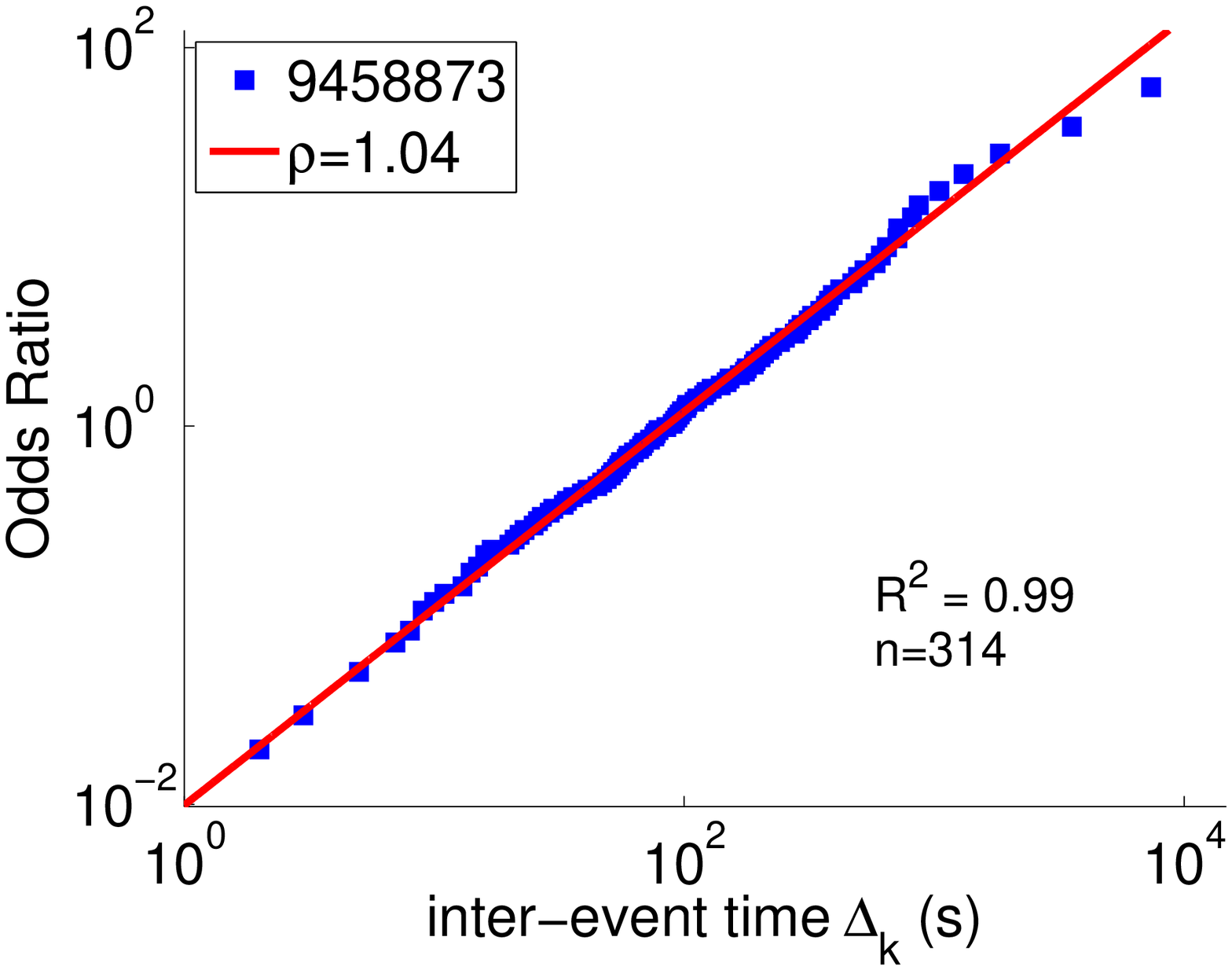}}
{\includegraphics[width=.24\textwidth]{./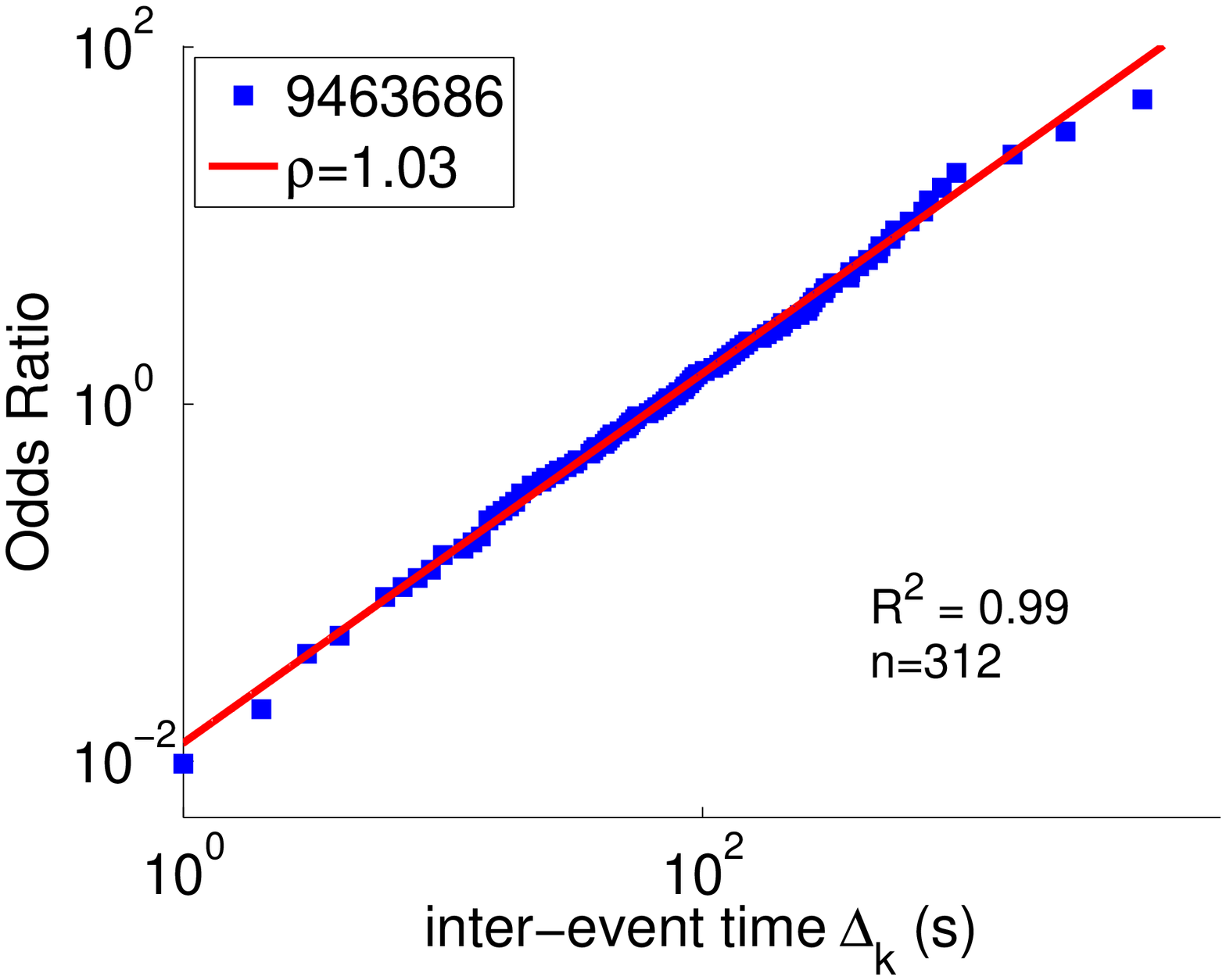}}
{\includegraphics[width=.24\textwidth]{./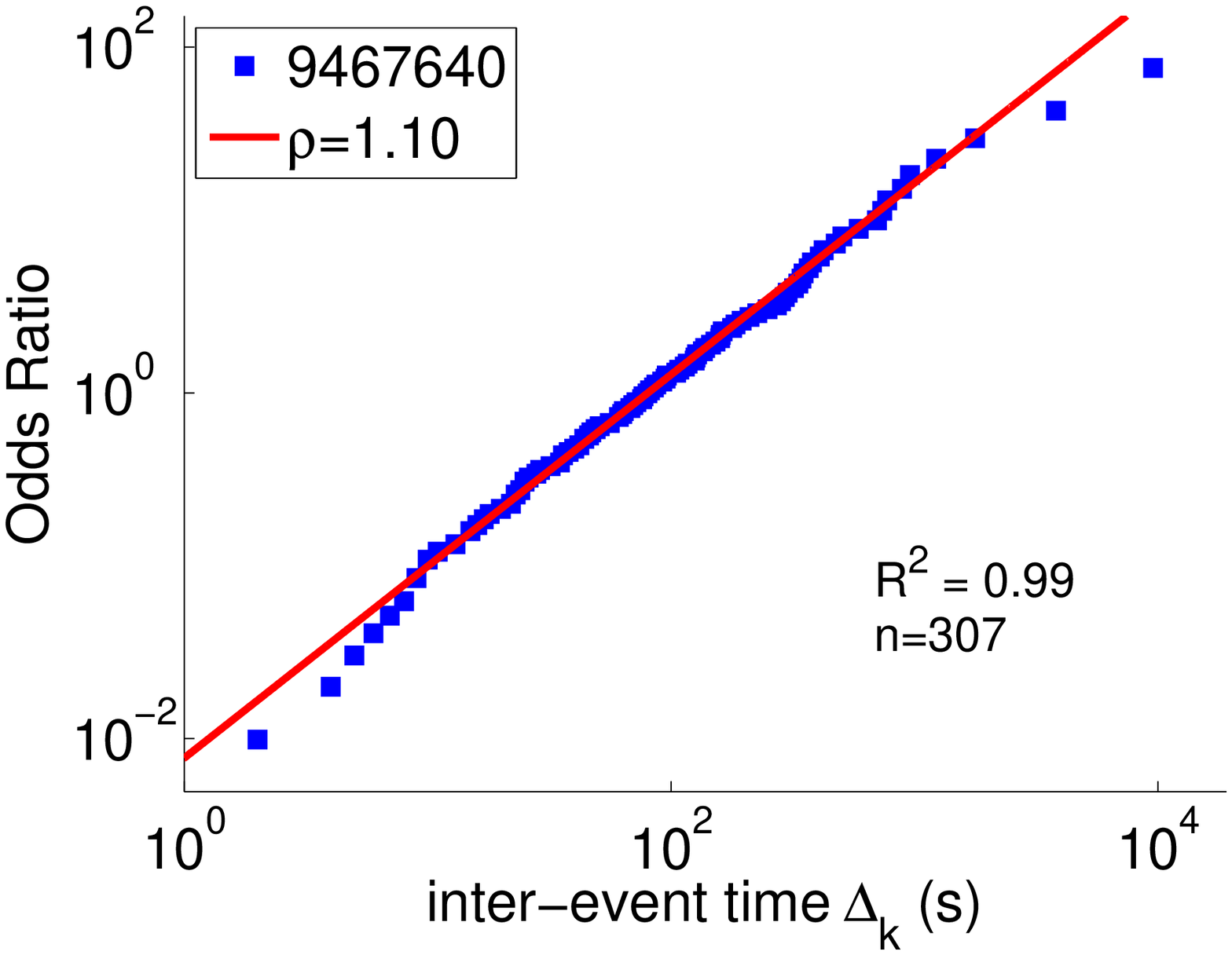}}
{\includegraphics[width=.24\textwidth]{./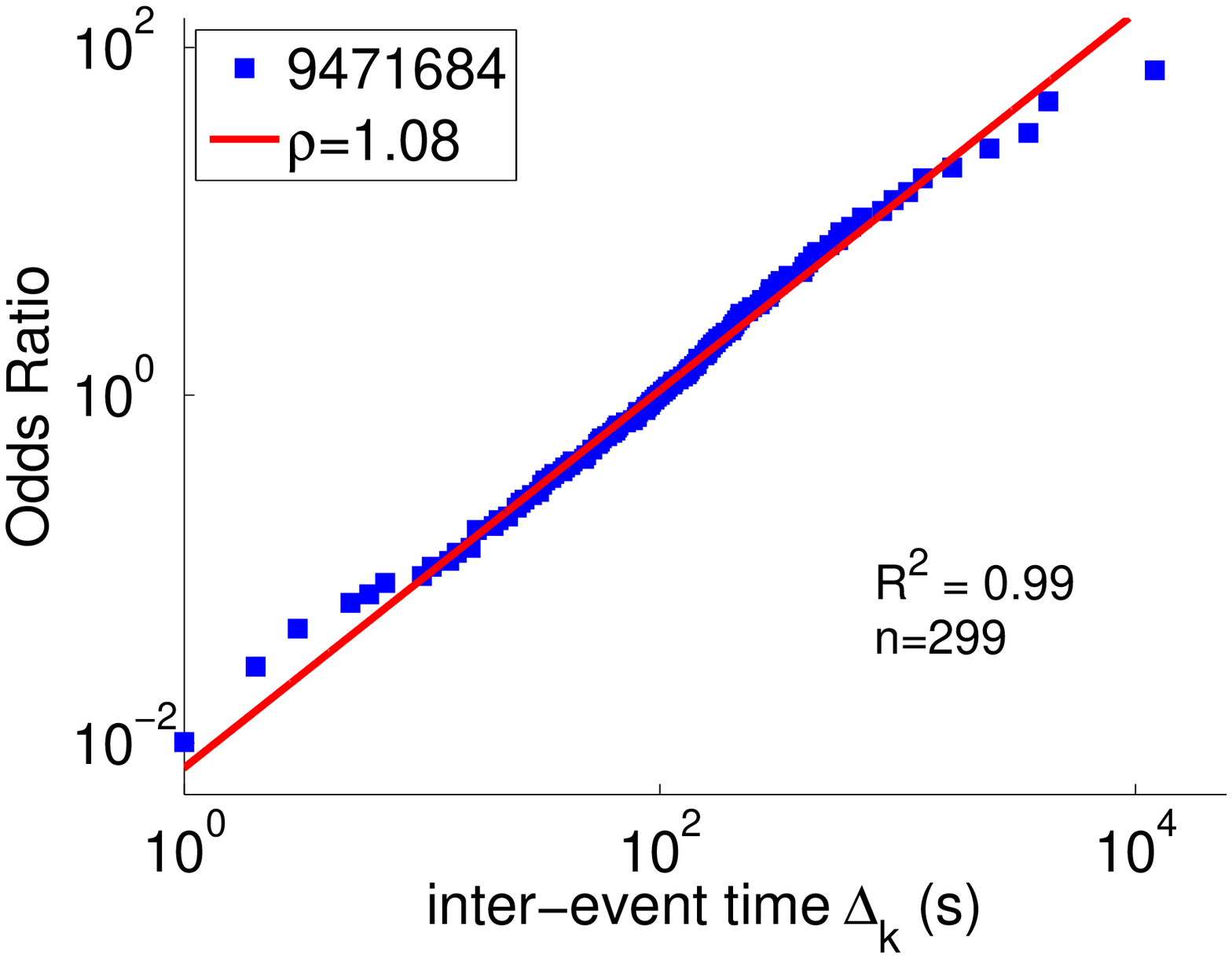}}
\caption{Sample from the Digg dataset.}
\end{figure*}

\begin{figure*}[htpb]
\centering
{\includegraphics[width=.24\textwidth]{./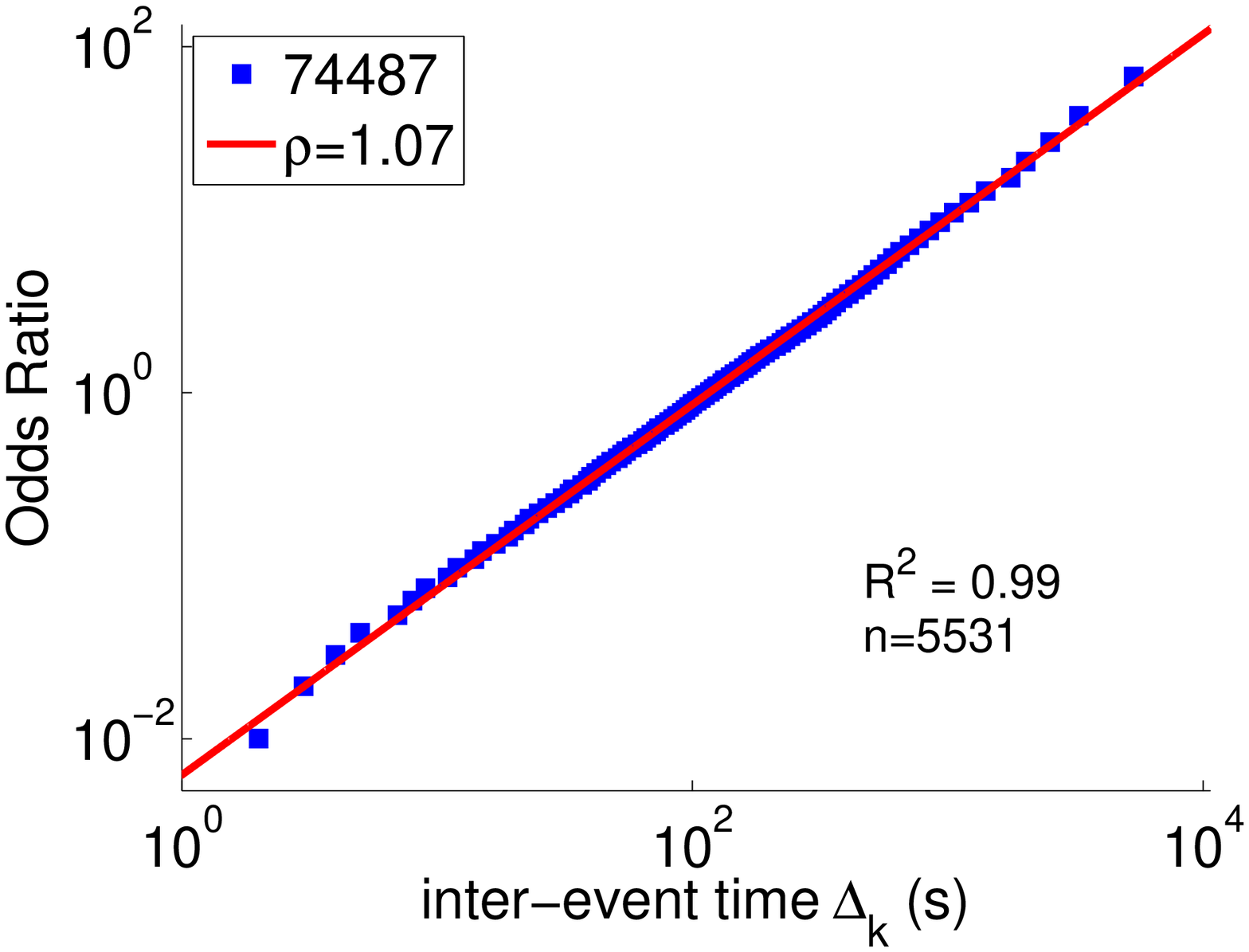}}
{\includegraphics[width=.24\textwidth]{./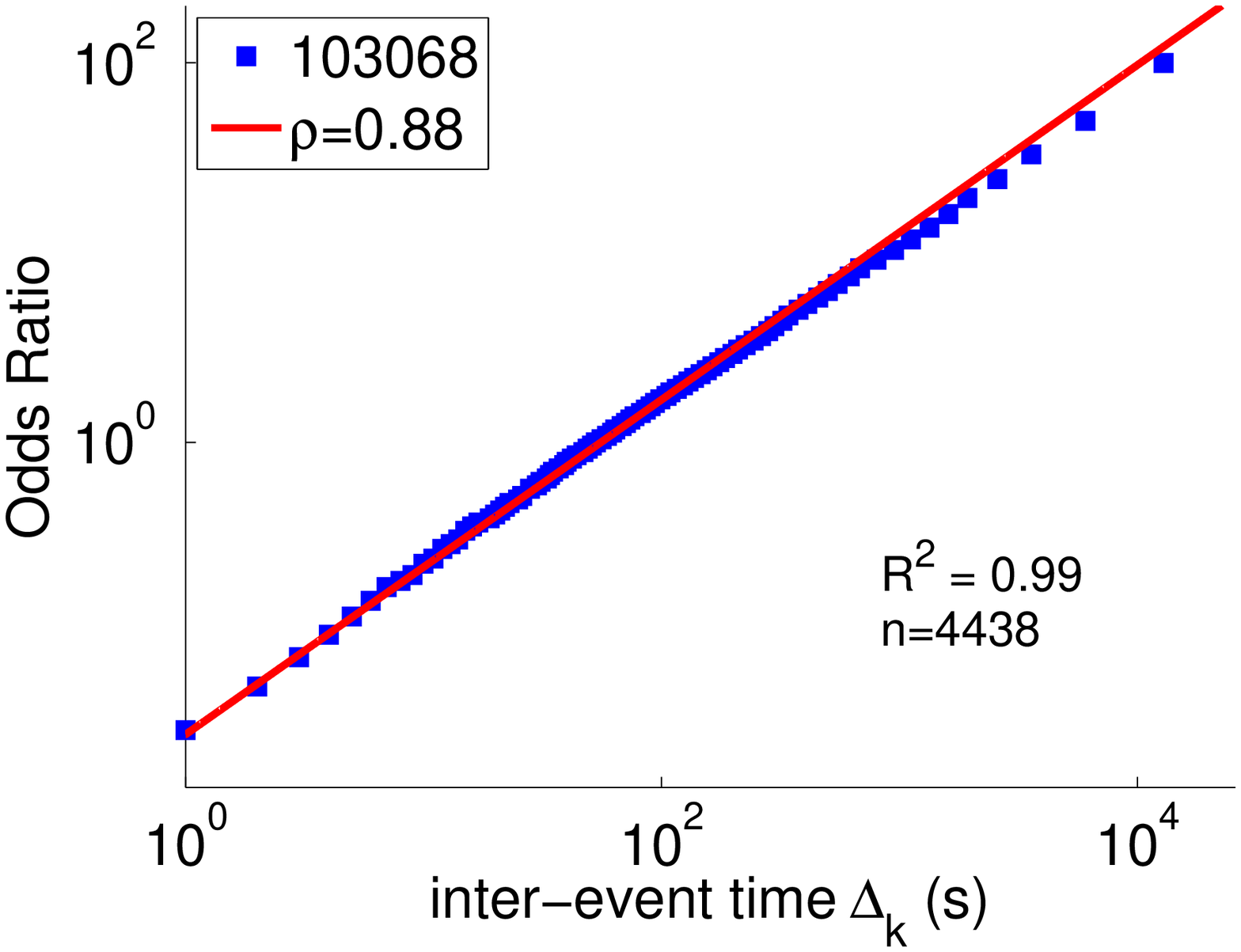}}
{\includegraphics[width=.24\textwidth]{./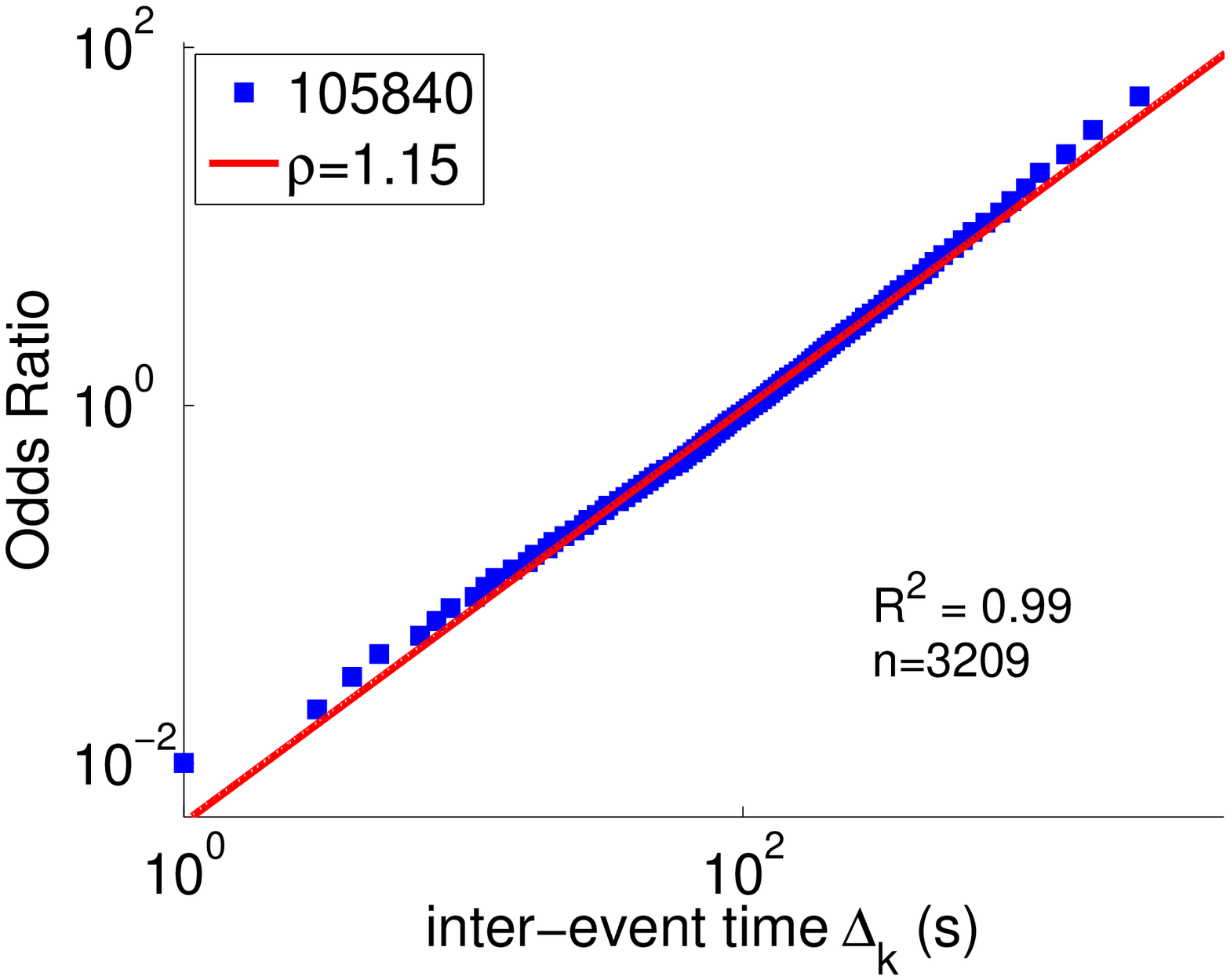}}
{\includegraphics[width=.24\textwidth]{./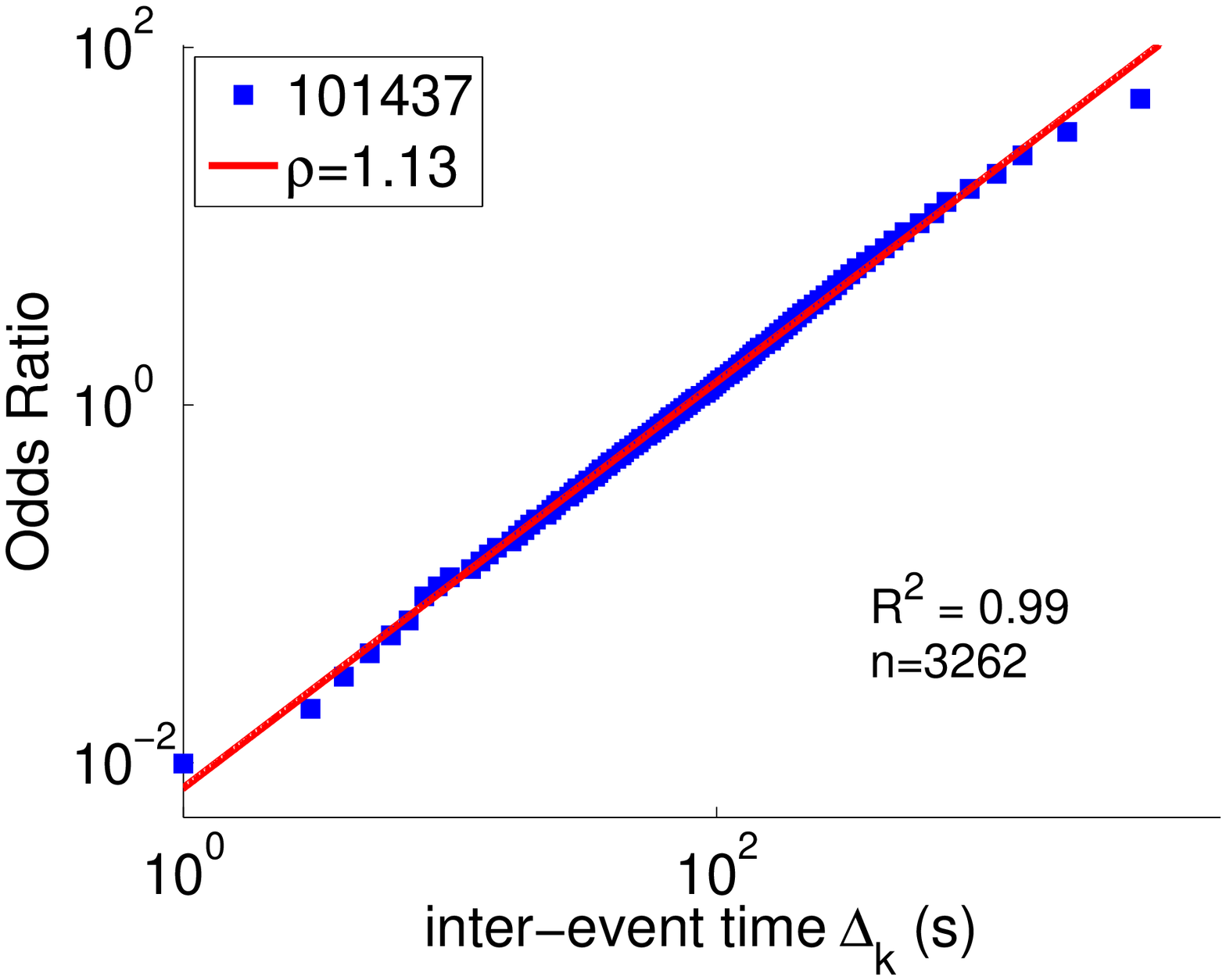}}
{\includegraphics[width=.24\textwidth]{./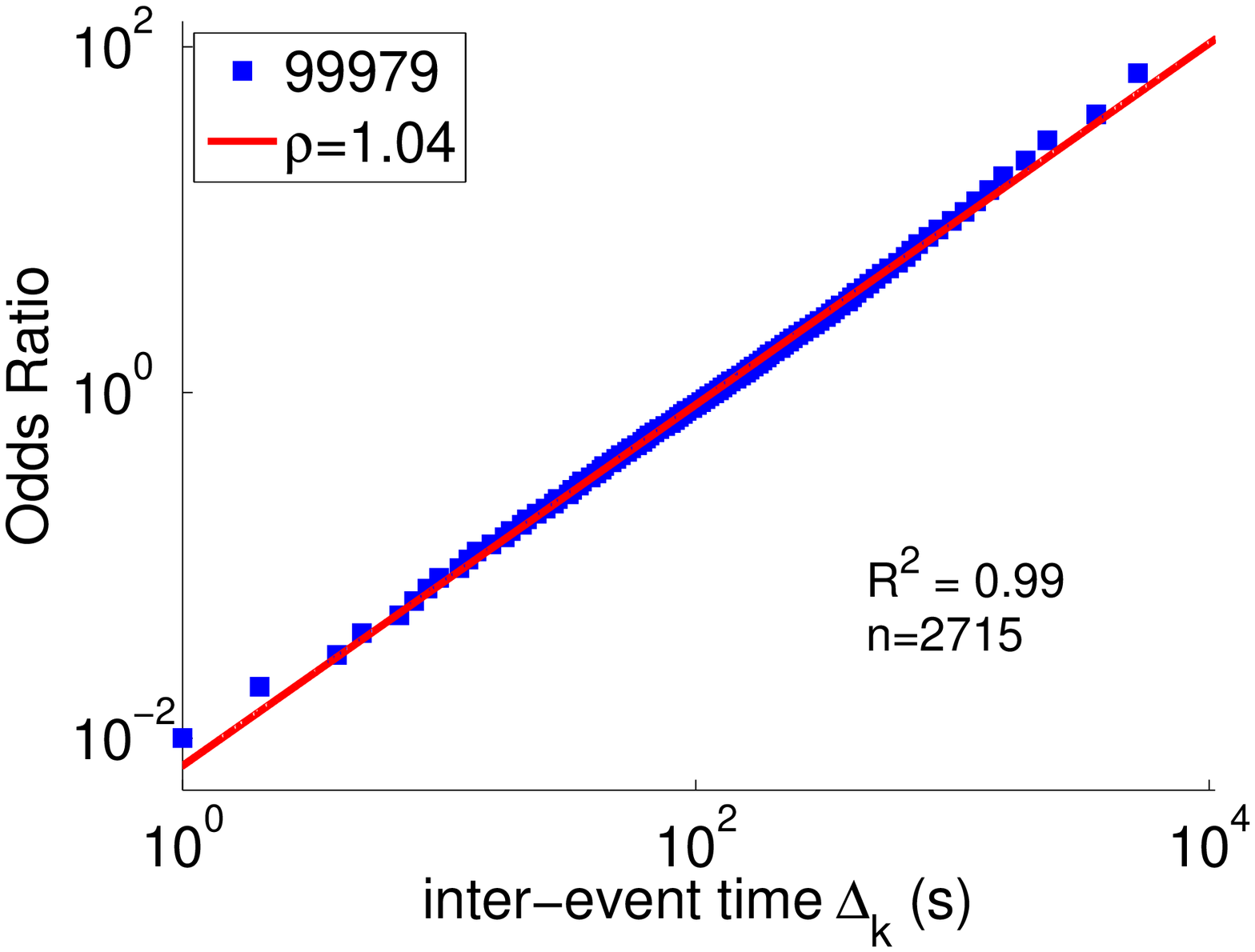}}
{\includegraphics[width=.24\textwidth]{./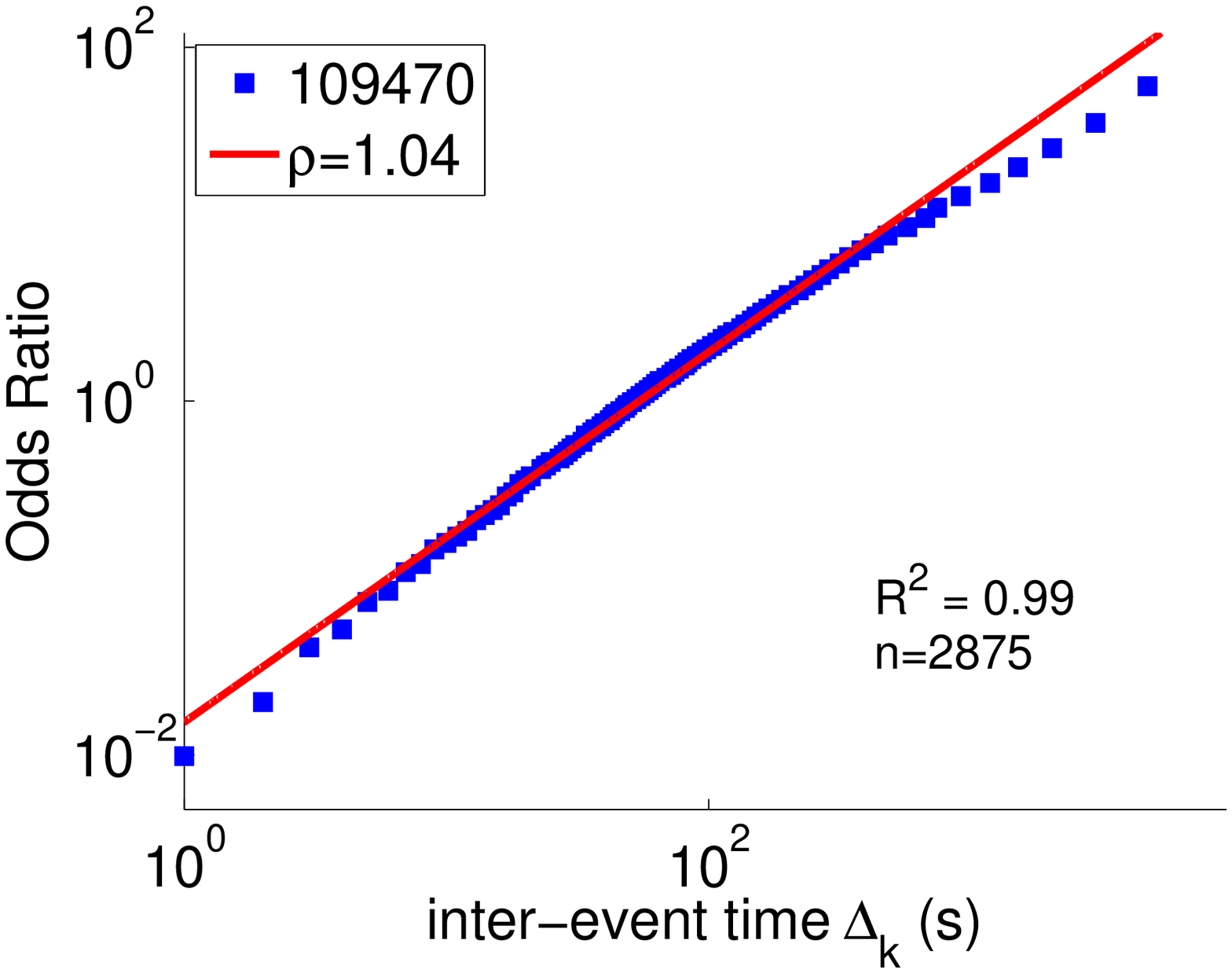}}
{\includegraphics[width=.24\textwidth]{./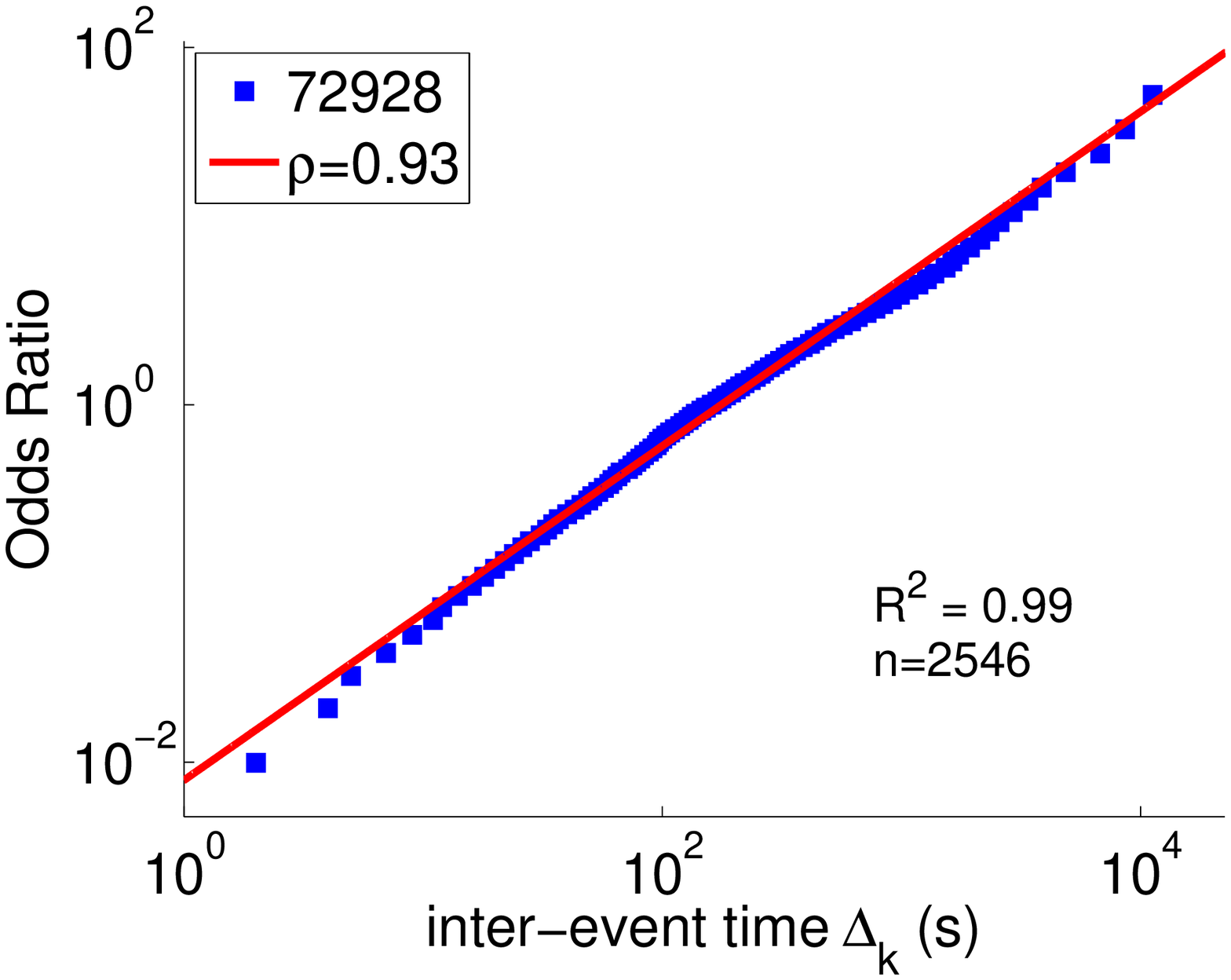}}
{\includegraphics[width=.24\textwidth]{./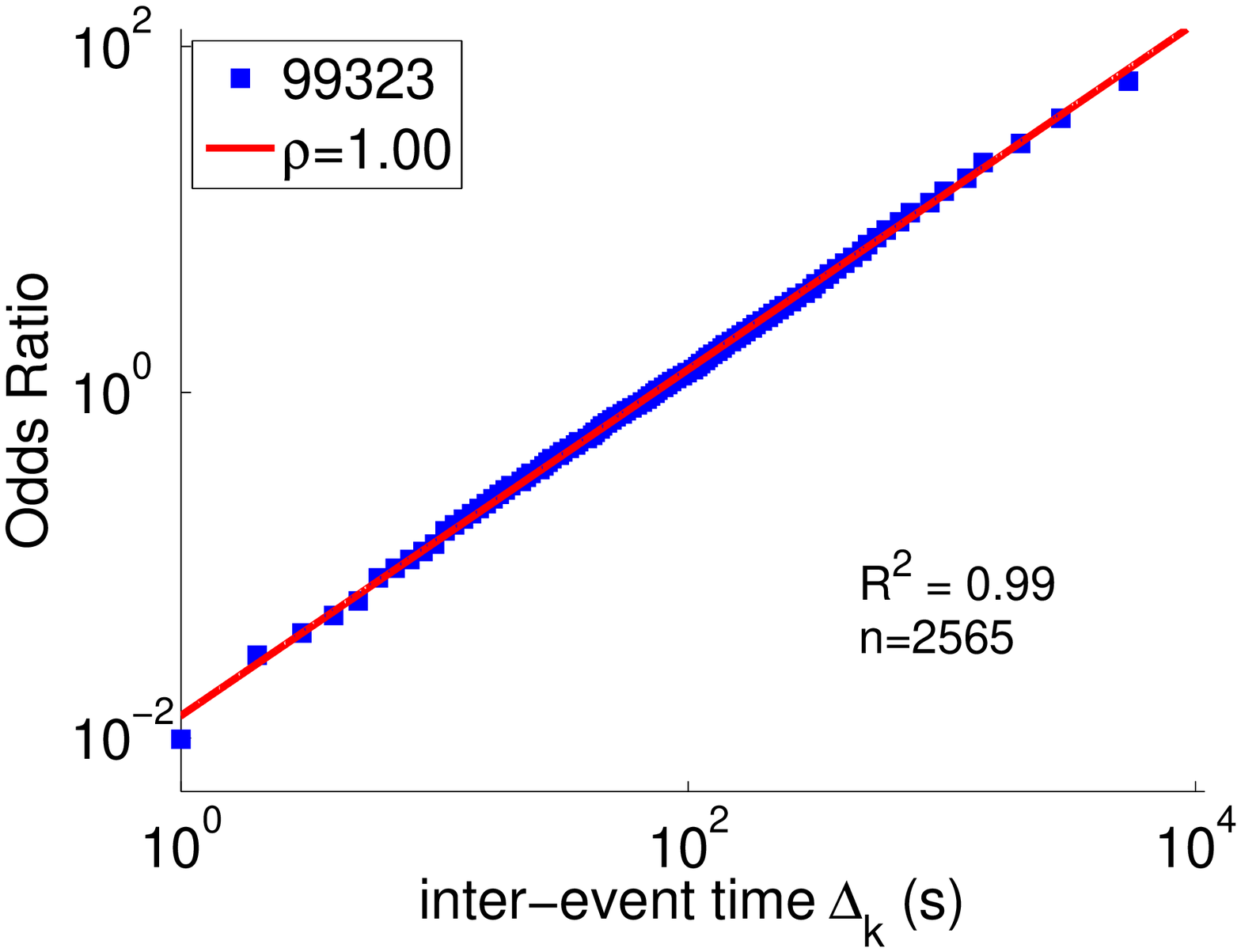}}
\caption{Sample from the MetaFilter dataset.}
\end{figure*}

\begin{figure*}[htpb]
\centering
{\includegraphics[width=.24\textwidth]{./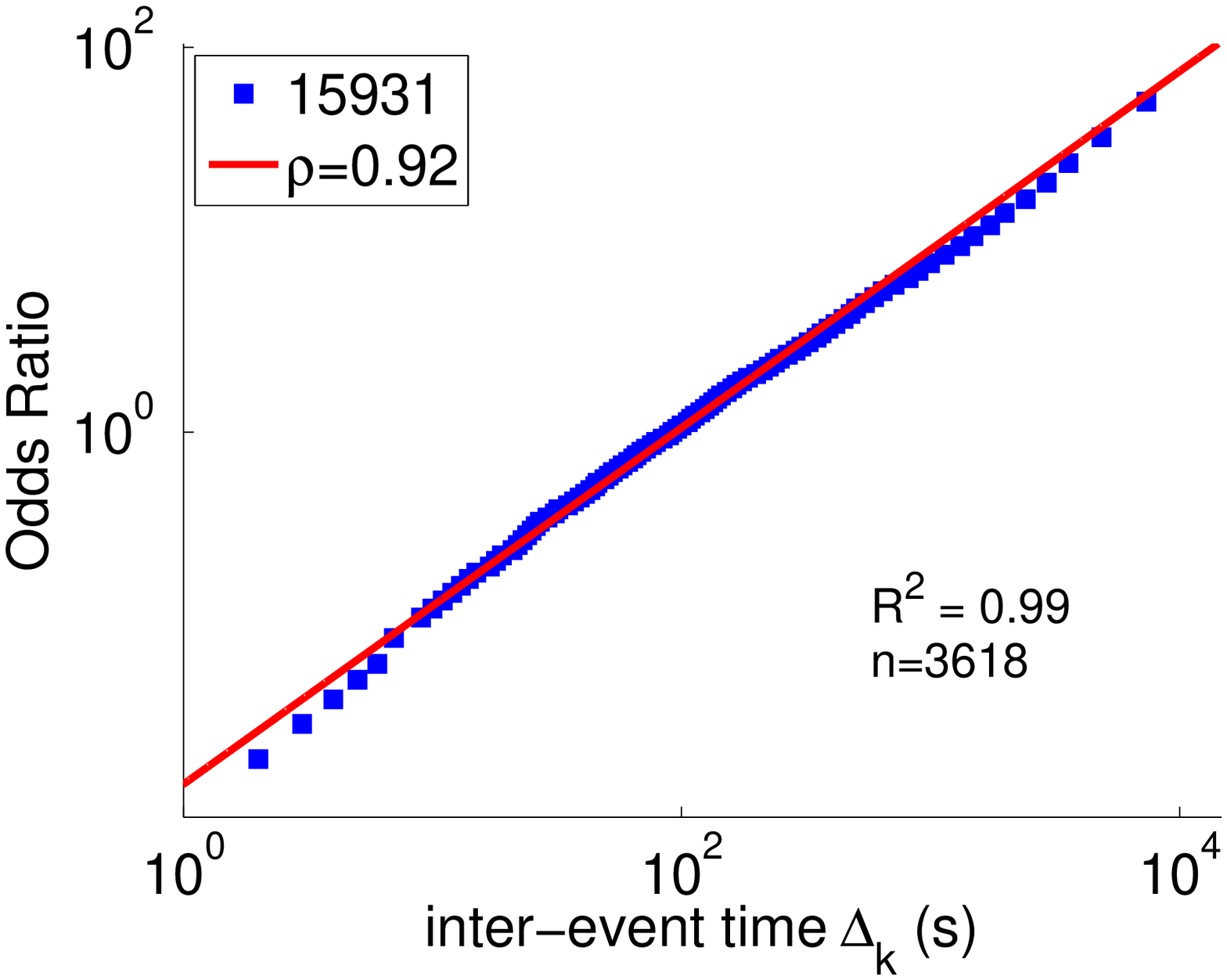}}
{\includegraphics[width=.24\textwidth]{./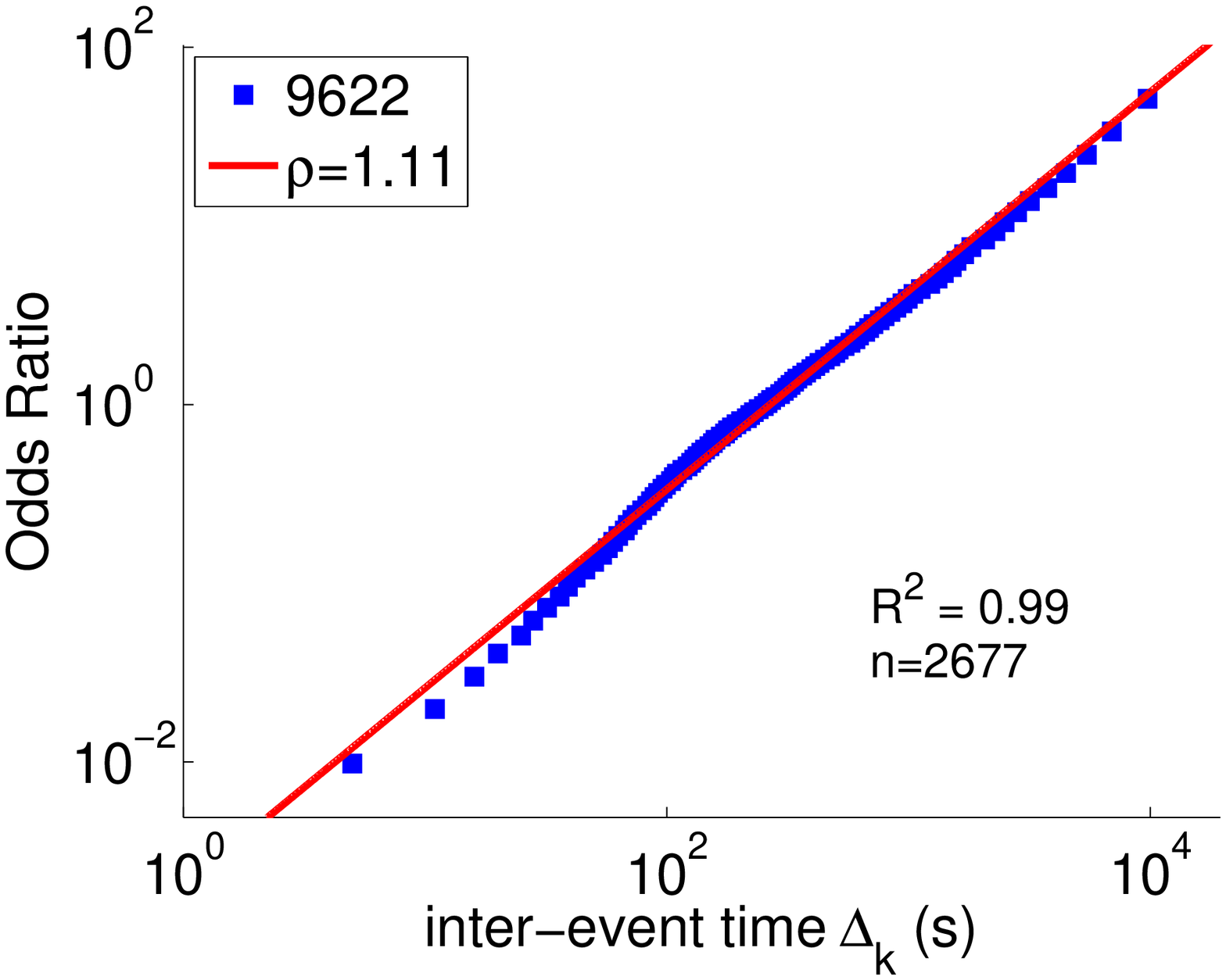}}
{\includegraphics[width=.24\textwidth]{./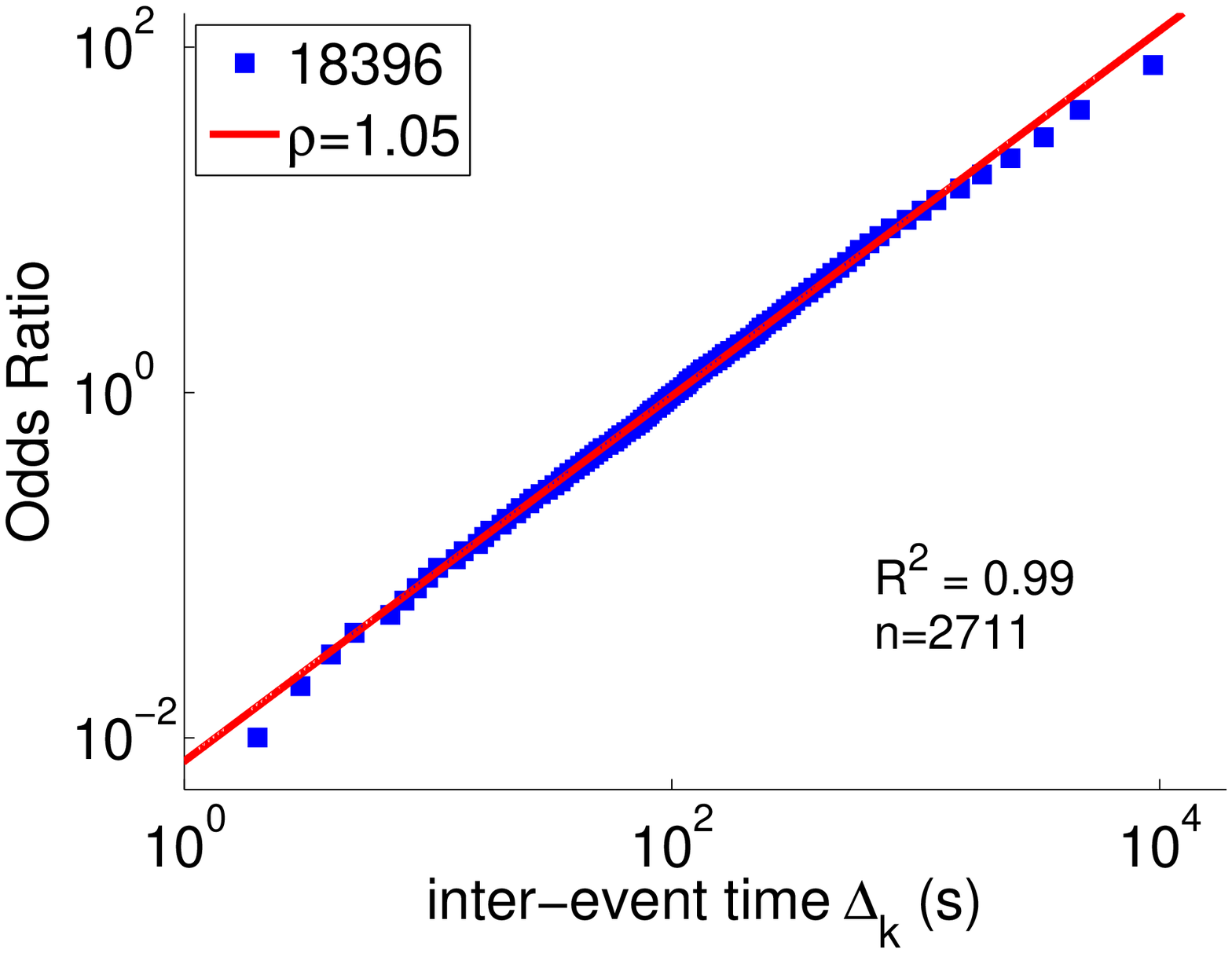}}
{\includegraphics[width=.24\textwidth]{./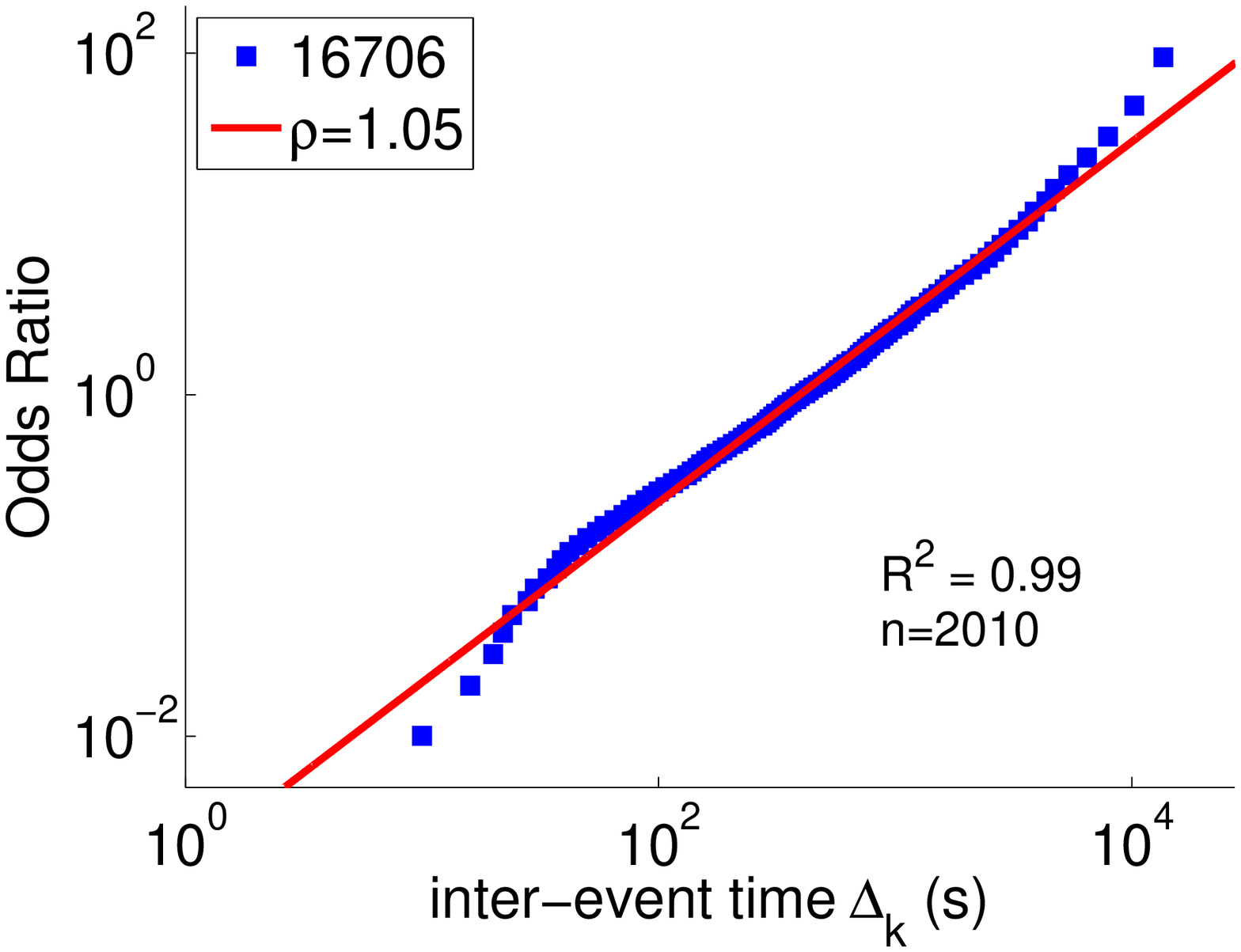}}
{\includegraphics[width=.24\textwidth]{./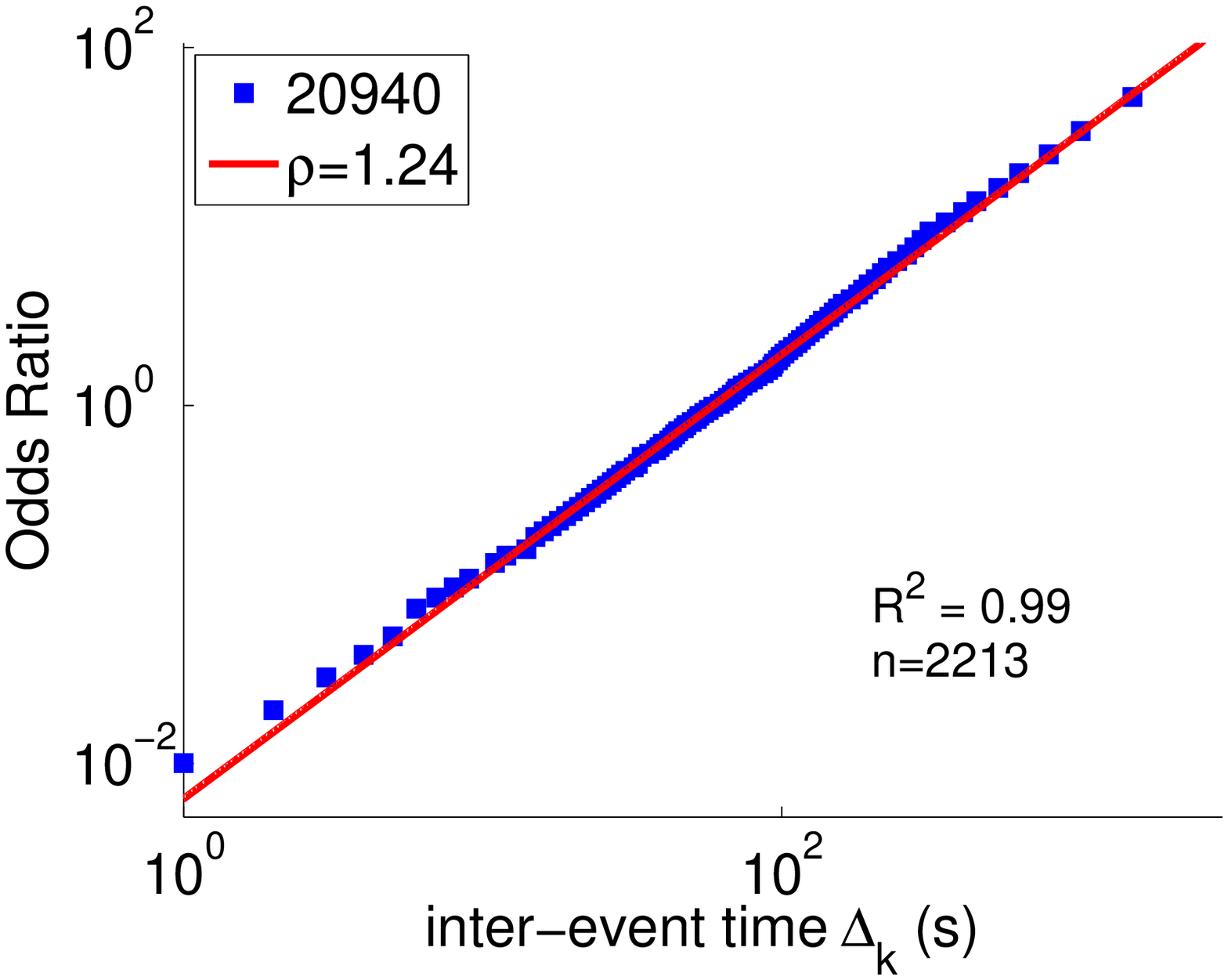}}
{\includegraphics[width=.24\textwidth]{./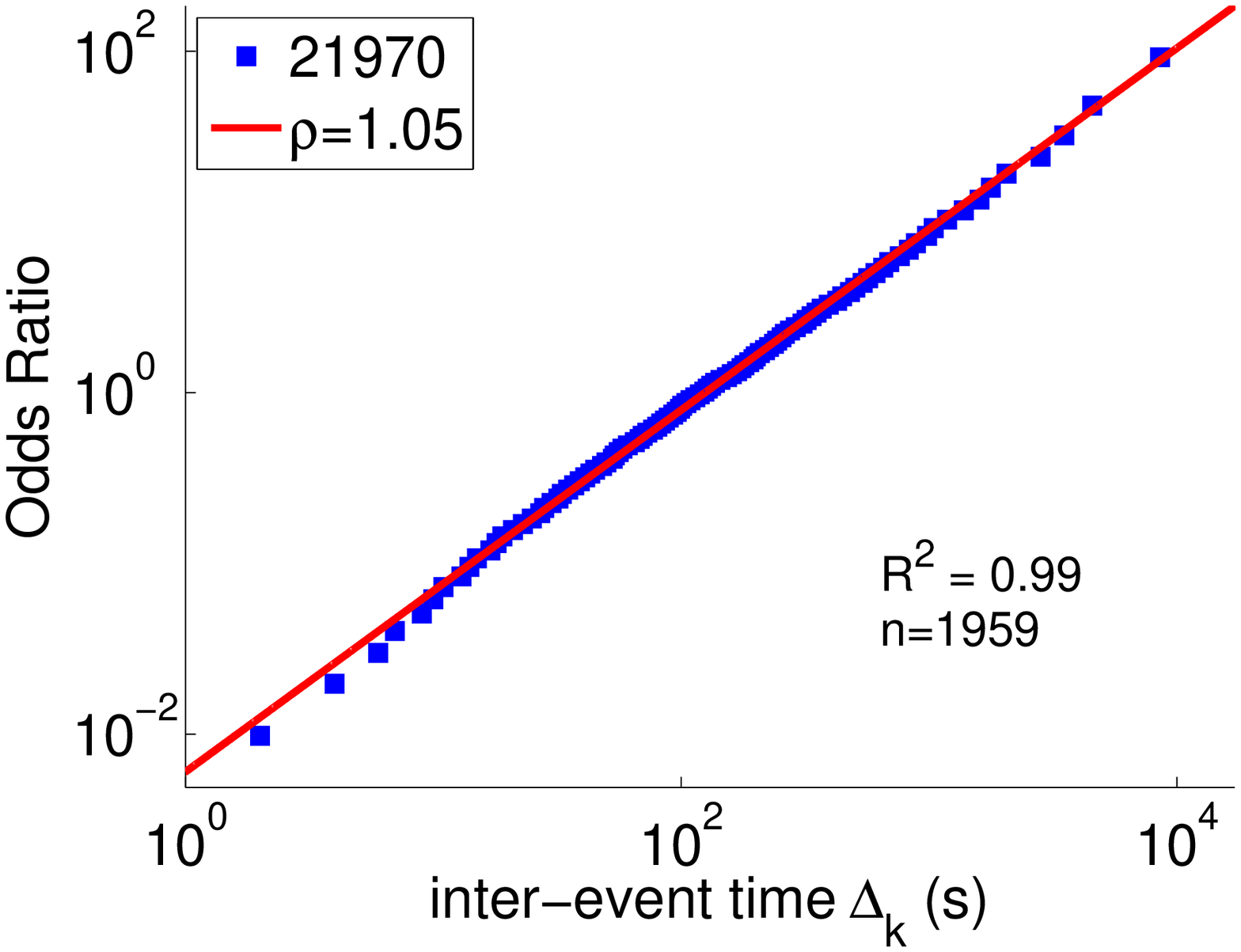}}
{\includegraphics[width=.24\textwidth]{./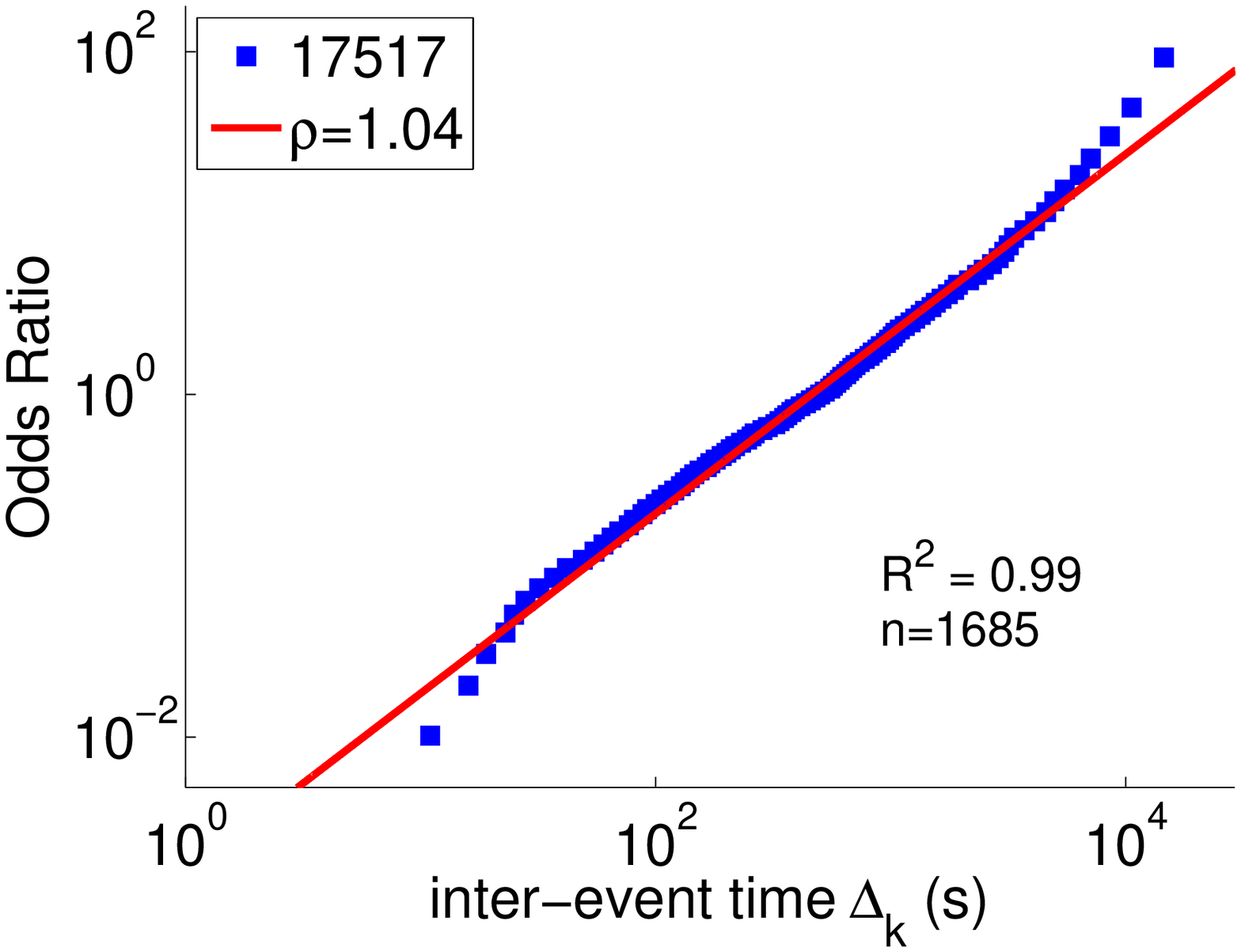}}
{\includegraphics[width=.24\textwidth]{./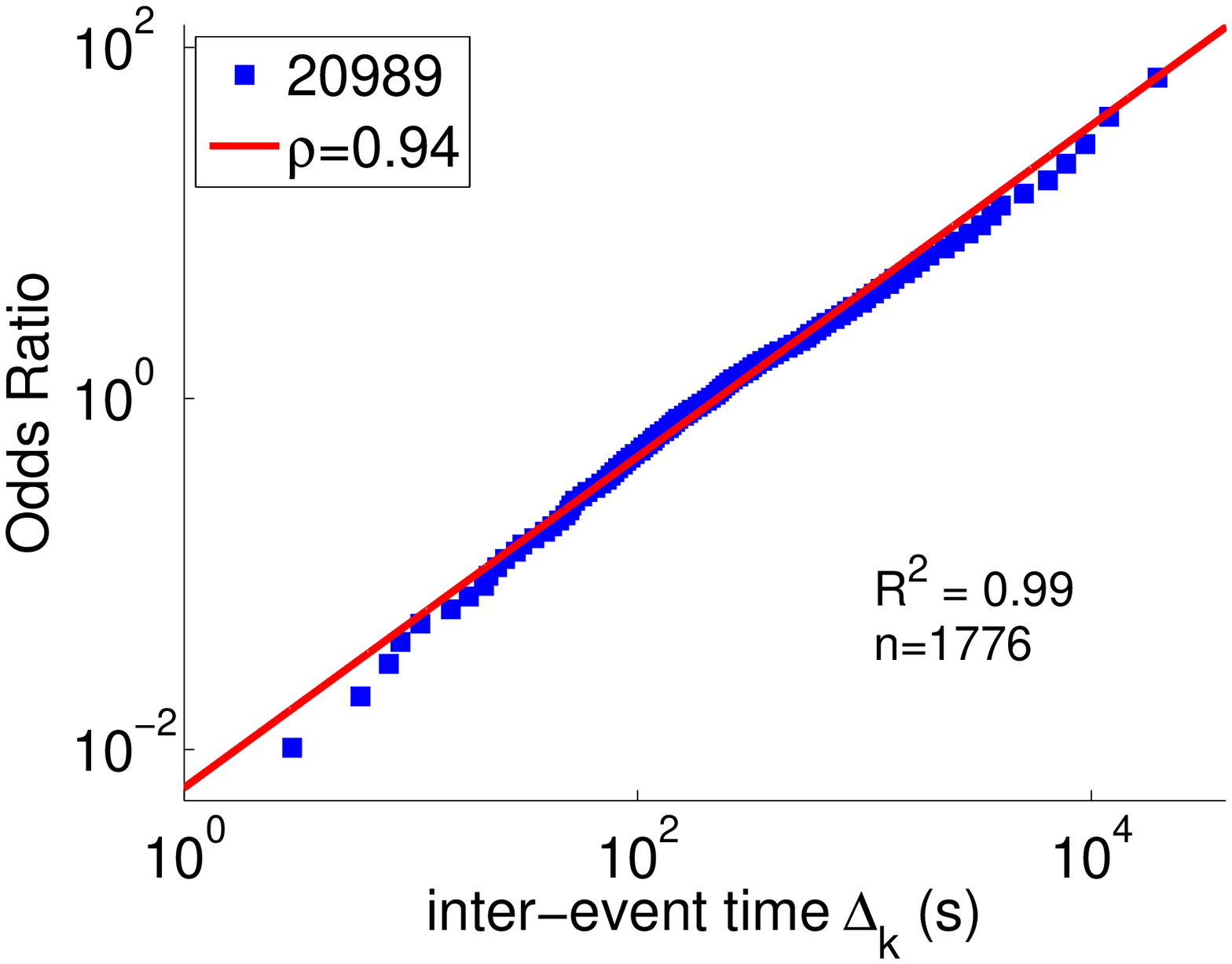}}
\caption{Sample from the MetaTalk dataset.}
\end{figure*}

\begin{figure*}[htpb]
\centering
{\includegraphics[width=.24\textwidth]{./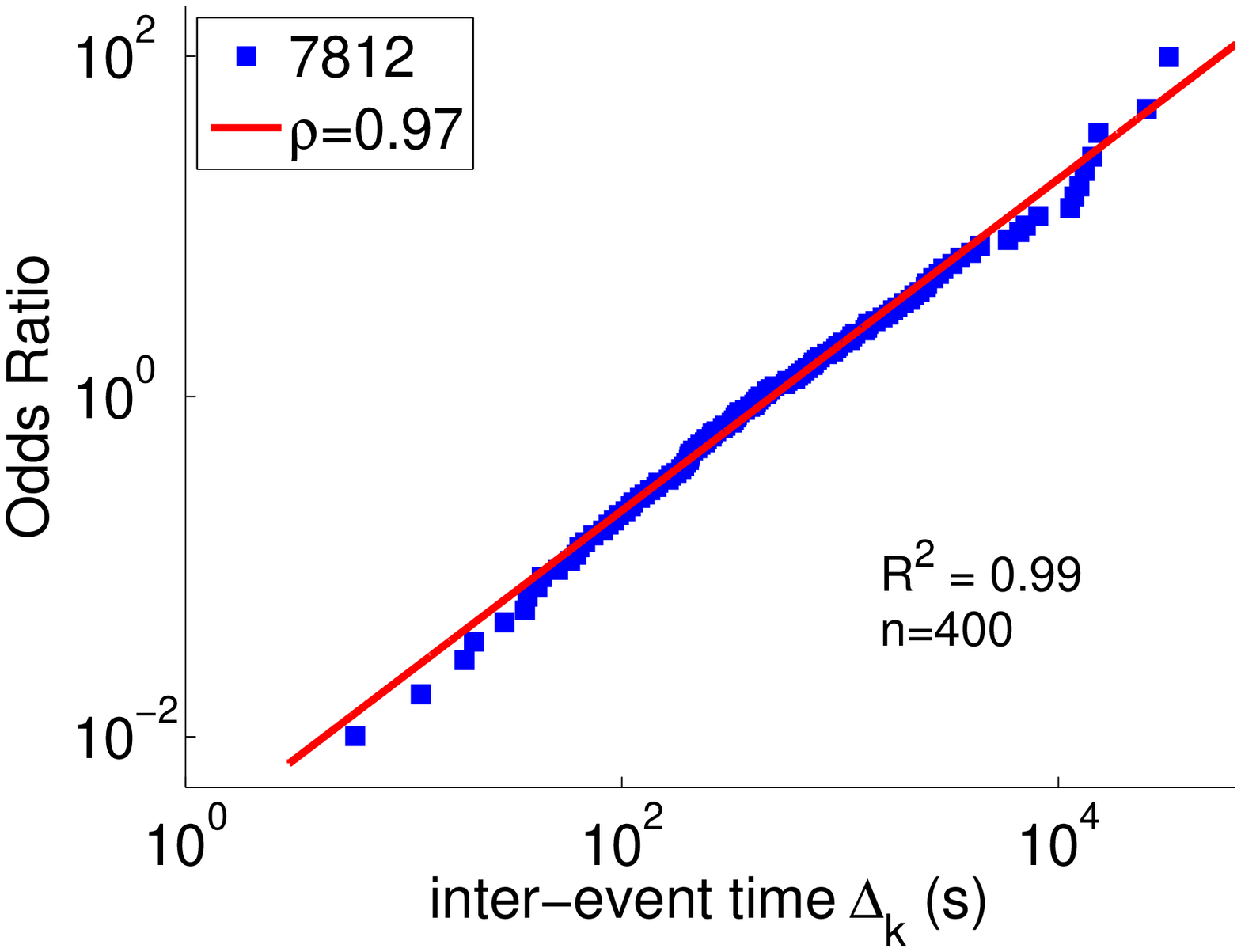}}
{\includegraphics[width=.24\textwidth]{./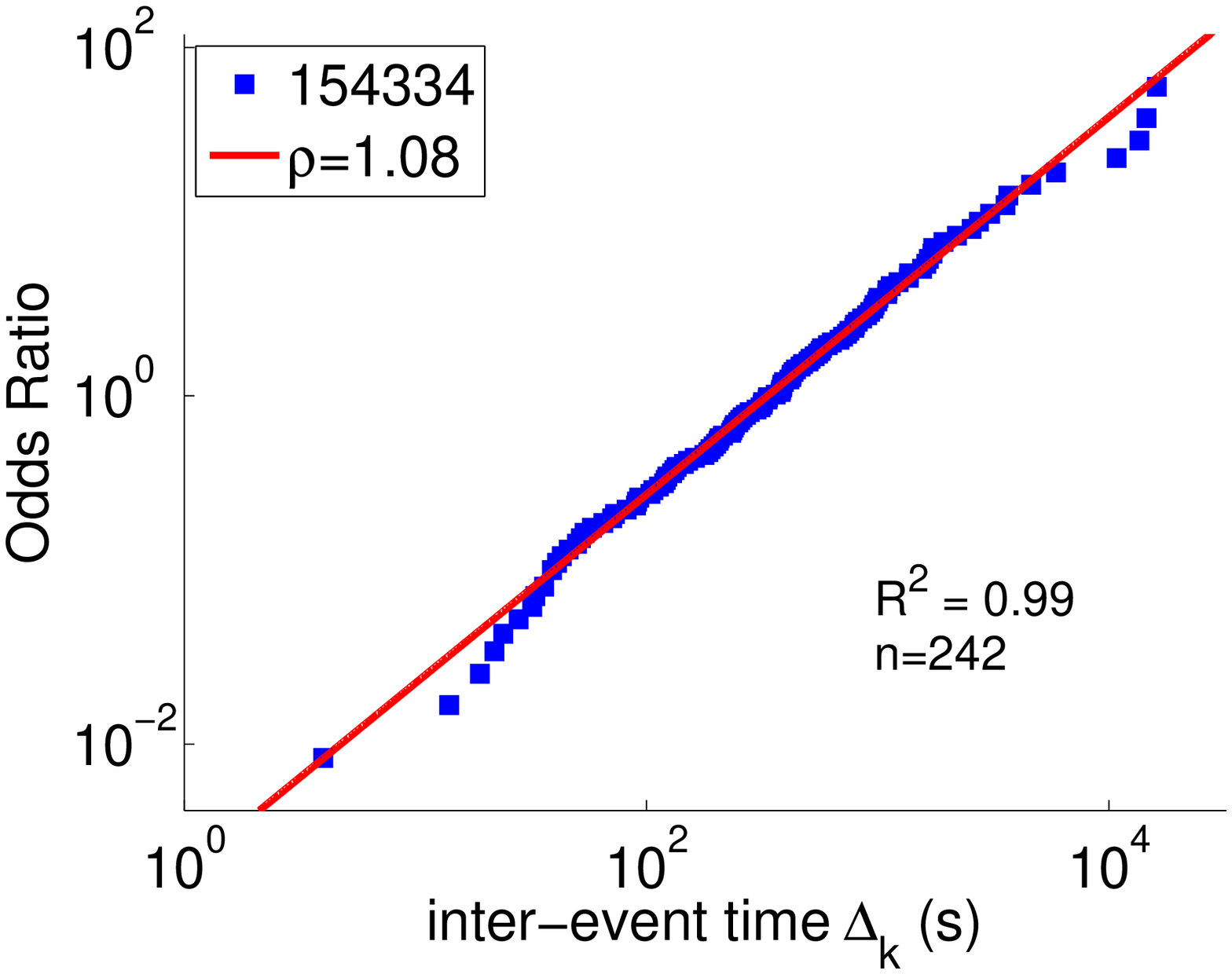}}
{\includegraphics[width=.24\textwidth]{./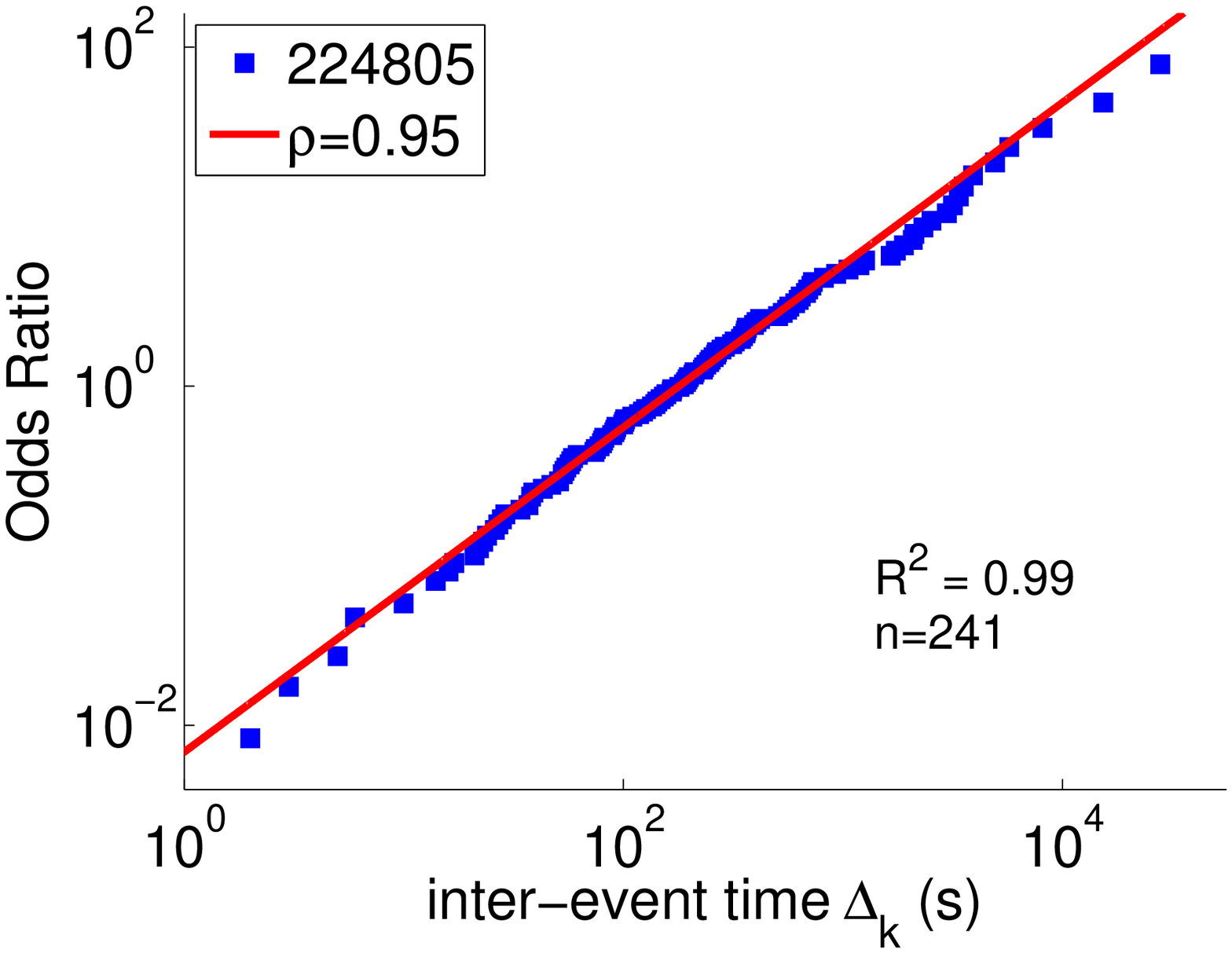}}
{\includegraphics[width=.24\textwidth]{./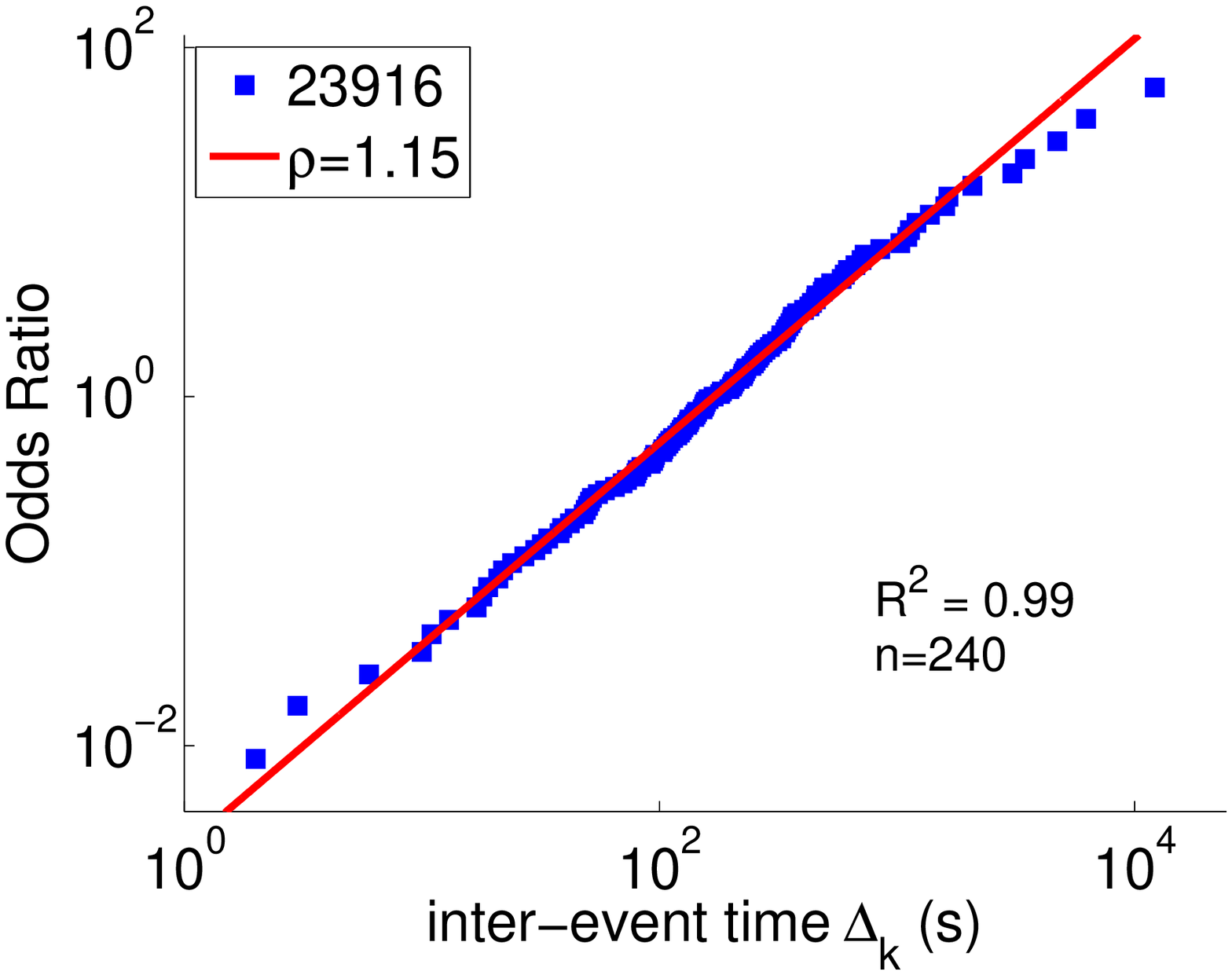}}
{\includegraphics[width=.24\textwidth]{./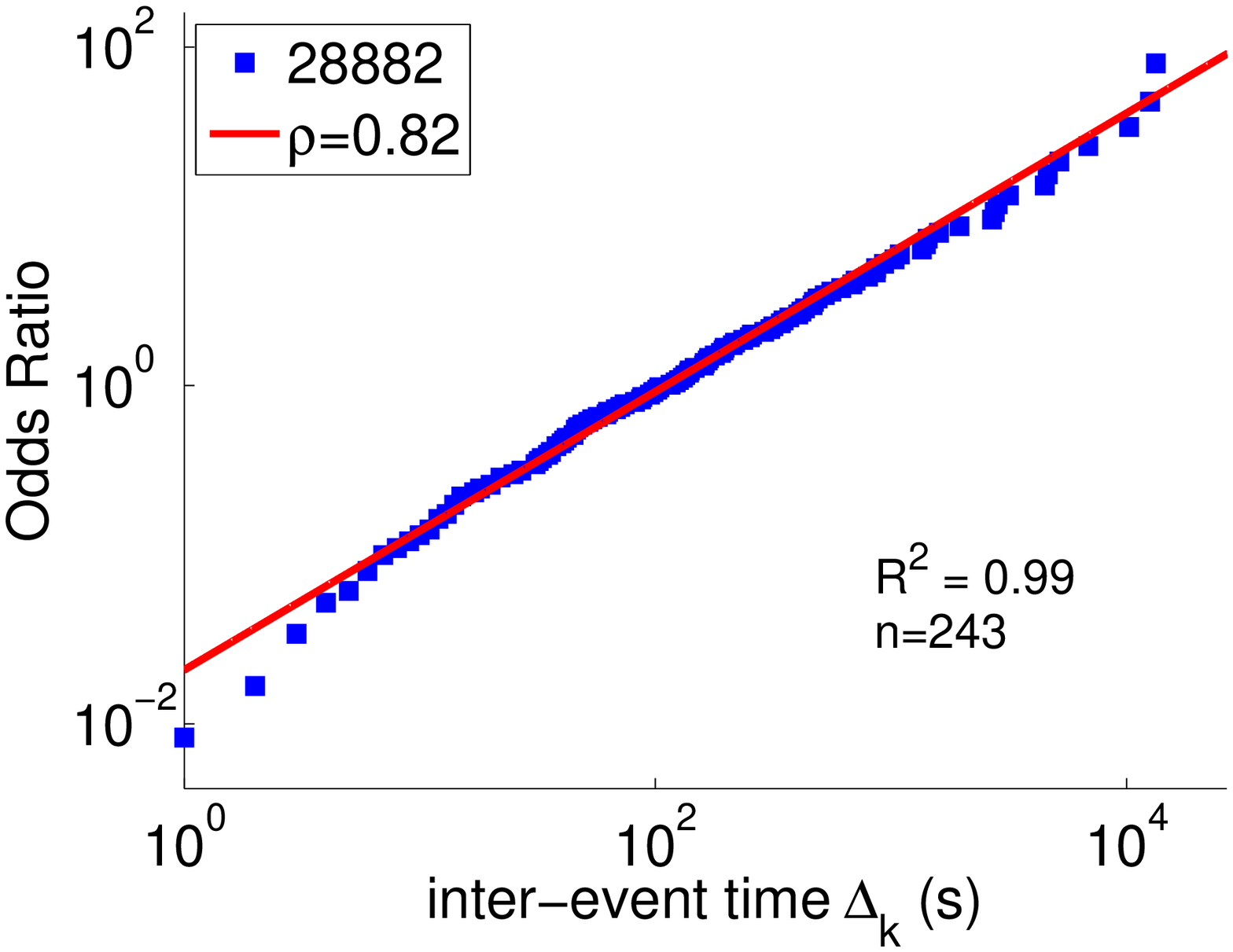}}
{\includegraphics[width=.24\textwidth]{./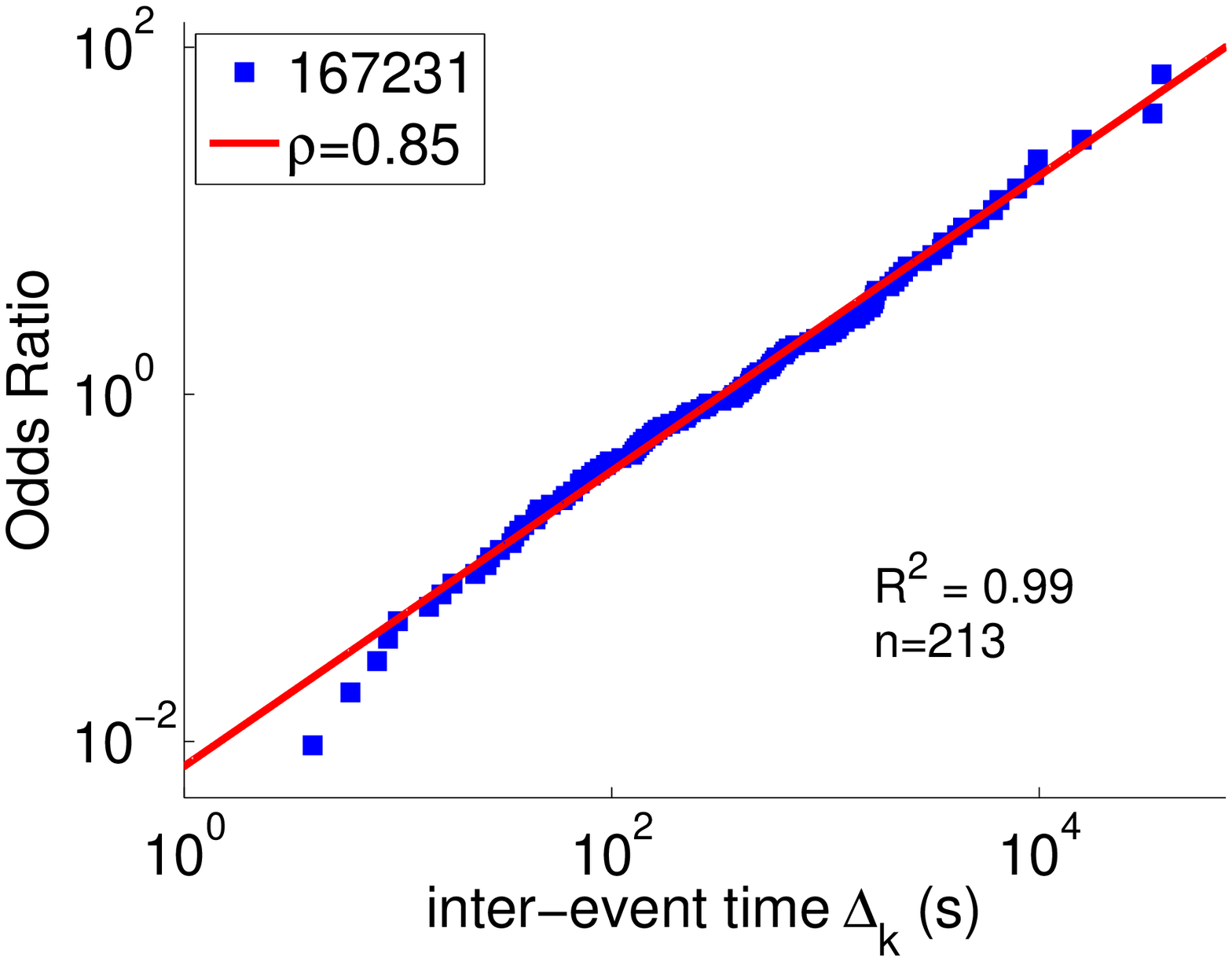}}
{\includegraphics[width=.24\textwidth]{./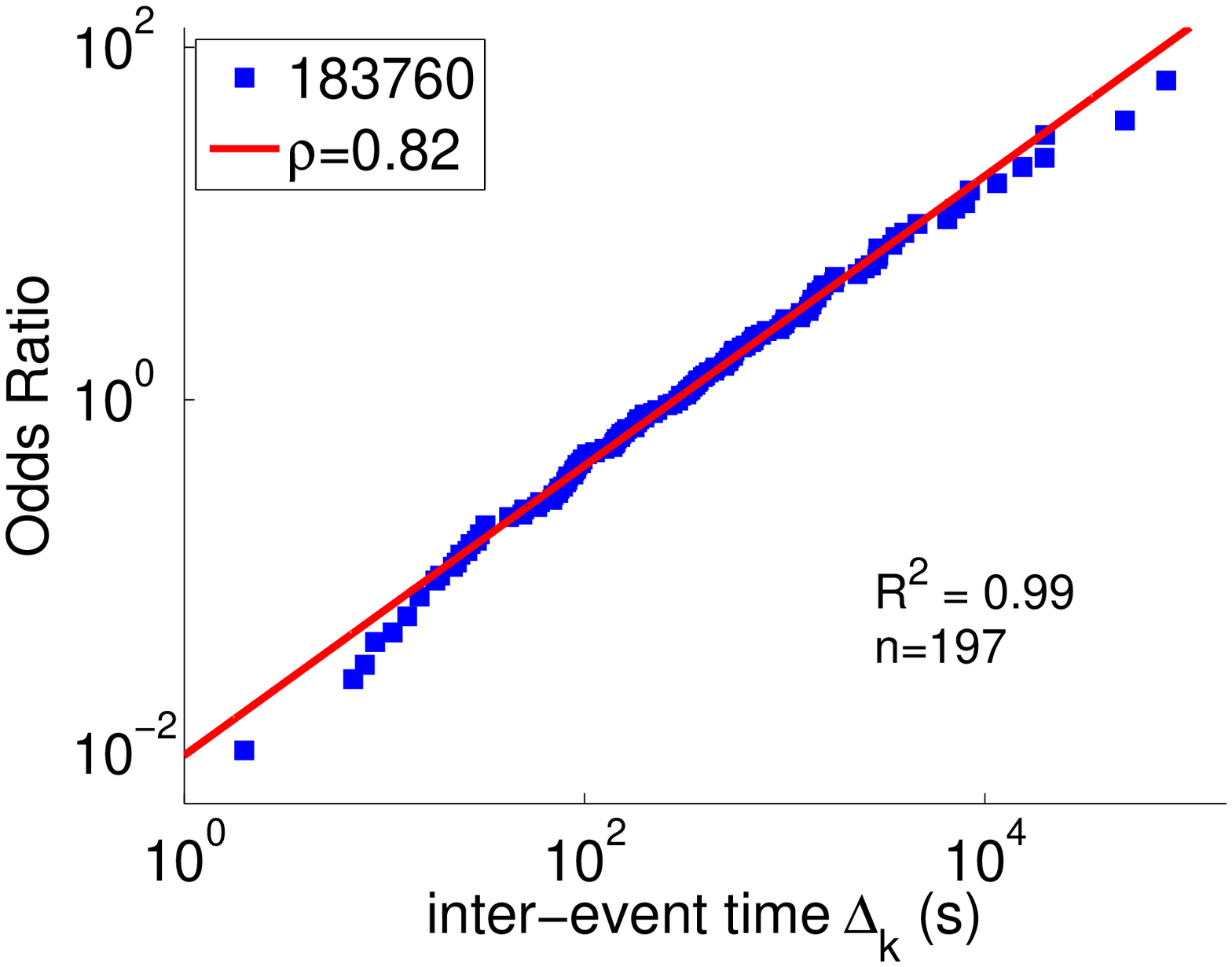}}
{\includegraphics[width=.24\textwidth]{./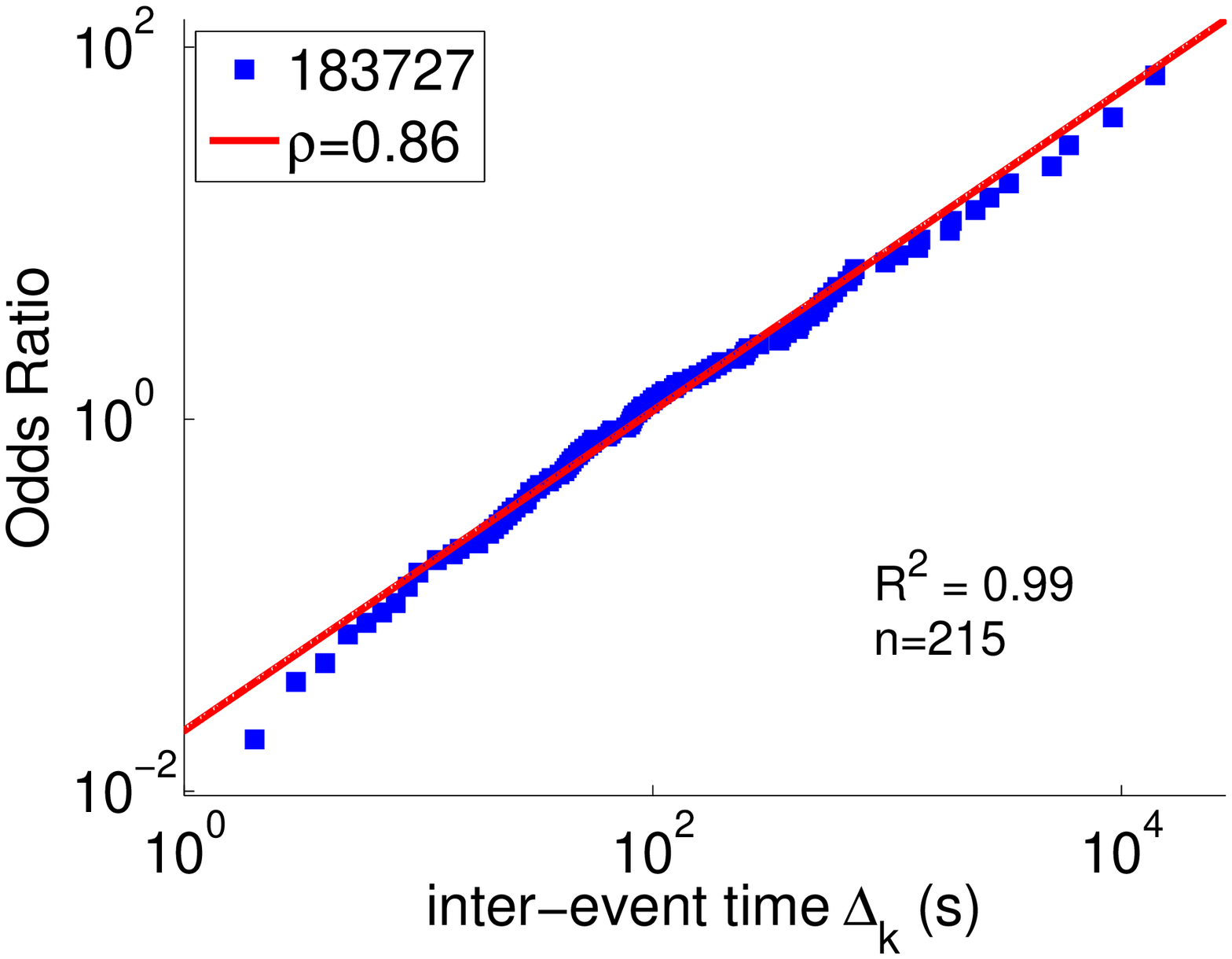}}
\caption{Sample from the Ask MetaFilter dataset.}
\end{figure*}





%
%
%

\end{document}